\newcommand\widecheck[1]{%
\savestack{\tmpbox}{\stretchto{%
  \scaleto{%
    \scalerel*[\widthof{\ensuremath{#1}}]{\kern-.6pt\bigwedge\kern-.6pt}%
    {\rule[-\textheight/2]{1ex}{\textheight}}%WIDTH-LIMITED BIG WEDGE
  }{\textheight}% 
}{0.5ex}}%
\stackon[1pt]{#1}{\scalebox{-1}{\tmpbox}}%
}
\newcommand{\R}{\mathbb{R}}
\newcommand{\C}{\mathbb{C}}
\newcommand{\Z}{\mathbb{Z}}
\newcommand{\N}{\mathbb{N}}
\newcommand{\Ww}{\mathbb{W}}
\newcommand{\w}{{\omega}}
\newcommand{\az}{{\alpha}}
\newcommand{\te}{{\theta}}
\newcommand{\vp}{{\varphi}}
\newcommand{\epsi}{{\varepsilon}}
\newcommand{\SSS}{\mathcal{S}}
\newcommand{\PP}{\mathcal{P}}
\newcommand{\EE}{\mathcal{E}}
\newcommand{\AAA}{\mathcal{A}}
\newcommand{\OO}{\mathcal{O}}
\newcommand{\LL}{\mathcal{L}}
\newcommand{\DD}{\mathcal{D}}
\newcommand{\UU}{\mathcal{U}}
\newcommand{\FF}{\mathcal{F}}
\newcommand{\TT}{\mathcal{T}}
\newcommand{\fE}{\frak{E}_{n,\epsilon}}
\newcommand{\tH}{{\widetilde{H}}}
\newcommand{\tm}{{\widetilde{m}}}
\newcommand{\ty}{{\tilde{y}}}
\newcommand{\tvarphi}{{\widetilde{\varphi}}}
\newcommand{\tS}{{\widetilde{S}}}
\newcommand{\tlambda}{{\widetilde{\lambda}}}
\newcommand{\tR}{{\widetilde{R}}}
\newcommand{\tA}{{\widetilde{A}}}
\newcommand{\tu}{{\widetilde{u}}}
\newcommand{\tv}{{\widetilde{v}}}
\newcommand{\ta}{{\widetilde{a}}}
\newcommand{\tU}{{\widetilde{U}}}
\newcommand{\tdi}{{\widetilde{\di}}}
\newcommand{\tw}{{\widetilde{w}}}
\newcommand{\tchi}{{\widetilde{\chi}}}
\newcommand{\tp}{{\widetilde{p}}}
\newcommand{\tz}{{\tilde{z}}}
\newcommand{\tF}{{\widetilde{F}}}
\newcommand{\tr}{{\widetilde{r}}}
\newcommand{\tpsi}{{\widetilde{\psi}}}
\newcommand{\tG}{{\widetilde{G}}}
\newcommand{\tPhi}{{\widetilde{\Phi}}}
\newcommand{\tphi}{{\widetilde{\phi}}}
\newcommand{\tsigma}{{\widetilde{\sigma}}}
\newcommand{\tkappa}{\widetilde{\kappa}}
\newcommand{\hf}{\widehat{f}}
\newcommand{\ha}{{\widehat{a}}}
\newcommand{\hq}{{\widehat{q}}}
\newcommand{\hg}{{\widehat{g}}}
\newcommand{\p}{\partial}
\newcommand{\Di}{\slashed{D}}
\newcommand{\Dii}{\slashed{\mathfrak{D}}}
\newcommand{\di}{{\slashed{\partial}}{}}
\newcommand{\dii}{{\slashed{\mathfrak{d}}}{}}
\newcommand{\de}{ \ \mathrel{\stackrel{\makebox[0pt]{\mbox{\normalfont\tiny def}}}{=}} \ }
\newcommand{\dd}[2]{\dfrac{\p #1}{\p #2}}
\newcommand{\lr}[1]{\left \langle #1 \right\rangle}
\newcommand{\ove}[1]{{\overline{#1}}}
\newcommand{\matrice}[1]{\left[ \begin{matrix} #1 \end{matrix}\right]}
\newcommand{\systeme}[1]{\left\{ \begin{matrix} #1 \end{matrix}\right.}
\newcommand{\Tr}{{\operatorname{Tr \ }}}
\newcommand{\rk}{{\operatorname{rk}}}
\newcommand{\sgn}{{\operatorname{sgn}}}
\newcommand{\dist}{{\operatorname{d}}}
\newcommand{\Id}{{\operatorname{Id}}}
\newcommand{\supp}{{\operatorname{supp}}}
\newcommand{\End}{\operatorname{End}}
\newcommand{\vol}{\operatorname{Vol}}
\newcommand{\aaa}{{\mathfrak{a}}}
\BODY   \end{gathered}\end{equation} 
\newtheorem{lemma}{Lemma}[chapter]
\newtheorem{theorem}{Theorem}[chapter]
\newtheorem{remark}{Remark}[chapter]
\newtheorem{definition}{Definition}[chapter]
\newtheorem{corollary}{Corollary}[chapter]
\numberwithin{equation}{chapter}
\numberwithin{section}{chapter}
\numberwithin{figure}{chapter}
\title{Topological insulators in semiclassical regime}
\renewcommand{\@maketitle}{%
  \thispagestyle{plain}%
  \begingroup 
  \centering
  \begingroup
    \LARGE\bfseries
    \@title\par\vspace{24pt}%
    \def\and{\par\medskip}\centering
    \mdseries\authors\par\bigskip
  \endgroup
}
\newenvironment{chaptersummary}{%
  \normalfont
  \list{}{%
    \labelwidth\z@
    \leftmargin3pc
    \rightmargin\leftmargin
    \listparindent\normalparindent
    \itemindent\z@
    \parsep\z@ \@plus\p@
    
  }%
  \item[\hskip\labelsep\scshape\chaptersummaryname.]%
}{%
  \endlist
}
\providecommand{\chaptersummaryname}{}
\begin{document}

\author{Alexis Drouot\footnote{University of Washington, USA}}
\maketitle

\vspace{1cm}

\begin{chaptersummary}
We study solutions of $2 \times 2$ systems $(h D_t + \Di) \Psi_t = 0$ on $\R^2$ in the semiclassical regime $h \rightarrow 0$.  
Our Dirac operator $\Di$ is a standard model for interfaces between topological insulators: it represents a semimetal at null energy, and distinct topological phases at positive / negative energies. Its semiclassical symbol $\di$  has two opposite eigenvalues, intersecting conically along the curve $\Gamma_\di = \di^{-1}(0)$ of positions / momenta where the material behaves like a semimetal. 

We prove that wavepackets solving $(h D_t + \Di) \Psi_t = 0$, initially  concentrated in phase space on $\Gamma_\di$, split in two parts. The first part travels coherently along $\Gamma_\di$ with predetermined direction and speed. It is the dynamical manifestation of the famous ``edge state". The second part immediately collapses. % it is the dynamical analogue of the famous "edge states" -- the robust vectors of transport at interfaces of topological insulaors. 
This result provides a quantitative description of the asymmetric transport at the core of topological insulators -- a principle known as the bulk-edge correspondence.

Our approach consists of: (i) a Fourier integral operator reduction; (ii) a careful WKB analysis of a canonical model; (iii) a reconstruction procedure. It yields concrete formulas for the speed and profile of the traveling modes. As applications, we analytically describe dynamical edge states for models of magnetic, curved, and strained topological insulators.

\end{chaptersummary}

\tableofcontents

\renewcommand{\thechapter}{A}

% maybe also \addto\captionsitalian{...}

\chapter{Introduction}

\section{Overview}

In this work, we study solutions to the semiclassical system
\begin{equation}\label{eq-10w}
(h D_t + \Di) \Psi_t = 0, \qquad \text{where:}
\end{equation}
\begin{itemize}
\item $h > 0$ is the semiclassical parameter;
\item $(t,x) \in \R \times \R^2$ and $D_t = \frac{1}{i}\p_t$;
\item $\Di$ is the Weyl quantization (see \cite[\S4]{Z12}) of a $2 \times 2$ Hermitian traceless symbol 
\begin{equation}\label{eq-10s}
\di(x,\xi) = \sum_{j=1}^3 p_j(x,\xi) \sigma_j, \qquad \sigma_1 = \matrice{0 & 1 \\ 1 & 0}, \qquad \sigma_2 = \matrice{0 & -i \\ i & 0}, \qquad \sigma_3 = \matrice{1 & 0 \\ 0 & -1}.
\end{equation}
\end{itemize}
We refer to \S\ref{sec:1} for a canonical example.

Semiclassically near points $(x,\xi)$ such that $\di(x,\xi)$ has distinct eigenvalues, the dynamics of \eqref{eq-10w} is well understood. Wavepackets concentrated near these points split in two pieces. A symbolic diagonalization procedure as in \cite{T75} and the classical-to-quantum correspondence \cite{DH72} imply that these parts follow the Hamiltonian trajectories of the two \textit{opposite} eigenvalues $\pm\sqrt{p_1^2+p_2^2+p_3^2}$ of \eqref{eq-10s}. In particular, they propagate in \textit{opposite} directions. 

In this work, we focus instead on the dynamics of solutions initially concentrated near crossings of  eigenvalues of \eqref{eq-10s}: wavepackets localized at points
\begin{equation}
(x,\xi) \in \Gamma_\di \de \bigcap_{j=1}^3 p_j^{-1}(0) = \di^{-1}(0).
\end{equation}
Our motivation stems from the field of topological insulators. These are physical systems insulating in their interior but whose boundaries or interfaces support unidirectional waves. The equation \eqref{eq-10w} is a standard mathematical model for the evolution of a particle along an interface between topologically distinct insulators; see e.g. \cite{RH08, HK10, S12, FC13}. According to the bulk-edge correspondence, the interface is conductive, with overall conductivity equal to the difference of  bulk Chern numbers computed across the interface \cite{H93,EG02,B19}. In particular, it is non-zero: topological insulators rhyme with asymmetric transport. This phenomenon manifests itself virtually across  all scales and physical fields: quantum science \cite{TK+82,S83,H93}, photonics \cite{RH08,LJS14}, electromagnetism \cite{MD20}, plasma physics \cite{P21}, mechanical and accoustical systems \cite{WLB15,PC+}, avian flocks \cite{SBM17}, geophysics \cite{DMV17,PDV19,DTV19}, and so on. 

According to the above discussion, the transport is symmetric away from $\Gamma_\di$. Hence, a quantitative understanding of the bulk-edge correspondence necessarily passes through a microlocal analysis of \eqref{eq-10w} near the crossing set $\Gamma_\di$. While one would naively expect that coherent transport takes place along $\Gamma_\di$ in both directions (the same way it does away from $\Gamma_\di$), we show instead that it is coherent in one direction but strongly dispersive in the other. Specifically, wavepackets initially concentrated (in phase space) on $\Gamma_\di$ split in two parts:
\begin{itemize}
\item One that travels coherently along $\Gamma_\di$, at speed independent of the initial profile;
\item One that immediately loses coherence and disperses along $\Gamma_\di$.
\end{itemize}
This validates, at the particle level, the experimental observation at the core of topological insulators: interfaces between topologically distinct medias act as one-way channels for the propagation of energy. As byproducts of our result, we obtain a direct method to produce edge states profiles in concrete systems, and explicit formula for the speed of propagation and the rate of dispersion.

\section{An example}\label{sec:1}

We present here our results when $\Di$ is the concrete operator
\begin{equation}\label{eq-13w}
\Di = \matrice{m(x) & hD_1 - i D_2 \\ hD_1+iD_2 & -m(x)}. 
\end{equation}
Such Hamiltonians emerge in the two-scale analysis of honeycomb systems, such as graphene  \cite{RH08,FLW16,D19,LWZ19,DW20}. We assume that $m \in C^\infty_b(\R^2,\R)$ (i.e. it is bounded together with all its derivatives) vanishes transversely across its zero set: 
\begin{equation}
m(x) = 0 \quad \Rightarrow \quad \nabla m(x) \neq 0. 
\end{equation}
In the physics literature, $m$ is called 
a domain wall. The set $m^{-1}(0)$ separates regions of positive and negative topology (corresponding to the sign of $m$); see \cite{B19,B19b} for precise meaning of the topological invariants involved here. 

The bulk-edge correspondence -- see e.g. \cite{KRS02,EG02,B19b} -- suggests that $m^{-1}(0)$ supports one-way currents. In other words, solutions to 
\begin{equation}\label{eq-13x}
\left(h D_t + \matrice{m(x) & hD_1 - i D_2 \\ hD_1+iD_2 & -m(x)}\right) \Psi_t = 0
\end{equation}
that are appropriately prepared at $t=0$ travel along $m^{-1}(0)$ in a prescribed direction. In \cite{BB+21}, we explicitly constructed some solutions of \eqref{eq-13x} with such properties. We asked whether the dynamics along $m^{-1}(0)$ is the sum of such traveling waves with a dispersive part.

Here, we introduce a microlocal framework for the analysis of \eqref{eq-13x} which allows us to respond to this question and treat much more general models. 
The operator \eqref{eq-13w} is a semiclassical operator, with Weyl symbol
\begin{equation}\label{eq-13t}
\di(x,\xi) \de \matrice{m(x) & \xi_1 - i \xi_2 \\ \xi_1 +i\xi_2 & -m(x)} = \xi_1 \sigma_1 + \xi_2 \sigma_2 + m(x) \sigma_3. 
\end{equation}
These $2 \times 2$ matrices have eigenvalues $\pm \sqrt{\xi^2 + m(x)^2}$, distinct away from the set
\begin{equation}\label{eq-13u}
\Gamma_\di \de \di^{-1}(0) = m^{-1}(0) \times \{0\}^2.
\end{equation}
We think of $\Gamma_\di = \di^{-1}(0)
$ as the natural semiclassical lift of the interface $m^{-1}(0)$. From energy conservation, we expect that functions initially concentrated away from $\Gamma_\di = \di^{-1}(0)$ remain away from $\Gamma_\di$. And indeed, the wavefront set away from $\Gamma_\di$ moves along the Hamiltonian trajectories of $\pm \sqrt{\xi^2+m(x)^2}$, i.e. the curves $(x_t,\xi_t)$ and $(x_{-t},\xi_{-t})$ with
\begin{equation}
\dfrac{dx_t}{dt} = \dfrac{\xi_t}{\sqrt{\xi_t^2 + m(x_t)^2}}, \qquad \dfrac{dx_t}{dt} =  \dfrac{\nabla m(x_t)}{\sqrt{\xi_t^2 + m(x_t)^2}},
\end{equation}
see e.g. \cite{DH72,T75,HS90}. These two curves never cross $\Gamma_\di$ and travel in opposite direction: the transport is symmetric away from $\Gamma_\di$. Hence, assymmetric transport (as predicted by the bulk-edge correspondence) may only arise along $\Gamma_\di$. On the other hand, because $\di$ has two equal eigenvalues along $\Gamma_\di$, the standard results about propagation of singularity do not apply. 

Our main result is a quantitative description of the wavepacket dynamic along $\Gamma_\di$. The Poisson bracket $M_\di = \frac{1}{2i} \{\di,\di\}$ (which is generally non-zero for matrix-valued symbols) plays an essential role in our statement. Based on the relation $\sigma_j \sigma_k = i\epsi_{jk\ell} \sigma_\ell$ for $(j,k,\ell)$ pairwise disjoint and $\epsi_{jk\ell}$ the signature of the permutation $(j,k,\ell)$, a computation gives:
\begin{equation}
M_\di \de \dfrac{1}{2i} \sum_{j=1}^2 \dd{\di}{\xi_j} \dd{\di}{x_j} - \dd{\di}{x_j} \dd{\di}{\xi_j} 
%= \p_2 m \sigma_1 - \p_1 m \sigma_2 
=\matrice{0 & \p_2 m + i \p_1 m \\ \p_2 m - i \p_1 m & 0}.
\end{equation}
This is a $2 \times 2$ Hermitian traceless matrix, which we can explicitly diagonalize:  
\begin{equations}
U^{-1}_\di M_\di U_\di  = \lambda^2_\di \matrice{-1 & 0 \\ 0 & 1},
\end{equations}
where the quantities $\lambda_\di$ and $U_\di$ are given by:
\begin{equations}
\lambda_\di(x,\xi) \de \big\| \nabla m(x) \big\|^{1/2}, \quad
U_\di(x,\xi) \de \dfrac{1}{2} \matrice{   -e^{-i\te(x)} & 1 \\ 1 & e^{i\te(x)} }, \quad e^{i\te(x)} = \dfrac{\p_2 m(x) - i \p_1 m(x)}{\| \nabla m(x) \|}.
\end{equations}
While the conjugation matrix $U_\di$ is not unique, the lines $\LL_{x,\xi}^-$ and $\LL_{x,\xi}^+$ respectively spanned by the first and second column of $U_\di$ are uniquely characterized for $\lambda_\di(x,\xi) \neq 0$ (in particular along $\Gamma_\di$). Specifically, they are the negative and positive energy eigenspace of $M_\di$:
\begin{equations}
\LL^\pm_{x,\xi} \de \ker\big( M_\di(x,\xi) \mp \lambda_\di(x,\xi) \big) = \C \matrice{\mp 1 \\ e^{i\te(x)}}.
\end{equations}

\begin{figure}[!t]
\floatbox[{\capbeside\thisfloatsetup{capbesideposition ={right,center}}}]{figure}[10cm]
{\caption{The interface $m^{-1}(0)$ separates the positive topology region ($m>0$) and negative-topology region ($m<0$). The quantity $x_t$ moves along $m^{-1}(0)$ at unit speed, with the positive topology region on its left.}\label{fig:4}}
{\begin{tikzpicture}
\node at (0,0) {\includegraphics[scale=1.05]{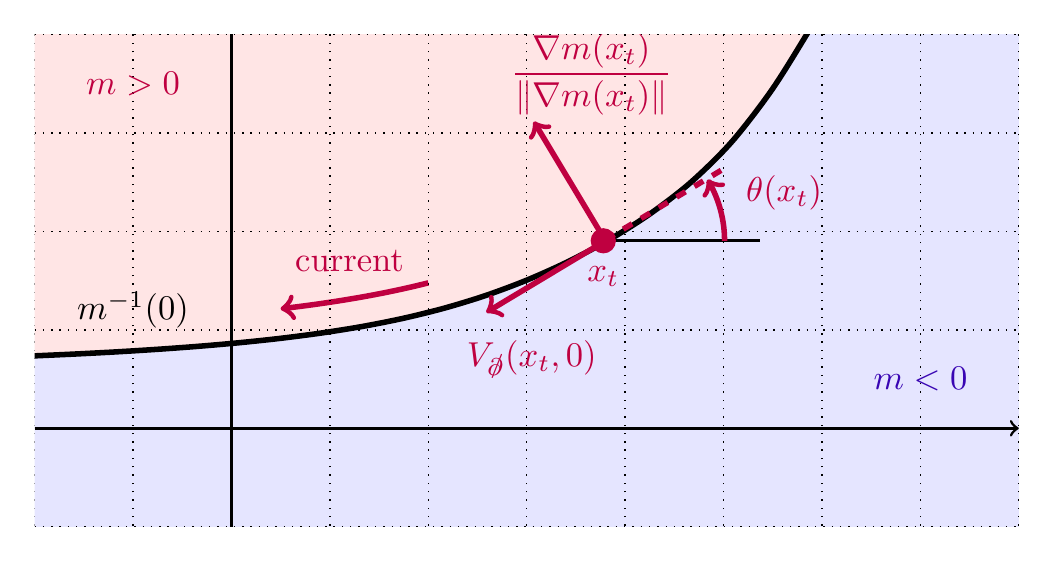}};
%\node at (8,0) {$\bullet$};
\end{tikzpicture}\hspace{1cm}}
\end{figure}

Another important object in our analysis is a vector field $V_\di$ constructed from $M_\di$ and the (matrix-valued) Hamiltonian vector field $H_\di$ of $\di$:
\begin{equation}
V_\di \de - \dfrac{1}{2 \lambda^2_\di} \Tr \left( M_\di   \cdot H_\di \right), \qquad H_\di \de \sum_{j=1}^2 \dd{\di}{\xi_j} \dd{}{x_j} - \dd{\di}{x_j} \dd{}{\xi_j}.
\end{equation}
For $\di$ given by \eqref{eq-13t}, a computation gives:
\begin{equation}
V_\di = \dfrac{1}{\| \nabla m(x) \|} \left( -\dd{m(x)}{x_2} \dd{}{x_1} + \dd{m(x)}{x_1}  \dd{}{x_2} \right) = \dfrac{\nabla m(x)^\perp}{\| \nabla m(x) \|},
\end{equation}
where $\nabla m(x)^\perp$ denotes the $\pi/2$ counter-clockwise rotation of a vector $\nabla m(x) \in \R^2$. 
The vector field $V_\di$ is well-defined when $\lambda_\di \neq 0$, in particular along $\Gamma_\di$; and at such points it is tangent to $\Gamma_\di$, see Figure \ref{fig:4}. Hence, integral curves of $V_\di$ starting at a point $(x_0,\xi_0) \in \Gamma_\di$ (that is, $m(x_0) = 0$ and $\xi_0 = 0$) are unit speed parametrizations of $\Gamma_\di$: they take the form $(x_t,0)$ with
\begin{equation}\label{eq-13z}
\dot{x_t} = \dfrac{\nabla m(x_t)^\perp}{\| \nabla m(x_t) \|}.
\end{equation}

\begin{definition}\label{def:1} A wavepacket concentrated at $(x_\star,\xi_\star) \in \R^2 \times \R^2$ is a $h$-dependent function $\Psi : \R^2 \rightarrow \C^2$ of the form
\begin{equation}
\Psi(x) = \dfrac{e^{is_h  + \frac{i}{h}\xi_\star(x-x_\star)}}{\sqrt{h}} A \left( \dfrac{x-x_\star}{\sqrt{h}} \right), 
\end{equation}
for some $A \in \SSS(\R^2,\C^2)$ (the Schwartz class of functions from $\R^2$ to $\C^2$) independent of $h$; and $s_h$ independent of $x$. If there is a vector $u \in \C^2$ such that for all $y \in \R^2$, $A(y) = a(y) u$ with $a \in \SSS(\R^2,\C)$, we will moreover say that $\Psi$ has orientation $u$ (or points in the direction of $\C u$). 
\end{definition}

Our result describes solutions to \eqref{eq-10w} that originally are wavepackets concentrated along $\Gamma_\di$:

\begin{theorem}\label{thm:1o} Let $\Di$ given by \eqref{eq-13w}; $\Psi_t$ a solution to \eqref{eq-10w}, with $\Psi_0$ a wavepacket concentrated at $(x_0,0) \in \Gamma_\di$; and $x_t$ be the solution to the ODE \eqref{eq-13z}. Then there exists $T > 0$ such that for all $t \in (0,T)$, uniformly as $h \rightarrow 0$:
\begin{equation}\label{eq-13y}
\Psi_t = \Phi_t + O_{L^\infty}(h^{-1/4}) + O_{L^2}(h^{1/4}),
\end{equation}
where $\Phi_t$ is a wavepacket concentrated at $(x_t,0)$, pointing along $\LL^-_{x_t,0}$. Moreover:

(i) If $\Psi_0$ points along $\LL^-_{x_0,0}$ then \eqref{eq-13y} holds for all $t > 0$ without $O_{L^\infty}(h^{-1/4})$-remainder;

(ii) If $\Psi_0$ points along $\LL^+_{x_0,0}$ then $\Phi_t = 0$; 

(iii) If $\| \nabla m(x) \|$ is constant along $m^{-1}(0)$ and $m^{-1}(0)$ is homeomorphic to $\R$ then \eqref{eq-13y} holds for all $t > 0$.
\end{theorem}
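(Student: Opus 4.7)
I would implement the three-step plan announced in the abstract: a Fourier integral operator (FIO) reduction to a canonical Dirac-type model near $\Gamma_\di$; a spectral/WKB analysis of that model via the one-dimensional Dirac operator arising transversely; and a reconstruction via the inverse FIO.

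For \emph{Step 1 (canonical form)}, since $V_\di$ does not vanish along $\Gamma_\di$, the curve is a smooth isotropic submanifold parametrized at unit speed by $t \mapsto (x_t, 0)$. I would build a symplectomorphism $\kappa$ straightening $\Gamma_\di$ to the $y_1$-axis in $T^*\R^2$ (with $\{m>0\}$ sent to $\{y_2>0\}$) and making the flow $(x_t,0)$ become translation in $y_1$. Egorov's theorem produces a microlocally unitary FIO $T$ quantizing $\kappa$; after conjugation the principal symbol becomes $\eta_1\sigma_1 + \eta_2\sigma_2 + \lambda_\di^2 y_2 \sigma_3$ to leading order near the $y_1$-axis. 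A pointwise matrix rotation by $U_\di$ (aligning the $\LL^\pm$-frame with the $\sigma_3$-eigenbasis, equivalently diagonalizing $M_\di$ as $\lambda_\di^2\sigma_3$) then yields the canonical model
\begin{equation*}
\Di_{\mathrm{can}} = hD_{y_1}\sigma_1 + hD_{y_2}\sigma_2 + \lambda_\di^2 y_2 \sigma_3 + hR,
\end{equation*}
with $R$ of lower order.

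For \emph{Step 2 (dynamics of the model)}, Fourier transform in $y_1$ reduces the evolution to a family of one-dimensional Dirac operators $P(\eta_1) = \eta_1\sigma_1 + hD_{y_2}\sigma_2 + \lambda^2 y_2 \sigma_3$. A direct computation gives $P(\eta_1)^2 = (\eta_1^2 + h^2 D_{y_2}^2 + \lambda^4 y_2^2)I + h\lambda^2\sigma_1$, so the spectrum of $P(\eta_1)$ consists of a single edge band $E(\eta_1) = -\eta_1$ (with unique $L^2$ eigenfunction $\psi_0(y_2) \propto e^{-\lambda^2 y_2^2/(2h)}\matrice{1 \\ -1}$, aligned with $\LL^-$) together with bulk bands $\pm\sqrt{\eta_1^2 + 2nh\lambda^2}$ for $n \ge 1$. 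On the edge mode $\Di_{\mathrm{can}}$ acts as $-hD_{y_1}$, producing pure unit-speed transport of the tensor factor in $y_1$; this is the coherent wavepacket $\Phi_t$. All other spectral components acquire phases $e^{\mp it\sqrt{\eta_1^2 + 2nh\lambda^2}/h}$ oscillating at frequency $\sim h^{-1/2}$. Expanding the initial wavepacket in the Hermite basis and summing the oscillating Gaussians at the wavepacket scale $\sqrt{h}$ via stationary phase yields a dispersive profile of size $O(h^{-1/4})$ in $L^\infty$ and $O(h^{1/4})$ in $L^2$, matching \eqref{eq-13y}.

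For \emph{Step 3 (reconstruction and special cases)}, conjugating back by $T^{-1}$ maps wavepackets to wavepackets and $\LL^-$ to $\LL^-_{x_t,0}$, producing the main assertion. Items (i) and (ii) follow by inspection of the spectral decomposition of the initial data: orientation along $\LL^-$ excites only the edge mode (no dispersive tail), while orientation along $\LL^+$ excites only bulk modes (so $\Phi_t = 0$). For (iii), the hypotheses that $\|\nabla m\|$ is constant and $m^{-1}(0) \simeq \R$ ensure that the symplectic straightening $\kappa$ and the rotation $U_\di$ can be chosen globally along $\Gamma_\di$, so the canonical reduction and the estimate persist for all $t > 0$.

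\emph{Main obstacle.} The delicate step is the canonical-form reduction \emph{at subprincipal order}: both $V_\di$ and $\lambda_\di$ originate from the Poisson bracket $M_\di$, which is invisible to principal symbols, so the matrix-valued symbol calculus must be chased past principal order. Verifying that the remainder $hR$ can be absorbed into the error of \eqref{eq-13y} uniformly in $t\in(0,T)$ --- and, in particular, that the edge-mode projection survives this perturbation over macroscopic times --- is where the real analytic work lies. It is precisely this uniformity that forces the local-in-time restriction in the general statement, recovered globally only under the extra geometric hypotheses of (iii).
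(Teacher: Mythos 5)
Your three-step architecture (FIO reduction, model analysis, reconstruction) is the same as the paper's, but your Step 2 only works when $\lambda_\di$ is constant along the interface, which is precisely the special hypothesis of item (iii). The diagonalization you propose --- Fourier transform in $y_1$ followed by the exact spectral analysis of $P(\eta_1)=\eta_1\sigma_1+hD_{y_2}\sigma_2+\lambda^2y_2\sigma_3$ --- requires the coefficient $\lambda$ to be independent of $y_1$. But $\lambda_\di=\|\nabla m\|^{1/2}$ restricted to $m^{-1}(0)$ is in general non-constant, and it is a symplectic invariant (it transforms equivariantly under conjugation and composition with symplectomorphisms), so it cannot be normalized away by the reduction; the canonical model is a variable-coefficient operator in the along-edge variable and does not decouple under Fourier transform. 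The paper's treatment of the higher transverse Hermite modes is therefore a genuine WKB analysis: one solves the eikonal equation for $p(x_1,\xi_1)=\lambda(x_1)\sqrt{1+\xi_1^2}$, builds a Lax-type parametrix, and extracts the $O_{L^\infty}(h^{-1/4})$ bound from stationary and non-stationary phase estimates yielding $t^{-1/2}$ decay. This is also where the restriction $t\in(0,T)$ comes from --- reversibility of that classical flow (uniform invertibility of $x\mapsto F(t,x,\xi)$) --- not, as you suggest, from absorbing the subprincipal remainder uniformly in time. As written, your dispersive estimate is proved only in the constant-$\lambda$ case, which the paper handles separately (and there obtains all $t>0$).

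Two further gaps. First, reconstruction: a Fourier integral operator is bounded on $L^2$ but not on $L^\infty$, so ``conjugating back'' does not automatically transfer the $O(h^{-1/4})$ sup-norm bound; the paper needs a dedicated oscillatory-integral lemma for the image of the dispersed Hermite superposition (derivative bounds $|\p^j_{y_1}\vartheta_n|\lesssim n^{1/2}\dist(y_1,\R\setminus I_t)^{-3j}$ on the phases, non-stationary phase in $y_1$, and $L^1$ control of Hermite functions) to push the $L^\infty$ estimate through the FIO. Second, the normal form must be far more precise than ``$+hR$ with $R$ of lower order'': over times $t\sim 1$ a generic $O(h)$ remainder contributes $O(1)$ by Duhamel. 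The paper arranges the remainder to be $\OO(3)+\OO(1)O(h)+O(h^2)$ transversally to $\Gamma_\di$, reduces the subprincipal term to a scalar $s(x_1)$ by an additional conjugation with an elliptic factor, and then proves separately that such remainders act on the parametrix with a gain $h^{3/2}n^{-\infty}t^{-1/2}$; you correctly identify this as the main obstacle but supply no mechanism for it. Finally, for (i), note that orientation along $\LL^-$ alone does not place the data in the edge band --- the transverse profile must be the Gaussian ground state, as in Corollary \ref{cor:1b} --- so ``by inspection of the spectral decomposition'' does not settle (i) as stated.
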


We interpret Theorem \ref{thm:1o} as follows. First, the remainder $O_{L^2}(h^{1/4})$ in \eqref{eq-13y} is negligible compared to the initial mass and can be ignored. From (i), if $\Psi_0$ points along $\LL^-_{x_0,0}$, then it propagates coherently along $\Gamma_\di = m^{-1}(0) \times \{0\}^2$, in direction and speed prescribed by \eqref{eq-13z}.  This is roughly the result of \cite{BB+21}. From (ii) and the $L^2$-conservation of energy, we see that the $O_{L^\infty}(h^{-1/4})$-remainder tends to be non-trivial.  Generally, when $\Psi_0$ points along an arbitrary direction, then $\Psi_t$ splits in two parts:
\begin{itemize}
\item[(a)] A traveling mode as in (i), corresponding to the time-dynamics of the projection of $\Psi_0$ onto the line $\LL^-_{x_0,0}$;
\item[(b)] A piece that immediately collapses: its $L^\infty$-norm is  $O(h^{-1/4})$, negligible compared to that of the traveling part $\Phi_t$, which is of order $h^{-1/2}$. 
\end{itemize}
This in spirit responds to \cite[Conjecture 1]{BB+21}: when $\| \nabla m(x) \|$ is constant along $m^{-1}(0)$ and $m^{-1}(0)$ is homeorphic to $\R$ (i.e. it represents an infinite edge), there is only one way to realize coherent propagation along $m^{-1}(0)$. Forcing propagation in the opposite direction results in an immediate loss of coherence. When $\| \nabla m(x) \|$ is not constant along $m^{-1}(0)$, \cite[Conjecture 2]{BB+21} remains open.

\section{General result}

There are several reasons to consider generalizations of \eqref{eq-13x}. It is genuinely interesting to study materials beyond graphene, in particular once symmetries are broken; these give rise to anisotropic versions of \eqref{eq-13x}, see e.g. \cite{TG+12,OL16}. Also of interest are the effects of magnetic fields \cite{BBD22} or a non-Euclidean backgrounds \cite{GAB20}. We refer to \S\ref{chap:4} for concrete Dirac operators corresponding to these sitations.

From a broader point of view, equations such as \eqref{eq-13x} emerge in the effective analysis of (either continuous or discrete) two-scale systems with microscopically varying topological structure. While topological variations generically guarantee that the symbol of the effective Hamiltonian admits conical crossings of eigenvalues \cite{D21b}, nothing can be said about the aperture of the associated Dirac cones. In the absence of additional symmetries, it generally varies with position. This leads to the general form \eqref{eq-10s} for the effective Hamiltonian. Because of the bulk-edge correspondence (which is independent of precise analytic aspects), we still expect interfaces to support unidirectional waves. Under generic hypothesis on $\Di$, our main result confirms this expectation and provides quantitative information about the structure of these waves.

\subsection{Assumptions and definitions}\label{sec-2.1} Let $\di$ be a symbol of the form \eqref{eq-10s}:
\begin{equation}\label{eq-0i}
\di(x,\xi) = \matrice{p_3(x,\xi) & p_1(x,\xi) - i p_2(x,\xi) \\ p_1(x,\xi) + i p_2(x,\xi) & - p_3(x,\xi)} =  \sum_{j=1}^3 p_j(x,\xi) \sigma_j, \qquad (x,\xi) \in \R^2 \times \R^2,
\end{equation}
where the symbols $p_j$ are real-valued, smooth, and grow at most quadratically: $\nabla^2 p_j \in C^\infty_b(\R^4)$. 
The eigenvalues of $\di$ are given by $\pm \sqrt{p_1^2 + p_2^2 + p_3^2}$. In particular, they are repeated precisely when 
\begin{equation}
(x,\xi) \in \Gamma_\di \de \cap_{j=1}^3 p_j^{-1}(0).
\end{equation}

We make the following two assumptions on $\Gamma_\di$:
\begin{itemize}
\item[$\bm{(\AAA_1})$] For all $(x,\xi) \in \Gamma_\di, \ dp_1(x,\xi), \ dp_2(x,\xi), \ dp_3(x,\xi)$   are linearly independent.
\item [$\bm{(\AAA_2)}$] The projection of $\Gamma_\di$ to $\R^2 \times \{0\}^2 \equiv \R^2$ is an embedded one-dimensional manifold of $\R^2$.
\end{itemize}
From a perspective inspired by topological insulators, $\bm{(\AAA_1})-\bm{(\AAA_2)}$ imply that $\pi(\Gamma_\di)$ splits $\R^2$ in insulating phases of alternating topology, see Figure \ref{fig:6}.

Again, the main object is the Poisson bracket $M_\di = \frac{1}{2i} \{\di,\di\}$. A computation gives 
\begin{equation}
M_\di = \dfrac{1}{2} \sum_{j,k,\ell=1}^3 \epsi_{jkl} \{p_j,p_k\} \sigma_\ell = \matrice{\{p_1,p_2\} & \{p_2,p_3\} - i \{p_3,p_1\} \\ \{p_2,p_3\} + i \{p_3,p_1\} & -\{p_1,p_2\}}.
\end{equation}
where $\epsi_{jk\ell}$ is the totally antisymmetric tensor, i.e. the signature of the permutation $(j,k,\ell)$. The symbol $M_\di$ is traceless and Hermitian; in particular it has two opposite eigenvalues:
\begin{equation}\label{eq-0o}
\pm \lambda_\di(x,\xi)^2, \qquad \lambda_\di \de \left(\dfrac{1}{2} \cdot \Tr\big( M^2_\di\big)\right)^{1/2} = \left( \sum_{j < k} \big\{p_j ,p_k\big\}^2 \right)^{1/4}.
\end{equation}

Under $\bm{(\AAA_1})$, at least one of the Poisson brackets $\{p_j,p_k\}$ does not vanish at each point of $\Gamma_\di$: otherwise, the kernels of the linear forms $dp_k$ at that point would be equal (to the span of $H_{p_1}, H_{p_2}, H_{p_3}$), contradicting $\bm{(\AAA_1})$. Hence $\lambda_\di$ does not vanish on $\Gamma_\di$. In particular, the eigenvalues of $M_\di$ are distinct along $\Gamma_\di$. We can then define line bundles $\LL^\pm \rightarrow \Gamma_\di$, whose fibers at points $(x,\xi) \in \Gamma_\di$ are
\begin{equation}
\LL^\pm_{x,\xi} = \ker\big( M_\di(x,\xi) \mp \lambda_\di(x,\xi)^2 \big). 
\end{equation}
The relation $\lambda_\di \neq 0$ on $\Gamma_\di$ also allows us to define the vector field 
\begin{equations}\label{eq-10v}
V_\di \de -\dfrac{\Tr \left(M_\di   \cdot H_{\di} \right)} {2 \lambda^2_\di } = \ -\dfrac{\sum_{j, k, \ell=1}^3 \epsi_{jk\ell} \big\{ p_j, p_k\big\} H_{p_\ell}}{\left(2\sum_{j, k=1}^3 \big\{ p_j, p_k\big\}^2\right)^{1/2}},
\end{equations}
on a neighborhood of $\Gamma_\di$.  We note that $V_\di$ is tangent to $\Gamma_\di$ because $V_\di p_j = 0$; moreover $\| V_\di \| \leq C \| H_\di \| \leq C \lr{x,\xi}$ (where $\lr{z} = \sqrt{1+|z|^2}$ denotes the Japanese bracket of $z$). In particular integral curves $(x_t,\xi_t)$ of $V_\di$ that start along $\Gamma_\di$, i.e. solutions to
\begin{equation}
\dfrac{d(x_t,\xi_t)}{dt} = V_\di(x_t,\xi_t), \qquad (x_0,\xi_0) \in \Gamma_\di
\end{equation} 
are defined for all times and contained within $\Gamma_\di$.

\subsection{Main result} We are ready to state our main result:

\begin{theorem}\label{thm:1b} Let $\Di$ with Weyl symbol satisfying $\bm{(\AAA_1)}-\bm{(\AAA_2)}$; $\Psi_t$ a solution to $(hD_t + \Di) \Psi_t = 0$, with $\Psi_0$ a wavepacket concentrated at $(x_0,\xi_0) \in \Gamma_\di$ (see Definition \ref{def:1}); and $(x_t,\xi_t)$ the solution to \eqref{eq-13z}. There exists $T > 0$ such that for all $t \in (0,T)$, uniformly as $h \rightarrow 0$:
\begin{equation}\label{eq-13j}
\Psi_t = \Phi_t + O_{L^\infty}(h^{-1/4}) + O_{L^2}(h^{1/4}),
\end{equation}
where $\Phi_t$ is a wavepacket concentrated at $(x_t,\xi_t)$, pointing along $\LL^-_{x_t,\xi_t}$. Moreover,

(i) If $\Psi_0$ points along $\LL^-_{x_0,\xi_0}$ then \eqref{eq-13j} holds for all $t > 0$ without $O_{L^\infty}(h^{-1/4})$-remainder;

(ii) If $\Psi_0$ points along $\LL^+_{x_0,\xi_0}$ then $\Phi_t = 0$.

 (iii) If $\lambda_\di$ is constant along $\Gamma_\di$ and $\Gamma_\di$ and its projection to $\R^2 \times \{0\}^2$ are homeomorphic to $\R$, then \eqref{eq-13j} holds for all $t > 0$.
\end{theorem}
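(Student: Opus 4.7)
The plan is to implement the three-step strategy announced in the abstract: a Fourier integral operator (FIO) reduction to a canonical model, an explicit WKB analysis of that model, and a reconstruction step back to $\Di$.

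\emph{Step 1 (FIO reduction).} Fix $(x_\star,\xi_\star)\in\Gamma_\di$. Using $\bm{(\AAA_1)}$--$\bm{(\AAA_2)}$ together with the non-vanishing of $M_\di$ on $\Gamma_\di$, I would construct a local symplectomorphism $\kappa\colon (T^*\R^2,0)\to(T^*\R^2,(x_\star,\xi_\star))$ and a smooth unitary matrix $U(x,\xi)$ such that, modulo symbols vanishing to high order on the canonical crossing set $\{x_2=\xi_1=\xi_2=0\}$,
\[
U^{-1}\,(\di\circ\kappa)\,U \ = \ \xi_1\sigma_1 + \xi_2\sigma_2 + \lambda_\di(x_\star,\xi_\star)^2\,x_2\,\sigma_3.
\]
This is a Darboux-type normal form: pick $p_j\circ\kappa$ as three transverse coordinates, complete symplectically using the non-degeneracy supplied by $M_\di$, and diagonalize the leading transverse structure by $U$. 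The crossing set pulls back to the $x_1$-axis and $V_\di$ pushes forward to $\p_{x_1}$. Quantizing $\kappa$ by a microlocally unitary FIO $\FF$ and $U$ by $\Op(U)$ reduces $\Di$, microlocally near $(x_\star,\xi_\star)$, to the canonical Dirac operator $L_0 = hD_{x_1}\sigma_1 + hD_{x_2}\sigma_2 + \lambda_\di(x_\star,\xi_\star)^2\,x_2\,\sigma_3$.

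\emph{Step 2 (WKB analysis of $L_0$).} Rescale the transverse variable $x_2=\sqrt h\,y$; the operator splits as $hD_{x_1}\sigma_1 + \sqrt h\bigl(D_y\sigma_2+\lambda_\di(x_\star,\xi_\star)^2\,y\,\sigma_3\bigr)$. The transverse Dirac operator is a harmonic-oscillator-type Hamiltonian with discrete $L^2$ spectrum, a one-dimensional kernel spanned by a Gaussian-valued vector aligned with $\LL^-$, and all other eigenvalues bounded away from zero. Decompose the initial transverse profile into this kernel and its orthogonal complement. The kernel component evolves coherently through the $1$D transport equation $(hD_t+hD_{x_1})\Phi=0$, producing a wavepacket translating at unit speed along $x_1$. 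The complementary modes oscillate on time-scale $h$ and disperse; a Duhamel-type estimate against the $h^{-1/2}$ wavepacket normalisation produces the $O_{L^\infty}(h^{-1/4})$ and $O_{L^2}(h^{1/4})$ remainders in \eqref{eq-13j}.

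\emph{Step 3 (Reconstruction and special cases).} Applying $\FF\Op(U)$ to the coherent component transports it back to $\Di$ as a wavepacket $\Phi_t$ concentrated at $(x_t,\xi_t)$, the integral curve of $V_\di$ through $(x_0,\xi_0)$, with orientation $\LL^-_{x_t,\xi_t}$; the identification of directions and speeds rests on the naturality of $M_\di$, $\lambda_\di$ and $V_\di$ under symplectic and unitary conjugation. Assertion (i) follows because data aligned with $\LL^-_{x_0,\xi_0}$ maps into the transverse kernel, leaving no dispersive residue, so the WKB ansatz remains valid for all $t$ within the canonical chart. Assertion (ii) is dual: data along $\LL^+_{x_0,\xi_0}$ has trivial kernel projection. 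For (iii), constancy of $\lambda_\di$ along $\Gamma_\di\simeq\R$ permits patching the local reductions into a single global canonical chart, removing the local-in-time restriction.

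The principal obstacle will be Step 1: the normal form must capture $\di$ to enough transverse Taylor order that the conjugated operator differs from $L_0$ by a symbol whose Weyl quantization does not disturb the $h^{1/4}$, $h^{-1/4}$ remainder scales, which requires a careful count of the orders preserved by the FIO and gauge transformation. A secondary difficulty is the restriction to $(0,T)$: the canonical chart is valid only on a neighborhood of $(x_\star,\xi_\star)$ and the wavepacket exits it in $O(1)$ time, which is precisely why part (iii) demands the extra global hypotheses that allow assembling a single chart over all of $\Gamma_\di$.
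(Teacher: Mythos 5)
Your outline follows the paper's announced strategy, but two of its load-bearing steps would not survive scrutiny.

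First, the normal form you posit in Step 1 is too strong: you freeze the coefficient at $\lambda_\di(x_\star,\xi_\star)$ and claim the conjugated symbol equals a constant-coefficient Dirac symbol modulo terms vanishing to high order on $\{x_2=\xi_1=\xi_2=0\}$. When $\lambda_\di$ varies along $\Gamma_\di$, the discrepancy $(\lambda_\di\circ\kappa(x_1,0)-\lambda_\di(x_\star,\xi_\star))\,\xi_1\sigma_1$ (and its analogues) vanishes only to first order transversally and is of size $O(1)$ along the crossing line; it cannot be pushed into a high-order remainder, and it changes the leading-order transport. Concretely, your frozen model forces $\Phi_t$ to travel at the constant speed $\lambda_\di(x_\star,\xi_\star)$ along a straight model line, whereas the true concentration point is the integral curve of $V_\di$, whose speed is the variable $\lambda_\di$; the two disagree at $O(1)$ for fixed $t>0$. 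The paper's Theorem \ref{thm:1m} keeps the variable coefficient $\lambda(x_1)$ (plus a subleading $\mu(x_1)\xi_1$ and a subprincipal $s(x_1)$) in the model, and the whole point of its refinement of Colin de Verdi\`ere's reduction is to guarantee that this prefactor depends on $(x_1,\xi_1)$ only, so that the transverse Hermite decomposition of \S\ref{sec-2.2} still decouples. Your Step 1, as stated, only works under the hypothesis of part (iii).

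Second, your Step 2 misidentifies the mechanism behind the $O_{L^\infty}(h^{-1/4})$ remainder. After the rescaling $x_2=\sqrt h\,y$, the nonzero transverse modes have eigenvalues $\pm\sqrt{2nh}$, which are \emph{not} bounded away from zero in the relevant scaling; these modes do not simply average out, they propagate along the interface with speeds filling $(-\lambda,\lambda)$. A Duhamel-type argument only yields $L^2$ control and cannot produce an $L^\infty$ collapse estimate. In the paper each $n\geq 1$ block is a genuine $2\times 2$ semiclassical system with parameter $h_n=\sqrt{h/2n}$, solved by a WKB parametrix whose eikonal phase obeys $\p_t\varphi+\lambda(x)\sqrt{1+(\p_x\varphi)^2}=0$; the $t^{-1/2}$ dispersive bound comes from stationary and non-stationary phase estimates (Theorem \ref{thm:1z}), and the $h^{-1/4}$ arises only after Cram\'er's bound on Hermite functions and summation over $n$, using the rapid decay of the coefficients $a_n$. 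This is also where the restriction $t\in(0,T)$ really comes from: reversibility of the bicharacteristic flow of $\lambda(x)\sqrt{1+\xi^2}$ (Lemma \ref{lem:1d}), not merely exiting the canonical chart; accordingly, part (iii) needs constancy of $\lambda_\di$ to make the eikonal solution global in time, in addition to the global chart over $\Gamma_\di\simeq\R$. Finally, in Step 3 you transport only the coherent piece back; since Fourier integral operators are not bounded on $L^\infty$, one must also prove that the dispersive piece keeps its $O(h^{-1/4})$ sup-norm bound under $\FF$ (the paper's Lemma \ref{lem:1s}, an oscillatory-integral estimate exploiting the bounds of Lemma \ref{lem:1w} on the critical phase), a step absent from your plan.
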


Hence, the picture described below Theorem \ref{thm:1o} remains valid. It sits in sharp contrast with the standard propagation of singularity theorem: away from $\Gamma_\di$, wavepackets split in two pieces propagating in opposite directions. But along $\Gamma_\di$, coherent propagation is possible only at the speed and direction prescribed by $V_\di$, and for data with the right orientation. Te one with orthogonal orientation just disperse along $\Gamma_\di$. This is a quantitative version of the bulk-edge correspondence: it gives the speed and orientation of dynamical edge states; and elucidates the fate of their \textit{a priori} counter-propagating analogues. 

We add a few comments about \eqref{eq-13j}:
\begin{itemize}
\item Theorem \ref{thm:1b} concretely generates edge states for \eqref{eq-10w}: they are wavepackets concentrated at points $(x_0,\xi_0) \in \Gamma_\di$, pointing along the negative-energy eigenspace of $M_\di(x_0,\xi_0)$. Because edge states serve as robust channels for the propagation of energy \cite{HK10}, this has potential for technological applications. 
\item Technically speaking, our approach produces explicit oscillatory integral formulas for the two pieces in \eqref{eq-13j}. These can serve to derive additional estimates, such as concentration along $\Gamma_\di$ or finite speed of propagation for the $L^\infty$-remainder in \eqref{eq-13j}.
\item The limitation $t \in (0,T)$ in Theorem \ref{thm:1b} stems from the reversibility and spreading properties of a classical Hamiltonian flow emerging in a WKB analysis. It is not clear to us whether it can be systematically removed. 
\end{itemize}

\section{Examples}\label{sec-1.4} In \S\ref{sec:D}, we apply Theorem \ref{thm:1b} to the following scenarios:
\begin{itemize}
\item[1.] Domain wall Dirac operators;
\item[2.] Magnetic domain wall Dirac operators;
\item[3.] Domain wall Dirac operators in curved geometry;
\item[4.] Anistropic Haldane models.
\end{itemize}

We summarized the conclusions of the first example in \S\ref{sec:1}. For a Dirac operator with a domain wall $m(x)$, we recover the result of \cite{BB+21}, and show that the time $T$ of Theorem \ref{thm:1b} can be taken arbitrarily large when $\| \nabla m(x) \|$ is constant.  

The second example study the same equation as \cite{BBD22}: it corresponds to adding an external magnetic field $B(x)$ to \eqref{eq-13w}. There we constructed by hand traveling modes for magnetic Dirac operators. We showed that these propagate at speed
\begin{equation}
\dfrac{\nabla  m(x)^\perp}{\sqrt{B(x)^2 + \| \nabla m(x) \|^2}}
\end{equation}
along $m^{-1}(0)$. In \S\ref{sec:Da}, we directly verify this prediction using our microlocal framework. Theorem \ref{thm:1b} completes the result of \cite{BBD22} by showing that no other modes may propagate coherently.

We then apply our result to Dirac operators on curved surfaces with broken time-inversal invariance. Perhaps unsurprisingly, we find that propagation arises at unit speed (measured with respect to the Riemannian metric); in particular, it is unaffected by the curvature.

Finally, we consider distortions of a standard system from solid state physics: Haldane's model. We introduce an equation that describes the effective dynamics of Schr\"odinger waves in a weighted honeycomb lattice with broken time-reversal invariance. It has features of both scenarios above: a intrinsic, generally anisotropic Riemannian metric; and a (pseudo) magnetic field. Our result predicts that the distortion can only slow down the traveling mode. Hence, while defects do not affect the conductive nature of the interface, they actually reduce the speed of propagating signals.

\section{Organization of the paper.}

The proof of Theorem \ref{thm:1b} goes along the following outline:
\begin{itemize}
\item[\textbf{1.}] \textbf{Semiclassical reduction.} In \S\ref{chap:2}, we show that symbols of the form \eqref{eq-0i} are along $\Gamma_\di$ gauge-symplectic equivalent (that is, equal up to conjugation with a $SU(2)$-valued symbol $U$ and composition with a symplectomorphism $\kappa$) to symbols of the form
\begin{equation}\label{eq-0j}
\lambda(x_1) \matrice{\xi_1 & x_2 - i\xi_2 \\ x_2 + i \xi_2 & -\xi_1}, \qquad \lambda \in C^\infty(\R,\R^+).
\end{equation}
Our approach comes with explicit formula for $\lambda$ and $U$ in terms of the symplectomorphism $\kappa$. A quantization of \eqref{eq-0j} implies that $\Di$ and 
\begin{equation}\label{eq-0k} 
\Dii  \de \lambda(x_1) \matrice{h D_1 & x_2 -h \p_2 \\ x_2 + h \p_2 & -h D_1}
\end{equation}
are essentially conjugated up to a remainder that is semiclassically small along $\Gamma_\di$. For technical reasons (which we do not discuss at this point), we will need a refinement of $\Dii$. \\
\item[\textbf{2.}] \textbf{Analysis of the model equation.} In \S\ref{chap:3}, we analyze the behavior of solutions to $(h D_t + \Dii)\Psi_t = 0$, where $\Dii$ is (the refinement of) the Dirac operator \eqref{eq-0k}. This system (on the plane $\R^2$) decomposes into decoupled systems on the line, indexed by an integer $n \in \N$. We perform a WKB analysis on each of these systems, for a suitably rescaled semiclassical parameter. The limitation on $t$ in Theorem \ref{thm:1b} arises then from the reversibility and spreading properties of the emerging Hamiltonian dynamics. We then go on to prove \eqref{eq-13j} for the model equation.\\
\item[\textbf{3.}] \textbf{Reconstruction of the solution.} In \S\ref{chap:4}, we deduce from Steps 1 and 2 an integral formula for solutions to $(h D_t + \Di)\Psi_t = 0$ that are initially wavepackets concentrated at points of $\Gamma_\di$. We then use harmonic analysis techniques to prove that the estimate \eqref{eq-13j} for the model equation transfers back to the actual equation.
\end{itemize} 

We conclude the paper with a detailed analysis of the examples mentioned in \S\ref{sec-1.4}.

\section{Connections with existing results}

From a microlocal perspective, this work describes the propagation of singularities (in phase-space) for semiclassical systems. When eigenvalues of the matrix-valued symbol are simple (and more generally of constant multiplicity) singularities travel along Hamiltonian trajectories of the eigenvalues; see e.g. \cite{T75,B02,R07}. At leading order, vector-valued wavepackets essentially evolve according to a direct sum of scalar systems.  

The situation becomes more interesting when two eigenvalues cross along a subset of phase space. Such problems have a long history, starting with the Laudau--Zener effect \cite{L32,Z32}. In this setup, two eigenvalue cross along a curve, with energy varying along the curve. Once a wavepacket reaches this curve, it splits in two parts, propagating along each Hamiltonian trajectory of the eigenvalues. The Landau--Zener transition rule computes the splitting probabilities. The first mathematical analysis goes back to Haggedorn \cite{H91,H94} and continues with  \cite{HJ98, CV04, WW18, FLR21, FGH21}. 

The Landau--Zener situation is rather different from the setup studied here because the crossing energy for symbols of the form \eqref{eq-10s} is constant (equal to $0$). The set $\Gamma_\di = \di^{-1}(0)$ actually confines particles: a particle localized on $\Gamma_\di$ at some time must remain localized on $\Gamma_\di$ at all times. This scenario is typical of many topological insulator models \cite{HK10, BH13, FC13, AOP16}. The crossing set acts as a one-way channel for coherent propagation of energy. Theorem \ref{thm:1b} predicts moreover that forcing propagation in the opposite direction results in a strong and immediate loss of coherence. 

This picture is consistent with the bulk-edge correspondence, which predicts that transport between topologically distinct structures is asymmetric. This is a universal principle that has been proved for various discrete and continuous models, with and without disorder; see \cite{H93,KRS02,EG02,EGS05,GP12,ASV13,BKR17,
K17,B19,CST21,D21a,QB21}. 
For a precise statement in the context of translation-invariant Dirac operators, we refer to \cite{B19,B19b,B21,QB21}. Locally defined invariants give rise to asymptotic versions require  (as $h \rightarrow 0$); see e.g. \cite{MP+21} and \cite[\S1]{BB+21}.  Traveling modes were produced for some specific models in \cite{BB+21,BBD22,HXZ22}. The present work properly exploits the semiclassical structure of the problem and extends these constructions to general Dirac models, adressing along the way the fate of counter-propagating states. 

The symbol \eqref{eq-0i} has complex coefficients, and hence is not symmetric. The analysis of symmetric systems with crossings is also an interesting subject, with applications to elasticity \cite{H92} and electromagnetism \cite[\S6]{BD93}. We refer to \cite{BD93,FG02,CV03,F04,NU07} for their mathematical analysis.\\

\noindent \textbf{Acknowledgments.} I am thankful for fruitful discussions with Guillaume Bal, Simon Becker, Hadrian Quan and Maciej Zworski. I also gratefully acknowledge support from NSF grants DMS-2118608 and DMS-2054589.

\renewcommand{\thechapter}{B}

\chapter{Semiclassical reduction}\label{chap:2}

\section{Setting and main results} The first part of this paper is an analysis at the symbolic level. We focus on symbols
\begin{equation}\label{eq-10d}
\di(x,\xi) = \sum_{j=1}^3 p_j(x,\xi) \sigma_j, \qquad p_j \in C^\infty(\R^4,\R)
\end{equation}
whose characteristic set $\Gamma_\di = p_1^{-1}(0) \cap p_2^{-1}(0) \cap p_3^{-1}(0)$ satisfy the conditions $\bm{(\AAA_1})$, $\bm{(\AAA_2)}$ from \S\ref{sec-2.1}. 

As explained in \S\ref{sec-2.1}, a key quantity is the Hermitian traceless matrix 
\begin{equation}
M_\di = \dfrac{1}{2i} \big\{ \di,\di \big\}.
\end{equation} 
It has opposite eigenvalues $\pm \lambda^2_\di$ given by \eqref{eq-0o}. These are non-zero on a neighborhood of $\Gamma_\di$. Therefore, we can smoothly diagonalize $M_\di$ on simply connected subsets of $\{ \lambda_\di \neq 0\}$:
\begin{equation}
U_\di M_\di U_\di^* = - \lambda^2_\di \sigma_3, \qquad U_\di \in SU(2).
\end{equation}
Technically speaking, $U_\di$ is not unique: its columns are determined only modulo a phase.

A model operator is
\begin{equation}\label{eq-10e}
\dii(x,\xi) = \big(\lambda(x_1) +\mu(x_1)\xi_1\big) \matrice{ \xi_1 & x_2 -i\xi_2 \\ x_2+i\xi_2 & -\xi_1 },
\end{equation}
where $\lambda \in C^\infty(\R^+,\R)$ and $\mu \in C^\infty(\R,\R)$ are two given functions. These satisfy $\bm{(\AAA_1})$, $\bm{(\AAA_2)}$ on the set $\R \times \{0\}^3$: this set projects diffeomorphically to $\R \times \{0\}$; and $\dii$ vanishes transversely along $\R \times \{0\}^3$. For such symbols, we have
\begin{equation}\label{eq-0p}
M_\dii(x,\xi) = - \big( \lambda(x_1) + \mu(x_1) \xi_1 \big)^2 \sigma_3, 
\end{equation}
in particular $\lambda_\dii(x,\xi) = \lambda(x_1) + \mu(x_1) \xi_1$ and $U_\dii = \Id$.

To state our main result, we will use the notation $f = g + \OO(n)$ to denote two smooth functions $f, g \in C^\infty(\R^4)$ such that $f-g$ vanishes at order $n$ along $\R \times \{0\}^3$, i.e.
\begin{equation}
f(x,\xi) - g(x,\xi) =    \sum_{j+k+\ell = n} x_2^j \xi_1^k \xi_2^\ell \cdot  r_{jk\ell}(x,\xi), \qquad  r_{jk\ell} \in C^\infty(\R^4). 
\end{equation}
We provide $\R^4$ with its standard symplectic two-form, $\w = d\xi_1 \wedge dx_1 + d\xi_2 \wedge dx_2$.

\begin{theorem}\label{thm:1m} Let $\di$ satisfying $\bm{(\AAA_1})$, $\bm{(\AAA_2)}$ and fix $\rho \in \Gamma_\di$. For $(x,\xi)$ in a neighborhood of $0$:
\begin{equations}\label{eq-2h}
U(x,\xi) \cdot \di\circ \kappa(x,\xi) \cdot U(x,\xi)^* \  = \ \dii(x,\xi) + \OO(3), \qquad \text{where:}
\end{equations}
\begin{itemize}
\item $\kappa$ is a symplectomorphism of $\R^4$ with $\kappa(0) = \rho$, and such that $x \mapsto \pi \circ \kappa(x,0)$ is a smooth diffeomorphism from a neighborhood of $0$ in $\R^2$ to a neighborhood of $\rho$ in $\R^2$.
\item $\dii(x,\xi)$ is a symbol of the form \eqref{eq-10e};
\item $U$ is a smooth map from $\R^4$ to $SU(2)$.
\end{itemize}
\end{theorem}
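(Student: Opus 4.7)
My proposed proof matches $\di$ to the model $\dii$ from \eqref{eq-10e} order by order along $\Gamma_\di$: first a symplectic straightening of the crossing manifold, then matching the linear part of the Pauli coefficients, and finally matching the quadratic part. The function $\lambda(x_1)$ will emerge as the pullback of the intrinsic invariant $\lambda_\di$ of \eqref{eq-0o}, and $\mu(x_1)$ will appear as a second invariant at the quadratic order. For Step 1, $\bm{(\AAA_2)}$ provides local configuration coordinates in which $\pi(\Gamma_\di)=\R\times\{0\}$ near $\pi(\rho)$; by $\bm{(\AAA_1)}$, $\Gamma_\di$ is an embedded one-dimensional (hence isotropic) submanifold of $\R^4$. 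A Darboux--Weinstein argument (a Moser isotopy in a tubular neighborhood, based on the fact that two symplectic forms agreeing along an isotropic submanifold are locally equivalent there) then produces a symplectomorphism $\kappa_0$ with $\kappa_0(0)=\rho$, $\kappa_0(\R\times\{0\}^3)=\Gamma_\di$, and such that $x\mapsto\pi\circ\kappa_0(x,0)$ is a local diffeomorphism (arranged by choosing the symplectic complement to $T\Gamma_\di$ along the curve to project injectively onto a transverse direction in $\R^2$).

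For Step 2, each pulled-back Pauli coefficient $\tilde p_j:=p_j\circ\kappa_0$ vanishes on $\R\times\{0\}^3$, so its linear part equals $\sum_k A_{jk}(x_1)v_k$ with $v=(x_2,\xi_1,\xi_2)$ and $A(x_1)\in\mathrm{GL}(3,\R)$ by $\bm{(\AAA_1)}$. Conjugation by a smooth $U_0(x_1)\in SU(2)$ acts on the row index of $A$ via the associated $SO(3)$-rotation, and a further symplectomorphism $\kappa_1$ preserving $\R\times\{0\}^3$ pointwise acts on the column index via its linearization, which lies in a specific subgroup of $\mathrm{GL}(3,\R)$ compatible with the symplectic structure. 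Setting $\lambda(x_1):=\lambda_\di(\kappa_0(x_1,0))>0$, I would reduce $A(x_1)$ to $\lambda(x_1)A^{\mathrm{mod}}$ (with $A^{\mathrm{mod}}$ the map $v\mapsto x_2\sigma_1+\xi_2\sigma_2+\xi_1\sigma_3$) by exploiting that the Poisson bracket matrix $(\{\tilde p_j,\tilde p_k\}(x_1,0))_{jk}$ is an $SO(3)\times\mathrm{Sp}$-invariant determined entirely by $\lambda$.

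For Step 3, the symbol reads $\lambda(x_1)g+Q+\OO(3)$ with $g=x_2\sigma_1+\xi_2\sigma_2+\xi_1\sigma_3$ and $Q$ matrix-valued quadratic in $v$. A further gauge $e^{ih}U_0$ with $h$ Hermitian traceless and linear in $v$ modifies $Q$ by $i\lambda[h,g]$ (using $[\sigma_j,\sigma_k]=2i\epsi_{jk\ell}\sigma_\ell$), while a symplectomorphism generated by a Hamiltonian $f$ cubic in $v$ modifies $Q$ by $\lambda\{f,g\}$ restricted to its order-two part. The plan is to show that these two infinitesimal actions together span the complement of the one-dimensional subspace $\R\cdot(\xi_1 g)$ inside the $18$-dimensional space of matrix-valued quadratic polynomials in $v$, so that exactly one invariant remains and defines $\mu(x_1)$. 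The key inputs are $\ker(h\mapsto[h,g])=\R\cdot g$ (the pointwise centralizer of $g$, which is non-degenerate away from $v=0$, is generated by $g$) and an explicit identification of the cokernel direction as $\xi_1 g$.

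The main obstacle is the dimension bookkeeping in Step 3: the infinitesimal gauge and symplectic actions have overlapping images and individual kernels that must be carefully tracked in order to confirm that exactly one invariant $\mu(x_1)$ remains, and one must verify directly that the specific direction $\xi_1 g$ is the cokernel representative (rather than some other quadratic symbol). A secondary technical difficulty is the simultaneous enforcement in Step 1 of the symplectic straightening of $\Gamma_\di$ and the diffeomorphism condition on $\pi\circ\kappa_0(\cdot,0)$, which constrains the choice of Darboux chart along the isotropic curve.
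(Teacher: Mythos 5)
Your Steps 1--2 are essentially sound and run parallel to the paper (your Darboux--Weinstein straightening with identity differential along the curve is Lemma \ref{lem:1p}, and the parametric linear matching is Theorem \ref{thm:1l}, where the smooth parameter dependence and the invertible upper-left block needed for the projection condition are handled). The genuine gap is in Step 3: the span you claim is false. Write $g=x_2\sigma_1+\xi_2\sigma_2+\xi_1\sigma_3$, $\vec g=(x_2,\xi_2,\xi_1)$, $v=(x_2,\xi_1,\xi_2)$. The gauge action is $h\mapsto i[h,g]=-2(\vec h\times\vec g)\cdot\vec\sigma$ with $\vec h$ linear in $v$; its image $G$ has dimension $9-1=8$ (kernel $\R\vec g$). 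For a cubic Hamiltonian $f$, the quadratic-order effect of $e^{H_f}$ on $\lambda g$ is only $\lambda\big(\p_{\xi_2}f\,\sigma_1-\p_{x_2}f\,\sigma_2\big)$, since the $(x_1,\xi_1)$-pairing terms $\p_{\xi_1}f\,\lambda' g$ and $\p_{x_1}f\,\lambda\sigma_3$ are cubic in $v$; this image $S$ has dimension $10-1=9$. But $G\cap S\neq\{0\}$: the choices $\vec h=(0,0,\xi_1)$ and $f=\tfrac12\xi_1(x_2^2+\xi_2^2)$ both produce a nonzero multiple of $\xi_1(\xi_2\sigma_1-x_2\sigma_2)$. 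Hence $\dim(G+S)=16$, so $G+S$ cannot be a complement of $\R\,\xi_1 g$ in the $18$-dimensional space; the cokernel is $2$-dimensional, represented e.g.\ by $\xi_1^2\sigma_3$ (no element of $G+S$ has a $\xi_1^2$ coefficient on $\sigma_3$: $S$ has no $\sigma_3$-component and the $\sigma_3$-component of $G$ is $h_1\xi_2-h_2x_2$) and by $\xi_1x_2\sigma_1+\xi_1\xi_2\sigma_2$ (excluded by a curl/consistency check). Since $\mu\,\xi_1 g$ sweeps only one direction of this quotient, your two families of transformations cannot normalize a general quadratic remainder, and the bookkeeping you flagged as the main obstacle genuinely fails as set up.

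What is missing is the shear generator: the quadratic Hamiltonians $f_0(x_1)\xi_1^2$ fix $\R\times\{0\}^3$ pointwise, leave the \emph{linear} part of the symbol unchanged (being independent of $x_2,\xi_2$), yet have nontrivial linearization along the curve and contribute $2\lambda'(x_1)f_0\,\xi_1 g-\lambda(x_1)f_0'\,\xi_1^2\sigma_3$ at quadratic order. Adjoining them, the residual obstruction coefficients $c_1(x_1),c_2(x_1)$ along $\xi_1^2\sigma_3$ and $\xi_1x_2\sigma_1+\xi_1\xi_2\sigma_2$ can be matched to the class $(\mu,\mu)$ of $\mu\,\xi_1 g$ by solving the ODE $\lambda f_0'=c_1-c_2$ and setting $\mu=c_2+2\lambda'f_0$; so $\mu$ is produced by integration in $x_1$, not by a pointwise cokernel count. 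This is precisely the role of the $z_3^2f_0(z_1)$ term and the equation $r_0-2f_0'=\nu$ in Lemma \ref{lem:1j}: the paper does the quadratic-order matching on the scalar $q=-\det\di$ to precision $\OO(4)$ (Theorem \ref{thm:1c}), and only then builds the gauge, where the correction reduces to the cross-product system \eqref{eq-5n}, solvable exactly because the $\OO(4)$ scalar identity forces $z\cdot r=0$. With the shear adjoined (and your Step 1--2 technicalities treated as in Lemma \ref{lem:1l}), your matrix-level order-by-order scheme can be repaired and becomes an equivalent reorganization of the paper's argument.
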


We can actually compute the quantities $U$ and $\lambda$ of Theorem \ref{thm:1m} along $\Gamma_\di$, using transformation rules of $M_\di$ under conjugation and composition by symplectomorphism. According to \eqref{eq-0p} and \eqref{eq-2h}, on $\R \times \{0\}^3$:
\begin{align}
-\lambda^2 \sigma_3 = M_{\dii} & = \dfrac{1}{2i} \left\{ U \cdot \di\circ \kappa \cdot U^* , U \cdot \di\circ \kappa \cdot U^* \right\} 
\\
&
= U  \cdot \dfrac{1}{2i} \left\{ \di, \di \right\}\circ \kappa \cdot U^* 
 = U \cdot M_\di \circ \kappa \cdot U^* 
 \\
 & = - \lambda^2_\di \circ \kappa \cdot U U^*_\di \circ \kappa \cdot \sigma_3 \cdot U_\di \circ \kappa U^*,
\end{align}
where we used that $\di \circ \kappa$ and $\dii$ are conjugated, in particular $\di \circ \kappa = \OO(1)$. We deduce that along $\Gamma_\di$, $\lambda = \lambda_\di \circ \kappa$ and $U$, $U_\di \circ \kappa$ are equal up to multiplication of columns by a phase (in particular their columns are parallel).  We comment that providing an expression for $\mu$ is considerably more difficult as this quantity is a subleading parameter.

We now formulate an operator-theoretic version of Theorem \ref{thm:1m}. To this end, we very briefly review Fourier integral operators. These are semiclassical quantizations of symplectomorphisms; see \cite[\S11]{Z12}. Given $\phi \in C^\infty(\R^4,\R)$ quadratic away from a compact set, and satisfying the transversality conditions $\p_{x\xi} \phi \neq 0$, the prime example of symplectomorphism is defined (implicitly) by the relation $\kappa(\p_\xi \phi, \xi) = (x,\p_x\phi)$. We refer to $\phi$ as a generating function of $\kappa$. An example of Fourier integral operator quantizing $\kappa$ is
\begin{equation}\label{eq-10c}
\FF u(x) \de \int_{\R^2} e^{\frac{i}{h}(\phi(x,\xi) - y\xi)} c(x,\xi) u(y) \dfrac{dy}{(2\pi h)^2}, \qquad u \in \SSS(\R^2),
\end{equation}
where $c$ is a smooth symbol. 

Given three smooth function $s, \lambda, \mu : \R \rightarrow \R$, we introduce 
\begin{align}\label{eq-10b}
\Dii  = & \left( h \lambda(x_1) D_1 + \dfrac{h \lambda'(x_1)}{2i} + h^2 D_1 \mu(x_1) D_1  \right) \matrice{1 & 0 \\ 0 & -1}.
\\
& + \left( \lambda(x_1) + h \mu(x_1) D_1 + \dfrac{h \mu'(x_1)}{2i}\right) \matrice{0 & x_2 -h \p_2 \\ x_2 + h \p_2 & 0} + h s(x_1).
\end{align}
A direct verification shows that this semiclassical operator has principal symbol $\dii$. The term involving $s$ is subleading.

\begin{theorem}\label{thm:1d} Let $\Di$ be a pseudodifferential operator whose symbol $\di$ satisfies $\bm{(\AAA_1})$, $\bm{(\AAA_2)}$ above, and fix $\rho \in \Gamma_\di$. Then
\begin{equations}\label{eq-2u}
\FF^{-1} \Di \FF \  = \Dii + R, \qquad \text{where:}
\end{equations}
\begin{itemize}
\item $\FF$ is an invertible Fourier integral operator of the form \eqref{eq-10c}, quantizing the symplectomorphism $\kappa$ produced by Theorem \ref{thm:1m};
\item $\Dii$ takes the form \eqref{eq-10b}, where $\lambda, \mu$ are provided by Theorem \ref{thm:1m} and $s : \R \rightarrow \R$ is a smooth scalar-valued function;
\item $R$ is a pseudodifferential operator with symbol $\OO(3) + \OO(1) O(h) + \OO(h^2)$ near $0$. 
\end{itemize}
\end{theorem}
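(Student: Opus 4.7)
The strategy is to realize $\FF$ as a matrix-valued semiclassical FIO that implements both the symplectomorphism $\kappa$ of Theorem~\ref{thm:1m} and the $SU(2)$-gauge transformation by $U$. Theorem~\ref{thm:1m} then gives the principal symbol matching directly, leaving only a subprincipal computation in which the scalar function $s(x_1)$ absorbs a residual obstruction along $\Gamma$.

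First I would construct $\FF$ as follows. Because $\pi \circ \kappa|_{\xi=0}$ is a local diffeomorphism at $0$, $\kappa$ admits a generating function $\phi \in C^\infty$ with $\p_{x\xi}\phi$ non-singular near $0$, so the scalar FIO $\FF_0$ defined by \eqref{eq-10c} with $c \equiv 1$ quantizes $\kappa$ in the sense of \cite[\S11]{Z12}. Setting $\FF = \FF_0 \cdot \Op^W(U^*)$ and applying the standard composition rules for semiclassical FIOs puts $\FF$ in the form \eqref{eq-10c} with matrix-valued amplitude $c = U^* + O(h)$. Since $U \in SU(2)$ the leading amplitude is elliptic, so one constructs a parametrix of the same type and obtains that $\FF$ is microlocally invertible near $\rho$.

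Second, I would expand $\FF^{-1}\Di\FF$ using Egorov's theorem and the Moyal product. The principal symbol is $U \cdot (\di\circ\kappa) \cdot U^*$, which by Theorem~\ref{thm:1m} equals $\dii + \OO(3)$, matching the principal symbol of the operator $\Op^W(\dii)$ embedded in $\Dii$. The $O(h)$ subprincipal term is a $2\times 2$ symbol $r_1$ built from the Egorov subprincipal contribution of the $\FF_0$-conjugation and the Moyal correction $\tfrac{h}{2i}\bigl(\{U,\di\circ\kappa\}U^* + U\{\di\circ\kappa, U^*\}\bigr)$ produced by the $\Op^W(U^*)$-conjugation. On the other side, a direct computation from \eqref{eq-10b} gives $\Dii - \Op^W(\dii) = h s(x_1) I + \tfrac{h^2 \mu''(x_1)}{2} \sigma_3$: the terms $\tfrac{h\lambda'}{2i}$ and $\tfrac{h\mu'}{2i}$ are exactly the Weyl reorderings converting $h\lambda D_1$ and $h\mu D_1$ into $\Op^W(\lambda\xi_1)$ and $\Op^W(\mu\xi_1)$, while $h^2 D_1 \mu D_1 = \Op^W(\mu\xi_1^2) + \tfrac{h^2\mu''}{2}$.

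It remains to choose $s(x_1)$ so that $h r_1 - h s(x_1) I$ vanishes along $\Gamma = \R \times \{0\}^3$ modulo $\OO(1)$. Setting $s(x_1) = \tfrac{1}{2} \Tr r_1(x_1, 0, 0, 0)$ kills the scalar part, and the main obstacle is to show that the traceless part of $r_1|_\Gamma$ vanishes. I would attack this by using the freedom in Theorem~\ref{thm:1m}: the columns of $U$ are pinned only up to scalar phases along $\Gamma_\di$, and the amplitude of $\FF_0$ admits a half-density factor. Exploiting both degrees of freedom to annihilate the off-diagonal subprincipal symbol along $\Gamma$ is a subprincipal diagonalization in the spirit of \cite{T75}, and constitutes the principal technical step. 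Once accomplished, the residual $R = \FF^{-1}\Di\FF - \Dii$ has symbol in $\OO(3) + \OO(1) O(h) + \OO(h^2)$ near $0$, as required.
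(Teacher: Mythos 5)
Your setup (FIO for $\kappa$ composed with a quantization of $U^*$, Egorov plus the Moyal expansion, principal symbols matching via Theorem \ref{thm:1m}, and the observation that the Weyl-reordering terms in \eqref{eq-10b} account for the $\tfrac{h\lambda'}{2i}$, $\tfrac{h\mu'}{2i}$ corrections) is the same skeleton as the paper's proof. But the step you yourself flag as ``the principal technical step'' is where the argument has a genuine gap, and the tools you propose for it are not sufficient. After conjugation the $O(h)$ term restricted to $\Gamma$ is a Hermitian matrix $R(x_1)=s(x_1)+\lambda(x_1)\tR(x_1)$ with $\tR=\sum_{j=1}^3 r_j\sigma_j$: three independent real functions of $x_1$. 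The freedoms you invoke cannot kill this. Multiplying the columns of $U$ by phases amounts (after the determinant constraint) to conjugating by $\operatorname{diag}(e^{i\te(x_1)},e^{-i\te(x_1)})$, i.e.\ one real function, and its subprincipal contribution along $\Gamma$ lands essentially in a single direction; a half-density (real, positive) factor in the amplitude of $\FF_0$ contributes at order $h$ terms of the form $\tfrac{h}{i}a^{-1}\{\dii,a\}$ with $a$ real, which are \emph{anti-Hermitian} and so cannot cancel the Hermitian traceless part $\lambda\tR$. Moreover the claim that the traceless part of the subprincipal symbol can be made to \emph{vanish} along $\Gamma$ is not what is needed, and in general it is not achievable with unitary/gauge freedom alone.

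What the paper does instead is to accept the obstruction and conjugate it into the allowed remainder class: it composes with a scalar pseudodifferential operator $c^w$ whose symbol is complex and non-unitary, $c=\exp\bigl(i\int_0^{x_1}r_3\bigr)\bigl(1-i\xi_2\,r_1(x_1)+ix_2\,r_2(x_1)\bigr)$, chosen so that $i\{\dii_0,c\}+\tR c=\OO(1)$ with $\dii_0=x_2\sigma_1+\xi_2\sigma_2+\xi_1\sigma_3$. The Moyal correction produced by this conjugation then cancels $h\lambda\tR$ modulo $\OO(1)O(h)$, leaving exactly $\dii+hs$ plus an admissible remainder; composing $c^w$ with the FIO keeps the operator in the form \eqref{eq-10c}. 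Two further points you should also repair: (i) the paper first corrects $F$ to an exactly unitary $F_2=F(B^w)^{-1}$ and uses a unitary quantization $\UU^w$ of $U$, precisely so that the conjugated operator is selfadjoint and hence $R(x_1)$ is Hermitian — this is what guarantees that $s$ and the $r_j$ are real-valued, as required in the statement ($s:\R\to\R$); your non-unitary, merely microlocally invertible $\FF$ does not give this for free. (ii) The theorem asks for a globally invertible $\FF$; the paper gets this from the globally defined generating function of Lemma \ref{lem:1q} (quadratic off a compact set, global diffeomorphism), not just from microlocal ellipticity near $\rho$.
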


Our approach to prove Theorem \ref{thm:1m} is inspired by two beautiful papers of Colin de Verdi\`ere \cite{CV03, CV04}. There, the author perform a gauge-symplectic reduction of $2 \times 2$ symbols with crossings that mix time and space derivatives and coefficients. Technically speaking, one could apply \cite[Theorem 4]{CV04} to write the operator $h D_t + \Di$ in normal form. However, the Fourier integral operator produced in \cite{CV04} mixes time and space variable in an unspecified way. Without further analysis, this would doom our goal of estimating solutions to \eqref{eq-10w}. 

We chose instead to apply the strategy of \cite{CV03,CV04} to the operator $\Di$ rather than $h D_t + \Di$, tracking down along the way the value of the symplectic invariant $\lambda_\di$. Our conclusion differs from \cite[Theorem 4]{CV04} on three major points:
\begin{itemize}
\item Our gauge transformation is a conjugation by an element of $SU(2)$, instead of left/right multiplication by elements $P, P^\top \in GL(2)$.
\item We construct a symplectomorphism $\kappa$ that lifts and projects to a diffeomorphism of $\R^2$.
\item We guarantee that a fundamental coefficient arising in \cite[Propostion 3]{CV04} does not depend on $x_2,\xi_2$ (and we provide a leading-order expression in terms of $\di$). 
\end{itemize}
These additions happen to be essential in our analysis of solutions to \eqref{eq-10w}.

\section{Linear theory}\label{sec-2a} The simplest possible scenario consists of linear symbols. We introduce
\begin{equation}\label{eq-9y}
\DD \de \big\{ \di \quad \text{given by \eqref{eq-10d}, with $p_1, p_2, p_3$ linear in $(x,\xi)$, satisfying $\bm{(\AAA_1})$, $\bm{(\AAA_2)}$}\big\}.
\end{equation}
We have:

\begin{theorem}\label{thm:1l} Assume that $\di \in \DD$. There exists a $4 \times 4$ symplectic matrix $S$ with invertible upper-left $2 \times 2$ block; and a $2 \times 2$ matrix $U \in SU(2)$ such that
\begin{equation}\label{eq-9g}
U \cdot \di \circ S(x,\xi) \cdot U^* = \lambda_\di \cdot \matrice{ \xi_1 & x_2 -i\xi_2 \\ x_2+i\xi_2 & -\xi_1 }, \qquad \lambda_\di \de \left\| \dfrac{\left\{ \di,\di\right\} }{2}\right\|^{1/2}.
\end{equation}
Moreover, $U \in SU(2)$ diagonalizes the matrix $\{\di, \di\}$.

Finally, if $\di$ depends smoothly on a parameter taking values in a simply connected set, then we can chose $S$ and $U$ so that they also depend smoothly on the parameter. 
\end{theorem}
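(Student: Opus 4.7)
The plan is a two-step reduction. First (gauge step), since $\di$ is linear, $M_\di = \frac{1}{2i}\{\di,\di\}$ is a constant Hermitian traceless $2\times 2$ matrix whose eigenvalues $\pm\lambda_\di^2$ are nonzero and simple by $\bm{(\AAA_1)}$; linear algebra then produces $U\in SU(2)$ with $U M_\di U^* = -\lambda_\di^2\sigma_3$, unique up to right multiplication by $\mathrm{diag}(e^{i\theta},e^{-i\theta})$. Writing $U\di U^* = \sum p_j'\sigma_j$ and matching Pauli components yields the Poisson relations $\{p_1',p_2'\} = -\lambda_\di^2$ and $\{p_2',p_3'\} = \{p_3',p_1'\} = 0$. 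A direct computation moreover shows that $p_3' = -\bm a\cdot\bm p/|\bm a|$ with $\bm a = (\{p_2,p_3\},\{p_3,p_1\},\{p_1,p_2\})$, and that the residual $SO(2)$ freedom in $U$ rotates only the pair $(p_1',p_2')$ while fixing $p_3'$.

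Second (Darboux step), I set $\tilde p_j = p_j'/\lambda_\di$ and look for the $4\times 4$ matrix $T = S^{-1}$ whose rows are the coefficient row-vectors of $(L,\tilde p_1,\tilde p_3,\tilde p_2)$ for an unknown linear form $L$ on $\R^4$. A direct check shows $T$ is symplectic if and only if the Poisson bracket table of its four rows matches the canonical one; given the relations just derived, this reduces to $\{L,\tilde p_1\} = \{L,\tilde p_2\} = 0$ and $\{L,\tilde p_3\} = -1$. These three linear conditions on $L$ are consistent because the Hamiltonian vector fields $H_{\tilde p_1},H_{\tilde p_2},H_{\tilde p_3}$ are linearly independent (by $\bm{(\AAA_1)}$), and they define a one-parameter affine family of solutions. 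Unwinding $T$ from the identity $T\circ S = \mathrm{Id}$ yields $\tilde p_1\circ S = x_2$, $\tilde p_2\circ S = \xi_2$, $\tilde p_3\circ S = \xi_1$, hence $U\cdot\di\circ S\cdot U^* = \lambda_\di(x_2\sigma_1 + \xi_2\sigma_2 + \xi_1\sigma_3)$, which is \eqref{eq-9g}.

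The only delicate point is the invertibility of the upper-left $2\times 2$ block of $S$. A direct computation shows this is equivalent to $\tilde V\cap(\{0\}\times\R^2) = \{0\}$, where $\tilde V = S(\R^2\times\{0\}) = \{\tilde p_3 = \tilde p_2 = 0\}$, and hence to linear independence of the restrictions of $\tilde p_3$ and $\tilde p_2$ to $\{0\}\times\R^2\subset\R^4$. The key observation — and the only place where $\bm{(\AAA_2)}$ is used essentially — is that $p_3'$ is the Poisson-central element of $\mathrm{span}(p_1,p_2,p_3)$, so its Hamiltonian vector field $H_{p_3'}$ generates $\Gamma_\di$; if $p_3'$ depended only on $x$ then $H_{p_3'}$ would lie in the $\partial_\xi$-directions and $\Gamma_\di\subset\{0\}\times\R^2$ would contradict $\bm{(\AAA_2)}$. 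A short dimension count (again from $\bm{(\AAA_2)}$) shows that $(p_1,p_2,p_3)|_{\{0\}\times\R^2}$ spans the full dual space, so the residual $SO(2)$ freedom in the gauge step can be used to rotate $(\tilde p_1,\tilde p_2)$ into a configuration where $\tilde p_2|_{\{0\}\times\R^2}$ is not a multiple of $\tilde p_3|_{\{0\}\times\R^2}$.

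Smoothness in a parameter is essentially automatic: the vector $\bm a$, the scalar $\lambda_\di$, and the $\sigma_3$-eigenvector of $UM_\di U^*$ all vary smoothly with $\di$ whenever the eigenvalues of $M_\di$ remain simple, i.e.\ everywhere by $\bm{(\AAA_1)}$; and on a simply connected parameter set the $SU(2)$-diagonalization, the $SO(2)$ rotation angle, and the linear form $L$ can all be chosen to depend smoothly by solving linear systems with smoothly varying coefficients. The main obstacle in the whole argument is really the central-element identification of the previous paragraph — translating $p_3'|_{\{0\}\times\R^2}\neq 0$ into a statement about the tangent direction of $\Gamma_\di$; once that is in hand, the rest is explicit linear algebra.
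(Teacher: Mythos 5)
Your argument is correct, but it is not the route the paper takes for Theorem \ref{thm:1l}: the paper first reduces the quadratic form $q=-\det\di$ symplectically (Lemma \ref{lem:1l}, via the spectral theory of the Hamiltonian matrix $JQ$, with separate steps to force the invertible upper-left block and the smooth parameter dependence) and only then builds the gauge $U$ through the $SO(3)\simeq SU(2)/\{\pm1\}$ isomorphism (Lemma \ref{lem:1b}), repairing a residual sign $\nu$ by composing with $(x,\xi)\mapsto(-x_1,x_2,-\xi_1,\xi_2)$. What you do is the alternative the paper only sketches in Remark \ref{rem:2} — diagonalize $M_\di$ first, read off $\{\tilde p_1,\tilde p_2\}=-1$, $\{\tilde p_2,\tilde p_3\}=\{\tilde p_3,\tilde p_1\}=0$, then apply linear Darboux — and, crucially, you supply exactly the two points that Remark \ref{rem:2} leaves to the reader. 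Your treatment of the block-invertibility is sound and is the real added content: $S(\R^2\times\{0\}^2)=\{\tilde p_2=\tilde p_3=0\}$, so invertibility of the block is equivalent to independence of $\tilde p_2|_{\{0\}\times\R^2}$ and $\tilde p_3|_{\{0\}\times\R^2}$; since $p_3'$ Poisson-commutes with $p_1',p_2'$, the constant vector $H_{p_3'}$ spans $\Gamma_\di$, so $\bm{(\AAA_2)}$ gives $\p_\xi p_3'\neq0$; and if the restrictions of $p_1,p_2,p_3$ to $\{0\}\times\R^2$ did not span the full dual, $\Gamma_\di$ would contain a nonzero vector of $\{0\}\times\R^2$, again contradicting $\bm{(\AAA_2)}$ — whence the residual $\sigma_3$-diagonal gauge rotation can make $\tilde p_2|_{\{0\}\times\R^2}\notin\R\,\tilde p_3|_{\{0\}\times\R^2}$. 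Your route buys a shorter, sign-free argument with explicit formulas (the handedness is already encoded in normalizing $UM_\di U^*=-\lambda_\di^2\sigma_3$, so no $\nu$ appears); what the paper's determinant-based route buys is reusability, since Lemma \ref{lem:1l} is applied verbatim, pointwise along $\Gamma_\di$, in the nonlinear reduction of Theorem \ref{thm:1c}, which your argument does not directly provide. Two small polish items: the smooth choice of the rotation angle is not literally ``solving a linear system'' — on the simply connected parameter set take a smooth argument of the nowhere-vanishing pair $\big(\det[\p_\xi\tilde p_2,\p_\xi\tilde p_3],\,\det[\p_\xi\tilde p_1,\p_\xi\tilde p_3]\big)$, the same phase trick as Step 5 of Lemma \ref{lem:1l}; and for $L$ one can take, e.g., the minimal-norm solution of the constant-rank affine system, which is smooth in the parameter.
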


Given Theorem \ref{thm:1l}, our approach to prove Theorem \ref{thm:1d} will go as follows. When $\di$ does not depends linearly on $(x,\xi)$, we linearize $\di$ at each point of $\Gamma_\di$ (seen as parameter) and obtain, by Theorem \ref{thm:1l}, parameter-dependent quantities $\lambda$, $S$ and $U$. We then construct a symplectomorphism with differential $S$ along $\Gamma_\di$. Refinements transversal to $\Gamma_\di$ produce $\kappa$, $\mu$ and $U$; see \S\ref{sec-3.2}-\ref{sec:2}. %Finally, in \S\ref{sec-2.4} we quantize the identity \eqref{eq-2h} and deduce \eqref{eq-2u}.

\subsection{Symplectic reduction of quadratic forms}\label{sec:1.1} 

To prove Theorem \ref{thm:1l}, we follow a strategy from Colin de Verdi\`ere \cite{CV03}. It relies first on a symplectic reduction of $q = -\det \di$, which sets aside the construction of $U$; and a gauge-theoretic argument to produce $U$. 

We mention that \eqref{eq-9g} also admits a substantially simpler, more direct proof, which relies on the diagonalization of $\{\di,\di\}$, and comes with explicit expressions for $\lambda_\di$ and $U$, see \S\ref{sec-1.3} below. Colin de Verdi\`ere's approach nonetheless transfers more easily to non-linear symbols, see \S\ref{sec-2}.

When $\di(x,\xi)$ depends linearly on $(x,\xi)$ and satisfies the conditions (i) and (ii) above, $q = -\det \di$ is a  quadratic form of signature $(0,+,+,+)$. We refer to \cite[Theorem 21.5.3]{H85} for a general result on symplectic equivalence of quadratic forms. We develop here a special case, whose details will be relevant later.

\begin{lemma}\label{lem:1l} Let $q$ be a quadratic form of signature $(0,+,+,+)$ with $\pi ( \ker q ) \neq \{0\}$.  There exists $\lambda > 0$ and a $4 \times 4$ symplectic matrix $S$ with invertible upper-left $2 \times 2$ bloc such that:
\begin{equation}\label{eq-1e}
q \circ S(x,\xi) = \lambda^2 \big( x_2^2 + \xi_1^2 + \xi_2^2 \big). 
\end{equation}
Moreover, if $q$ depends smoothly on a parameter moving within an interval, then we can require $\lambda$ and $S$ to also depend smoothly on the parameter. 
\end{lemma}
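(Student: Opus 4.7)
The plan follows the standard Williamson-type symplectic reduction, with extra care to enforce the upper-left block invertibility. The invariant $\lambda$ will be the unique positive constant such that the positive definite form induced by $q$ on the two-dimensional symplectic quotient $L^\omega/L$, where $L := \ker q$, is symplectically equivalent to $\lambda^2(X^2 + \Xi^2)$.

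First I would pick a nonzero $v \in L$; by hypothesis $\pi(v) \neq 0$. Then I would choose $w_0$ with $\omega(v, w_0) = 1$ and the extra transversality $\pi(w_0) \notin \R \pi(v)$, which is a generic open condition. Set $V_2^{(0)} := \mathrm{span}(v, w_0)^\omega$: a two-dimensional symplectic subspace of $L^\omega$ complementary to $L$, on which $q$ is positive definite. Since the polar form of $q$ is nondegenerate on $V_2^{(0)}$, there is a unique $\alpha \in V_2^{(0)}$ with $q(\alpha, \cdot) = -q(w_0, \cdot)$ on $V_2^{(0)}$; put $w := w_0 + \alpha$. A direct computation then gives $\omega(v, w) = 1$, the new complement $V_2 := \mathrm{span}(v, w)^\omega$ equals $\{e - \omega(e, \alpha) v : e \in V_2^{(0)}\}$, and the triple $(V_2, \omega|_{V_2}, q|_{V_2})$ is canonically isomorphic to the analogous triple on $V_2^{(0)}$; crucially, $q(w, \cdot)|_{V_2} = 0$.

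Next, on the 2D symplectic plane $V_2$ with positive definite $q|_{V_2}$, the standard symplectic diagonalization produces a unique $\lambda > 0$ and a symplectic basis $(e_2, f_2)$ with $q(e_2) = q(f_2) = \lambda^2$ and $q(e_2, f_2) = 0$. On $\mathrm{span}(v, w)$, $q(v) = 0$ and $q(w) > 0$, so the rescaling $(v, w) \mapsto (v/c, c w)$ with $c := \lambda / \sqrt{q(w)}$ achieves $q(c w) = \lambda^2$ while preserving $\omega(v/c, c w) = 1$, $V_2$, and all orthogonalities. Assembling $S$ with columns $v/c, e_2, c w, f_2$ in the basis order $(e_{x_1}, e_{x_2}, e_{\xi_1}, e_{\xi_2})$ gives a symplectic matrix with $q \circ S = \lambda^2(x_2^2 + \xi_1^2 + \xi_2^2)$.

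The hardest step will be the invertibility of the upper-left block $A$, whose columns are $\pi(v/c)$ and $\pi(e_2)$. The transversality $\pi(w_0) \notin \R \pi(v)$ chosen above gives $\mathrm{span}(v, w_0) \cap \ker \pi = 0$; taking $\omega$-orthogonals and using that $\ker \pi$ is Lagrangian (so $(\ker \pi)^\omega = \ker \pi$) yields $V_2^{(0)} \cap \ker \pi = 0$, so $\pi|_{V_2^{(0)}} \colon V_2^{(0)} \to \R^2$ is a bijection; the same then holds for $V_2$ via the isomorphism above. Therefore the unit circle in $V_2$ (for the form $\lambda^2(X^2 + \Xi^2)$) projects under $\pi$ to a non-degenerate ellipse in $\R^2$, which is not contained in the line $\R \pi(v)$; the residual $SO(2)$-freedom in the basis $(e_2, f_2)$ then lets me rotate $e_2$ to a position with $\pi(e_2) \notin \R \pi(v)$, making $A$ invertible. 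For the smooth parameter dependence, on a simply connected interval the line bundle $L$ trivializes, so $v$ varies smoothly, and every subsequent choice solves a smoothly varying linear or algebraic equation under open genericity conditions; hence $\lambda$ and $S$ depend smoothly on the parameter.
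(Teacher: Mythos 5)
Your construction is correct in its core and takes a genuinely different route from the paper. The paper works spectrally with the Hamiltonian matrix $L=JQ$: it shows $L$ has eigenvalues $0$ (algebraic multiplicity two) and $\pm i\lambda^2$, builds the symplectic basis out of $\ker(L^2)$ and the real/imaginary parts of an eigenvector for $i\lambda^2$, and then repairs the upper-left block by a discrete column swap (or, in the parameter-dependent case, an explicit rotation by the argument of a nonvanishing complex determinant). You instead perform a direct symplectic reduction: pick $v\in\ker q$ with $\pi(v)\neq 0$, correct a transversal $w_0$ to $w=w_0+\alpha$ so that $q(w,\cdot)$ annihilates the symplectic complement, apply two-dimensional Williamson on that complement, and rescale the $(v,w)$-plane. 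This is more elementary and identifies $\lambda$ as the symplectic invariant of the reduced form on $\ker q^{\omega}/\ker q$ (the same number as the paper's imaginary eigenvalue, though you do not need that identification here); the paper's spectral route has the advantage of exhibiting $\lambda$ directly on the linearization of $H_q$, which is what gets reused later in the text. Your checks that $q(v,\cdot)=0$, $q(w,\cdot)|_{V_2}=0$, $q(w)>0$, and that the map $e\mapsto e-\omega(e,\alpha)v$ preserves $\omega$ and $q$ all go through; the sign conventions for the symplectic pairing can be matched by harmless sign flips of $w$ and $f_2$, which do not affect $q\circ S$.

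Two soft spots need attention. First, your transfer of the transversality from $V_2^{(0)}$ to $V_2$ is not justified as stated: the isomorphism $e\mapsto e-\omega(e,\alpha)v$ does not commute with $\pi$, so bijectivity of $\pi|_{V_2^{(0)}}$ does not by itself give bijectivity of $\pi|_{V_2}$, and the projected "unit circle" could in principle degenerate to a segment. The repair is one line: since $\pi(e-\omega(e,\alpha)v)=\pi(e)-\omega(e,\alpha)\pi(v)$, the images $\pi(V_2)$ and $\pi(V_2^{(0)})$ differ only by multiples of $\pi(v)$; hence if $\pi(V_2)\subseteq\R\pi(v)$ then $\pi(V_2^{(0)})\subseteq\R\pi(v)$, contradicting surjectivity. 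That weaker statement is all you need: the $q$-circle in $V_2$ spans $V_2$, so its $\pi$-image (ellipse or segment) contains nonzero points outside $\R\pi(v)$, and the $SO(2)$ freedom lets you place $e_2$ there. Second, the parameter-dependence paragraph is too brisk. "Open genericity" alone does not produce a global smooth selection, because the admissible sets are disconnected: the set of allowed $w_0$ is the complement of an affine $2$-plane inside an affine $3$-plane (two components), and the set of allowed rotation angles is a circle minus two antipodal points. Over an interval these selections do exist, but you should say how — e.g. choose a smooth solution of $\omega(v_s,w)=1$ and correct it within the fiber, and for the rotation choose a smooth argument of the nonvanishing pair $\big(\det[\pi(v),\pi(e_2)],\det[\pi(v),\pi(f_2)]\big)$, which is exactly the rotation trick the paper uses in its Step 5. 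With these two repairs your argument is a complete and valid alternative proof.
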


\begin{proof} 1. Below we use the variable $z=(x,\xi)$. Let $Q = \nabla^2 q$ and $L = J Q$ (in particular, $\lr{Lz, \p_z}$ is the Hamiltonian vector field of $q$, so $L$ is the linearization of $H_q$ along $q^{-1}(0)$). This is a Hamiltonian matrix: since $J^2 = -\Id$, $J^\top J = \Id$ and $Q$ is a symmetric matrix:
\begin{equation}\label{eq-1b}
J L + L^\top J = - Q + Q^\top = 0.
\end{equation}
Equivalently, $\w(z,Lz') = -\w(Lz,z')$.
Moreover, we have the identity
\begin{equation}
q(z) = \dfrac{1}{2} \lr{Qz,z} = -\dfrac{1}{2} \lr{JLz,z} = -\dfrac{1}{2} \w(Lz,z). 
\end{equation}

2. Since $L = JQ$ with $\rk Q = 3$, we deduce $\rk L = 3$. In particular $0$ is an eigenvalue of $L$. We prove by contradiction that $L$ also has a non-zero eigenvalue. Otherwise, $L$ is nilpotent. From the theory of nilpotent matrices with maximal rank, we have $\rk L^j = 4-j$ for $j \in [1,4]$. We deduce that the quadratic form $q \circ L$ has kernel of dimension at least $3$, because
\begin{equation}
q (Lz) = -\dfrac{1}{2} \w(L^2z,Lz) = \dfrac{1}{2} \lr{JL^3z,z}. 
\end{equation}
However, the equation $q(Lz) = 0$ is equivalent to $QLz = 0$ because $q$ is nonnegative; and $QL = -JL^2$ has rank $2$, so the kernel of $q \circ L$ has dimension $2$. This is a contradiction. 

Let now $\mu$ be a non-zero eigenvalue of $L$. We note that $\mu \notin \R$: otherwise there would be a real-valued eigenvector $v$; and we would have
\begin{equation}
q(v) = -\dfrac{\mu}{2} \w(v,v) = 0.
\end{equation}
This cannot happen: indeed $\ker q = \ker L$ because they both have dimension one and the leftward inclusion is clear.

From \eqref{eq-1b} and the fact that $L$ is real-valued, $-\mu, -\ove{\mu}$ and $\ove{\mu}$ are also eigenvalues of $L$. Since zero is also an eigenvalue of $L$, we deduce that there must be a repetition; hence $\mu \in i\R$. Given that $L$ has rank $3$, we conclude that $L$ has precisely two non-zero eigenvalues, which moreover are simple, purely imaginary and opposite. We write them $\pm i \lambda^2$ with $\lambda > 0$. The other eigenvalue of $L$ is $0$, and has geometric multiplicity one, algebraic multiplicity two.

3. The vector spaces $\ker(L^2)$ and $\ker(L^2+\lambda^4)$ are symplectic orthogonal. Indeed, if $L^2 v = 0$ and $L^2 v' = -\lambda^4 v'$, then 
\begin{equation}
\w(v,v') = \w(v,\lambda^{-4}L^2v') = \lambda^{-4}\w(L^2 v,v') = 0.
\end{equation}
Let $u = u_2 + i u_4$ be an eigenvector for $i \lambda^2$. Taking real and imaginary parts of $Lu = i \lambda^2 u$, we have
\begin{equation}
L u_2 = - \lambda^2 u_4, \quad L u_4 = \lambda^2 u_2.
\end{equation}
In particular, we have
\begin{equation}
0 \leq q(u_2) = - \dfrac{1}{2} \w(Lu_2,u_2) = \dfrac{\lambda^2}{2} \w(u_4,u_2).  
\end{equation}
Since $u_2$ is symplectic-orthogonal to $\ker(L^2)$ and to itself, we must have $\w(u_4,u_2) \neq 0$, and we deduce that $\w(u_4,u_2) > 0$. Upon multiplying $u_2, u_4$ by $\w(u_4,u_2)^{-1/2}$ we can assume that $\w(u_4,u_2) = 1$. 

Since $0$ is an eigenvalue of $L$ of algebraic multiplicity two but geometric multiplicity one, $\ker(L^2)$ is two-dimensional and $\ker(L)$ is a one-dimensional subspace. Let $0 \neq u_3$ be a vector in the orthogonal complement of $\ker(L)$ in $\ker(L^2)$ and set $u_1 = \lambda^{-2} Lu_3$. Then, 
\begin{equation}\label{eq-10f}
0 \leq q(u_3) = - \dfrac{1}{2} \w(Lu_3,u_3) = -\dfrac{\lambda^2}{2} \w(u_1,u_3).
\end{equation}
We note that $\w(u_3,u_1) \neq 0$ is non-zero because $u_1$ is symplectic orthogonal to $\ker(L^2+\lambda^4)$ and to itself; hence we deduce from \eqref{eq-10f} that $\w(u_3,u_1) > 0$. Upon multiplying $u_1, u_3$ by $\w(u_3,u_1)^{-1/2}$ $u_1$ and $u_3$, we can assume that $\w(u_3,u_1) = 1$.

By construction, $(u_1,u_2,u_3,u_4)$ is a symplectic basis, hence $S = [u_1,u_2,u_3,u_4]$ is a symplectic matrix. We have:
\begin{equation}
S^{-1} L S = \lambda^2\matrice{0 & 0 & 1 & 0 \\ 0 & 0 & 0 & 1 \\ 0 & 0 & 0 & 0\\ 0 & -1 & 0 & 0}, \ \ \ \ S^\top Q S = -S^\top JL S =- JS^{-1} LS = \lambda^2 \matrice{0 & 0 & 0 & 0 \\ 0 & 1 & 0 & 0 \\ 0 & 0 & 1 & 0 \\ 0 & 0 & 0 & 1}.
\end{equation}
We conclude that
\begin{equation}\label{eq-9v}
q \circ S(z) = \dfrac{1}{2} \lr{z, S^\top Q S z} = \lambda^2 \big( z_2^2 + z_3^2 + z_4^2\big).
\end{equation}
This completes the proof of \eqref{eq-1e}. 

4. We show that we can pick $S$ so that its upper-left $2 \times 2$ block is invertible. With $u_j = [v_j,w_j]^\top, v_j, w_j \in \R^2$, this boils down to proving that the matrix $[v_1,v_2]$ is invertible -- after potentially modifying $S = [u_1,u_2,u_3,u_4]$. Recall that $\ker(q) = \R u_1$; hence from $\pi (\ker q) \neq \{0\}$ we deduce $v_1 \neq 0$. We claim that either $(v_1,v_2)$ or $(v_1,v_4)$ forms a basis of $\R^2$.

Otherwise, both $v_2$ and $v_4$ are parallel to $v_1$, hence (since $S$ is invertible) $(v_1,v_3)$ is a basis of $\R^2$. Assume that $v_2 = 0$. Then
\begin{equation}
0 = \w(u_2,u_1) = \lr{w_2,v_1}, \quad 0 = \w(u_2,u_3) = \lr{w_2,v_3};
\end{equation}
hence $w_2 = 0$ and $u_2 = 0$; this is a contradiction: $v_2 \neq 0$. Set $\tu_4 = u_4 - s u_2$, where $s$ is such that $v_4-s v_2 = 0$. Then $(u_1,u_2,u_3,\tu_4)$ is a symplectic basis of $\R^4$ and $\tv_4 = 0$ (with $\tu_4=[\tv_4,\tw_4]^\top$). The relations
\begin{equation}
0 = \w(\tu_4,u_1) = \lr{\tw_4,v_1}, \quad 0 = \w(\tu_4,u_3) = \lr{\tw_4,v_3}
\end{equation}
yield $\tw_4 = 0$, hence $\tu_4 = 0$: this is a contradiction. We conclude that either $(v_1,v_2)$ or $(v_1,v_4)$ forms a basis of $\R^2$.

If $(v_1,v_2)$ is a basis of $\R^2$, then $[v_1,v_2]$ -- the top-left block of $S$ -- is invertible, and we are done. If it is not, then $(v_2,v_4)$ is. Moreover,
\begin{equation}\label{eq-9w}
\tS \de [u_1,u_4,u_3,-u_2] = S \cdot \matrice{1 & 0 & 0 & 0 \\ 0 & 0 & 0 & 1 \\ 0 & 0 & 1 & 0 \\ 0& -1 & 0 & 0}
\end{equation}
has an invertible top-left $2 \times 2$ bloc; is symplectic (as the multiplication of two symplectic matrices); and satisfies $q \circ \tS = q \circ S$ because of \eqref{eq-9v}, \eqref{eq-9w}. 
 
5. Assume finally that $q$ depends smoothly on a parameter in a simply connected set $\Omega$; then so does $\lambda$. Let $\EE$ be the orthogonal complement of $\ker(L)$ in $\ker(L^2)$, which we regard as a vector bundle over $\Omega$. Since $\Omega$ is simply connected, $\EE \rightarrow \Omega$ admits a non-zero smooth section $u_3$, i.e. a vector $u_3$ that depends smoothly on $\zeta \in I$ and satisfying the requirement of Step 3. The vector $u_1 = \lambda^{-2} L u_1$ depends also smoothly on $\zeta \in \Omega$. 

Likewise, we regard $\ker\big(L-i\lambda(q)\big)$ as a vector bundle over $\Omega$. It admits a non-zero smooth section $u$, that is an eigenvector vector $u$ for $i\lambda(q)$ that depends smoothly on the parameter. Taking real and imaginary parts as above produces $u_2, u_4$ with smooth dependence on the parameter. 

We finally show that a suitable modification of $S=[u_1,u_2,u_2,u_4]$ enforces an invertible upper-left $2 \times 2$ block. Write $u = u_2+iu_4  = [v,w]^\top$. Since either $(v_1,v_2)$ or $(v_1,v_4)$ is a basis of $\R^2$, $\det[v_1,v] = \det[v_1,v_2] + i \det[v_1,v_4]$ never vanishes on $\Gamma_\di$. Therefore, its argument $\te$ is well-defined and varies smoothly on $\Gamma_\di$. We now make the phase change $u \mapsto e^{-i\te} u = \tu_2 + i \tu_4$. Then $v_2$ switches to $\tv_2 = \Re(e^{-i\te} v)$, and we observe that
\begin{equation}\label{eq-9q}
\det[v_1,\tv_2] = \Re\big( e^{-i\te} \det[v_1,v] \big) > 0. 
\end{equation}
The matrix $\tS = [u_1,\tu_2,u_3,\tu_4]$ depends smoothly on $\zeta$ and satisfies
\begin{equation}
\tS = S \matrice{1 & 0 & 0 & 0 \\ 0 & \cos \te & 0 & \sin \te \\ 0 & 0 & 1 & 0 \\ 0& -\sin(\te) & 0 & \cos \te}.
\end{equation}
In particular, it is symplectic (as the product of two symplectic matrices), has invertible upper-left $2 \times 2$ bloc by \eqref{eq-9q}, and satisfies $q \circ \tS = q \circ S$ by \eqref{eq-9v}. This completes the proof. 
\end{proof}

\subsection{Construction of the $SU(2)$-gauge.}\label{sec:1.2} Assume given $\di \in \DD$, see \eqref{eq-9y}. Then, $q = -\det \di$ satisfies the conditions of Lemma \ref{lem:1l}: its kernel is the line $\Gamma_\di$, and by assumption it projects non-trivially to $\R^2 \times \{0\}$. This produces a symplectic matrix $S$ such that \eqref{eq-1e} holds. Replacing $\di$ by $\lambda^{-1} \di \circ S$, we can then assume that
\begin{equation}
-\det \di(x,\xi) = x_2^2 + \xi_1^2 + \xi_2^2.
\end{equation}

\begin{lemma}\label{lem:1b} Assume that $\di \in \DD$ is such that $-\det \di(x,\xi) = x_2^2 + \xi_1^2 + \xi_2^2$. 
Then there exists $U \in SU(2)$ and $\nu \in \{1,-1\}$ such that 
\begin{equation}\label{eq-5r}
U^* \di(x,\xi) U =  \matrice{\nu\xi_1 & x_2 - i \xi_2 \\ x_2+i\xi_2 & -\nu\xi_1}.
\end{equation} 
Moreover, if $\di$ depends smoothly on a parameter $\zeta$ then $U$ also depends smoothly on $\zeta$.
\end{lemma}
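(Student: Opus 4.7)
The plan is to recast \eqref{eq-5r} as a linear transformation of the Pauli coefficients of $\di$ and realize it through the spin double cover $SU(2) \twoheadrightarrow SO(3)$. Write $\di = \sum_{j=1}^3 p_j \sigma_j$ with linear $p_j$; since $-\det \di = p_1^2 + p_2^2 + p_3^2$, the hypothesis becomes the polynomial identity
\[
p_1(x,\xi)^2 + p_2(x,\xi)^2 + p_3(x,\xi)^2 = x_2^2 + \xi_1^2 + \xi_2^2.
\]
Matching the coefficient of $x_1^2$ on both sides forces each $p_j$ to be independent of $x_1$. Writing $p_j = N_{j1} x_2 + N_{j2}\xi_1 + N_{j3}\xi_2$ with $N \in \R^{3\times 3}$, the identity above becomes $y^\top N^\top N\, y = |y|^2$ for $y=(x_2,\xi_1,\xi_2)$, so $N^\top N = I_3$ and $N \in O(3)$.

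Conjugation $M \mapsto U^* M U$ by $U \in SU(2)$ preserves the real span of $(\sigma_1,\sigma_2,\sigma_3)$ and preserves $-\det$; hence it acts on the Pauli coefficient vector through an element $R(U) \in O(3)$, and in fact $R(U) \in SO(3)$ by connectedness of $SU(2)$. The induced map $U \mapsto R(U)$ is the canonical smooth $2{:}1$ covering of $SO(3)$. The right-hand side of \eqref{eq-5r} has coefficient matrix
\[
N_\star(\nu) = \matrice{1 & 0 & 0 \\ 0 & 0 & 1 \\ 0 & \nu & 0}, \qquad \det N_\star(\nu) = -\nu,
\]
so \eqref{eq-5r} is equivalent to the existence of $R \in SO(3)$ with $R N = N_\star(\nu)$. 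The unique solution in $O(3)$ is $R = N_\star(\nu)\, N^\top$, with $\det R = -\nu \det N$. Choosing $\nu = -\det N \in \{+1,-1\}$ places $R$ inside $SO(3)$, and any lift $U \in SU(2)$ of $R$ through the cover then realizes \eqref{eq-5r}.

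For the smooth dependence on $\zeta$, the matrix $N(\zeta)$ varies smoothly, so $\det N(\zeta) \in \{\pm 1\}$ is locally constant, $\nu$ is locally constant, and $R(\zeta) \in SO(3)$ is smooth in $\zeta$. Since the covering $SU(2)\to SO(3)$ admits smooth local sections, $R(\zeta)$ lifts smoothly to $U(\zeta) \in SU(2)$ on each parameter neighborhood; on any simply connected parameter set this lift is global. The argument is essentially algebraic and I foresee no serious obstacle. The one subtlety worth flagging is the necessity of the sign $\nu \in \{\pm 1\}$: because $SU(2)$ acts on the Pauli coefficients through $SO(3)$ rather than $O(3)$, the orientation of $N$ as an orthogonal matrix is an invariant of the gauge reduction and cannot be removed by conjugation alone.
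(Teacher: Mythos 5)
Your proof is correct and follows essentially the same route as the paper: both arguments extract from the determinant identity that the $3\times 3$ matrix of Pauli coefficients of $\di$ in $(x_2,\xi_1,\xi_2)$ is orthogonal (and that $x_1$ drops out), and both realize the required rotation via the double cover $SU(2)\rightarrow SO(3)$, with the sign $\nu$ recording the orientation of that orthogonal matrix. The differences are cosmetic: the paper treats $\det B=\pm 1$ separately (flipping $\xi_1$ in one case) and passes through the intermediate form $x_2\sigma_1+\xi_1\sigma_2+\xi_2\sigma_3$ followed by a conjugation by $e^{-i\pi\sigma_1/4}$, whereas you solve $RN=N_\star(\nu)$ in one stroke and are somewhat more explicit about the local sections of the cover needed for smooth (and, on simply connected parameter sets, global) dependence on $\zeta$.
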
 

\begin{proof} 1. Let $z = (x_2,\xi_1,\xi_2)$. From \eqref{eq-10d} and the fact that $\di(x,\xi)$ depends linearly on $(x,\xi)$,
\begin{equation}
\di(0,z) = \sum_{k=1}^3 p_k(0,z) \sigma_k = \sum_{j=1}^3 \lr{\p_z p_k(0,z), z} \sigma_k = \sum_{j,k=1}^3 B_{jk} z_j \sigma_k = \lr{Bz, \sigma},
\end{equation}
where $B$ is the $3 \times 3$ matrix $[\p_z p_1, \p_z p_2, \p_z p_3]$; and we used the vector of Pauli matrices $\sigma = [\sigma_1,\sigma_2,\sigma_3]^\top$. From our assumption on $-\det \di$, we deduce that $B$ is an orthogonal matrix:
\begin{equation}
|z|^2 = -\det \di(0,z) = \sum_{k=1}^3 p_k(0,z)^2 = \sum_{k=1}^3 \lr{\p_z p_k, z}^2 = |Bz|^2.
\end{equation}

2. Assume that $\det B = 1$, i.e. $B \in SO(3)$ and define $\tsigma = B^\top \sigma$.  Since $(\sigma_1,\sigma_2,\sigma_3)$ is a direct orthonormal basis of traceless, Hermitian $2 \times 2$ matrices and $B \in SO(3)$, $(\tsigma_1,\tsigma_2,\tsigma_3)$ is also an orthonormal basis of $SO(3)$.  The isomorphism between $SO(3)$ and $SU(2)/\{\pm 1\}$ implies that there exists $U \in SU(2)$ with $U^* \tsigma_j U = \sigma_j$. We deduce that
\begin{equation}\label{eq-9h}
U \di(0,z) U^* = U^*\lr{z, B^\top \tsigma} U =  U^*\lr{z, \tsigma} U= \lr{z, \sigma}.
\end{equation}
Moreover, we note that 
\begin{equation}
0 = -\det \di(x_1,0) = \sum_{k=1}^3 p_k(x_1,0)^2.
\end{equation}
In particular, $p_k(x_1,0) = 0$ for $k \in [1,3]$. Using linearity of $\di(x,\xi)$, we conclude from $z = (x_2,\xi_1,\xi_2)$ and \eqref{eq-9h} that 
\begin{equation}
\di(x,\xi) = x_2 \sigma_1 + \xi_1 \sigma_2 + \xi_2 \sigma_3. 
\end{equation}
After conjugating by the matrix $e^{-i\pi\sigma_1/4}$, which is in $SU(2)$ and satisfies $e^{i\pi\sigma_1/4} \sigma_2 e^{-i\pi\sigma_1/4} = -\sigma_3$, $e^{i\sigma_1/2} \sigma_3 e^{-i\sigma_1/2} = \sigma_2$, we conclude that \eqref{eq-5r} holds (with $\nu = -1$).

3. When $\det B = -1$ then we can work with $z = (x_2,-\xi_1,\xi_2)$ instead. 
When $\di$ depends smoothly on a parameter then so does $B$; in particular $\det B$ is constant (say equal to $1$). Since we obtain $U$ from a Lie group isomorphism applied to $B$, $U$ also depends smoothly on $\zeta$. This completes the proof.\end{proof}

\begin{proof}[Proof of \eqref{eq-9g}] Fix $\di \in \DD$. Then we can apply Lemma \ref{lem:1l} to the quadratic form $-\det \di$. Let $S, \lambda$ be the emerging quantities. Then $\lambda^{-1} \di \circ S$ is in $\DD$ and satisfies the assumptions of Lemma \ref{lem:1b}. Therefore, there exists $U \in SU(2)$ and $\nu \in \{1,-1\}$ such that 
\begin{equation}
\dfrac{1}{\lambda} U^* \di\circ S(x,\xi) U =  \matrice{\nu\xi_1 & x_2 - i \xi_2 \\ x_2+i\xi_2 & -\nu\xi_1}.
\end{equation} 
We note that if $\di$ depends on a parameter, so do $S$, $\lambda$, $U$ and $\nu$. In particular, $\nu$ has constant sign. If $\nu=1$, this completes the proof. If $\nu = -1$, we compose the symplectic transformation $S$ with $(x,\xi) \mapsto (-x_1,x_2,-\xi_1,\xi_2)$ (which is also symplectic). This completes the proof.  \end{proof}

\subsection{Formula for $\lambda_\di$ and $U$}\label{sec:1.3} We give here a formula for $\lambda_\di$ and a characterization of $U$ -- up to diagonal elements of $SU(2)$. The approach presented here can also serve as the basis for an alternate proof of Theorem \ref{thm:1l}, see Remark \ref{rem:2} below. Given $\di \in \DD$, define
\begin{equation}\label{eq-12m}
M_\di \de \dfrac{1}{2i}\{\di,\di\} = \dfrac{1}{2i} \sum_{j,k=1}^3 \{p_j \sigma_j, p_k \sigma_k\} = \matrice{\{p_1,p_2\} & \{p_2,p_3\} - i \{p_3,p_1\} \\ \{p_2,p_3\} + i \{p_3,p_1\} & -\{p_1,p_2\}}.
\end{equation}
This is a $2 \times 2$ Hermitian traceless matrix, which moreover does not depend on $(x,\xi)$, because $\di$ is linear in $(x,\xi)$. In particular, $M_\di$ is unitarily equivalent to a scalar multiple of $\sigma_3$. 

We observe moreover that 
\begin{equation}
M_{\di_0} = -\sigma_3, \qquad \di_0 = \matrice{\xi_1 & x_2 - i \xi_2 \\ x_2 + i \xi_2 & -\xi_1}.
\end{equation}
Let $\lambda_\di, U, S$ such that \eqref{eq-9g} holds. Using that $\lambda_\di, U, S, M_\di$ do not depend on $(x,\xi)$ (because $\di$ is linear), and that Poisson brackets are invariants under symplectomorphisms:
\begin{equation}\label{eq-12i}
-\lambda^2_\di \sigma_3 = M_{\lambda_\di \di_0} = M_{U^* \cdot \di \circ S \cdot U} = U^* \cdot M_{\di \circ S} \cdot U = U^* \cdot M_{\di}  \circ S \cdot U = U^* \cdot M_{\di} \cdot U.
\end{equation}

The identity \eqref{eq-12i} provides the following description of $\lambda_\di$ and $U$:
\begin{itemize}
\item $\lambda_\di$ is the unique positive number such that $\lambda^2_\di$ is an eigenvalue of $M_\di$, that is
\begin{equation}
\lambda_\di = \left( \dfrac{1}{2}\Tr M^2_\di \right)^{1/2} = \| M_\di \|^{1/2} = \left\| \dfrac{\{ \di,\di\}}{2} \right\|^{1/2};
\end{equation}
\item $U$ is a matrix in $SU(2)$ whose first and second columns are respectively a negative and positive energy eigenvectors of $M_\di$ (in particular $U$ diagonalizes $M_\di$). This characterizes $U$ up to diagonal elements of $SU(2)$, i.e. up to a phase.
\end{itemize}

\begin{remark}\label{rem:2} The above characterization can serve to provide a more direct proof of \eqref{eq-9g}, based on a diagonalization of $M_\di$ rather than on a reduction of $\det \di$. It does not come immediately with the invertibility of the top-left $2 \times 2$ block of $S$ and the dependence on the parameters left to the reader; we leave these to the reader. 

We start with a spectral decomposition of $M_\di$ (which is $2 \times 2$ Hermitian, traceless):
\begin{equation}
\tU^* \cdot M_\di \cdot \tU = -\tlambda^2 \sigma_3, \qquad \tU \in SU(2), \qquad \tlambda > 0.
\end{equation}
We now define
\begin{equation}\label{eq-12j}
\tdi \de \dfrac{\tU^* \cdot \di \cdot \tU}{\lambda_\di} \de \matrice{\tp_3 & \tp_1-i\tp_2 \\ \tp_1+i\tp_2 & -\tp_3}.
\end{equation}
Moreover, using on one hand the transformation rule \eqref{eq-12i} for gauge-symplectic changes of $\di$; and on the other hand \eqref{eq-12m}:
\begin{equation}
M_\tdi = \dfrac{1}{\lambda^2_\di} \tU \cdot M_\di \cdot \tU = -\sigma_3 = \matrice{\{\tp_1,\tp_2\} & \{\tp_2,\tp_3\} - i \{\tp_3,\tp_1\} \\ \{\tp_2,\tp_3\} + i \{\tp_3,\tp_1\} & -\{\tp_1,\tp_2\}},
\end{equation}
we deduce that $\tp_1, \tp_2, \tp_3$ are three linear symbols with $\{\tp_1,\tp_2\} = -1$ and $\{\tp_2,\tp_3\} = \{\tp_3,\tp_1\} = 0$. According to the linear Darboux theorem, there exists a symplectic matrix $S$ such that 
\begin{equation}\label{eq-12k}
\tp_1 \circ S = x_2, \qquad \tp_2 \circ S = \xi_2, \qquad \tp_3 \circ S = \xi_1.
\end{equation}
Composing \eqref{eq-12j} with $S$ and using the relations \eqref{eq-12k} yields \eqref{eq-9g}.\end{remark}

\section{Non-linear theory}\label{sec-2}

In this section, we write $r(x,\xi) = \OO(n)$ to denote a smooth function $r$ of the form
\begin{equation}
r(x,\xi) = \sum_{j+k+\ell  = n} x_2^j \xi_1^k \xi_2^\ell \ r_{jk\ell}(x,\xi),
\end{equation}
where $r_{jk\ell}$ are smooth functions on $\R^4$. This means that $r$ vanishes at order at least $n$ along $\R \times \{0\}^3$. Our goal here is to prove Theorem \ref{thm:1m}. As in \S\ref{sec-2a}, we will base our analysis first on the symplectic reduction of $q = -\det \di = p_1^2+p_2^2+p_3^2$:

\begin{theorem}\label{thm:1c} Fix $\rho \in \Gamma_\di$. There exists a smooth symplectomorphism $\kappa : \R^4 \rightarrow \R^4$ and two smooth functions $\lambda : \R \rightarrow \R^+$, $\mu : \R \rightarrow \R$ such that $\kappa(0) = \rho$ and, on a neighborhood of $0$:
\begin{equation}\label{eq-2g}
q \circ \kappa(x,\xi) = \big(\lambda(x_1)+\mu(x_1)\xi_1\big)^2 \big( x_2^2 + \xi_1^2 + \xi_2^2\big) + \OO(4). 
\end{equation}

Moreover $\p_x \pi \circ \kappa(0)$ is invertible; and for $x_1$ in a neighborhood of $0$, the function $\lambda$ in \eqref{eq-10e} is given in terms of $\di$ and $\kappa$ by: 
\begin{equation}\label{eq-9r}
\lambda(x_1) = \lambda_\di \circ \kappa(x_1,0).
\end{equation}
\end{theorem}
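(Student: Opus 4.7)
The plan is a two-stage normal form reduction in the spirit of Colin de Verdi\`ere \cite{CV03,CV04}: first linearize transversally to $\Gamma_\di$ using Theorem \ref{thm:1l} pointwise, then eliminate the residual cubic obstructions by a near-identity symplectic correction.

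Under $\bm{(\AAA_1)}-\bm{(\AAA_2)}$, $\Gamma_\di$ is a smooth embedded curve through $\rho$; parametrize it locally as $s\mapsto\gamma(s)$ with $\gamma(0)=\rho$. Since $q=p_1^2+p_2^2+p_3^2$ vanishes to second order on $\Gamma_\di$ with Hessian of signature $(0,+,+,+)$ and kernel $\R\gamma'(s)$, Theorem \ref{thm:1l} applied with $s$ as a smoothly varying parameter produces smooth families $S(s)\in\mathrm{Sp}(4,\R)$ (with invertible upper-left $2\times 2$ block) and $\lambda(s)>0$ reducing the Hessian to $\lambda(s)^2(x_2^2+\xi_1^2+\xi_2^2)$. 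A standard Weinstein-type extension across the isotropic curve $\R\times\{0\}^3$ (or a direct generating-function construction) then yields a symplectomorphism $\kappa_0$ with $\kappa_0(s,0,0,0)=\gamma(s)$ and $d\kappa_0(s,0,0,0)=S(s)$, after reparametrizing $s$ so that the first column of $S(s)$ matches $\gamma'(s)$. Invertibility of the upper-left block of $S(0)$ gives invertibility of $\p_x\pi\circ\kappa_0(0)$, which is preserved by all subsequent near-identity corrections. Taylor expansion around $\R\times\{0\}^3$, using $q|_{\Gamma_\di}=0$ and $dq|_{\Gamma_\di}=0$, then yields
\[
q\circ\kappa_0(x,\xi)=\lambda(x_1)^2 Q_0+R_3(x,\xi)+\OO(4),\qquad Q_0:=x_2^2+\xi_1^2+\xi_2^2,
\]
where $R_3=\OO(3)$ is a cubic polynomial in the transverse coordinates $(x_2,\xi_1,\xi_2)$ whose coefficients depend smoothly on $x_1$.

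The heart of the proof is the elimination of $R_3$ modulo $\OO(4)$. Using $(\lambda+\mu\xi_1)^2Q_0=\lambda^2 Q_0+2\lambda\mu\,\xi_1 Q_0+\OO(4)$, it suffices to find smooth $\mu(x_1)$ and a near-identity symplectomorphism $\kappa_1=\exp(H_\chi)$ generated by a cubic Hamiltonian $\chi$ solving the cohomological equation
\[
\{\lambda(x_1)^2 Q_0,\chi\}=2\lambda\mu\,\xi_1 Q_0-R_3\pmod{\OO(4)}.
\]
A short calculation (noting that $\lambda$ depends only on $x_1$ and that the term $\lambda\lambda'Q_0\p_{\xi_1}\chi$ is automatically $\OO(4)$) reduces this to a linear algebra problem on the finite-dimensional space of cubic polynomials in $(x_2,\xi_1,\xi_2)$ with $x_1$-smooth coefficients. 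The main obstacle is verifying that the Poisson-bracket map $\chi\mapsto\{\lambda^2 Q_0,\chi\}$ has cokernel exactly $\R\cdot\xi_1 Q_0$: this resonant direction reflects the nontrivial Jordan block of $L=JQ$ identified in Step 2 of Lemma \ref{lem:1l} and cannot be killed by a symplectomorphism, so it must be absorbed into the prefactor. Granting this, $\mu(x_1)$ is uniquely determined by the requirement that $R_3$ land in the image, and $\chi$ then exists and may be chosen smoothly in $x_1$. Setting $\kappa:=\kappa_0\circ\kappa_1$ completes the reduction.

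The identity $\lambda(x_1)=\lambda_\di\circ\kappa(x_1,0)$ follows from Remark \ref{rem:2}. The quantity $\lambda_\di$ at a point of $\Gamma_\di$ depends only on the linearization of $\di$ there, and Remark \ref{rem:2} identifies $\lambda_\di$ for a linear symbol with the positive scalar produced by Theorem \ref{thm:1l}. Since our $\lambda(x_1)$ is precisely that scalar applied at $\gamma(x_1)=\kappa(x_1,0)$, we conclude $\lambda(x_1)=\lambda_\di(\gamma(x_1))=\lambda_\di\circ\kappa(x_1,0)$.
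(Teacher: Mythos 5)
Your overall architecture is the same as the paper's: reduce the Hessian of $q$ pointwise along $\Gamma_\di$ with the parameter-dependent linear theory (Theorem \ref{thm:1l}), correct to a genuine symplectomorphism $\kappa_0$ with prescribed differential along $\R\times\{0\}^3$ by a Moser/Weinstein argument, and then remove the residual cubic terms by a near-identity symplectic flow, absorbing a resonance into the prefactor $(\lambda+\mu\xi_1)^2$; your justification of \eqref{eq-9r} via Remark \ref{rem:2} is also the intended one. One secondary caveat: the ``standard Weinstein-type extension'' you invoke must give $d\kappa_1=\Id$ at every \emph{point} of the curve, not merely on its tangent directions; the literature statements only give the latter, which is why the paper proves the strengthened Lemma \ref{lem:1p} from scratch.

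The genuine gap is your cokernel claim in the elimination step. For $\chi$ cubic in the transverse variables $(x_2,\xi_1,\xi_2)$ with $x_1$-dependent coefficients, one has $\{Q_0,\chi\}=2\xi_1\p_{x_1}\chi+2(\xi_2\p_{x_2}-x_2\p_{\xi_2})\chi$, and the shear term $2\xi_1\p_{x_1}\chi$ is $\OO(4)$, hence invisible modulo $\OO(4)$. What survives is only the rotation in the $(x_2,\xi_2)$-plane, acting sector by sector on $\chi=\sum_{j=0}^{3}\xi_1^{3-j}g_j(x_1;x_2,\xi_2)$ with $g_j$ homogeneous of degree $j$: it is onto for $j=1,3$, has cokernel spanned by $\xi_1(x_2^2+\xi_2^2)$ for $j=2$, and is identically zero on the sector $j=0$, i.e.\ on $\xi_1^3 r_0(x_1)$. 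So the cokernel is \emph{two}-dimensional at each point of $\Gamma_\di$, spanned by $\xi_1(x_2^2+\xi_2^2)$ and $\xi_1^3$, not $\R\cdot\xi_1Q_0$. Since your only remaining freedom, $2\lambda\mu\,\xi_1Q_0=2\lambda\mu\,\xi_1(x_2^2+\xi_2^2)+2\lambda\mu\,\xi_1^3$, forces the \emph{same} coefficient on both resonant directions, your cohomological equation is solvable within the cubic ansatz only if the $\xi_1^3$-coefficient of $R_3$ happens to coincide with the circular average of its $j=2$ part, which fails in general. The missing idea is precisely the extra term in the paper's Lemma \ref{lem:1j}: allow the generator to contain a \emph{quadratic} piece $\xi_1^2 f_0(x_1)$ (the term $z_3^2f_0(z_1)$ in \eqref{eq-5g}), whose bracket with $Q_0$ equals $-2\xi_1^3 f_0'(x_1)$ — the Jordan block you mention enters through the $x_1$-dependence of this quadratic generator and supplies the freedom to kill the $\xi_1^3$ resonance (the equation $r_0-2f_0'=\nu$ in \eqref{eq-5e}), leaving only the average of the $j=2$ part to be absorbed into $\mu$ via \eqref{eq-10p}. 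One must then also verify, as in Step 1 of the proof of Lemma \ref{lem:1j}, that the flow of this non-cubic generator fixes $(x_1,0)$ and displaces points only by $\OO(1)$ in the $x_1$-direction, so that it neither destroys the normalization already achieved nor the invertibility of $\p_x\pi\circ\kappa(0)$.
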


We refer to \cite[Proposition 3]{CV04} for a related result. The statement there involves a multiplicative term $e(x,\xi)$ with unspecified dependence on $x$ and $\xi$; we demonstrate here that this term depends on $x_1, \xi_1$ only.  This is crucial in \S\ref{chap:3}.

\subsection{Geometric preliminaries} Our approach to prove Theorem \ref{thm:1c} follows closely \cite[\S7.2]{CV03}. We start by looking at the dominant term in the Taylor expansion of $q$ near each point of $\Gamma_\di$, which is a quadratic form of signature $(0,+,+,+)$. We can then apply Lemma \ref{lem:1l}, and obtain a family of linear symplectomorphisms, parametrized by points of $\Gamma_\di$, whose elements symplectically reduce the quadratic approximation of $q$ at those points. An appropriate application of Lemma \ref{lem:1p} below will generate a symplectomorphism whose differential at points of $\Gamma_\di$ is given by these symplectic maps; it yields a symplectic reduction of $q$ modulo $\OO(3)$. After further refinement, we will reduce this remainder to $\OO(4)$.

\begin{lemma}\label{lem:1p} Let $M$ be a smooth manifold provided with two symplectic forms $\w_0$, $\w_1$. Assume that $\w_0$ and $\w_1$ coincide on a one-dimensional submanifold $N \subset M$, and fix $\rho_0 \in N$. After potentially shrinking $M, N$ to neighborhoods of $\rho_0$, there exists a symplectomorphism $\kappa_1 : (M,\w_0) \rightarrow (M,\w_1)$ such that for all $\rho \in N$:
\begin{equation}\label{eq-6k}
\kappa_1(\rho) = \rho, \quad d\kappa_1(\rho) = \Id.
\end{equation} 
\end{lemma}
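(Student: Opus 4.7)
The plan is to apply Moser's path method, the only subtlety being that the condition \eqref{eq-6k} forces the interpolating time-dependent vector field to vanish to first order along $N$, not merely to zeroth order.

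First I would shrink $M$ to a neighborhood of $\rho_0$ on which the convex combination $\omega_t = (1-t)\omega_0 + t\omega_1$ is a symplectic form for every $t \in [0,1]$. Since $\omega_0$ and $\omega_1$ agree pointwise on $N$ as elements of $\Lambda^2 T^*_\rho M$, we have $\omega_t|_\rho = \omega_0|_\rho$ along $N$; non-degeneracy being an open condition, $\omega_t$ remains symplectic on a $t$-uniform neighborhood of $\rho_0$.

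The central step is to produce a primitive $\alpha$ of the closed $2$-form $\tau := \omega_1 - \omega_0$ that vanishes to order \emph{two} along $N$. The hypothesis that $\omega_0 = \omega_1$ pointwise on $N$ gives that $\tau$ vanishes to order one along $N$. In local coordinates $(x_1, y)$ adapted to $N = \{y = 0\}$, with $y = (y_1, y_2, y_3)$, I would apply the relative Poincar\'e construction based on the radial homotopy $\phi_s(x_1, y) := (x_1, sy)$, setting
\[
\alpha \ := \ \int_0^1 \phi_s^*\bigl(\iota_V \tau\bigr)\, ds, \qquad V \ := \ \sum_{j=1}^3 y_j \partial_{y_j}.
\]
Then $d\alpha = \tau$ near $N$. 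Contraction with $V$ multiplies the coefficients of $\tau$ by a factor of $|y|$, and the pullback by $\phi_s$ contributes an extra factor of $s|y|$; after integration in $s$, the coefficients of $\alpha$ are $O(|y|^2)$, so $\alpha$ vanishes to order two along $N$.

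Defining $X_t$ by $\iota_{X_t}\omega_t = -\alpha$, the order-two vanishing of $\alpha$ transfers (via non-degeneracy of $\omega_t$) to $X_t$, so $X_t|_N = 0$ and $\nabla X_t|_N = 0$ for all $t \in [0,1]$. A standard compactness argument lets us further shrink $M$ so that the flow $\kappa_t$ of $X_t$ exists on $[0,1]$; the two vanishing conditions on $X_t$ then force $\kappa_t|_N = \mathrm{id}$ and $d\kappa_t|_N = \Id$. The usual Moser identity
\[
\dfrac{d}{dt}\kappa_t^*\omega_t \ = \ \kappa_t^*\bigl(\tau + d\iota_{X_t}\omega_t\bigr) \ = \ \kappa_t^*(\tau - d\alpha) \ = \ 0
\]
then gives $\kappa_1^*\omega_1 = \omega_0$, and $\kappa_1$ meets both requirements of \eqref{eq-6k}.

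The main technical obstacle, and where the hypothesis is really used, is the upgrade from order-one vanishing of $\tau$ to order-two vanishing of a primitive $\alpha$. Without this sharpening, one only obtains a symplectomorphism fixing $N$ pointwise, not the first-order tangency $d\kappa_1|_N = \Id$; that tangency is precisely what is needed in Theorem \ref{thm:1c}, where the linear model of Theorem \ref{thm:1l} prescribes the derivative of $\kappa$ along $\Gamma_\di$.
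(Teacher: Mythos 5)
Your overall strategy is the same as the paper's: run Moser's trick along $N$, with the crucial input being a primitive of $\omega_1-\omega_0$ that vanishes to \emph{second} order along $N$ (the paper packages this as Lemma \ref{lem:1a}, phrased as ``$\beta$ vanishes on $N$ and its linearization vanishes''; your coordinate formulation $\nabla X_t|_N=0$ is equivalent and the flow argument you give for $\kappa_t|_N=\mathrm{id}$, $d\kappa_t|_N=\Id$ is the same as the paper's). Where you genuinely differ is in how the second-order-vanishing primitive is produced: the paper takes an arbitrary local primitive $\alpha$ and corrects it by $df$, with $f$ an explicit partial Taylor polynomial whose coefficients are pinned down using the relations coming from $d\alpha$ being closed on $N$; you instead invoke the relative Poincar\'e lemma via the radial homotopy $\phi_s(x_1,y)=(x_1,sy)$, which delivers the second-order vanishing automatically from the first-order vanishing of $\tau=\omega_1-\omega_0$. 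Your route is shorter and more standard; the paper's is more hands-on but self-contained and makes the vanishing orders visible coefficient by coefficient.

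One slip to fix: your homotopy operator is missing the reparametrization factor. The velocity field of $\phi_s$ at $\phi_s(x_1,y)$ is $(0,y)=\tfrac1s V(\phi_s(x_1,y))$, so the correct primitive is
\begin{equation}
\alpha \ = \ \int_0^1 \dfrac{1}{s}\,\phi_s^*\big(\iota_V\tau\big)\,ds,
\end{equation}
not $\int_0^1 \phi_s^*(\iota_V\tau)\,ds$: as written, $d\alpha\neq\tau$ in general (e.g.\ for $\tau=dy_1\wedge dy_2$ your formula returns $\tfrac13(y_1dy_2-y_2dy_1)$, whose differential is $\tfrac23\tau$). The fix is harmless: every term of $\phi_s^*(\iota_V\tau)$ carries at least one factor of $s$, so the corrected integrand is smooth in $s$, and since the coefficients of $\tau$ are $O(|y|)$ the corrected $\alpha$ still has coefficients $O(|y|^2)$; the remainder of your argument then goes through unchanged.
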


Related statements appear in \cite[\S4]{W71} and were used in \cite{CV03}.  Nonetheless it seems to us that they only come with the conclusion $d\kappa_1 |_{TN} = \Id_{TN}$; this happens to be unsufficient in later parts of our work. This justifies to include a self-contained proof of Lemma \ref{lem:1p}. 

We start with an auxiliary result. Given a one-form $\beta$ on $M$ that vanishes at a point $\rho \in N$, we define the linearization of $\beta$ at $\rho \in N$ as the map
\begin{equation}\label{eq-1a}
v \in T_\rho M \mapsto (\LL_V \beta)(\rho) \in T_\rho^* M. 
\end{equation}
In \eqref{eq-1a}, $V$ is any vector field on $M$ that extends $v$, i.e. such that $V(\rho) = v$.\footnote{To verify that \eqref{eq-1a} is indeed independent of the choice of $V$ extending $v$, we note that if $W$ is a vector field that vanishes at $0$, then $W \lrcorner d\beta = 0$ and $d(W \lrcorner \beta) = 0$ (because $W  \lrcorner \beta = \beta(W)$ is a function that vanishes at order $2$ at $\rho$). Hence by Cartan's magical formula
\begin{equation}
\LL_{V+W} \beta (\rho) = \LL_V \beta(\rho) + d  W \lrcorner \beta (\rho) + W \lrcorner d\beta (\rho) = \LL_V \beta(\rho).
\end{equation}
This proves that \eqref{eq-1a} is well-defined.}

\begin{lemma}\label{lem:1a} Let $\alpha$ be a one-form on $M$ such that $d\alpha$ vanishes on a one-dimensional submanifold $N$ of $M$, and fix $\rho_0 \in N$. After potentially shrinking $M, N$ to neighborhoods of $\rho_0$, there exists a one-form $\beta$ such that:
\begin{itemize}
\item[(i)] $\beta$ vanishes on $N$;
\item[(ii)] The linearization of $\beta$ at any point of $N$ vanishes;
\item[(iii)] $d\beta = d\alpha$. 
\end{itemize}
\end{lemma}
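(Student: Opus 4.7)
The plan is to seek $\beta$ in the form $\beta = \alpha - df$ for some smooth function $f$ on a neighborhood of $\rho_0$. Condition (iii) then holds automatically, and this ansatz loses no generality: on a contractible neighborhood, the closed form $\beta - \alpha$ is exact by the Poincar\'e lemma. So the task reduces to constructing a function $f$ whose $1$-jet and Hessian along $N$ are appropriately prescribed by (i) and (ii).

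Choose local coordinates $x = (x_1, y)$ with $y = (x_2, \ldots, x_{2n})$ near $\rho_0$ such that $N = \{y = 0\}$, and write $\alpha = \sum_{i=1}^{2n} a_i\, dx_i$. Condition (i) translates to $\p_i f(x_1, 0) = a_i(x_1, 0)$ for every $i$, while condition (ii), in view of the coordinate expression for the linearization derived in the footnote (reducing it to $v^j \p_j \beta_i(\rho)\, dx_i$), becomes $\p_j \p_i f(x_1, 0) = \p_j a_i(x_1, 0)$ for every $i, j$. Two compatibility requirements then emerge. First, the Hessian matrix we prescribe must be symmetric in $(i, j)$. Second, differentiating the $1$-jet prescription along $N$ (i.e., in $x_1$) must agree with the Hessian prescription via equality of mixed partials. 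Both requirements collapse to $\p_j a_i(x_1, 0) = \p_i a_j(x_1, 0)$, which is exactly the hypothesis $d\alpha|_N = 0$.

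With compatibility assured, I would construct $f$ explicitly as a low-order Taylor expansion in the transverse variables:
\[
f(x_1, y) \de \int_0^{x_1} a_1(s, 0)\, ds + \sum_{j \geq 2} a_j(x_1, 0)\, y_j + \frac{1}{2} \sum_{i, j \geq 2} \p_{y_j} a_i(x_1, 0)\, y_i y_j.
\]
A direct calculation, using the symmetry $\p_{y_j} a_i(x_1, 0) = \p_{y_i} a_j(x_1, 0)$ to clean up the derivatives of the quadratic term, confirms both $\p_i f|_N = a_i|_N$ and $\p_j \p_i f|_N = \p_j a_i|_N$ for all $i, j$. Setting $\beta := \alpha - df$ yields (i), (ii), (iii) simultaneously.

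The main obstacle lies in the compatibility analysis, not in the construction itself: one must recognize that (i) and (ii) together pin down the full $2$-jet of $f$ along $N$, and that the symmetries required are precisely those encoded in $d\alpha|_N = 0$. With this observation, the remainder is an elementary Taylor-expansion argument, valid on whatever coordinate neighborhood of $\rho_0$ one works in.
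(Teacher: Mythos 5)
Your proof is correct and follows essentially the same route as the paper: you set $\beta = \alpha - df$ with $f$ a second-order Taylor polynomial in the transverse variables whose coefficients are the $1$-jet and transverse derivatives of $\alpha$ along $N$, and you use $d\alpha|_N = 0$ exactly where the paper uses the relations $\alpha_{jk}=\alpha_{kj}$, $\alpha_k'=\alpha_{1k}$ to ensure the prescribed $2$-jet of $f$ is consistent. The only difference is presentational: you phrase the construction as prescribing the full $2$-jet of $f$ along $N$ with compatibility checks, while the paper Taylor-expands $\alpha$ and verifies $\beta=\OO(2)$ directly.
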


\begin{proof} Without loss of generalities, we can assume that $M$ admits a global system of coordinates $(x_1, \dots, x_n)$ such that:
\begin{equation}
N = \big\{ (x_1, 0) : x_1 \in (-1,1)\big\}. 
\end{equation} 
We now construct a smooth function $f$ on $M$ such that $\beta = \alpha-df$ vanishes on $M$ and such that its linearization at any point of $N$ vanishes as well.

By Taylor-expanding the coefficients of $\az$ in the basis $(dx_1, \dots, dx_n)$ with respect to the variables $(x_2, \dots, x_n)$, we can write
\begin{equation}\label{eq-1f}
\az(x) = \sum_{j=1}^n \left( \az_j(x_1) +  \sum_{k=2}^n x_k  \az_{jk}(x_1) \right) dx_j+ \OO(2) .
\end{equation}
We compute the differential of \eqref{eq-1f} at a point $(x_1,0) \in N$:
\begin{equation}
d\alpha(x_1,0) = \sum_{j=1}^n \az'_j(x_1) dx_1 \wedge dx_j + \sum_{j=1}^n \sum_{k=2}^n  \az_{jk}(x_1) dx_k \wedge dx_j.
\end{equation}
Since $d\alpha(x_1,0) = 0$, we deduce the functions $\alpha_j, \alpha_{jk}$ satisfy the relations
\begin{equation}\label{eq-1h}
\alpha_{jk} = \alpha_{kj}, \quad j, k \in [2,n]; \qquad \az'_k = \az_{1k}, \quad k \in [1,n].
\end{equation}

We now look for $f$ as a partial Taylor development with respect to $(x_2, \dots, x_n)$:
\begin{equation}\label{eq-1g}
f(x) = f_1(x_1) + \sum_{j=2}^n x_j f_j(x_1) + \dfrac{1}{2} \sum_{j,k=2}^n x_j x_k f_{jk}(x_1),
\end{equation}
with $f_j, f_{jk}$ smooth functions on $(-1,1)$ and $f_{jk} = f_{kj}$. Thanks to \eqref{eq-1g}:
\begin{equation}\label{eq-1i}
df(x) = \left(f'_1(x_1) + \sum_{j=2}^n  x_j f'_j(x_1)  \right) dx_1 + \sum_{j=2}^n\left(   f_j(x_1) + \sum_{k=2}^n x_k f_{jk}(x_1) \right) dx_j + \OO(2). 
\end{equation}
To ensure that ensure that $\beta = \alpha - df$ vanishes along $N$, we pick $f_j = \alpha_j$ for $j \in [2,n]$ and $f_1 = \int_0^{x_1} \alpha_1$. We now set $f_{jk} = \alpha_{jk}$ for $j,k \in [2,n]$ -- in particular the constraint $f_{jk} = f_{kj}$ holds because of \eqref{eq-1h}. These relations define $f_j$ and $f_{jk}$, hence the function $f$ by \eqref{eq-1g}. We plug them in \eqref{eq-1i}:
\begin{equation}
df(x) = \sum_{j=1}^n  \alpha_j(x_1) + dx_j + \sum_{j=2}^n x_j \alpha'_j(x_1)  dx_1 + \sum_{j,k=2}^n x_k \alpha_{jk}(x_1) dx_j + \OO(2). 
\end{equation}
Coming back to \eqref{eq-1f}, we deduce that $\beta = \alpha - df$ satisfies
\begin{equation}
\beta(x) = \sum_{k=2}^n x_k \az_{1k}(x_1) dx_1 -  \sum_{j=2}^n  x_j\alpha'_j(x_1)  dx_1 + \OO(2) = \OO(2),
\end{equation}
where we used $\alpha_{1k} = \alpha'_k$ from \eqref{eq-1h}. This shows that $\beta$ vanishes at order $2$ on $N$. In particular $\LL_{\p_k} \beta(x_1,0) = 0$ for all $k \in [1,n]$: the linearization of $\beta$ at any point of $N$ vanishes. This completes the proof. \end{proof}

\begin{proof}[Proof of Lemma \ref{lem:1p}] We apply Moser's trick. We first observe that the two-form $\w_0 - \w_1$ is closed and vanishes along $N$. Therefore, locally near any point of $M$, we may write $\w_0 - \w_1 = d\alpha$ for a one-form $\alpha$ that vanishes along $N$.

Let $\beta$ be a one-form on $M$ that satisfy the conclusions of Lemma \ref{lem:1a}; in particular we have $\w_0-\w_1 = d\beta$. We define  $X_t$ as the unique (time-dependent) vector field on $M$ that satisfies
\begin{equation}\label{eq-1j}
X_t \lrcorner \w_t = -\beta, \quad \w_t \de \w_0 + t (w_1-\w_0). 
\end{equation}

Let $\kappa_t$ be the flow of $X_t$. Using that $\p_t \w_t = w_1-\w_0$, Cartan's magical formula, \eqref{eq-1j} and $w_1-\w_0 = d\beta$, we obtain:
\begin{equation}
\dfrac{d}{dt} \kappa_t^* \w_t = \kappa_t^* \LL_{X_t} \w_t  + \kappa_t^* w_1-\w_0 = \kappa_t^* \big( X_t \lrcorner d\w_t +  d(X_t \lrcorner \w_t) + w_1-\w_0  \big) = \kappa_t^* (w_1-\w_0-d\beta) = 0.
\end{equation}
In particular, $\kappa_1^* \w_1 = \w_0$. 

Since $\beta$ vanishes on $N$, so does $X_t$ by \eqref{eq-1j}. It follows that $\kappa_t(p) = p$ for all $p \in N$. If $p \in N$ and $v \in T_p M$, $d\kappa_t(p) \cdot v \in T_p M$ and we have
\begin{equation}\label{eq-6l}
\dfrac{d}{dt} d\kappa_t(p) \cdot v = d\kappa_t(p) \cdot \LL_{X_t} v(p).
\end{equation}
Since the linearization of $\beta$ at $p$ vanishes, we have
\begin{equation}
0 = -\LL_v \beta(p) = \LL_v (X_t \lrcorner \w_t) (p) = X_t(p) \lrcorner \LL_v  \w_t + (\LL_v X_t)(p) \lrcorner \w_t = -\LL_{X_t} v(p) \lrcorner \w_t,
\end{equation}
where we used the product rule for derivatives and interior product; and $X_t(p) = 0$. Since $\w_t = \w_0$ along $N$, we deduce that $\LL_{X_t} v(p) = 0$. It follows that \eqref{eq-6l} vanishes; hence $d\kappa_t(p) \cdot v = d\kappa_0(p) \cdot v = v$ for any $t \in [0,1]$. This completes the proof. 
\end{proof}

\subsection{Symplectomorphism at leading order} We prove here Theorem \ref{thm:1c}. We use \S\ref{sec-2a} and Lemma \ref{lem:1a} to first construct a symplectomorphism such that \eqref{eq-2g} holds with a $\OO(3)$-remainder. Towards the end we will use a lemma to refine the remainder to $\OO(4)$, which we will prove in \S\ref{sec-3.2}.

\begin{proof}[Proof of Theorem \ref{thm:1c}] 1. In this proof, we use the notation $z = (x,\xi)$. Fix a point $\zeta \in \Gamma_\di$ and define a quadratic form $q(\zeta ; \cdot)$ on $\R^4$ by
\begin{equation}
q(\zeta;z) = \dfrac{1}{2} \lr{ z, \nabla^2 q(\zeta) z } = | \nabla p_1(\zeta) \cdot z |^2 + | \nabla p_2(\zeta) \cdot z |^2 + | \nabla p_3(\zeta) \cdot z |^2. 
\end{equation}
Since $\nabla p_1(\zeta), \nabla p_2(\zeta)$ and $\nabla p_3(\zeta)$ are linearly independent, $q(\zeta;\cdot)$ is a quadratic form of signature $(0,+,+,+)$. Its kernel is
\begin{equation}\label{eq-1m}
\ker\big( q(\zeta; \cdot) \big) = \bigcap_{j=1}^3 \big(\R \nabla p_j(\zeta) \big)^\perp = \bigcap_{j=1}^3 T_\zeta p_j^{-1}(0) = T_\zeta \Gamma_\di,
\end{equation}
where in the last identity we used that $p_j(\zeta) = 0$ and that gradients are orthogonal to level sets.

From Theorem \ref{lem:1l}, there exists a symplectic matrix $S(\zeta)$ and a number $\lambda(\zeta) > 0$ such that
\begin{equation}
q(\zeta; S(\zeta) z) = \lambda(\zeta)^2 \big( z_2^2 + z_3^2 + z_4^2 \big). 
\end{equation}
For $j \in [1,4]$, we set $u_j(\zeta) = S(\zeta) e_j$. Let $I$ be a simply connected open subset of $\Gamma_\di$ containing $\rho$. By varying $\zeta$ along $I$, we can regard each $u_j$ as smooth sections of $T\R^4|_I$. We also observe that $u_1(\zeta) = S(\zeta)e_1$ is in the kernel of $q(\zeta;\cdot)$; from \eqref{eq-1m} we deduce that $u_1 \in T\Gamma_\di \big|_I$.

2. For $j \in [1,4]$, we extend the sections $u_j$ of $T\R^4|_I$ to vector fields on $\R^4$ (still denoted $u_j$). Define a diffeomorphism $\phi$ from a neighborhood of $0$ to a neighborhood of $\rho$ by
\begin{equation}
\phi(z) = e^{z_2 u_2 + z_3 u_3 + z_4 u_4} \circ e^{z_1 u_1}(\rho).  
\end{equation}
With $\zeta = e^{z_1 u_1}(\rho)$, we observe that
\begin{equations}\label{eq-1c}
d\phi(z_1,0) \cdot \p_1 = d e^{z_1 u_1}(\rho) \cdot \p_1 = u_1(\zeta) = u_1(\zeta).
\end{equations}
Since $u_1$ is tangent to $\Gamma_\di$ on $I$, we deduce that $\phi$ maps an open subset of $\{(z_1,0)\}$ to $I$. Likewise,
\begin{equations}\label{eq-1d}
d\phi(z_1,0) \cdot \p_j =  d e^{z_j u_j}(\zeta) \cdot \p_j = u_j(\zeta) = u_j(\zeta) \quad \text{for $j \geq 2$.}
\end{equations}
We deduce from \eqref{eq-1c}, \eqref{eq-1d} that $d\phi(z_1,0) = S(\zeta)$ is symplectic for $z_1 \in \R$. Hence, $\phi$ is a smooth diffeomorphism; and if $\w_1 = \phi^* \w$ (with $\w$ the standard symplectic form on $\R^4$), then $\w_1$ is a symplectic form on a neighborhood of $0$ which coincides with $\w$ when restricted to $\R \times \{0\}^3$. By Lemma \ref{lem:1p}, there exists a symplectomorphism $\kappa_1$ from / to neighbhorhoods of $0$ with
\begin{equation}
\kappa_1(z_1,0) = (z_1, 0), \qquad d\kappa_1(z_1,0) = \Id.
\end{equation}

4. Define $\tkappa = \phi \circ \kappa_1$. Then $\tkappa$ is a symplectomorphism from a neighborhood of $0$ to a neighborhood of $\rho$:
\begin{equation}
\tkappa^* \w = \kappa_1^* \phi^* \w = \kappa_1^* \w_1 = \w.
\end{equation}
Moreover, using \eqref{eq-1j} and \eqref{eq-6k}, we have $\kappa(z_1,0) = e^{z_1 u_1}(\rho)$ and, with $z = (z_1,\tz) \in \R \times \R^3$:
\begin{equation}
\tkappa(z) = \tkappa(z_1,0) + d\tkappa(z_1,0) \cdot \tz + \OO(2) = \zeta + S(\zeta) \cdot \tz + \OO(2), \qquad \zeta \de e^{z_1 u_1}(\rho).
\end{equation}
We recall that $q(\zeta) = 0$ and $\nabla q(\zeta) = 0$. Therefore,
\begin{equations}\label{eq-5c}
q \circ \tkappa(z) = q\big(\zeta + S(\zeta) \cdot \tz + \OO(2)\big) = \lr{ S(\zeta) \tz, \nabla^2 q(\zeta) \cdot S(\zeta) \tz } + \OO(3)
\\
 = q(\zeta; S(\zeta) \tz) + \OO(3) 
 %= \lambda(\zeta)^2 \cdot \big( z_2^2+z_3^2 + z_4^2 \big) + \OO(3) 
 = \lambda \circ \tkappa(z_1,0)^2 \cdot \big( z_2^2+z_3^2 + z_4^2 \big) + \OO(3).
\end{equations}

5. Noting that $\lambda$ does not vanish, we can write \eqref{eq-5c} as 
\begin{equation}
q \circ \tkappa(z) = \lambda^2 \circ \tkappa(z_1,0) \cdot q_1(z),
\end{equation} 
where $q_1(z) = z_2^2+z_3^2+z_4^2+\OO(3)$. This is nearly \eqref{eq-2g}, apart from the remainder $\OO(3)$ instead of $\OO(4)$. To conclude, we will need the following lemma:

\begin{lemma}\label{lem:1j}
 Let $q_1 : \R^4 \rightarrow \R$ be a smooth function such that  $q_1(z) = z_2^2+z_3^2+z_4^2 + \OO(3)$. 
Then there exists two smooth functions $f : \R^4 \rightarrow \R$ and $\nu : \R \rightarrow \R$ such that
\begin{equation}\label{eq-5i}
q_1 \circ e^{H_f} (z) = \big( 1 +  \nu(z_1) z_3\big)^2 \big( z_2^2+z_3^2+z_4^2 \big) + \OO(4)\qquad \text{and} \qquad e^{H_f}(z_1,0) = (z_1,0).
\end{equation}
\end{lemma}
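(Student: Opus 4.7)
The plan is to take $f = f_2 + f_3$ with $f_j$ a homogeneous polynomial of degree $j$ in $(z_2, z_3, z_4)$ having smooth $z_1$-coefficients, and to analyse $q_1 \circ e^{H_{f_2}} \circ e^{H_{f_3}}$ directly (the composition of two Hamiltonian flows is itself a Hamiltonian flow by Baker--Campbell--Hausdorff, so this is equivalent to producing a single $f$). Requiring $f = \OO(2)$ forces $df(z_1, 0) = 0$, so $H_f$ vanishes along the $z_1$-axis and the second conclusion $e^{H_f}(z_1, 0) = (z_1, 0)$ is automatic. For the first identity I would expand $q_1 \circ e^{H_f}$ modulo $\OO(4)$ as a Taylor series in $f$, write $q_1 = q_0 + r_3 + \OO(4)$ with $q_0 = z_2^2 + z_3^2 + z_4^2$, and match against $(1 + \nu(z_1) z_3)^2 q_0 = q_0 + 2\nu(z_1) z_3 q_0 + \OO(4)$.

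The degree-2 piece of $\{f_2, q_0\}$ equals $\LL f_2$, where $\LL := 2(z_4 \p_{z_2} - z_2 \p_{z_4})$ generates rotations in the $(z_2, z_4)$-plane; it must vanish to preserve $q_0$. Since $\ker \LL$ on degree-2 polynomials in $(z_2, z_3, z_4)$ is spanned over $C^\infty(\R_{z_1})$ by $(z_2^2 + z_4^2)$ and $z_3^2$, I set $f_2 = a(z_1)(z_2^2 + z_4^2) + b(z_1) z_3^2$ with $a, b$ to be determined. The central technical observation is that the degree-3 part of $\sum_{k \ge 0} \tfrac{1}{k!} \{f_2, \cdot\}^k r_3$ telescopes: on degree-3 polynomials the operator $\{f_2, \cdot\}|_{\deg 3}$ is exactly $-a(z_1) \LL$, so the series collapses to $e^{-a \LL} r_3$. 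Combined with $\{f_2, q_0\}|_{\deg 3} = 2 a'(z_1) z_3(z_2^2+z_4^2) + 2 b'(z_1) z_3^3$ (coming from the ``$2 z_3 \p_{z_1}$'' piece of $H_{q_0}$) and the $\LL f_3$ contribution from the cubic generator (whose own iterated brackets all land in $\OO(4)$), the degree-3 matching condition becomes
\begin{equation*}
\LL f_3 + e^{-a \LL} r_3 + 2 a'(z_1) z_3(z_2^2 + z_4^2) + 2 b'(z_1) z_3^3 \ = \ 2 \nu(z_1) z_3(z_2^2 + z_4^2) + 2 \nu(z_1) z_3^3.
\end{equation*}

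I would then decompose this equation by powers of $z_3$, using that $e^{-a \LL}$ commutes with multiplication by $z_3$ and preserves the kernel direction $z_2^2 + z_4^2$. On the $z_3^0$ and $z_3^2$ sectors, the $(z_2, z_4)$-degrees involved are $3$ and $1$ (both odd), on which $\LL$ is bijective, so $f_3$ is uniquely determined there. The $z_3^1$ sector carries a one-dimensional obstruction along the cokernel $\R(z_2^2 + z_4^2)$ of $\LL$ on degree-2 polynomials, which reduces to the scalar ODE $a'(z_1) = \nu(z_1) - \alpha(z_1)/2$, where $\alpha(z_1)$ is the coefficient of $(z_2^2+z_4^2)$ in the $z_3$-part of $r_3$. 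Likewise the $z_3^3$ sector reduces to $b'(z_1) = \nu(z_1) - R_3(z_1)/2$, where $R_3(z_1)$ is the coefficient of $z_3^3$ in $r_3$. For any smooth choice of $\nu$ (even $\nu \equiv 0$) both ODEs are solvable by integration near $0$, fixing $a$ and $b$, hence $f_2$; the remaining $\LL$-equations then determine $f_3$. The main obstacle, resolved precisely by the geometric-series collapse $\sum_k (-a)^k \LL^k / k! = e^{-a\LL}$, is that individual iterated brackets $\{f_2, \{f_2, \ldots, \{f_2, r_3\}\ldots\}\}$ all live at degree 3; without the telescoping, one could not reduce the problem to a single sector-by-sector cohomological equation.
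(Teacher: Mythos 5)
Your core computation is essentially the paper's cohomological scheme in different clothing: expand in powers of $z_3$, invert the rotation generator $\LL$ on the odd $(z_2,z_4)$-sectors, and push the two rotation-invariant obstructions (the coefficients of $z_3(z_2^2+z_4^2)$ and $z_3^3$) into ODEs in $z_1$. Those degree counts are right: $\{f_2,\cdot\}$ raises transverse degree except through the $a(z_1)(z_2^2+z_4^2)$ part, which acts as $a\LL$ on cubics, and your sector-by-sector solvability analysis ($\LL$ bijective on odd degrees, cokernel $\R(z_2^2+z_4^2)$ on degree two, $\LL=0$ on degree zero) is correct.

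The genuine gap is the passage from $e^{H_{f_2}}\circ e^{H_{f_3}}$ to a single $e^{H_f}$. Baker--Campbell--Hausdorff is only a formal series here: the composition of the time-one maps of two autonomous Hamiltonian flows is a symplectomorphism, but it is not in general the time-one map of an autonomous Hamiltonian, and in any case BCH does not hand you a smooth $f:\R^4\to\R$; so the statement as written (a single $f$, with $e^{H_f}$ fixing the axis) is not proved. The obvious repair, taking $f=f_2+f_3$ outright, changes your degree-3 equation: the cross brackets $\{f_2,\{f_3,q_0\}\}$, $\{f_2,\{f_2,\{f_3,q_0\}\}\},\dots$ all stay at degree 3 (each application of $\{f_2,\cdot\}$ acts there as $a\LL$), so $\LL f_3$ is replaced by $\frac{e^{a\LL}-1}{a}\,f_3$, and that operator fails to be invertible on the mean-zero sectors at the values of $z_1$ where $a(z_1)$ times a nonzero eigenvalue of $\LL/i$ lies in $2\pi\Z$. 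This is not a vacuous worry, because the lemma is stated along the whole axis and your $a$ is produced by integrating $a'=\nu-\alpha/2$ in $z_1$, so it need not stay small. Note that this complication is created precisely by your decision to include $a(z_1)(z_2^2+z_4^2)$ in the quadratic generator so as to force $\nu\equiv 0$.

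For comparison, the paper avoids all of this: its quadratic generator is only $z_3^2f_0(z_1)$, so every iterated bracket beyond the first is $\OO(4)$ and a single $f$ works immediately; the price is that $\nu$ is then not free but forced to be the circle average of the $z_3$-coefficient $r_2$, formula \eqref{eq-10p}. The paper also replaces your Lie series (which, for merely smooth $q_1$, needs a short justification via 3-jets along the axis) by the identity $q_1\circ e^{H_f}-q_1=\int_0^1\{f,q_1\}\circ e^{tH_f}\,dt$ combined with the flow estimate \eqref{eq-5h}, so no series appears at all. If you drop the $a$-term and let $\nu$ be determined by the obstruction, your argument collapses onto the paper's proof; if you want the stronger normalization $\nu\equiv 0$, you must either keep the two-flow composition (and then you have proved a variant of the lemma, not the lemma as stated) or control the resonance problem above.
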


We will prove this lemma in \S\ref{sec-3.2} below. Define now 
\begin{equation}\label{eq-11p}
\kappa(z) \de \tkappa \circ e^{H_f}, \qquad \mu(z_1)  \de \nu(z_1) \cdot \lambda \circ \tkappa(z_1,0),
\end{equation}
where $f$ and $\nu$ comes from Lemma \ref{lem:1j}. The map $\kappa$ is a  symplectomorphism, defined from a neighborhood of $0$ to a neighborhood of $\rho$. We extend it to a symplectomorphism of $\R^4$ thanks to \cite[Theorem 11.4]{Z12}.

 For $z$ near $0$, we have $e^{H_f} (z) = (z_1,0) + \OO(1)$; this yields \eqref{eq-2g}:
\begin{equations}
q \circ \kappa = q \circ \tkappa \circ e^{H_f}(z) = \lambda^2 \circ \tkappa(z_1,0) \big( 1+\nu(z_1) z_3\big)^2 \cdot \big( z_2^2+z_3^2+z_4^2 \big) + \OO(4)
\\
= \big(\lambda \circ \tkappa(z_1,0) +\mu(z_1) z_3\big)^2 \cdot \big( z_2^2+z_3^2+z_4^2 \big) + \OO(4).
\end{equations}
Finally, we note that $\p_x \pi \circ \kappa (0)$ is the top $2 \times 2$-bloc of the matrix $S(\rho)$, which is invertible by Theorem \ref{thm:1l}. This completes the proof. \end{proof}

\subsection{Refining the symplectomorphism}\label{sec-3.2} The proof of Theorem \ref{thm:1c} relied on Lemma \ref{lem:1j}, which we now prove. It relies on some (so far, unexploited) flexibility in the choice of $u_1$. Our proof is inspired by that of \cite[Lemma 2]{CV03}.

\begin{proof}[Proof of Lemma \ref{lem:1j}] We will use the following notations:
\begin{itemize}
\item $\PP_k$ is the space of homogeneous polynomials of degree $k$ in $(z_2,z_4)$, with coefficients depending smoothly on $z_1$ (in particular, elements of $\PP_k$ do not depend on $z_3$);
\item $q_0$ is the quadratic form defined by $q_0(z) = z_2^2+z_3^2+z_4^2$ and $r$ is the remainder $r = q_1-q_0 = \OO(3)$;
\item We write $z \in \R^4$ as $(z_1,\tz) \in \R \times \R^3$.
\end{itemize}
We look for the function $f$ in the form:
\begin{equation}\label{eq-5g}
f(z) = z_3^2 \cdot f_0(z_1) + \sum_{j=1}^3 z_3^{3-j} f_j(z), \qquad f_j \in \PP_j.
\end{equation}

1. We claim that for $f$ in the form \eqref{eq-5g} and $t \in [0,1]$, 
\begin{equation}\label{eq-5h}
e^{tH_f}(z) = z + 2t z_3 f_0'(z_1) e_1 + \OO(2).
\end{equation} 
This in particular implies that $e^{H_f}(z_1,0) = (z_1,0)$.
We observe that $H_f = 2z_3 f_0'(z_1) \p_1 + \OO(2)$; therefore, if $z(t)$ is an integral curve of $H_f$, then we we have
\begin{equation}
| \dot{z}(t) - 2z_3(t) f_0'\big(z_1(t)\big) e_1 | \leq C | \tz(t) |^2.
\end{equation}
We set $y(t) = z(t) - z(0) - 2t z_3(0) f_0'(z_1(0))$. In particular $\ty(t) = \tz(t)-\tz(0)$ and it follows that
\begin{equation}
-\dfrac{d}{dt} \dfrac{1}{| \ty(t) | + |\tz(0)|} = \dfrac{\lr{\dot{\ty}(t),y(t)}}{|y(t)|} \cdot \dfrac{1}{\big(| \ty(t)| + |\tz(0)|\big)^2} \leq \dfrac{|\dot{\tz}(t)|}{|\tz(t)|^2} \leq C.
\end{equation}
Integrating both sides, and isolating $\ty(t)$, we conclude that
\begin{equation}\label{eq-5l}
| \ty(t) | \leq \dfrac{Ct |\tz(0)|^2}{1-Ct |\tz(0)|^2}, \qquad |\tz(t)| \leq C |\tz(0)|.  
\end{equation}

We now work on $y_1$. Then there exists $C > 0$ such that
\begin{align}
| \dot{y_1}(t) | = | \dot{z_1}(t) - 2 z_3(0) f'_0(z_1(0)) | & \leq | 2 z_3(t) f'_0(z_1(t)) - 2 z_3(0) f'_0(z_1(0)) | + C |\tz(t)|^2
\\
\label{eq-5m} & \leq C |y_3(t)| + C|z_3(0)| |z_1(t)-z_1(0)| + C |\tz(0)|^2 
\\
& \leq  C \big( |\tz(0)| |y_1(t)| + |\tz(0)|^2 \big).
\end{align}
For the purpose of proving \eqref{eq-5h}, we can assume without loss of generalities that $2C|\tz(0)| \leq 1$. We claim that
\begin{equation}\label{eq-5k}
T \de \inf \big\{ t \geq 0 : |y_1(t)| \geq 2C |\tz(0)|^2\big\} \geq 1.
\end{equation}
Indeed, for $t \in [0,T]$, we have thanks to \eqref{eq-5m} and $2C|\tz(0)| \leq 1$:
\begin{equation}
| \dot{y_1}(t) | \leq C|\tz(0)|^2 \big( 2C |\tz(0)| + 1 \big) \leq 2C|\tz(0)|^2.
\end{equation}
Integrating on $[0,T]$ and using $y_1(0) = 0$, $y_1(T) = 2C |\tz(0)|^2$ yields
\begin{equation}
2C |\tz(0)|^2 \leq 2C |\tz(0)|^2 T \qquad \Rightarrow \qquad 1 \leq T.
\end{equation}
This proves the claim \eqref{eq-5k}. Together with \eqref{eq-5l}, we conclude that when $\tz(0)$ is sufficiently small, $|y(t)| \leq 2C |\tz(0)|^2$. The claim \eqref{eq-5h} follows from the definition of $y(t)$. 

2. We now use \eqref{eq-5h} to derive an expansion for $q_1 \circ e^{H_f}$ modulo $\OO(4)$. We have
\begin{equation}
q_1 \circ e^{H_f} - q_1 = \int_0^1 \dfrac{d}{dt} q_1 \circ e^{tH_f} dt = \int_0^1 \{f,q_1\} \circ e^{tH_f} dt.
\end{equation}
Since $\{z_3,q_0\} = 0$, we have
\begin{equation}\label{eq-5j}
\{f,q_0\} = -2 z_3^3 f_0'(z_1) + \sum_{j=1}^3 z_3^{3-j} \{f_j,q_0\} = \OO(3).
\end{equation}
In particular, thanks to \eqref{eq-5h}, we obtain $\{f,q_0\} \circ e^{tH_f} = \{f,q_0\} + \OO(4)$. Likewise, we have
\begin{equation}
\{f,r\} = z_3^2 \{f_0,r\} + \sum_{j=1}^3 z_3^{3-j} \{f_j,r\} - 2z_3 f_0 \p_1 r - \sum_{j=1}^3 (3-j)z_3^{2-j} f_j \p_1 r.
\end{equation}
We note that $\{f_0,r\} = \OO(2)$, $\{f_j,r\} = \OO(j+1)$, and $\p_1 r = \OO(3)$. Therefore $\{f,r\} = \OO(4)$. Composing with $e^{tH_f}(z) = z + \OO(1)$ produces $\{f,r\} \circ e^{tH_f} = \OO(4)$. We conclude that 
\begin{equation}
q_1 \circ e^{H_f} = q_1 + \{f,q_0\} + \OO(4) = q_0 + r + \{f,q_0\} + \OO(4). 
\end{equation}
In particular, the equation \eqref{eq-5i} is equivalent to
\begin{equation}\label{eq-5d}
r+\{f,q_0\} = 2\nu(z_1) z_3 q_0 + \OO(4).
\end{equation}

4. Define $\LL g = \{g,z_2^2+z_4^2\}$. Going back to \eqref{eq-5j}, we have
\begin{equation}
\{ f,q_0 \} = -2 z_3^3 f_0'(z_1) + \sum_{j=1}^3 z_3^{3-j} \LL  f_j - \sum_{j=1}^3 z_3^{4-j} \p_1 f_j.
\end{equation}
The last sum is $\OO(4)$ because $f_j \in \PP_j$, hence $\p_1 f_j \in \PP_j$. Therefore, we can simply ignore it. We now expand $r$ as 
\begin{equation}\label{eq-10q}
r(z) = \sum_{j=0}^3 z_3^{3-j} r_j(z) + \OO(4), \qquad r_j \in \PP_j.
\end{equation}
Identifying terms in $z_3^j$ on each side of \eqref{eq-5d}, we conclude that this equation is equivalent to the system
\begin{equation}\label{eq-5e}
r_3 + \LL f_3 = 0, \qquad 
r_2 + \LL f_2 = 2\nu (z_2^2+z_4^2), \qquad 
r_1 + \LL f_1 = 0, \qquad
r_0 - 2f_0' = \nu.
\end{equation}

3. We now solve \eqref{eq-5e}. We observe that the operator $\LL$ acts on $\PP_j$; below we specifically focus on $j \in \{1,2,3\}$. In polar coordinates $z_2= \rho \cos \te$ and $z_4 = \rho \sin \te$, we simply have $\LL  = \p_\te$. It follows that $\LL$ is a bijection of $\PP_3$ (and more generally, of $\PP_j$ for $j$ odd); therefore we can find $f_3$ with $r_3 + \LL f_3 = 0$; likewise we can find $f_1$ such that $r_1 + \LL f_1 = 0$. 

The range of $\LL$ on $\PP_2$ is however not all of $\PP_2$: it consists of elements of $\PP_2$ which average to $0$ on the orbit $\{ z_2^2+z_4^2 = 1 \}$. Therefore, if we set 
\begin{equation}\label{eq-10p}
\nu \de \dfrac{1}{4\pi} \int_{z_2^2+z_4^2 = 1} r_2,
\end{equation}
then we can find $f_2 \in \PP_2$ such that $r_2 + \LL f_2 = \nu (z_2^2+z_4^2)$. We finally construct $f_0$ by integrating $f_0'$ in the identity $r_0 - 2f_0' = \nu$. This completes the proof. 
\end{proof}

\begin{remark}\label{rem:1} From the formula \eqref{eq-10p}, and the relation \eqref{eq-10q} between $r$ and $r_2$, we see that if $r$ has no term in $\xi_1 x_2^2$ or $\xi_1 \xi_2^2$, then $\nu$ vanishes -- hence so does the coefficient $\mu$ defined in \eqref{eq-11p}. This will serve when looking at the example of \S\ref{sec:10}.
\end{remark}

\subsection{Generating function} We complete Theorem \ref{thm:1c} with a result on the existence of a generating function of $\kappa$ near $0$ with useful global properties. This justifies that one can quantize $\kappa$ into a Fourier integral operator of the form \eqref{eq-10c}. This will serve in \S\ref{chap:4}.

\begin{lemma}\label{lem:1q} Let $\kappa$ be the symplectomorphism produced by Theorem \ref{thm:1c}. There exists a smooth function $\phi : \R^4 \rightarrow \R$ such that: 
\begin{itemize}
\item[(i)] $\phi$ is quadratic outside a neighborhood of $0$;
\item[(ii)] The map $(x,\xi) \mapsto \big( \p_\xi \phi(x,\xi), \xi \big)$ is a smooth diffeomorphism of $\R^4$;
\item[(iii)] For $(x,\xi)$ in a neighborhood of $0$,
\begin{equation}
\kappa \big( \p_\xi \phi(x,\xi), \xi \big)  = \big( x, \p_x \phi(x,\xi) \big).
\end{equation}
\end{itemize}
\end{lemma}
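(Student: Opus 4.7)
Writing $\kappa(y,\eta) = (X(y,\eta), \Xi(y,\eta))$, Theorem \ref{thm:1c} tells us that $\p_y X(0,0)$ is the upper-left $2\times 2$ block of $d\kappa(0) = S(\rho)$ and is invertible. The implicit function theorem then yields a unique smooth solution $y = Y(x,\xi)$ to $X(y,\xi) = x$ on a neighborhood of the relevant base point (after a translation, taken to be $0$). Set $\Xi^\ast(x,\xi) \de \Xi(Y(x,\xi), \xi)$. The symplecticity condition $dX \wedge d\Xi = dy \wedge d\eta$ for $\kappa$ gives, after pulling back by the chart $(x,\xi)\mapsto(Y(x,\xi),\xi)$, that the one-form $\Xi^\ast\, dx + Y\, d\xi$ is closed on this neighborhood (this is the standard computation behind local generating functions). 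The Poincar\'e lemma then produces a smooth local $\phi_{\mathrm{loc}}$ with $\p_\xi \phi_{\mathrm{loc}} = Y$ and $\p_x \phi_{\mathrm{loc}} = \Xi^\ast$. By construction $\kappa(\p_\xi \phi_{\mathrm{loc}}(x,\xi), \xi) = (x, \p_x \phi_{\mathrm{loc}}(x,\xi))$, which is exactly property (iii).

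\textbf{Extension to a quadratic function at infinity.} To globalize while ensuring (i) and (ii), I would use a reference quadratic phase. The natural candidate is the generating function $\phi_L$ of the affine symplectic map $z \mapsto S(\rho)\,z + \rho$: because $S(\rho)$ is symplectic with invertible upper-left block, this affine map admits a polynomial generating function of degree at most two, and the associated map $(x,\xi) \mapsto (\p_\xi \phi_L, \xi)$ is a global affine diffeomorphism of $\R^4$. Moreover $\phi_L$ matches $\phi_{\mathrm{loc}}$ to second order at the base point, so $\phi_{\mathrm{loc}} - \phi_L = O(|z|^3)$ there. Let $\chi \in C_c^\infty(\R^4)$ be a cutoff equal to $1$ on a small ball $B_r$ and vanishing outside $B_{2r}$, and set $\phi \de \chi\,\phi_{\mathrm{loc}} + (1-\chi)\,\phi_L$. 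This is smooth on $\R^4$, equals $\phi_{\mathrm{loc}}$ on $B_r$ (giving (iii)) and equals the quadratic $\phi_L$ outside $B_{2r}$ (giving (i)).

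\textbf{Main obstacle: the global diffeomorphism property (ii).} On $B_r$ the mixed Hessian $\p_x \p_\xi \phi = \p_x Y$ is invertible because $y \mapsto X(y,\xi)$ is; outside $B_{2r}$ the map $(x,\xi) \mapsto (\p_\xi \phi_L, \xi)$ is an affine diffeomorphism. The delicate region is the annulus $B_{2r} \setminus B_r$, where expanding $\p_x \p_\xi \phi$ by Leibniz yields the invertible matrix $\p_x \p_\xi \phi_L$ plus five error terms of the schematic form $\chi\,\p^2(\phi_{\mathrm{loc}}-\phi_L)$, $\p\chi \otimes \p(\phi_{\mathrm{loc}}-\phi_L)$, and $(\phi_{\mathrm{loc}}-\phi_L)\,\p^2\chi$. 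Since $\phi_{\mathrm{loc}} - \phi_L = O(|z|^3)$ while $|z| \sim r$ and $\p^k\chi = O(r^{-k})$ on the annulus, every error term is $O(r)$. Shrinking $r$ therefore preserves invertibility of $\p_x \p_\xi \phi$ on the transition zone, so $(x,\xi) \mapsto (\p_\xi \phi(x,\xi), \xi)$ is a local diffeomorphism of $\R^4$. To upgrade local to global I would invoke the Hadamard--Caccioppoli theorem: a smooth local diffeomorphism $\R^n \to \R^n$ that is proper is a global diffeomorphism. Properness is immediate here because outside $B_{2r}$ the map coincides with an invertible affine map, so preimages of compact sets are compact. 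This settles (ii) and completes the construction.
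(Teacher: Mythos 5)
Your proof is correct and follows essentially the same route as the paper: obtain the local generating function from the invertibility of the upper-left block of $d\kappa(0)$, glue it to a phase that is quadratic at infinity via a cutoff whose perturbation of the mixed Hessian $\p^2_{x\xi}\phi$ is $O(r)$, and conclude with properness plus Hadamard's global inversion theorem. The only difference is cosmetic: the paper cuts off the pieces of the decomposition $\tphi_1(x)+\tphi_2(\xi)+x\cdot\tPhi(x,\xi)\xi$ (freezing $\tPhi$ at $\tPhi(0)$) rather than interpolating to the quadratic generating function of the linearized map, but the estimates and the conclusion are the same.
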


\begin{proof} 1. Since $\p_x \pi \circ \kappa(0)$ is invertible, there exists a smooth function $\tphi$ defined on a neighborbood of $0$ such that
\begin{equation}
\kappa \big( \p_\xi \tphi(x,\xi), \xi \big)  = \big( x, \p_x \tphi(x,\xi) \big).
\end{equation}
We now Taylor-expand $\tphi$ near $(x,0)$ with respect to $\xi$; then we expand the remainder $r$ near $(\xi,0)$ with respect to $x$: 
\begin{equation}
\tphi(x,\xi) = \tphi(x,0) + \xi \cdot r(x,\xi) = \tphi_1(x) + \tphi_2(\xi) + \xi \cdot \tPhi(x,\xi) x,
\end{equation}
for some smooth functions $\tphi_1, \tphi_2, \tPhi$ defined on a neighborhood of $0$. We extend these functions into smooth functions bounded together with all their derivatives on $\R^2$. Also, we remark from the invertibility of $\p_x \pi \circ \kappa(0)$ that the matrix $\p_{x\xi}^2 \tphi(0) = \tPhi(0)$ is invertible. 

2. Fix $\chi : \R \rightarrow \R$ a smooth function equal to $1$ on $[-1,1]$ and $0$ outside $(-2,2)$. For $r \in (0,1]$, we introduce, with $z=(x,\xi)$:
\begin{equations}\label{eq-11r}
\tPhi(z) \de \chi\left( \dfrac{|z|}{r}\right) \tPhi(z) + \left(1-\chi\left( \dfrac{|z|}{r}\right) \right) \tPhi(0),
\\
\phi(z) \de \chi\left( r|z| \right) \tphi_1(x) + \chi\left( r|z| \right) \tphi_2(\xi) + x \cdot \Phi(z) \xi.
\end{equations}
Then $\Phi = \tPhi$ in the ball of radius $r$, and hence $\phi = \tphi$ in this ball (as long as $r$ is small enough); this implies that (iii) holds. Likewise $\Phi = \Phi(0)$ outside the ball of radius $2r$, therefore $\phi$ is quadratic outside this ball; this implies that (i) holds.

3. We now check that $\p_{x\xi}^2 \phi$ never vanishes as long as $r$ is sufficiently small. Taking the derivatives with respect to $(x,\xi)$ of the first two terms in the definition \eqref{eq-11r} produces a term $O(r)$. For the third term, we write
\begin{equation}
\Phi(z) = \tPhi(0) + \chi\left( \dfrac{|z|}{r}\right) z \cdot \Psi(z).
\end{equation}
for some smooth function $\Psi$. Taking one derivative with respect to $x,\xi$ produces a term $O(1)$; two derivatives produces $O(r^{-1})$. Therefore, we end up with
\begin{equation}
\p_{x\xi}^2 \phi(x,\xi) = \tPhi(0) + O(r). 
\end{equation}
Picking $r$ sufficiently small ensures that $\p_{x\xi}^2 \phi$ is invertible at every point. Moreover, outside a compact set we have $\p_\xi \phi(x,\xi) = \Phi(0)x$. It follows that the map $(x,\xi) \mapsto \big( \p_\xi \phi(x,\xi), \xi \big)$ is proper. By Hadamard's global inversion theorem, see e.g. \cite[Theorem 6.2.4]{KP02} , this map is a diffeomorphism. 
\end{proof}

\subsection{Construction of a $SU(2)$-gauge}\label{sec:2} To complete the proof of Theorem \ref{thm:1m}, it suffices to construct a suitable gauge transformation.

\begin{proof}[Proof of Theorem \ref{thm:1m}] 1. Let $\kappa, \lambda, \mu$ be the symplectomorphism and functions constructed in Theorem \ref{thm:1c}. We note that $q$ vanishes on $\Gamma_\di$ while $q \circ \kappa$ vanishes on a neighborhood of $0$ in $\R \times \{0\}^3$. It follows that $p_j \circ \kappa(z_1,0)$ vanishes on a neighborhood of $0$ in  $\R \times \{0\}^3$. Therefore, for $z_1$ sufficiently small, we have
\begin{equation}
\dfrac{p_j \circ \kappa(z)}{\lambda(z_1) (1+\mu(z_1)z_3)} =  \lr{ \alpha_j(z_1), \tz} + \OO(2), \qquad \alpha_j(z_1) = \dfrac{\nabla \big( p_j \circ \kappa \big)(z_1,0)}{\lambda(z_1)}.
\end{equation}
In particular,
\begin{equation}\label{eq-1r}
\dfrac{\di \circ \kappa(z)}{\lambda(z_1) (1+\mu(z_1)z_3)} = \sum_{j=1}^3 \lr{\alpha_j(z_1),\tz} \sigma_j + \OO(2).
\end{equation}

2. We take the determinant on both sides of \eqref{eq-1r} and obtain, thanks to \eqref{eq-2g}:
\begin{equation}
-\det\left[\dfrac{\di \circ \kappa(z)}{\lambda(z_1) (1+\mu(z_1)z_3)}\right] = |\tz|^2 + \OO(4) = -\det\left[\sum_{j=1}^3 \lr{\alpha_j(z_1),\tz} \sigma_j \right]+ \OO(3).
\end{equation}
Identifying terms that are quadratic in $\tz$, we deduce that
\begin{equation}
|\tz|^2 = -\det\left[\sum_{j=1}^3 \lr{\alpha_j(z_1),\tz} \sigma_j \right] = \sum_{j=1}^3 \lr{\alpha_j(z_1),\tz}^2.
\end{equation}
It follows from Lemma \ref{lem:1b} that there exists a smooth map $U$ defined on a neighborhood of $0$ to $SU(2)$ such that
\begin{equation}\label{eq-1y}
U(z_1)^* \left( \sum_{j=1}^3 \lr{\alpha_j(z_1),\tz} \sigma_j\right) U(z_1) = z_2 \sigma_1 + z_3 \sigma_2 + z_4 \sigma_1 \de \di_0(z).
\end{equation}
Plugging \eqref{eq-1y} into \eqref{eq-1r} yields 
\begin{equation}\label{eq-5w}
U(z_1)^* \di \circ \kappa(z) U(z_1) = \lambda(z_1) \big(1+\mu(z_1)z_3\big) \big( \di_0(z) + \OO(2) \big).
\end{equation}

3. Thanks to \eqref{eq-5w}, to prove the theorem it suffices to work with a symbol $\di$ such that
\begin{equation}
q(z) = q_0(z) + \OO(4), \qquad \Tr \di(z) = 0, \qquad \di(z) = \di_0(z) + \OO(2),
\end{equation}
and construct $U : \R^4 \rightarrow SU(2)$ such that $U(z)^* \di(z) U(z) = \di_0(z) + \OO(3)$. 

Below we will look for $U(z) = e^{iH(z)}$, where $H = \sum_{k=1}^3 h_{k+1} \sigma_k$, $h_2, h_3, h_4 \in \PP_1$. In particular, $U(z) = \Id + i H(z) + \OO(2)$, hence $U(z)^* \di(z) U(z) = \di_0(z) + \OO(3)$ is equivalent to
\begin{equation}\label{eq-5q}
\di(z) \big( \Id + i H(z) \big) = H(z) \di_0(z). 
\end{equation} 
Writing $\di(z) = \di_0(z) + R(z) + \OO(3)$ where $R = 2\sum_{j=1}^3 r_{j+1} \sigma_j$, $r_j \in \PP_2$. 
Then \eqref{eq-5q} reduces to
\begin{equation}
R(z) = i [H(z),\di_0(z)]. 
\end{equation}
Using the decomposition of $R$ and $H$ in terms of Pauli matrices, we end up with 
\begin{equation}
2\sum_{j=1}^3 r_j \sigma_j = \sum_{k, \ell = 1}^3 h_{k+1} z_{\ell+1} \cdot i [\sigma_k,\sigma_\ell] = \sum_{j, k, \ell = 1}^3 h_{k+1} z_{\ell+1} \cdot 2 \epsi_{jk\ell} \sigma_j.
\end{equation}
This reduces to a $3 \times 3$ system:
\begin{equation}\label{eq-5n}
\systeme{ z_3 h_4 - z_4 h_3 = r_2 \\ z_4 h_2 - z_2 h_4 = r_3 \\ z_2 h_3 - z_3 h_2 = r_4 }.
\end{equation}
This is a cross product-type system: it takes the form $z \times h = r$ with $z = [z_1,z_2,z_3]^\top$, $h = [h_2,h_3,h_4]^\top$ (the unknown) and $r = [r_2,r_3,r_4]^\top$ (the data). In particular, it is degenerate: it has a solution if and only if $z \cdot r = 0$. We verify that $r$ satisfies this condition: from $q(z) = q_0(z) + \OO(4)$, we have
\begin{equation}
(z_2+r_2)^2 + (z_3+r_3)^2 +(z_4+r_4)^2 = z_2^2 + z_3^2 + z_4^2 + \OO(4).
\end{equation}
Identifying the homogeneous terms of degree $3$ on both sides, we deduce that $z \cdot r = 0$, i.e.
\begin{equation}\label{eq-5p}
z_2 r_2 + z_3 r_3 + z_4 r_4 = 0.
\end{equation}

Thanks to the relation $z \cdot r = 0$, the last equation of \eqref{eq-5n} automatically holds when the first two do. Therefore, the system \eqref{eq-5n} reduces from $3$ to $2$ equations:
\begin{equation}\label{eq-5o}
\systeme{ z_3 h_4 - z_4 h_3 = r_2 \\ z_4 h_2 - z_2 h_4 = r_3}.
\end{equation}

4. We now focus on solving \eqref{eq-5o}. Thanks to the constraint \eqref{eq-5o}, we observe that $r_2$ has no terms in $z_2^2$. In particular, we can write it as $r_2 = z_3 h_4-z_4 h_3$ for some $h_3, h_4 \in \PP_1$. To solve \eqref{eq-5o}, hence \eqref{eq-5n}, it remains to find $h_2$.  

Again, thanks to  \eqref{eq-5o}, $r_3$ has no terms in $z_3^2$ and we can write it as $r_3 = z_4 \ell_2 - z_2 \ell_4$ with $\ell_2, \ell_4 \in \PP_1$. Plugging this and $r_2 = z_3 h_4-z_4 h_3$ in \eqref{eq-5o}, we obtain
\begin{equations}
0 = z_2 (z_3 h_4-z_4 h_3) + z_3 (z_4 \ell_2 - z_2 \ell_4) + z_4 r_4.
\\
\Rightarrow \qquad 0 = z_2 z_3 (h_4-\ell_4) + z_4 (r_4 + z_3 \ell_2 -z_2 h_3) = 0.
\end{equations}
In particular, $h_4-\ell_4 = \alpha z_4$ for some (smooth) function $\alpha$ of $z_1$. We then have
\begin{equation}
r_3 = z_4 \ell_2 - z_2 \ell_4 = z_4 \ell_2 - z_2 (h_4-\alpha z_4) = z_4 (\ell_2-\alpha z_2) - z_2 h_4. 
\end{equation}
It suffices then to set $h_2 = \ell_2 - \alpha z_2$ to solve \eqref{eq-5o}. This in turns produces $H$, hence a map $U$ defined on a neighborhood $\Omega$ of $0$, with values in $SU(2)$, such that \eqref{eq-2h} holds. 

5. It remains to extend the domain $\Omega$ of $U$ to $\R^4$. By considering the map $U(0)^* \cdot U$, we may assume that $U(0) = 0$. The exponential map from the Lie algebra $su(2)$ ($i$ times traceless $2 \times 2$ Hermitian matrices) to the Lie group $SU(2)$ is a diffeomorphism on a neighborhood of $0$ to a neighborhood of $\Id$. Therefore, after potentially shrinking $\Omega$, there exists a smooth map $A : \Omega \rightarrow su(2)$ such that $U(x,\xi) = e^{A(x,\xi)}$. It suffices then to extend $A$ to an arbitrary smooth function $\R^4 \rightarrow su(2)$, and $U$ accordingly by taking the exponential map. \end{proof}

\subsection{Expressions of $V_\di$ in terms of $\lambda_\di$ and $\kappa$:}
Having proved Theorem \ref{thm:1m}, we now compute the vector field $V_\di$ defined in \eqref{eq-10v}:

\begin{lemma}\label{lem:1m} Let $\kappa$ and $\lambda$ such that \eqref{eq-2h} holds. Then along $\Gamma_\di$, we have the relations:
\begin{equation}\label{eq-10n}
\lambda_\di \circ \kappa = \lambda, \qquad
V_\di = d\kappa \big( \lambda(x_1) \p_{x_1} \big).
\end{equation}
\end{lemma}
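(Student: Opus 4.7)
The plan is to exploit the transformation rules of $M_\di$, $\lambda_\di$, and $V_\di$ under the gauge-symplectic equivalence of Theorem \ref{thm:1m}, reduce the statement to a direct calculation for the normal form $\dii$, and then pull the result back through $\kappa$.

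First I would establish the following ``pointwise'' transformation rules valid \emph{along $\Gamma_\di$} (where $\di$ itself vanishes): if $\tdi = U \cdot \di \circ \kappa \cdot U^*$ for a smooth $U \in SU(2)$ and a symplectomorphism $\kappa$, then on $\Gamma_\tdi = \kappa^{-1}(\Gamma_\di)$,
\begin{equation}
M_\tdi = U \cdot (M_\di \circ \kappa) \cdot U^*, \qquad H_\tdi = U \cdot (d\kappa^{-1} H_\di \circ \kappa) \cdot U^*,
\end{equation}
where the latter identity is understood coefficient by coefficient in a basis of vector fields. The key observation is that every ``error'' term in the product rule (e.g.\ $(\p U)\cdot \di \cdot U^*$) carries a factor of $\di$ or of a derivative of $U\di U^*$ that drops out because $\di$ vanishes on $\Gamma_\di$; Poisson brackets are symplectic invariants, which accounts for the $\kappa$-composition in the first identity. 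Taking traces then yields the scalar identities
\begin{equation}
\lambda_\tdi^2 = \lambda_\di^2 \circ \kappa \qquad \text{and} \qquad V_\tdi = d\kappa^{-1}\big(V_\di \circ \kappa\big) \qquad \text{on } \Gamma_\tdi,
\end{equation}
because $\Tr\big(U M_\di \circ \kappa \, U^* \cdot U \, d\kappa^{-1} H_\di \circ \kappa \, U^*\big) = \Tr(M_\di \circ \kappa \cdot d\kappa^{-1} H_\di \circ \kappa)$ by cyclicity. The remainder $\OO(3)$ in \eqref{eq-2h} is harmless: it vanishes to order $3$ along $\R \times \{0\}^3$, so its contribution to $M$ (one derivative) and to $H$ (one derivative) still vanishes pointwise on $\R \times \{0\}^3$.

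Next I would apply these rules with $\tdi = \dii$ and compute $V_\dii$ on $\R\times\{0\}^3$ directly from \eqref{eq-10e}. Writing $\dii = (\lambda(x_1)+\mu(x_1)\xi_1)(\xi_1 \sigma_3 + x_2 \sigma_1 + \xi_2 \sigma_2)$, I would evaluate the partial derivatives of $\dii$ at a point $(x_1,0,0,0)$: only $\p_{x_2}\dii = \lambda(x_1)\sigma_1$, $\p_{\xi_1}\dii = \lambda(x_1)\sigma_3$, $\p_{\xi_2}\dii = \lambda(x_1)\sigma_2$ survive (the derivative $\p_{x_1}\dii$ vanishes because the scalar prefactor multiplies something that is zero on $\Gamma_\dii$). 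Combining with $M_\dii = -\lambda(x_1)^2 \sigma_3$ from \eqref{eq-0p} and using $\sigma_3^2 = \Id$, $\Tr(\sigma_3\sigma_j)=0$ for $j \in \{1,2\}$, one obtains
\begin{equation}
\Tr\big(M_\dii \cdot H_\dii\big)\big|_{\R\times\{0\}^3} = -2\lambda(x_1)^3 \, \p_{x_1},
\qquad V_\dii = \lambda(x_1) \p_{x_1}.
\end{equation}

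Finally I would combine the two steps. The transformation rules give $\lambda_\di \circ \kappa = \lambda_\dii = \lambda$ on $\Gamma_\dii$ (recovering \eqref{eq-9r}), and
\begin{equation}
V_\di \circ \kappa = d\kappa \, V_\dii = d\kappa\big(\lambda(x_1)\p_{x_1}\big) \quad \text{on } \Gamma_\dii,
\end{equation}
which is precisely the second identity of \eqref{eq-10n} after transporting from $\Gamma_\dii$ back to $\Gamma_\di$ by $\kappa$. The main (only) subtle point is verifying that the $\OO(3)$ error in Theorem \ref{thm:1m} does not corrupt the pointwise values of $M$ and $H$ along $\Gamma$; this comes down to the observation that a function vanishing to order $3$ on a submanifold contributes no first-order data at points of that submanifold, which is the reason I chose to formulate the transformation rules pointwise on $\Gamma_\di$ rather than symbolically on a neighborhood.
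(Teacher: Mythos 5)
Your argument is correct and follows essentially the same route as the paper: its proof likewise invokes the transformation rules $\lambda_\di \circ \kappa = \lambda_\dii$, $M_\dii = U \cdot M_\di \circ \kappa \cdot U^*$ and $d\kappa(H_\dii) = H_\di \circ \kappa$ along $\Gamma_\di$ (with the $\OO(3)$ remainder harmless for exactly the reason you give), takes the trace so that the $U$-conjugation drops out and $d\kappa(V_\dii) = V_\di$, and then concludes from $\lambda_\dii = \lambda$ and $V_\dii = \lambda(x_1)\p_1$ on $\R \times \{0\}^3$. Your explicit computation of $V_\dii$ at points $(x_1,0)$ simply fills in a step the paper states without detail.
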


\begin{proof} 1. Thanks to \eqref{eq-2h} and \eqref{eq-12i}, we have, along $\Gamma_\di$, the relations
\begin{equation}
\lambda_\di \circ \kappa = \lambda_\dii, \qquad M_\dii = U^* \cdot M_\di \circ \kappa \cdot U, \qquad d\kappa(H_\dii) = U^* \cdot H_\di \cdot U.
\end{equation}
We deduce that along $\Gamma_\di$,
\begin{equation}
d\kappa(V_\dii) = -d\kappa \left(\dfrac{\Tr (M_\dii H_\dii)}{2\lambda^2_\dii}\right) = - \dfrac{\Tr (M_\di H_\di)}{2\lambda^2_\di} = V_\di. 
\end{equation}
The conclusion follows from $\lambda_\dii = \lambda$ and $V_\dii = \lambda(x_1)\p_1$ on $\R \times \{0\}^3$. 
\end{proof}

\section{Quantization}\label{sec-2.4}  In this section we prove Theorem \ref{thm:1d}. As a preliminary, we review some standard facts about pseudodifferential operators. Given a (potentially matrix-valued) symbol $A$, the Weyl quantization of $A$ is the operator
\begin{equation}
A^w u(x) = \int_{\R^4} e^{i\xi(x-y)/h} A\left( \dfrac{x+y}{2},\xi \right) u(y) \cdot \dfrac{dy d\xi}{(2\pi h)^2} ;
\end{equation}
we refer to $A^w$ as a pseudodifferential operator. We then have the following facts:
\begin{itemize}
\item[(a)] The adjoint of $A^w$ is pseudodifferential, with symbol $A^* + O(h)$; moreover $A^w$ is selfadjoint if and only if $A$ is Hermitian-valued.
\item[(b)] The composition of two pseudodifferential operators $A^w$ and $B^w$ is also pseudodifferential. Its symbol, commonly denoted $A \# B$, satisfies
\begin{equation}
A \# B = A B + \dfrac{h}{2i} \{A,B\} + O(h^2).
\end{equation}
\item[(c)] If $\det A$ is bounded below, then for $h$ sufficiently small the operator $A^w$ is invertible. It inverse is a pseudodifferential operator with symbol $A^{-1} + O(h)$.
\item[(d)] If $C= \Id+ O(h)$, then there exists a symbol $B = \Id + O(h)$ such that $(B^w)^* B^w = C^w$.
\item[(e)] If $U$ is such that $U^* U = \Id$, then there exists a symbol $\UU$ such that $\UU = U+O(h)$ and $(\UU^w)^* \UU = \Id$.
\end{itemize}

We now review a few facts about Fourier integral operators. Given a function $\phi$ satisfying the conclusions of Lemma \ref{lem:1q}, we define an operator $F$ by
\begin{equation}\label{eq-11s}
Fu(x) = \int_{\R^2} e^{\frac{i}{h}(\phi(x,\xi) - y\xi)} u(y) \cdot \big| \det \p_{x\xi} \phi(x,\xi) \big|^{1/2} \dfrac{dy d\xi}{(2\pi h)^2}.
\end{equation}
We then have the three following facts:
\begin{itemize}
\item[(f)] The operator $F$ is nearly unitary: $F^* F$ is a pseudodifferential operator with symbol $C = \Id + O(h)$;
\item[(g)] Egorov's theorem: if $A$ is a symbol bounded together with all its derivatives, then $F^* A^w F$  is a pseudodifferential operator and its symbol is $A \circ \kappa + O(h)$.
\item[(h)] Composing $F$ with a pseudodifferential operator produces an operator of the form \eqref{eq-10c}.
\end{itemize}

Standard references for these facts are \cite[\S4, \S11]{Z12} and  \cite[\S10]{GS94} (where $h=1$, but the proofs apply here without change).

\begin{proof}[Proof of Theorem \ref{thm:1d}] 1. Let $\kappa, \lambda, \mu, U$ be the quantities produced by Theorem \ref{thm:1m}. Let $\phi$ constructed according to Lemma \ref{lem:1q} and let $F$ be the Fourier integral operator \eqref{eq-11s}. By (g), $F^* \Di F$ is a pseudodifferential operator with symbol $\di \circ \kappa + O(h)$.

Moreover, by (f), $F^*F$ is a pseudodifferential operator with symbol $\Id+O(h)$. By (d), there exists $B = \Id + O(h)$ such that $F^* F = (B^w)^* B^w$. The operator $B$ is invertible, and its inverse has symbol $\Id+O(h)$ by (c). We deduce that $F_2 = F (B^w)^{-1}$ is unitary. Therefore, $F_2^* \Di F_2$ is a selfadjoint pseudodifferential operator, with symbol $\di \circ \kappa + O(h)$. 

Because of (e), there exists a symbol $\UU = U+O(h)$ such that $\UU^w$ is unitary. The operator $F_3 = \UU^w F_2$ is then unitary; hence $\FF_3^* \Di \FF_3$ is a Hermitian pseudodifferential operator with symbol
\begin{equation}\label{eq-11t}
U \cdot \di \circ \kappa \cdot U^* + O(h).  
\end{equation}
Expanding transversely to $(x_1, 0)$ the $O(h)$-remainder above, we may write \eqref{eq-11t} as 
\begin{equation}\label{eq-11v}
\big(U \cdot \di \circ \kappa \cdot U^*\big)(x,\xi) + h R(x_1) + \OO(1) O(h) + \OO(h^2). 
\end{equation}

3. We note that $R(x_1)$ is Hermitian-valued because $\FF_3$ is unitary and $\Di$, $U \cdot \di \circ \kappa \cdot U^*$ are Hermitian (as operators and symbols, respectively), see (a). Therefore, we may decompose $R$ as
\begin{equation}\label{eq-11w}
R(x_1) = s(x_1) + \lambda(x_1)\tR(x_1), \qquad \tR(x_1) = \sum_{j=1}^3 r_j(x_1) \sigma_j.
\end{equation}
Let $\dii_0 = x_2 \sigma_1 + \xi_2 \sigma_2 + \xi_1 \sigma_3$. We claim that there exists a scalar, never-vanishing symbol $c$ such that on a neighborhood of $0$,
\begin{equation}\label{eq-11u}
i\{ \dii_0,c \} + \tR c = \OO(1).
\end{equation}
Indeed, with the decomposition $\tR = \sum_{j=1}^3 r_j \sigma_j$, \eqref{eq-11u} reduces to three equations:
\begin{equation}\label{eq-11z}
\systeme{ -i\p_{\xi_2} c + r_1 c = \OO(1) \\ i \p_{x_2} c + r_2 c = \OO(1) \\  i \p_{x_1} c + r_3 c = \OO(1)}.
\end{equation}
We then verify that the symbol $c$ defined on a neighborhood of $0$ by 
\begin{equation}\label{eq-11y}
c(x,\xi) = \exp\left( i \int_0^{x_1} r_3 \right) \big( 1- i \xi_2 r_1(x_1) + i x_2 r_2(x_1) \big),
\end{equation}
and extended to a never-vanishing symbol on $\R^4$, satisfies \eqref{eq-11z} hence \eqref{eq-11u}.

4. Below we write $A \equiv B$ to denote two symbols $A$ and $B$ such that 
\begin{equation}
A-B = \OO(3)+ O(h^2) + \OO(1) O(h)
\end{equation}
on a neighborhood of $0$. We define the operator $\FF = c^w \FF_3$. By (h), this is an operator of the form \eqref{eq-10c}. By Step 2, $\FF \Di \FF^{-1}$ is a pseudodifferential operator. To compute its symbol (in a neighborhood of $0$), we use \eqref{eq-11v}; Theorem \ref{thm:1m}; the composition rule (b) and inversion rule (c); and \eqref{eq-11w}. This gives:
\begin{equations}\label{eq-11x}
c \# \big( U \cdot \di \circ \kappa \cdot U^* + h   R \big) \# \big( c^{-1} + O(h) \big) = c \# \big( \dii  + h R + \OO(3) \big) \# \big( c^{-1} + O(h) \big)
\\
\equiv \dii + \dfrac{h c }{2i} \{ \dii, c^{-1} \} +                                          \dfrac{ h c^{-1}}{2i} \{ c,\dii \} + h  R
\equiv
 \dii + h s + h c^{-1} \left(i \{ \dii,c \} + \lambda \tR c \right).
\end{equations}

From \eqref{eq-11y}, we note that $\p_{\xi_1} c = 0$. Since $\dii = (\lambda+\mu \xi_1) \dii_0$, we obtain
\begin{equation}
\{ \dii,c \} = \lambda \{ \dii_0,c \} + \mu \p_{x_1} c \cdot \dii_0 = \lambda \{ \dii_0,c \} + \OO(1).
\end{equation}
Plugging this in \eqref{eq-11x}, and using \eqref{eq-11u} for $c$, we conclude that the symbol of $\FF \Di \FF^{-1}$ is:
\begin{equation}
 \dii + h s + h c^{-1} \lambda \left(i \{ \dii_0,c \} +  \tR c \right) \equiv \dii + h  s.
\end{equation}
This completes the proof of Theorem \ref{thm:1d}.\end{proof}

\begin{remark}\label{rem:5} We comment briefly on the case $\pi(\Gamma_\di)$ and $\Gamma_\di$ homeomorphic to $\R$. In this case, the proof of Theorems \ref{thm:1m} and \ref{thm:1c} already produces $\kappa$ on neighborhood of arbitrary bounded subsets of $\Gamma_\di$; this uses essentially that vector bundles over $\Gamma_\di$ are trivial as $\Gamma_\di$ is simply connected \cite[\S1.3]{M01}. We can moreover ensure the existence of a generating function $\phi$ such that the conclusions of Lemma \ref{lem:1q} hold on neighborhoods of $(-L,L) \times \{0\}^3$, for arbitrary $L > 0$ (instead of simply a neighborhood of $0$). Therefore, the equation \eqref{eq-2u} holds with a remainder $R$ whose symbol is $\OO(3) + \OO(1) O(h) + \OO(h^2)$ on neighborhoods of $(-L,L) \times \{0\}^3$. 
\end{remark} 

\renewcommand{\thechapter}{C}

\chapter{Analysis of the model operator}\label{chap:3}

\section{Setting and main result} 

In this chapter we assume given two smooth functions $\lambda : \R \rightarrow \R^+$ and $\mu : \R \rightarrow \R$ and we assume that $\Dii$ is the operator that emerges in \S\ref{sec-2.4}:
\begin{align}
\Dii  = & \left( h \lambda(x_1) D_1 + \dfrac{h \lambda'(x_1)}{2i} + h^2 D_1 \mu(x_1) D_1  - h^2 \dfrac{\mu''(x_1)}{4}  \right) \sigma_3 
\\
\label{eq-7w}
& + \left( \lambda(x_1) + h \mu(x_1) D_1 + \dfrac{h \mu'(x_1)}{2i}\right) \matrice{0 & x_2 -h \p_2 \\ x_2 + h \p_2 & 0} 
 + h s(x_1).
\end{align}
We perform a semiclassical analysis for initial data originally concentrated (in phase space) at the point $0 \in \R \times \{0\}^3$.  Specifically, we show that solutions to
\begin{equation}\label{eq-7r}
\left( h D_t + \Dii \right) F = 0, \qquad F(0,x) = \dfrac{1}{\sqrt{h}} \cdot A\left( \dfrac{x}{\sqrt{h}} \right), \qquad A \in \SSS(\R^2)
\end{equation}
split, up to a small remainder in $L^2$, as:
\begin{itemize}
\item a semiclassical wavepacket propagating rightwards along $\R \times \{0\}^3$, at speed $\lambda(x_1)$;
\item an orthogonal wave that disperses very rapidly. 
\end{itemize}
In particular, coherent propagation only happens rightwards (here $\lambda > 0$; $\lambda< 0$ would yield leftward propagation). Forcing propagation in the opposite direction -- for instance by adequately preparing the initial data -- immediately destroys the coherent state structure.

In the framework of \S\ref{chap:2}, the operator $\Dii$ is a semiclassical operator, whose principal symbol 
\begin{equation}
\dii(x,\xi) = \big( \lambda(x_1) + \mu(x_1) \xi_1 \big) \matrice{\xi_1 & x_2-i\xi_2 \\ x_2 + i \xi_2 & -\xi_1}
\end{equation} 
satisfies $\bm{(\AAA_1})$-$\bm{(\AAA_2})$ from \S\ref{sec-2.1}. It forms a family of $2 \times 2$ Hermitian traceless matrices; its eigenvalues are repeated along $\dii^{-1}(0)$, which contains $\R \times \{0\}^3$. Moreover, along $\R \times \{0\}^3$ the vector field $V_\dii$ and scalar $\lambda_\dii$ from \S\ref{sec-2.1} are given by:
\begin{equation}
V_\dii(x_1,0) = \lambda(x_1) \p_1, \qquad
\lambda_\dii(x_1,0) = \lambda(x_1).
\end{equation}

To state quantitatively our main result, we first let $x_t^+ \in \R$ 
be the solution of the ODE
\begin{equation}\label{eq-7ua}
\dot{x^+_t} = \lambda(x_t^+), \qquad x_0^+ = 0.
\end{equation}
In particular, $(x_t^+,0)$ is the integral curve of $V_\dii$ starting at $0$. 
We then define
\begin{equation}\label{eq-7ub}
\rho_t \de \dfrac{\lambda(x_t)}{\lambda(0)}; \qquad \nu_t \de \int_0^t \dfrac{\mu(y_\tau)}{\rho_\tau^2} d\tau; \qquad S_t \de \int_0^t s(x_\tau^+) d\tau.
\end{equation}

We finally set: 
\begin{equation}\label{eq-7v}
a_0(x_1) \de \dfrac{1}{\sqrt{\pi}} \int_\R A(x_1,x_2) \cdot e^{-\frac{x_2^2}{2}} \matrice{1\\0} dx_2, \qquad 
a_t \de \dfrac{e^{-i\pi/4}}{|4\pi\nu_t|^{1/2}} \cdot e^{i \frac{x_1^2}{4 \nu_t}} \star a_0,
\end{equation}
where $\star$ denotes convolution with respect to the $x_1$-variable; and the Gaussian in the definition of $a_t$ is understood as a Dirac mass at $0$ when $\nu_t = 0$ (in  particular at $t=0$). 

\begin{theorem}\label{thm:1i} There exists $T > 0$ such that the solution $F$ of \eqref{eq-7r} satisfies uniformly, for $(t,x) \in (0,T] \times \R^2$ and $h \in (0,1]$:
\begin{equations}\label{eq-7s}
F(t,x) = \dfrac{e^{iS_t}}{\sqrt{\rho_th}} \cdot a_t \left( \dfrac{x_1-x_t^+}{ \rho_t\sqrt{h}} \right) e^{-\frac{x_2^2}{2h}} \matrice{1 \\ 0}  + O_{L^\infty}\big(h^{-1/4}\big) + O_{L^2}\big(h^{1/4}\big).
\end{equations}
\end{theorem}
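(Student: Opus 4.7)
The structure of $\Dii$ invites a mode decomposition in $x_2$ via the $h$-scaled harmonic oscillator, whose creation/annihilation operators are $x_2\mp h\p_2$. Setting $e_n(x_2)=h^{-1/4}\phi_n(x_2/\sqrt h)$ with $\phi_n$ the $L^2$-normalized Hermite functions, one has $(x_2+h\p_2)e_n=\sqrt{2hn}\,e_{n-1}$ and $(x_2-h\p_2)e_n=\sqrt{2h(n+1)}\,e_{n+1}$. The first step is to verify that the ansatz
\begin{equation*}
F(t,x)=\sum_{n\geq 0}\matrice{f_n(t,x_1)\,e_n(x_2) \\ g_n(t,x_1)\,e_{n-1}(x_2)}
\end{equation*}
is preserved by $hD_t+\Dii$ (with the convention $e_{-1}\equiv 0$), reducing \eqref{eq-7r} to a scalar equation on $\R_t\times\R_{x_1}$ for $f_0$ and, for each $n\geq 1$, a $2\times 2$ system for $(f_n,g_n)$ whose principal symbol is $h\lambda\xi_1\sigma_3+\sqrt{2nh}\,\lambda\,\sigma_1$. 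Since $A\in\SSS(\R^2)$, its Hermite coefficients decay rapidly in $n$, and the $n=0$ coefficient in the first component matches the profile $a_0$ defined in \eqref{eq-7v}.

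The second step analyzes the $n=0$ scalar equation as transport plus a semiclassical Schr\"odinger correction. Substituting the wavepacket ansatz $f_0=(\rho_th)^{-1/2}e^{iS_t}a_t\bigl((x_1-x_t^+)/(\rho_t\sqrt h)\bigr)$ and matching orders in $\sqrt h$, the ODEs $\dot x_t^+=\lambda(x_t^+)$, $\dot\rho_t/\rho_t=\lambda'(x_t^+)$ (a consequence of $\rho_t=\lambda(x_t^+)/\lambda(0)$), and $\dot S_t=s(x_t^+)$ absorb respectively the transport $h\lambda D_1$, the subprincipal $\tfrac{h\lambda'}{2i}$ and the scalar potential $hs$. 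The quadratic operator $h^2D_1\mu D_1$ delivers, at the wavepacket momentum scale $\p_1\sim h^{-1/2}$, an $O(1)$ Schr\"odinger term, so the residual equation on the profile reads
\begin{equation*}
i\p_t a_t=-\tfrac{\mu(x_t^+)}{\rho_t^2}\,\p_y^2 a_t+O(\sqrt h).
\end{equation*}
Its leading-order solution, expressed in the effective time $\nu_t=\int_0^t\mu(x_\tau^+)/\rho_\tau^2\,d\tau$, is exactly the Gaussian convolution formula of \eqref{eq-7v}. A standard $L^2$ energy estimate on the remainder then controls the approximation error by $O(h^{1/4})$ on any interval where $t\mapsto(x_t^+,\rho_t,\nu_t)$ remains smooth, which sets the threshold $T$.

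For $n\geq 1$, the principal symbol has eigenvalues $\pm E_n=\pm\lambda\sqrt{h^2\xi_1^2+2nh}$ separated by a gap $\gtrsim\sqrt h$. I would apply Taylor's semiclassical diagonalization \cite{T75} with the rescaled parameter $\tilde h=\sqrt h$ to decouple each system modulo $O(\sqrt h)$ into two scalar equations with Hamiltonian symbols $\pm E_n/h$. At the wavepacket momentum scale $\xi_1\sim h^{-1/2}$, the effective dispersion coefficient $\p_{\xi_1}^2(E_n/h)\sim\sqrt h$ is non-degenerate, so an initial wavepacket of width $\sqrt h$ and $L^\infty$-amplitude $h^{-1/2}$ spreads, in time $t\gtrsim\sqrt h$, to a profile of width $\sim t$ and amplitude $\sim h^{-1/4}t^{-1/2}$. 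Summing over $n\geq 1$ converges by the rapid Schwartz decay of the Hermite coefficients of $A$ and yields the uniform bound $O_{L^\infty}(h^{-1/4})$; the $L^2$-mass of this piece stays $O(1)$, consistent with unitarity of the flow, and is not tracked by the main term of \eqref{eq-7s}.

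The main obstacle will be this third step: making the dispersion estimate quantitative and uniform in $n$ under summation. The Hamiltonian flows of $\pm E_n/h$ depend on $n$, their curvatures $\p_{\xi_1}^2E_n$ degenerate as $\xi_1\to 0$, and caustics of each flow may appear at $n$-dependent times --- precisely the ``reversibility and spreading'' issue mentioned in the introduction. Producing a WKB plus stationary-phase argument that bundles all $n\geq 1$ sectors into a single $L^\infty$ bound that is robust under summation is the quantitative heart of the theorem, and it dictates the restriction to $t\in(0,T]$.
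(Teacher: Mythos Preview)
Your outline is structurally correct and matches the paper in the first two steps: the Hermite mode decomposition and the $n=0$ transport/Schr\"odinger analysis are exactly what the paper does (your $n=0$ error is actually $O_{L^2}(h^{1/2})$, better than you claim; the $h^{1/4}$ in the statement comes from the $n\geq 1$ sector). The third step, however, differs in a way that matters.

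The paper does not use Taylor diagonalization with a fixed parameter $\tilde h=\sqrt h$. Instead it rescales each block with the \emph{$n$-dependent} parameter $h_n=\sqrt{h/(2n)}$, writing the equation as $(h_nD_t+\Dii_{n,h_n})f_n=0$ whose $h_n$-principal symbol is $\lambda(x_1)(\xi_1\sigma_3+\sigma_1)$ --- \emph{independent of $n$}. The eikonal equation is then the single equation $\p_t\varphi+\lambda(x_1)\sqrt{1+(\p_{x_1}\varphi)^2}=0$ for every $n$, so the phase, its bicharacteristic flow, the stationary/non-stationary bounds on $\p_\xi\varphi$, and the time $T$ before caustics are all uniform in $n$ by construction. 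All $n$-dependence is pushed into subprincipal terms of the amplitude, where it shows up as polynomial growth (tracked by the paper's ``admissible of order $m$'' notion) and is absorbed by the Schwartz decay of the Hermite coefficients.

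This rescaling is exactly the resolution of the obstacle you flag at the end. With your choice $\tilde h=\sqrt h$, the Hamiltonians $\pm\lambda\sqrt{\xi_1^2+2n}$ genuinely depend on $n$, their curvatures and caustic times vary with $n$, and producing dispersion estimates uniform under summation would require substantial extra work that the paper sidesteps entirely. The paper also performs a direct matrix WKB on the $2\times2$ system (amplitude $b_0=\alpha u$ with $u$ the null eigenvector of $\p_t\varphi+\lambda(\sigma_1+\sigma_3\p_x\varphi)$) rather than diagonalizing first, and obtains the second eigenvalue branch for free via the time-reversal antisymmetry $\TT\psi=i\sigma_2\overline{\psi}$ rather than a second WKB construction.
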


The expansion \eqref{eq-7s} means that the dominant part of $F$ is a wavepacket propagating along $\R \times \{0\}$ at physical speed $\lambda(x_1)$ plus a term in $L^\infty$. While the latter generally has a significant $L^2$ mass, its amplitude is much smaller than that of the wavepacket. Visually, it simply collapses and hence loses its initial coherence. 

\begin{remark} The coefficient $\rho_t$ is uniformly bounded from above and below; hence the dilation by $\rho_t$ in \eqref{eq-7s} is uniformly controlled. On the other hand, $\nu_t$ may grow linearly with $t$, see \eqref{eq-7ub}. The implies that the envelop \eqref{eq-7v} may decay to $0$ as $t \rightarrow \infty$ (for instance, for magnetic fields \cite[\S6]{BBD22} or viscous models \cite{B19,SD+19}). This is a much slower dispersion than that associated with the $O_{L^\infty}\big(h^{-1/4}\big)$ remainder, which takes place as soon as time is turned on.  
\end{remark}

There are two cases where we can remove the limitation on $t$. The first scenario does not require additional assumptions on $\Dii$, but it demands a special initial data: 

\begin{corollary}\label{cor:1b} If $a(x) = a_0(x_1) e^{-x_2^2/2} [1,0]^\top$ then \eqref{eq-7s} holds for any fixed $T$, without the $O_{L^\infty}\big(t^{-1/2}\big)$-remainder.
\end{corollary}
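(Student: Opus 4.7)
The plan is to exploit that the hypothesized initial data lies in an invariant subspace of $\Dii$ on which the operator reduces to a scalar transport–Schr\"odinger equation on the line; this will circumvent the semiclassical obstructions that restrict the time range in Theorem \ref{thm:1i}.

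First I would observe that the Gaussian $\phi(x_2) = e^{-x_2^2/(2h)}$ is annihilated by $x_2 + h\p_2$. Consequently, for any scalar function $g$ the off-diagonal block in \eqref{eq-7w} maps $g(x_1)\phi(x_2)[1,0]^\top$ to zero, while the diagonal block preserves the ansatz form. Hence the subspace
\[
V \de \left\{ g(x_1)\phi(x_2) \matrice{1 \\ 0} : g \in L^2(\R) \right\}
\]
is invariant under $\Dii$, and on $V$ the operator reduces to the scalar operator
\[
L_h = h\lambda(x_1) D_1 + \frac{h\lambda'(x_1)}{2i} + h^2 D_1 \mu(x_1) D_1 - h^2 \frac{\mu''(x_1)}{4} + h s(x_1).
\]
The initial data of the corollary lies in $V$, so the solution $F$ stays in $V$ for all $t$, and writing $F(t,x) = f(t,x_1)\phi(x_2)[1,0]^\top$ reduces \eqref{eq-7r} to a scalar Cauchy problem $(hD_t + L_h)f = 0$ with $f(0, x_1) = h^{-1/2} a_0(x_1/\sqrt h)$.

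Next I would substitute the wavepacket ansatz
\[
f(t,x_1) = \frac{e^{iS_t}}{\sqrt{\rho_t h}}\, b\!\left(t, \frac{x_1 - x_t^+}{\rho_t\sqrt h}\right)
\]
and compute the reduced equation for $b$ in the rescaled variable $y = (x_1 - x_t^+)/(\rho_t\sqrt h)$. Using $\dot{x}_t^+ = \lambda(x_t^+)$ and $\dot\rho_t/\rho_t = \lambda'(x_t^+)$ (which follows from the definition of $\rho_t$), the transport terms cancel at leading order; the Jacobian $\rho_t^{-1/2}$ absorbs the subprincipal amplitude correction $\lambda'/(2i)$; and the phase $S_t$ cancels the scalar term $s(x_t^+)$. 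Taylor-expanding $\mu(x_1) = \mu(x_t^+) + O(\sqrt h\, y)$ in the semiclassical scaling turns $h D_1 \mu(x_1) D_1$ into $(\mu(x_t^+)/\rho_t^2)\p_y^2 + O(\sqrt h)$. The resulting equation for $b$ reads
\[
i\p_t b = -\frac{\mu(x_t^+)}{\rho_t^2}\p_y^2 b + \sqrt h\, R_h(t, y, \p_y)\, b, \qquad b(0, y) = a_0(y),
\]
where $R_h$ is a differential operator of order at most two in $\p_y$ with polynomial-in-$y$ coefficients, bounded uniformly in $h$. The leading Schr\"odinger equation has time-integrated dispersion exactly $\nu_t$ from \eqref{eq-7ub}, and its propagator is convolution with the Gaussian $(4\pi\nu_t)^{-1/2}e^{-i\pi/4}e^{iy^2/(4\nu_t)}$, sending $a_0$ to $a_t$ of \eqref{eq-7v}.

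Finally, a Duhamel identity combined with an $L^2$ energy estimate — using that $-i(\mu(x_t^+)/\rho_t^2)\p_y^2$ generates a unitary group on $L^2(\R)$ and that the Schwartz class is preserved by the free Schr\"odinger evolution so that $R_h a_t$ stays $L^2$-bounded — should give $\|b(t) - a_t\|_{L^2} = O(\sqrt h)$ uniformly on any fixed $[0, T]$. Undoing the semiclassical rescaling produces the $O_{L^2}(h^{1/4})$ remainder of \eqref{eq-7s}, and because the solution lives entirely in $V$ (a single Hermite mode tensored with a single direction in $\C^2$), no $O_{L^\infty}(h^{-1/4})$ remainder can arise. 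The main obstacle will be this uniform-in-$t$ energy estimate: in the proof of Theorem \ref{thm:1i}, higher Hermite modes $n \ge 1$ introduce a classical Hamiltonian flow whose phase-space spreading restricts the valid time range; here only the $n=0$ mode is excited, so the scalar transport–Schr\"odinger equation has smooth uniformly bounded coefficients and a Gronwall-type argument in $L^2$ delivers the estimate on arbitrary bounded intervals, which is precisely why \eqref{eq-7s} holds for any fixed $T$.
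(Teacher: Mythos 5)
Your argument is essentially the paper's: the corollary is precisely Theorem \ref{thm:1k}, which is proved there by the same steps — reduction to the $n=0$ Gaussian block, where $\Dii$ acts as the scalar operator $L$ of \S\ref{sec-2.2} (your invariant subspace $V$), a moving wavepacket ansatz, identification of the effective dilation--Schr\"odinger dynamics producing $\rho_t,\nu_t,S_t,a_t$ (the paper keeps a $\sqrt h$-scaled ansatz and solves the reduced equation \eqref{eq-6e} \emph{exactly} via the Fourier transform and the dilation semigroup $U_{r_t}$, whereas you absorb $\rho_t$ and $S_t$ into the ansatz and solve only to leading order), and a Duhamel estimate that holds on any compact time interval because no $n\geq 1$ modes, hence no eikonal/Hamiltonian-flow analysis, are involved. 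One wording fix: the Duhamel/energy step must be run against the exact unitary propagator of the (self-adjoint) reduced evolution, with residual $\sqrt h\,R_h a_t$ evaluated on the approximate solution — exactly as in the cited \cite[Lemma 3.5]{BB+21} — and not against the free Schr\"odinger group with a source involving the exact solution $b$, since $R_h$ is unbounded (polynomial coefficients, two derivatives) and $R_h b$ is not a priori controlled; your key observation that the free evolution preserves the Schwartz class so that $R_h a_t$ is $L^2$-bounded uniformly on $[0,T]$ is then exactly what closes the estimate.
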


In particular, Corollary \ref{cor:1b} predicts the existence of a wavepacket that propagates along $(x_t^+,0)$. The second scenario assumes that $\lambda_\di$ is constant:

\begin{corollary}\label{cor:2} If $\lambda$ is constant, then \eqref{eq-7s} holds for any fixed $t$.
\end{corollary}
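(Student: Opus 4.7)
The plan is to extend Theorem \ref{thm:1i} from the interval $(0,T]$ to any fixed time $t_\star > 0$ by iterating over short intervals of length $\tau \leq T$, exploiting that the constant-$\lambda$ hypothesis makes $T$ uniform under translation. Fix $N$ with $\tau := t_\star/N \leq T$ and apply Theorem \ref{thm:1i} on $[0,\tau]$ to obtain $F(\tau) = \Phi_\tau + O_{L^\infty}(h^{-1/4}) + O_{L^2}(h^{1/4})$, where $\Phi_\tau$ is a wavepacket concentrated at $(\lambda\tau, 0) \in \Gamma_\dii$ and pointing along $\C [1,0]^\top$. The wavepacket $\Phi_\tau$ has the correct form to restart the analysis: the change of coordinates $y_1 = x_1 - \lambda \tau$ together with removal of the phase $e^{iS_\tau}$ conjugates $\Dii$ to an operator of the same form \eqref{eq-7w} with $\lambda$ unchanged and $\mu, s$ replaced by their $\lambda\tau$-translates, whose $C^\infty_b$-norms (on which $T$ depends) are unaltered.

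Iterating $N$ times, the $L^2$ remainders sum to $O(N h^{1/4}) = O_{t_\star}(h^{1/4})$ since they are preserved by the unitary evolution $e^{-is\Dii/h}$. The coherent pieces compose correctly to reproduce the formula \eqref{eq-7s}: $\rho_t \equiv 1$ is trivially multiplicative, the phase $S_t$ defined in \eqref{eq-7ub} is additive along the flow, and the Schr\"odinger kernel $(4\pi\nu_t)^{-1/2} e^{-i\pi/4} e^{i x_1^2/(4\nu_t)}$ used in \eqref{eq-7v} to define $a_t$ forms a convolution semigroup in $x_1$ indexed by the additive parameter $\nu_t$. Hence the iterated wavepacket produced at time $t_\star$ matches \eqref{eq-7s} evaluated at $t = t_\star$.

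The main obstacle is the $O_{L^\infty}(h^{-1/4})$ remainder: its norm is not conserved by the unitary propagator, so iteration does not directly preserve it. To bypass this, I would return to the Hermite-mode decomposition in $x_2$ underlying the proof of Theorem \ref{thm:1i}. Writing $F = \sum_n f_n(x_1,t) h_n(x_2/\sqrt{h}) / h^{1/4}$, the equation $(hD_t + \Dii) F = 0$ decouples into an $n=0$ scalar transport equation along $\R \times \{0\}^3$ (whose solution, when $\lambda$ is constant, is globally explicit and yields $\Phi_t$ for all $t$) together with, for each $n \geq 1$, a $2 \times 2$ massive one-dimensional Dirac-type system whose principal part has the constant speed $\lambda$. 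The WKB-breakdown time in Theorem \ref{thm:1i} arises from variability of this speed; when $\lambda$ is constant the associated Hamiltonian flow is globally a translation, so each $n \geq 1$ subsystem admits global-in-time dispersive estimates giving $L^\infty$-bounds of order $h^{-1/4}$ on any fixed time interval. Combining the dispersive control of the $n \geq 1$ modes with the explicit $n = 0$ transport produces the full expansion \eqref{eq-7s} for all fixed $t$, completing the proof.
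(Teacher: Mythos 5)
Your operative argument (the final paragraph) is essentially the paper's own proof: when $\lambda$ is constant the bicharacteristic flow of $p(x,\xi)=\lambda\sqrt{1+\xi^2}$ is an explicit global translation, so the eikonal phase, the amplitudes, and the parametrices $\fE$ for the $n\geq 1$ Hermite blocks exist and satisfy the stationary/non-stationary phase estimates for all $t>0$, which together with the explicit $n=0$ transport of Theorem \ref{thm:1k} yields \eqref{eq-7s} for any fixed $t$. The iteration scheme of your first two paragraphs is a detour you rightly abandon --- the dispersive part carries $O(1)$ mass in $L^2$, so $F(\tau)$ is not a wavepacket and cannot be re-fed into Theorem \ref{thm:1i} --- and it can simply be deleted.
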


%We briefly comment on \eqref{eq-7s}. The leading order term is a concentrated wavepacket, of constant $L^2$-mass, whose center $(x_t^+,0)$ moves in phase-space along $\R \times \{0\}^3$ at speed $\lambda(x_t^+)$. The $L^\infty$-remainder is substantially smaller: typically $h^{-1/4}$ in comparison to the peak $h^{-1/2}$ of the wavepacket. It represents a strongly dispersive wave. 

%For initial data of the form considered in Corollary \ref{cor:1b}, the RHS of \eqref{eq-7s} has essentially the same $L^2$-mass as the initial data, which explains why we can remove the $L^\infty$-remainder. It envelop decays like $|\nu_t|^{-1/2}$. Over fixed time the wavepacket structure is preserved. Nonetheless, this decay can affect coherence over long times; see \cite[\S6]{BBD22} for a quantitative analysis in the context of magnetic Dirac operators with domain walls.

To prove Theorem \ref{thm:1i}, we will perform an orthogonal decomposition of the initial data into Gaussian and higher-order Hermite modes in the variable $x_2$, see \S\ref{sec:3.1}. A coherent state expansion quickly produces the leading order term in \eqref{eq-7t}, see Theorem \ref{thm:1k}. Combining this result with Theorem \ref{thm:1d} allows us to recover the wavepacket constructions of \cite{BB+21,BBD22,HXZ22}; see \S\ref{sec:D} for applications. 

Higher-order Hermite modes are more challenging to analyze. We derive an oscillatory integral formula for their time-dynamics, with a more adequate semiclassical parameter roughly equal to $\sqrt{h}$. The oscillatory phase solves the eikonal equation
\begin{equation}\label{eq-7t}
\p_t \varphi + p(x_1, \p_1 \varphi) = 0, \qquad p(x_1,\xi_1) \de \lambda(x_1)\sqrt{1+(\p_{x_1} \varphi)^2}. 
\end{equation}
It admits short-time solutions, which extend to longer times if the bicharacteristic flow of the Hamiltonian vector field of $p$ is reversible; that is, if the map
\begin{equation}
x_1 \mapsto \pi \circ e^{tH_p}(x_1,\xi_1), \qquad \pi(x_1,\xi_1) = x_1
\end{equation}
is invertible for all $(t,\xi_1)$. Superposing these solutions produce a Lax-type parametrix. Its analysis through stationary and non-stationary bounds in oscillatory integrals eventually yields the $t^{-1/2}$-decay of \eqref{eq-7t}.

\section{Reduction to $2 \times 2$ systems on $\R$ and time-reversal considerations}\label{sec:3.1} 

\subsection{Semiclassical Hermite functions}\label{sec-H} Let $\aaa^* = x_2-\p_2$ and $\aaa = x_2+\p_2$ be the (semiclassical) creation and anhilation operators, and introduce the semiclassical Hermite functions
\begin{equation}
 g_0(x_2) \de \dfrac{1}{\pi^{1/4}} e^{-\frac{x_2^2}{2}}; \qquad g_n(x_2) \de \dfrac{(\aaa^*)^n}{2^{n/2}\sqrt{n!}} g_0(x_2). 
\end{equation}
They satisfy the relations $\aaa^* g_{n-1} =  \sqrt{2n} g_n$ and $\aaa g_n = \sqrt{2n} g_{n-1}$, where by convention $g_{-1} = 0$. For $f \in C^\infty(\R,\C^2)$, we define
\begin{equation}\label{eq-7x}
G_n(x_2) \de \matrice{g_n(x_2) \\ g_{n-1}(x_2)}, \qquad (f\otimes G_n) (x) \de \matrice{f_1(x_1) g_n(x_2) \\ f_2(x_1) g_{n-1}(x_2)}.
\end{equation}
We then set
\begin{equation}
g_{n,h}(x_2) \de \dfrac{1}{h^{1/4}} g_n\left(\dfrac{x_2}{\sqrt{h}}\right), \qquad G_{n,h}(x_2) \de \dfrac{1}{h^{1/4}} G_n\left(\dfrac{x_2}{\sqrt{h}}\right)
\end{equation}

\subsection{Block-diagonal decomposition}\label{sec-2.2} The relations between the Hermite functions and the creations/anhilation operators allow us to write $\Dii$ in the block-diagonal form
 \begin{equations}\label{eq-6n}
\Dii \left(\sum_{n=0}^\infty f_n\otimes G_{n,h}\right) =  L f_0 \otimes G_{0,h} +   \sum_{n=1}^\infty \sqrt{2n h} \big(\Dii_{n,h_n} f_n\big)\otimes G_{n,h}, \qquad h_n = \sqrt{\dfrac{h}{2n}}
\end{equations}
where the blocks $L$, $\Dii_{n,\epsilon}$ are differential operators in the variable $x_1$ only:
\begin{align}
L \de \ & h\lambda(x_1)  D_1 + \dfrac{h\lambda'(x_1)}{2i} + h^2 D_1 \mu(x_1) D_1 
- h^2 \dfrac{\mu''(x_1)}{4} + h s(x_1),
\\
\label{eq-7z}
\Dii_{n,\epsilon} \de \ & \left( \lambda(x_1) \epsilon D_1 + \dfrac{ \epsilon \lambda'(x_1)}{2i} + 2n \epsilon^3 D_1 \mu(x_1) D_1 - \dfrac{n \epsilon^3 \mu''(x_1)}{2} \right) \sigma_3 
\\
& + \left( \lambda(x_1) + 2n \epsilon^2 \mu(x_1) D_1  + \dfrac{n\epsilon^2 \mu'(x_1)}{i} \right) \sigma_1 + \epsilon s(x_1).
\end{align}

\subsection{Time-reversal considerations}\label{sec-2.3} The decomposition \eqref{eq-6n} reduces the system $(h D_t + \Dii) F = 0$ on $\R^2$ to decoupled one-dimensional systems on $\R$, of the two following types:
\begin{equations}\label{eq-7ya}
 (h D_t + L) f = 0; 
\end{equations}
\begin{equations}\label{eq-7yb}
(\epsilon D_t + \Dii_{n,\epsilon}) f = 0.
\end{equations}

A system with two internal degrees of freedom is time-reversal invariant if there exits an antiunitary operator $\TT$ such that $\TT^2 = -\Id$, and $\TT$ anticommutes with the Hamiltonian \cite{GP12,ASV13}. In particular, \eqref{eq-7yb} is time-reversal invariant: the antiunitary operator $\TT$ defined by
\begin{equation}
\TT \psi = \matrice{0 & -1 \\ 1 & 0} \ove{\psi} = i\sigma_2 \ove{\psi}.
\end{equation}
satisfies $\TT^2 = -\Id$ and $\Dii_{n,\epsilon} \TT = -\TT \Dii_{n,\epsilon}$. This implies that $\psi$ and $\TT \psi$ propagate in different directions of time. 

\begin{remark} The operator $\TT$ maps a wavepacket $\psi$ with orientation $u$ to one with complex-orthogonal orientation $\ove{u}^\perp$. If $\psi$ is concentrated at $(x_0,\xi_0) \notin \dii^{-1}(0)$ and $u$ is a positive-energy eigenvector of $\dii(x_0,\xi_0)$, then $\ove{u}^\perp$ is a negative-energy eigenvector of $\dii(x_0,\xi_0)$. From \cite{T75}, the dynamics of $\psi$ and $\TT\psi$ respectively follow bicharacteristics of the positive and negative eigenvalue of $\dii$, hence propagate in opposite directions of time. 
\end{remark}

On the other hand, the bulk-edge correspondence suggests that \eqref{eq-7r} is not time reversal-invariant. Since it is a direct sum of systems of the form \eqref{eq-7ya}-\eqref{eq-7yb}, we deduce that \eqref{eq-7ya} is not time-reversal invariant. Therefore, all the information about asymmetric/topological transport for $\Dii$ comes from \eqref{eq-7ya}. This justifies that the wavepackets constructed in \cite{BB+21,BBD22,HXZ22} are the dynamical analogues to topological edge states in curved settings. 

There are moreover analytic differences between \eqref{eq-7ya} and \eqref{eq-7yb}. Specifically, \eqref{eq-7ya} comes with a family of traveling waves, corresponding to the dominant term in \eqref{eq-7s}; and \eqref{eq-7yb} disperses rapidly and generates the $L^\infty$-remainder in \eqref{eq-7s}. We investigate these two effects in \S\ref{sec:3.2} and \S\ref{sec-4}-\ref{sec-5}.

\section{Propagating wavepackets}\label{sec:3.2} In this section, we construct an infinite-dimensional family of solutions to $(h D_t + \Dii) F = 0$ that at leading order travel along $\R \times \{0\}^3$, at speed $\lambda$. We realize the  corresponding initial data in the form of wavepackets with  profile transverse to $\R \times \{0\}$ given by $G_{0,h}$:
\begin{equation}\label{eq-6m}
(h D_t + \Di) F = 0, \qquad F(0,x) = \dfrac{1}{\sqrt{h}} \cdot a_0\left( \dfrac{x_1}{\sqrt{h}} \right) e^{-\frac{x_2^2}{2h}} \matrice{1 \\ 0}, \qquad a_0 \in \SSS(\R).
\end{equation}
We recall that the quantities $x_t^+, \rho_t, \nu_t, S_t, a_t$ are given by \eqref{eq-7ua}--\eqref{eq-7v}.

\begin{theorem}\label{thm:1k} For $a_0 \in \SSS(\R)$ and $t$ in compact sets, the solution to \eqref{eq-6m} satisfies:
\begin{equations}\label{eq-1k}
F(t,x) = \dfrac{e^{iS_t}}{\sqrt{\rho_th}} \cdot  a_t\left(  \dfrac{x_1-x_t^+}{ \rho_t\sqrt{h}} \right) e^{-\frac{x_2^2}{2h}} \matrice{1 \\ 0} + O_{L^2}\big(h^{1/2}\big).
\end{equations}
\end{theorem}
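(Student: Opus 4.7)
The plan is to exploit the block-diagonal structure from \S\ref{sec-2.2} to reduce \eqref{eq-6m} to a scalar evolution on $\R$, and then solve that scalar evolution by a coherent-state ansatz tracking the classical trajectory $x_t^+$. Rewriting $e^{-x_2^2/(2h)} = \pi^{1/4} h^{1/4} g_{0,h}(x_2)$, the initial data in \eqref{eq-6m} factors as $f_0(x_1)\otimes G_{0,h}(x_2)$ with $f_0(x_1) = \pi^{1/4} h^{-1/4} a_0(x_1/\sqrt{h})$, and \eqref{eq-6n} preserves this tensor form: the solution has the shape $F(t,x) = f(t,x_1)\otimes G_{0,h}(x_2)$ with $(hD_t + L) f = 0$ and $f|_{t=0} = f_0$. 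The symmetrizing terms $h\lambda'/(2i)$ and $-h^2\mu''/4$ inserted in \eqref{eq-7w} are precisely what makes $L$ formally self-adjoint on $L^2(\R)$; consequently $e^{-itL/h}$ is $L^2$-unitary, which will drive the final error estimate.

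Next, I insert the wavepacket ansatz
\[
f_{\mathrm{app}}(t,x_1) \de \dfrac{\pi^{1/4}\, e^{iS_t}}{h^{1/4}\sqrt{\rho_t}}\, a_t(y), \qquad y \de \dfrac{x_1-x_t^+}{\rho_t\sqrt{h}},
\]
into $(hD_t + L)$, Taylor-expand $\lambda, \lambda', \mu, s$ around $x_t^+$ through $x_1 - x_t^+ = \rho_t\sqrt{h}\, y$, and collect powers of $\sqrt{h}$. The $O(h^{1/2})$ coefficient of $a_t'$ vanishes identically when $\dot{x}_t^+ = \lambda(x_t^+)$, producing the characteristic equation \eqref{eq-7ua}; at order $h$, the coefficient of $y\, a_t'$ vanishes thanks to the consequent identity $\dot{\rho}_t/\rho_t = \lambda'(x_t^+)$, which is equivalent to $\rho_t = \lambda(x_t^+)/\lambda(0)$; the coefficient of $a_t$ with no derivatives vanishes when $\dot{S}_t = s(x_t^+)$, fixing $S_t$ up to the sign convention between phase and Hamiltonian; and the $O(h)$ coefficient of $a_t''$ leaves the transport equation $\p_t a_t = i\, \mu(x_t^+)\,\rho_t^{-2}\, \p_y^2 a_t$. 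The time reparametrization $\tau = \nu_t$ turns this into the free Schr\"odinger equation $\p_\tau a = i\p_y^2 a$, whose fundamental solution at time $\nu_t$ is the kernel $e^{-i\pi/4}(4\pi\nu_t)^{-1/2} e^{iy^2/(4\nu_t)}$: precisely the convolution formula \eqref{eq-7v}.

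With the four classical parameters $x_t^+, \rho_t, S_t, \nu_t$ so chosen, the residual $R(t) \de (hD_t + L) f_{\mathrm{app}}$ reduces to strictly higher-order pieces: $O(h^{3/2} y^2)$ and $O(h^{3/2} y)$ contributions from the quadratic and linear Taylor remainders of $\lambda, \lambda', s$, plus the explicit $h^{3/2}\mu'(x_1)\rho_t^{-1}\, a_t'$ and $h^2\mu''(x_1)/4\cdot a_t$ terms from \eqref{eq-7w}, and an $O(h^{3/2}\, y\,\rho_t^{-2})\, a_t''$ contribution from the Taylor expansion of $\mu$ inside the second-order part. Since the free Schr\"odinger flow preserves $\SSS(\R)$, each monomial $y^j a_t^{(k)}$ lies in $L^2(\R_y)$ uniformly on compact $t$-intervals; the rescaling $y \mapsto x_1$ contributes only a harmless factor $(\rho_t\sqrt{h})^{1/2}$, and a term-by-term count gives $\|R(t)\|_{L^2(\R_{x_1})} = O(h^{3/2})$. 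Duhamel's formula combined with $L^2$-unitarity of $e^{-itL/h}$ then yields
\[
\big\|f(t)-f_{\mathrm{app}}(t)\big\|_{L^2} \;\leq\; h^{-1}\int_0^t \|R(s)\|_{L^2}\, ds \;=\; O(h^{1/2}),
\]
and tensoring with the $L^2$-normalized vector $G_{0,h}(x_2)$ delivers \eqref{eq-1k}.

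The main technical work is the residual accounting in step~two: one must verify that every $O(h)$ contribution prefactored by $a_t$, $y\, a_t'$, or $a_t''$ has indeed been absorbed into the four defining relations for $x_t^+, \rho_t, S_t, \nu_t$, so that the surviving $R(t)$ is genuinely $O(h^{3/2})$ rather than merely $O(h)$. This is the standard Hagedorn-type bookkeeping and is what distinguishes this theorem from the naive WKB answer $a_t = a_0$. Self-adjointness of $L$ is by design of the symmetrization terms in \eqref{eq-7w}; Schwartz regularity of $a_t$ handles every polynomial-in-$y$ residual; and the restriction to $t$ in compact sets sidesteps any growth of $\nu_t$ or of the Schwartz seminorms of $a_t$.
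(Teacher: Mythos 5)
Your proposal is correct, and its skeleton coincides with the paper's: reduce to the $n=0$ Hermite block so that $F(t,x)=f(t,x_1)G_{0,h}(x_2)$ with $(hD_t+L)f=0$, insert a wavepacket ansatz centered at $x_t^+$, show the residual is $O_{L^2}(h^{3/2})$, and conclude by Duhamel using unitarity of the propagator (the symmetrization terms in \eqref{eq-7w} making $L$ formally self-adjoint, exactly as you note). Where you diverge is in the treatment of the envelope: the paper keeps the profile $a(t,\cdot)$ unknown in the non-dilated frame $\Ww[a]$, derives the effective equation \eqref{eq-6e}, and then solves it \emph{exactly} by Fourier transform, conjugating away the dilation generator $\xi_1 D_1+D_1\xi_1$ with the semigroup $U_{r_t}$, which is how $\rho_t$, $S_t$, $\nu_t$ and the convolution formula \eqref{eq-6i} emerge; you instead build the dilation $\rho_t$ and the $\nu_t$-Schr\"odinger evolution of $a_t$ into the ansatz from the start and verify, Hagedorn-style, that each coefficient of $\sqrt{h}$, $h\,y\,a_t'$, $h\,a_t$, $h\,a_t''$ cancels under the defining ODEs. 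The two routes are logically equivalent; yours trades the paper's exact Fourier/dilation computation for order-by-order bookkeeping (the relation $\dot\rho_t/\rho_t=\lambda'(x_t^+)\Leftrightarrow\rho_t=\lambda(x_t^+)/\lambda(0)$ is indeed what makes the $y\,a_t'$ terms cancel), while the paper's route has the advantage of \emph{discovering} the formula rather than assuming it. Two small points: the sign relation between $\dot S_t$ and $s(x_t^+)$ depends on the phase convention $e^{\pm iS_t}$ (your hedge is appropriate, and the paper's own \eqref{eq-6g} is loose on this), and in the Duhamel step you should note, as the paper implicitly does via \cite[Lemma 3.5]{BB+21}, that the initial data match exactly since $\rho_0=1$, $S_0=0$, $\nu_0=0$, so no boundary term appears.
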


\begin{proof} 1. We observe that the initial data in \eqref{eq-6m} has profile transverse to $\R \times \{0\}$ given by $G_{0,h}$. According to \S\ref{sec:3.1}, the solution $F$ to \eqref{eq-6m} takes the form
\begin{equation}
F(t,x) = f(t,x_1) G_{0,h}(x_2),
\end{equation}
where $f$ solves $(h D_t + L) f = 0$. We will look for $f$ in the form
\begin{equation}
f(t,x_1) = \dfrac{1}{h^{1/4}} a \left(t,\dfrac{x_1-x_t^+}{\sqrt{h}} \right) \de \Ww[a](t,x_1), \qquad a(0,\cdot) = a_0.
\end{equation}

2. We note that
\begin{align}
(h D_t + L) \Ww[a]  & =  \sqrt{h}  W\big[\big(\lambda(x_t^+)-\dot{x_t^+}\big) D_1 a\big] 
\\
\label{eq-6z} & + h \Ww\left[ \left( D_t + \lambda'(x_t^+) x_1D_1 + \dfrac{\lambda'(x_t^+)}{2i} + \mu(x_t^+) D_1^2 + s(x_t^+)\right) a \right] + O_{L^2}(h^{3/2}).
\end{align} 
With $\dot{x_t^+} = \lambda(x_t^+)$ the first term vanishes. When the second term also vanishes, we end up with 
\begin{equation}\label{eq-13q}
(h D_t + L) \Ww[a] = O_{L^2}(h^{3/2})
\end{equation}
A standard Duhamel argument -- see e.g. \cite[Lemma 3.5]{BB+21} -- implies that $f = \Ww[a] + O_{L^2}(h^{1/2})$ for bounded times, hence 
\begin{equation}\label{eq-6t}
F(t,x) = \Ww[a](x_1) G_{0,h}(x_2) + O_{L^2}(h^{1/2}).
\end{equation}
Therefore, it remains to solve the equation
\begin{equation}\label{eq-6e}
\left(D_t + \lambda'(x_t^+) x_1D_1 + \dfrac{\lambda'(x_t^+)}{2i} + \mu(x_t^+) D_1^2 + s(x_t^+)\right) a = 0, \qquad a(0,\cdot)=a_0.
\end{equation}

3. The (spatial) Fourier transform of the differential operator in \eqref{eq-6e} is
\begin{equation}
D_t - \lambda'(x_t^+) \dfrac{\xi_1 D_1 + D_1 \xi_1}{2} + \mu(x_t^+) \xi_1^2 + s(x_t^+).
\end{equation} 
The operator $D_1 \xi_1 + \xi_1 D_1$ generates the semigroup of dilations $U_r f(\xi_1) = e^{r/2} f(e^r \xi_1)$. %: specifically, $\left( 2D_s - \xi_1 D_1 - D_1 \xi_1 \right) U_s = 0$. 
Therefore, if $r_t$ is defined by $\dot{r_t} = \lambda'(x_t^+)$ with $r_0 = 0$, then
\begin{equation}\label{eq-6c}
U_{r_t}^{-1} \left( D_t - \lambda'(x_t^+) \dfrac{\xi_1 D_1 + D_1 \xi_1}{2}  \right) U_{r_t} = \big(\dot{r_t} - \lambda'(x_t^+) \big) \dfrac{\xi_1 D_1 + D_1 \xi_1}{2} + D_t = D_t.
\end{equation}
We moreover note that $U_r^{-1} \xi_1^2 U_r = e^{-2r} \xi_1^2$, and that $e^{r_t} = \rho_t$. We therefore obtain the relation
\begin{equation}\label{eq-6h}
U_{r_t}^{-1} \left( D_t - \lambda'(x_t^+) \dfrac{\xi_1 D_1 + D_1 \xi_1}{2} + \mu(x_t^+) \xi_1^2 + s(x_t^+)\right) U_{r_t} =  D_t + \dfrac{\mu(x_t^+)}{\rho_t^2}\xi_1^2 + s(x_t^+).
\end{equation}
We deduce from \eqref{eq-6h} that the Fourier transform of the solution to \eqref{eq-6e} is
\begin{equation}\label{eq-6g}
\sqrt{\rho_t} e^{iS_t-i \nu_t \rho_t^2 \xi_1^2} \widehat{a_0}\left( \rho_t \xi_1 \right), \qquad \nu_t \de \int_0^t \dfrac{\mu(x_\tau)}{\rho_\tau^2} d\tau, \qquad S_t \de \int_0^t s(x_\tau^+) d\tau.
\end{equation}

The Fourier transform of \eqref{eq-6g} produces the solution to \eqref{eq-6e}:
\begin{equation}\label{eq-6i}
a(t,x_1) = \dfrac{e^{iS_t}}{\sqrt{\rho_t}} 
\left(\dfrac{e^{-i\pi/4}}{|4\pi\nu_t|^{1/2}}e^{i \frac{x_1^2}{4 \nu_t}} \star a_0 \right) \left( \dfrac{x_1}{\rho_t} \right).
\end{equation}
If $\nu_t = 0$ the above formula must be interpreted as a limit as $\nu_t \rightarrow 0$, i.e. the Gaussian in \eqref{eq-6i} is replaced by a Dirac mass. This completes the proof thanks to \eqref{eq-6t}. \end{proof}

\begin{remark}\label{rem:3} For the purpose of regularizing topological invariants in the bulk, \cite{B19} considered a Dirac operator with viscosity:
\begin{equation}\label{eq-0l}
\Dii  = \matrice{  D_1 + \mu D_1^2 & x_2 - \p_2 \\ x_2 +  \p_2 & -  D_1 - \mu D_1^2}, \qquad \mu \neq 0.
\end{equation}
Viscosity also originates naturally in topological models of active  fluids and plasmas, see  \cite{SD+19}. The traveling mode for \eqref{eq-0l} produced by Theorem \ref{thm:1k} is actually exact:
\begin{equation}
F(t,x) = \dfrac{e^{-i\pi/4}}{|4\pi\mu t|^{1/2}} \cdot e^{-\frac{x_2^2}{2}} \left( e^{i \frac{x_1^2}{4 \mu t}} \star a_0 \right)(x_1-t) \matrice{1 \\ 0}.
\end{equation}
Young's convolution inequality reveals then that $F(t,x) = O_{L^\infty}(t^{-1/2}e^{-x_2^2/2})$: the edge state slowly disperses (along the interface). Hence, while added viscosity does not affect the coarse topological properties of the system, it does affect the long-time dynamics.
\end{remark}

\section{WKB solutions}\label{sec-4}

In \S\ref{sec:3.2}, we constructed wavepacket solutions to $(h D_t + \Di) F = 0$, initially concentrated (in phase-space) at $(0,0)$ and propagating rightward along $\R \times \{0\}^3$, with speed $\lambda(x_1)$. Up to semiclassical rescaling, the $x_2$-profile is a Gaussian, while the $x_1$-profile is given by \eqref{eq-6i}.

We focus now on the second type of equation that arose in \S\ref{sec-2.2}:
\begin{equation}\label{eq-8a}
\left(\epsilon D_t + \Dii_{n,\epsilon} \right) f = 0, 
\end{equation} 
where $\Dii_{n,\epsilon}$ is the Dirac operator \eqref{eq-7z}:
\begin{align}
\Dii_{n,\epsilon} \de & \left( \lambda(x) \epsilon D_x + \dfrac{ \epsilon \lambda'(x)}{2i} + 2n \epsilon^3 D_x \mu(x) D_x - \dfrac{n \epsilon^3 \mu''(x)}{2} \right) \sigma_3 
\\
& \label{eq-8q} + \left( \lambda(x) + 2n \epsilon^2 \mu(x) D_x  + \dfrac{n\epsilon^2 \mu'(x)}{i} \right) \sigma_1.
\end{align}
In \eqref{eq-8q}, we intentionally dropped the subscript in $x_1, D_1$ and wrote $x, D_x$ instead; we will use this convention for all of \S\ref{sec-4} and most of \S\ref{sec-5}. 

We construct here WKB waves for \eqref{eq-8a}, i.e. solutions modulo $O(\epsilon^2)$ of the form $b e^{i\varphi/\epsilon}$, depending on $(t,x) \in (0,T) \in \R$ as well as on a frequency parameter $\xi \in \R$. The time $T$ essentially quantifies how long the bicharacteristic flow associated to \eqref{eq-8q} is reversible for. To state our result, we introduce 
\begin{equation}
\Omega_0 = [0,T) \times \R, \qquad 
\Omega = \Omega_0 \times \R = [0,T) \times \R^2
\end{equation}
and the concept of admissible function:

\begin{definition}\label{def:2} A function $q \in C^\infty(\Omega)$ depending implicitly on $\epsilon > 0$ and $n \in \N$ is admissible of order $m$ if there exists $m$ such that for all $\alpha \in \N^3$, uniformly in $n \in \N, \epsilon \in (0,1]$:
\begin{equation}\label{eq-7i}
\sup_{(t,x,\xi) \in \Omega}  \lr{\xi^2,n}^{-m-|\alpha|} \big| \p^\alpha q(t,x,\xi) \big| < \infty.
\end{equation}
If \eqref{eq-7i} holds for all $m$, we say that $q$ has order $-\infty$.
\end{definition}

\begin{theorem}\label{thm:1g} There exist $T > 0$ and $b, r, \varphi \in C^\infty_b(\Omega)$, such that:
\begin{itemize}
\item $\varphi$ is real-valued and solves the eikonal equation 
\begin{equation}\label{eq-11m}
\p_t \varphi + \lambda(x) \sqrt{1+(\p_x\varphi)^2} = 0, \qquad \varphi(0,x,\xi) = x\xi. 
\end{equation}
\item $b, r$ are $\C^2$-valued and admissible of respective orders $1$ and $4$,  with $|b(0,x,\xi)| = 1$ for $(x,\xi) \in \R^2$; and
\begin{equation}
\left( \epsilon D_t + \Dii_{n,\epsilon} \right) b e^{i\varphi/\epsilon} = \epsilon^2 r e^{i\varphi/\epsilon}.
\end{equation}
\end{itemize}
\end{theorem}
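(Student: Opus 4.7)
The construction is a standard WKB scheme adapted to the operator $\Dii_{n,\epsilon}$, with extra care to preserve uniformity in the parameter $n$.

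First I would solve the eikonal \eqref{eq-11m} by the method of characteristics for the scalar Hamiltonian $p(x,\xi) = \lambda(x)\sqrt{1+\xi^2}$. Let $e^{tH_p}(y_0,\eta) = (y(t,y_0,\eta),\eta(t,y_0,\eta))$ be its bicharacteristic flow. For $t$ small, the map $y_0 \mapsto y(t,y_0,\xi)$ is a diffeomorphism of $\R$ uniformly in $\xi$ on bounded sets; define $T$ as the supremum of such times. Inverting this map and integrating the Lagrangian $\eta\,dy - p(y,\eta)\,dt$ along the bicharacteristic starting at $(y_0,\xi)$, with initial value $y_0\xi$, produces a smooth $\varphi(t,x,\xi)$ solving \eqref{eq-11m}. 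Since $p$ is of degree one in $\xi$, the bicharacteristics have uniformly bounded $\xi$-derivatives on $[0,T]$, giving admissible growth of $\p^\alpha\varphi$.

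Next I would construct $b$ as a two-term expansion $b = a v_0 + \epsilon b_1$. Writing $(\epsilon D_t + \Dii_{n,\epsilon})(b e^{i\varphi/\epsilon})e^{-i\varphi/\epsilon}$ as a power series in $\epsilon$, the order-$\epsilon^0$ coefficient is $M b_0$ with $M = \p_t\varphi\,\Id + \lambda(x)(\p_x\varphi)\sigma_3 + \lambda(x)\sigma_1$. By \eqref{eq-11m}, this Hermitian matrix has eigenvalues $0$ and $2\p_t\varphi$; let $v_0(t,x,\xi)$ be a smooth choice of unit vector spanning $\ker M$. Projecting the order-$\epsilon^1$ residual onto $v_0$ yields a scalar transport equation
\[
(\p_t + V\p_x + c_0 + n c_1)\,a = 0,
\]
where $V$ is the bicharacteristic velocity and $c_0, c_1$ are bounded scalars. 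The key structural point is that $c_1$ is \emph{purely imaginary}: the $n$-dependent terms in $\Dii_{n,\epsilon}$ are Hermitian combinations of $\sigma_1,\sigma_3$, so their diagonal matrix elements on the unit vector $v_0$ are real, producing a purely imaginary coefficient in the Schr\"odinger-type transport equation. Thus $|a|$ is preserved along characteristics, and standard ODE bounds yield $a$ admissible of order $0$. The component of the residual in $v_0^\perp$ then determines $b_1$ uniquely via the inverse of $M|_{v_0^\perp}$, whose sole eigenvalue $2\p_t\varphi$ has absolute value $2\lambda(x)\sqrt{1+(\p_x\varphi)^2}$ bounded below. This makes $b = a v_0 + \epsilon b_1$ admissible of order $1$ with $|b(0,x,\xi)| = 1$ after the normalization $a(0,\cdot)=1$.

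The remainder $r$ then collects everything left after the $\epsilon^0$ and $\epsilon^1$ cancellations: second derivatives of $v_0, a, b_1$, the $O(n\epsilon^3)$ terms of $\Dii_{n,\epsilon}$ applied to $b_0$, and the $O(n\epsilon^2)$ terms applied to $b_1$. Each factor of $n, \xi, \p_x\varphi$ or $\mu(x)\xi^2$ appears at most a bounded number of times, yielding admissibility of order $4$; differentiating through the characteristic ODEs propagates the uniform bounds. The main obstacle is not conceptual but the uniform $(\epsilon,n)$-bookkeeping of derivative estimates; the enabling structural facts are the Hermiticity of the $n$-dependent subprincipal (preventing exponential-in-$n$ growth of $a$) and the degree-one $\xi$-dependence of $p$ (controlling admissibility of $\varphi$).
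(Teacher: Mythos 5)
Your overall route is the same as the paper's: solve the eikonal equation by the Hamiltonian flow of $p(x,\xi)=\lambda(x)\sqrt{1+\xi^2}$, take $b=\alpha u+\epsilon b_1$ with $u$ spanning the kernel of the principal matrix $L_0$, obtain a scalar transport equation whose $n$-dependent coefficient is purely imaginary (so no exponential-in-$n$ growth of the amplitude; compare \eqref{eq-6u} and the explicit solution \eqref{eq-7g}), determine $b_1$ by inverting $L_0$ on the orthogonal complement, where its nonzero eigenvalue $2\p_t\varphi=-2\lambda\lr{\p_x\varphi}$ is bounded away from $0$, and let $r$ collect the $O(\epsilon^2)$ leftovers, whose worst term $2n\,\p_x\varphi\,\mu D_x b$ has order $7/2\le 4$. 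Two imprecisions are harmless: $|a|$ is not exactly preserved along characteristics (the real coefficient $\tfrac12\p_x\nu$ produces the Jacobian factor $|\p_x H|^{1/2}$, bounded above and below, which is all you need), and admissibility of the amplitude requires noting that each derivative of the $n$-dependent phase costs a factor $\lr{\xi^2,n}$, which is exactly what Definition \ref{def:2} permits.

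The one genuine gap is the uniformity in $\xi$ of the eikonal construction. You define $T$ as the supremum of times for which $y_0\mapsto y(t,y_0,\xi)$ is a diffeomorphism ``uniformly in $\xi$ on bounded sets'' and assert that, since $p$ is of degree one in $\xi$, the bicharacteristics have uniformly bounded $\xi$-derivatives. As stated this does not produce a single $T>0$ valid for all $\xi\in\R$ (the admissible time could shrink as the bounded set grows), and the derivative claim fails in the raw variables: the variational equations of \eqref{eq-3q} contain coefficients such as $\lambda''(x_t)\sqrt{1+\xi_t^2}$, unbounded in $\xi$, so Gronwall alone does not give bounds uniform over $\R^2$. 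A uniform $T$ and the estimates \eqref{eq-6r} are indispensable here, because the parametrix of \S\ref{sec-5} superposes over all frequencies and all $n$, and admissibility of $b$, $b_1$, $r$ is a statement uniform over $\Omega=[0,T)\times\R^2$. The paper resolves this by compactifying the frequency, $\zeta=\arctan\xi$, proving uniform bounds for the compactified flow (Lemma \ref{lem:1h}) and the uniform lower bound $|\p_x\tF|\ge 1/2$ on a time interval independent of $(x,\zeta)$ (Lemma \ref{lem:1d}), then transferring to $\varphi$ through the semi-explicit formula and Lemma \ref{lem:1i}. Your degree-one-homogeneity heuristic is the correct reason this works, but some such device (compactification or an explicit rescaling in $\xi$) must be carried out to obtain the uniform statements the theorem asserts.
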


The function $\varphi$ will satisfy some important stationary and non-stationary estimates, see Theorem \ref{thm:1f} below. In \S\ref{sec-5}, we will show that superposing WKB waves produces a parametrix for \eqref{eq-8a}. Various non-stationary estimates on $\varphi$, proved in \S\ref{sec-4.3}, allow us to derive various dispersive bounds on the parametrix. In \S\ref{sec-4.5}, we use the time-reversal invariance of $\Dii_{n,\epsilon}$ (see \S\ref{sec-2.3}) to construct a WKB wave that travels in the opposite direction.

\subsection{Eikonal and transport equations}\label{sec-1.3} Theorem \ref{thm:1g}  consists of constructing solutions to $(\epsilon D_t + \Dii_{n,\epsilon}) f = O(\epsilon^2)$ in WKB form:
\begin{equation}\label{eq-2f}
f(t,x,\xi) = \exp \left( i \frac{\vp(t,x,\xi)}{\epsilon} \right) b(t,x,\xi), \qquad b = b_0+\epsilon b_1,  
\end{equation}
where $\varphi, b_0, b_1 \in C^\infty(\Omega)$ are independent of $\epsilon$ and satisfy the initial conditions 
\begin{equation}\label{eq-6x}
\vp(0,x,\xi) = x\xi, \qquad \big| b_0(0,x,\xi) \big| = 1, \qquad b_1(0,x,\xi) = 0.
\end{equation}

We now derive equations for $\varphi$, $b_0$ and $b_1$. Given $\varphi$, we have the operator-valued expansion
\begin{equations}\label{eq-6q}
\exp \left( -i \frac{\vp(t,x,\xi)}{\epsilon} \right) \left(\epsilon D_t + \Dii_{n,\epsilon}\right) \exp \left( i \frac{\vp(t,x,\xi)}{\epsilon} \right) = L_0+\epsilon L_1 + \epsilon^2 L_2,
\end{equations}
where the differential operators $L_0, L_1$ and $L_2$ are given by:
\begin{align}
L_0 & = \p_t \varphi + \lambda(x) \big( \sigma_1 + \sigma_3 \p_x\varphi \big),
\\
\label{eq-7o} L_1 & = D_t  + \left(\lambda(x) D_x + \dfrac{\lambda'(x)}{2i}\right) \sigma_1 + 2n \mu(x) \p_x\varphi  \big( \sigma_1 + \sigma_3 \p_x\varphi \big) + s(x_1),
\\
L_2 & = 2n \big(  \p_x \varphi \mu(x) D_x +   D_x \p_x \varphi \mu(x) + \epsilon D_x \mu(x) D_x \big) \sigma_3 +  n\big( 2 \mu(x) D_x  - i \epsilon^2 \mu'(x) \big) \sigma_1.
\end{align}
Hence, the equation $(\epsilon D_t + \Dii_{n,\epsilon}) f = O(\epsilon^2)$ has a solution of the form \eqref{eq-2f} if $L_0 b_0 = 0$ and $L_1 b_0 + L_0 b_1 = 0$. For the leading equation $L_0 b_0 = 0$ to have a solution $b_0 \neq 0$, $L_0$ must have a non-trivial kernel: $\det L_0 = 0$. This produces the eikonal equation:
\begin{equation}\label{eq-2s}
(\p_t \varphi)^2 = \lambda(x)^2\big( (\p_x \varphi)^2 + 1 \big), \qquad \varphi(0,x,\xi) = x\xi.
\end{equation}
Theorem \ref{thm:1g} focuses on one branch of solutions to \eqref{eq-2s}:
\begin{equation}\label{eq-8r}
\p_t \varphi + \lambda(x) \sqrt{ (\p_x \varphi)^2 + 1 }, \qquad \varphi(0,x,\xi) = x\xi.
\end{equation}
This is a Hamilton--Jacobi equation. We refer to \cite[\S10.2]{Z12} for the standard approach to construct WKB solutions. It consists first in studying the bicharacteristic flow of the underlying symbol, $p(x,\xi) = \lambda(x) \sqrt{1+\xi^2}$: the flow $x_0 \mapsto x_t$ generated by the ODE
\begin{equation}
\systeme{\dot{x_t} = \lambda(x_t) \dfrac{\xi_t}{\sqrt{1+\xi_t^2}} 
\\
\dot{\xi_t} = -\lambda'(x_t) \sqrt{1+\xi_t^2}}.
\end{equation}
In \S\ref{sec-4.2}, we prove that this flow is reversible for short time independent of the initial data $(x_0,\xi_0) \in \R^2$. This produces solutions to \eqref{eq-8r} expressed in terms of the reversed bicharacteristic flow. We will use this semi-explicit construction to prove non-stationary estimates for $\varphi$.

With $\varphi$ solution of \eqref{eq-8r}, the nullspace of $L_0$ is not empty, and one can solve the equation $L_0 b_0 = 0$. The solution $b_0$ takes the form $b_0 = \alpha u$, where $u$ is a normalized vector in the nullspace of $L_0$, and $\alpha$ is a scalar-valued function with $\alpha(0,x,\xi) = 1$. For the subleading equation $L_0 b_1 + L_1 b_0 = 0$ to have a solution $b_1$, the term $L_1 b_0$ must be orthogonal to the kernel of $L_0$. This gives the equation $L_1 (\alpha u_0) \perp \ker L_0$ for $\alpha$, which we solve in \S\ref{sec-4.4}

\subsection{Classical flow}\label{sec-4.2}  We study here the flow 
\begin{equation}\label{eq-3q}
\systeme{\dot{x_t} = \lambda(x_t) \dfrac{\xi_t}{\sqrt{1+\xi_t^2}} 
\\
\dot{\xi_t} = -\lambda'(x_t) \sqrt{1+\xi_t^2}}, \qquad (x_0,\xi_0) = (x,\xi) \in \R^2;
\end{equation}
see e.g. \cite[\S5]{GS94} for its relation with the Hamilton--Jacobi equation \eqref{eq-8r}. We will use the notation:
\begin{equation}
F(t,x,\xi) = x_t, \qquad G(t,x,\xi) = \xi_t, \qquad (t,x,\xi) \in \R^3
\end{equation}
We prove uniform estimates on $F$ and $G$ for $(x,\xi) \in \R^2$ and $t$ in compact sets. While $\R^2$ is not compact, the flow \eqref{eq-3q} (which is nearly homogeneous for large $\xi$) admits a natural compactification in $\xi$: with $\zeta_t = \arctan \xi_t$, $(x_t,\zeta_t)$ solves the ODE 
\begin{equation}\label{eq-3r}
\systeme{\dot{x_t} = \lambda(x_t) \sin \zeta_t 
\\
\dot{\zeta}_t = -\lambda'(x_t) \cos \zeta_t}, \qquad (x_0,\zeta_0) = (x,\zeta).
\end{equation}
Instead of looking at \eqref{eq-3r} for initial data $(x,\zeta) \in \R \times (-\pi/2,\pi/2)$, we will consider initial data $(x,\zeta) \in \R^2$. Formally speaking, \eqref{eq-3r} extends \eqref{eq-3q} beyond $\xi = \pm \infty$.

\begin{center}
\begin{figure}[!t]
\floatbox[{\capbeside\thisfloatsetup{capbesideposition ={right,center}}}]{figure}[ 10cm]
{\caption{Flow lines of \eqref{eq-3r}. The dark blue region corresponds to the physical space -- i.e. where \eqref{eq-3q} gets mapped to. The blue lines  correspond to $\xi = \pm \infty$ and separate the dark blue region from the unphysical light blue region. The flow does not connect the two regions.}\label{fig:2}}
{\begin{tikzpicture}
\node at (0,0) {\includegraphics[scale=1]{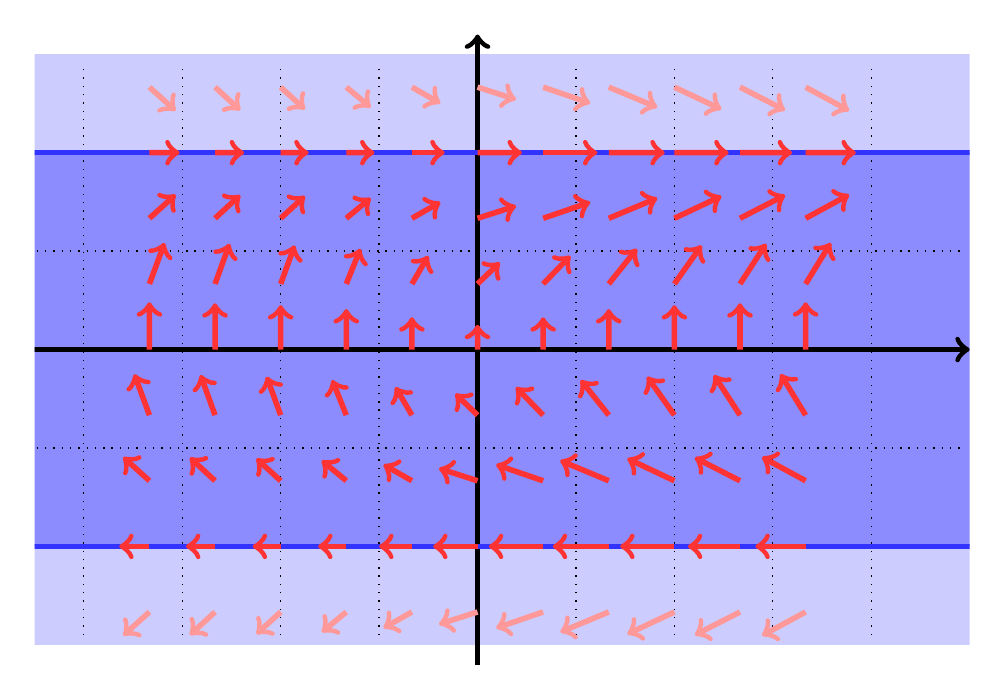}};
\end{tikzpicture}}
\end{figure}
\end{center}

We note that if $x_t, \zeta_t$ solve \eqref{eq-3r}, then $\dot{x_t}$ and $\dot{\zeta_t}$ are uniformly bounded; from Cauchy--Lipschitz theory, they are defined for all times. For $t \in \R$, we set
\begin{equation}
\tF(t,x,\zeta) = x_t, \quad \tG(t,x,\zeta) = \zeta_t, \quad (x,\zeta) \in \R^2,
\end{equation}
where $(x_t,\zeta_t)$ solves \eqref{eq-3r} with initial data $(x_0,\zeta_0) = (x,\zeta) \in \R^2$. While the ODE \eqref{eq-3r} is technically simpler than \eqref{eq-3q}, there is a simple relation relating the two flows:
\begin{equation}\label{eq-3y}
F(t,x,\xi) = \tF(t,x,\arctan \xi), \qquad G(t,x,\xi) = \tan \big( \tG(t,x, \arctan \xi) \big), \qquad (x,\xi) \in \R^2.
\end{equation} 

Later, we will need to consider the flow of \eqref{eq-3q} as $\xi$ approaches $\pm \infty$, which corresponds to $\zeta \rightarrow \pm \pi/2$. Since $\pi/2$ is a regular point of $\tF$, the following limits exist:
\begin{equation}\label{eq-8u}
F_\pm(t,x) \de \lim_{\xi \rightarrow \pm \infty} F(t,x,\xi) = \tF\big(t,x,\pm\pi/2\big), \qquad x_t^\pm = F_\pm(t,0).
\end{equation}
From \eqref{eq-3r}, the functions $F_\pm$ solve the autonomous first order ODE $\p_t F_\pm = \pm \lambda \circ F_\pm$; in particular, they admit an representation in terms of a antiderivative of $1/\lambda$:
\begin{equation}\label{eq-8t}
F_\pm(t,x) = \Lambda^{-1}\big(\Lambda(x) \pm t\big), \qquad x_t^\pm = \Lambda(\pm t), \qquad \Lambda(y) = \int_0^y \dfrac{dy}{\lambda(y)}.
\end{equation}

%We comment on basic properties of \eqref{eq-3r}: \begin{itemize} \item It is defined for all times, because $|\dot{x_t}|$ and $|\dot{\zeta_t}|$ are bounded, hence $x_t$ and $\zeta_t$ are Lipschitz;
%\item It induces a flow on $\R \times \Tt^1$, because it is invariant under $\zeta \mapsto \zeta + 2\pi$; \item It does not induce a Hamiltonian flow, because $(x,\xi) \mapsto (x,\zeta)$ is not symplectic; \item The quantity $\lambda(x_t) / \cos \zeta_t$, which is inherited from $p(x_t,\xi_t)$, is conserved. \end{itemize}

\begin{lemma}\label{lem:1h} All positive-order derivatives of $\tF, \tG$ are uniformly bounded for $t$ in compact sets and  $(x,\zeta) \in \R^2$. \end{lemma}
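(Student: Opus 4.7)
The plan is to argue by induction on the derivative order using the variational equations attached to \eqref{eq-3r}. The decisive structural observation is that, in the compactified coordinate $\zeta = \arctan \xi$, the generating vector field
\begin{equation}
V(x,\zeta) \de \bigl(\lambda(x) \sin\zeta, \, -\lambda'(x) \cos\zeta\bigr)
\end{equation}
and all its partial derivatives are uniformly bounded on $\R^2$: the trigonometric factors are globally bounded, and $\lambda$ together with all its derivatives is bounded on $\R$. Thus every $\p^\beta V \in C^\infty_b(\R^2)$, with bounds independent of $\beta$ that are finite for each $\beta$.

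First I would treat pure time derivatives. Both $\p_t \tF = \lambda(\tF)\sin\tG$ and $\p_t \tG = -\lambda'(\tF)\cos\tG$ are bounded directly by substitution; higher $\p_t^k$-derivatives follow inductively by differentiating \eqref{eq-3r} in $t$ and applying the chain rule, which only reshuffles bounded quantities.

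Next, for first-order derivatives in the initial data, the Jacobian $J_t \de d_{(x,\zeta)}(\tF,\tG)(t,x,\zeta)$ satisfies the linear variational equation
\begin{equation}
\dot J_t = (dV)(\tF,\tG)\, J_t, \qquad J_0 = \Id.
\end{equation}
Since $\|dV\|_{L^\infty(\R^2)} < \infty$, Gronwall's inequality immediately gives $\|J_t\| \leq e^{t\|dV\|_\infty}$, uniformly in $(x,\zeta) \in \R^2$. For a general multi-index $\alpha$ in $(x,\zeta)$ with $|\alpha| = k \geq 2$, differentiating \eqref{eq-3r} and unfolding via Fa\`a di Bruno gives
\begin{equation}
\p_t \p^\alpha (\tF,\tG) \ = \ (dV)(\tF,\tG) \cdot \p^\alpha (\tF,\tG) \ + \ R_\alpha(t,x,\zeta),
\end{equation}
where $R_\alpha$ is a polynomial in the partial derivatives of $(\tF,\tG)$ of order $\leq k-1$, with coefficients given by partial derivatives of $V$ evaluated along the flow. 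The inductive hypothesis bounds the former, and the boundedness of $\p^\beta V$ bounds the latter; hence $R_\alpha$ is uniformly bounded for $t$ in compact sets and $(x,\zeta) \in \R^2$. A second application of Gronwall controls $\p^\alpha(\tF,\tG)$. Mixed space-time derivatives are handled by commuting $\p_t^j$ with $\p^\alpha$, which is justified by the smoothness of the flow.

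The main obstacle is organizational rather than analytic: keeping the Fa\`a di Bruno combinatorics tidy enough that at each order the forcing $R_\alpha$ only depends on quantities already controlled. No genuine analytic difficulty appears because the change of variable $\xi \leftrightarrow \zeta$ has replaced the unbounded factor $\sqrt{1+\xi^2}$ in \eqref{eq-3q} by the globally bounded $\cos\zeta, \sin\zeta$; this is precisely what makes the bounds uniform in $(x,\zeta) \in \R^2$ rather than merely locally uniform.
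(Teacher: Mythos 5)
Your proposal is correct and follows essentially the same route as the paper: an induction on the derivative order via the (linearized) variational equations, with the forcing term at each order controlled by lower-order derivatives and the uniform boundedness of the compactified vector field and all its derivatives, closed by Gronwall's inequality. The paper merely treats $(t,x,\zeta)$-derivatives jointly in a single recursion rather than separating time, space, and mixed derivatives, but the substance is identical.
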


\begin{proof} 1. We write \eqref{eq-3r} in the form
\begin{equation}\label{eq-3u}
\dot{z_t} = \Phi(z_t), \qquad z_t = (x_t,\zeta_t), \qquad \Phi(x,\zeta) = \matrice{\lambda(x) \sin \zeta \\ -\lambda'(x) \cos \zeta}. 
\end{equation}
We note that $\Phi$ is bounded together with all its derivatives.  We now take $\alpha$ partial derivatives with respect to $(t,z)$ of \eqref{eq-3u}. A recursion yields
\begin{equation}\label{eq-3s}
\p^\alpha \dot{z_t} = p_\alpha  + \nabla \Phi(z_t) \cdot \p^\alpha z_t
\end{equation}
where $p_\alpha$ denotes a polynomial in the quantities $\p^\beta \Phi(z_t)$ and $\p^\gamma z_t$ for $|\beta| \leq |\alpha|$ and $|\gamma| \leq |\alpha|-1$. Note that $p_\alpha = 0$ when $|\alpha| = 1$.

2. We now show the lemma by strong recursion on $n=|\alpha|$. For $n=|\alpha| = 1$, we use \eqref{eq-3s}, $\nabla \Phi$ uniformly bounded, and $p_0 = 0$ to obtain
\begin{equation}\label{eq-3t}
\big| \p^\alpha \dot{z_t} \big| \leq C |\p^\alpha z_t|,  \qquad \big| \p^\alpha z_0 \big| = 1. 
\end{equation}
Set $r_t =  |\p^\alpha z_t|^2$. From \eqref{eq-3t} we deduce that $\dot{r_t} \leq C r_t$, $r_0 = 1$. Gronwall's lemma yields $r_t \leq e^{Ct}$. Therefore $\p^\alpha z_t$ is uniformly bounded for $t$ in compact sets and  $(x,\zeta) \in \R^2$.

We now assume that the property holds for all $\alpha$ of length smaller than some $n \geq 1$ and we prove it when $|\alpha| = n+1$. According to the recursion and $f$ bounded together with all its derivatives, the term $p_\alpha$ in \eqref{eq-3s} is uniformly bounded. We deduce that 
\begin{equation}
\big| \p^\alpha \dot{z_t} \big| \leq C + C |\p^\alpha z_t|,  \qquad \p^\alpha z_0 =  0. 
\end{equation}
Again setting $r_t =  |\p^\alpha z_t|^2$ we obtain $\dot{r_t} \leq C + C r_t$ with $r_0 = 0$. An application of Gronwall's lemma gives $r_t \leq e^{Ct}$. This completes the recursion. \end{proof}

\begin{lemma}\label{lem:1d} For every $(t,\zeta) \in \R^2$, the map $x \mapsto \tF(t,x,\zeta)$ has range $\R$. Moreover, there exists $T \in (0,1)$ such that for all $(t,x,\zeta) \in (0,T) \times \R^2$, $\big|\p_x\tF(t,x,\zeta) \big| \geq 1/2$.
\end{lemma}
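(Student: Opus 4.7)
The plan is to derive both conclusions directly from Lemma \ref{lem:1h}, which furnishes uniform bounds on all positive-order partial derivatives of $\tF$ and $\tG$ over compact time intervals. I do not expect to need any further structural analysis of the ODE \eqref{eq-3r} or its conserved quantities; both claims should come out as immediate corollaries.

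For the lower bound on $|\p_x \tF|$, the key observation is that $\tF(0,x,\zeta) = x$, whence $\p_x \tF(0,x,\zeta) = 1$. Lemma \ref{lem:1h} applied to the multi-index $\alpha = (1,1,0)$ (that is, one derivative in $t$ and one in $x$) produces a constant $C > 0$ such that $|\p_t \p_x \tF(t,x,\zeta)| \leq C$ uniformly for $t \in [0,1]$ and $(x,\zeta) \in \R^2$. Integrating $\p_t(\p_x \tF)$ from $0$ to $t$ then gives $|\p_x \tF(t,x,\zeta) - 1| \leq Ct$, so setting $T \de \min(1,\,1/(2C))$ forces $\p_x \tF \geq 1/2$ throughout $(0,T) \times \R^2$, and in particular the absolute-value bound stated in the lemma.

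For the range statement, I would fix $(t,\zeta) \in \R^2$ and apply Lemma \ref{lem:1h} to $\alpha = (1,0,0)$ to obtain a constant $C_t > 0$ (depending on $|t|$) with $|\p_s \tF(s,x,\zeta)| \leq C_t$ uniformly for $s \in [0,|t|]$ and $(x,\zeta) \in \R^2$. Integrating from $0$ to $t$ and using $\tF(0,x,\zeta) = x$ yields the pincer bound $x - C_t |t| \leq \tF(t,x,\zeta) \leq x + C_t |t|$, with $t$ and $\zeta$ now held fixed. This forces $\tF(t,x,\zeta) \to \pm \infty$ as $x \to \pm \infty$. Since $x \mapsto \tF(t,x,\zeta)$ is continuous, the intermediate value theorem then gives that its range is all of $\R$.

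No substantive obstacle is anticipated: both parts of the lemma are essentially repackagings of Lemma \ref{lem:1h} combined with the initial condition $\tF(0,x,\zeta) = x$. The only minor care point is to extract the two distinct uniform bounds one needs, namely on $\p_t \tF$ (for the surjectivity argument) and on $\p_t \p_x \tF$ (for the short-time derivative bound), from the single uniform-derivative statement of Lemma \ref{lem:1h}.
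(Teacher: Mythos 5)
Your proposal is correct and follows essentially the same route as the paper: the bound $|\p_x\tF(t,x,\zeta)-1|\leq Ct$ via Lemma \ref{lem:1h} applied to $\p_t\p_x\tF$ together with $\p_x\tF(0,x,\zeta)=1$ is exactly the paper's argument, and the surjectivity argument is the same up to the cosmetic difference that the paper bounds $\p_t\tF$ directly by $|\lambda|_\infty$ from the ODE \eqref{eq-3r} rather than invoking Lemma \ref{lem:1h} for the first-order time derivative.
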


\begin{proof} The range of $\tF(t,\cdot,\zeta)$ is an interval; therefore we have first to show that its supremum and infimum are $\pm \infty$. From \eqref{eq-3r},
\begin{equation}
\big| \tF(t,x,\zeta) -\tF(0,x,\zeta) \big| \leq |\lambda|_\infty t.
\end{equation}
Taking infimum and supremum over $x \in \R$ implies that $\tF(t,\cdot,\zeta)$ has range $\R$. Thanks to Lemma \ref{lem:1h}, $\p_t \p_x \tF \in C^\infty_b((0,1) \times \R^2)$. Moreover, $\p_x \tF(0,x,\xi) = \p_x x = 1$. Therefore,
\begin{equation}\label{eq-4s}
\big| \p_x \tF(t,x,\zeta) -1 \big| \leq t \cdot  \sup_{(0,1) \times \R^2} \big| \p_t \p_x \tF(t,x,\zeta) \big| \leq Ct.
\end{equation}
It suffices to restrict $t$ to sufficiently small values to deduce the lower bound on $\p_x \tF$.
\end{proof}

We recall that $\Omega_0 = (0,T) \times \R$ and $\Omega = (0,T) \times \R^2$.  Lemma \ref{lem:1d} implies that for any $(t,\zeta) \in \Omega_0$, the map $\tF(t,\cdot,\zeta)$ is a smooth diffeomorphism of $\R$. Its inverse $\tH(t,\cdot,\zeta)$ satisfies the relations:
\begin{equation}\label{eq-11a}
\p_t \tH = -\dfrac{\p_t \tF(t,\tH,\xi)}{\p_x \tF(t,\tH,\xi)}, \qquad \p_x \tH = -\dfrac{1}{\p_x \tF(t,\tH,\xi)}, \qquad \p_\zeta \tH = -\dfrac{\p_\zeta \tF(t,\tH,\xi)}{\p_x \tF(t,\tH,\xi)}.
\end{equation}
Thanks to Lemmas \ref{lem:1h} and \ref{lem:1d}, all derivatives of $\tH$ are in $C^\infty_b(\Omega)$.

We now set $H(t,x,\xi) = \tH(t,x,\arctan \xi)$, which satisfies $F(t,H(t,x,\xi),\xi) = x$. This implies that for $(t,\xi) \in (-T,T) \times \R$, $F(t,\cdot,\xi)$ is a smooth diffeomorphism of $\R$, with inverse $H(t,\cdot,\xi)$. We stress that since all derivatives of $\tH$ are bounded on $\Omega$, so are all derivatives of $H$.

\subsection{Eikonal equation}\label{sec-4.3} We recall that $\Omega= [0,T) \times \R^2$ and that $x_t^\pm$ are given by \eqref{eq-8t}, or equivalently \eqref{eq-8u}. We define
\begin{equation}
I_t \de \big( x_t^-, x_t^+ \big).
\end{equation}

\begin{theorem}\label{thm:1f} After potentially shrinking $T$:
\begin{itemize}
\item[(i)] The equation
\begin{equation}\label{eq-4a}
\p_t \varphi + p(x,\p_x \varphi) = 0, \qquad \varphi(0,x,\xi) = x\xi
\end{equation}
admits a solution $\varphi \in C^\infty(\Omega)$.

\item[(ii)] For all $\alpha \in \N^2, m \in \N$, there exists $C > 0$ with
\begin{equation}\label{eq-6r}
\left| \p_{x,t}^\alpha\p_\xi^m \big( \varphi(t,x,\xi) - x\xi\big) \right| \leq C \lr{\xi}^{1-m}, \qquad (t,x,\xi) \in \Omega. 
\end{equation}

\item[(iii)] There exists $c > 0$ such that:
\begin{equations}\label{eq-7p}
\big|\p_\xi \varphi(t,x,\xi)\big| \geq c \big( t \lr{\xi}^{-2} + \dist( x , I_t) \big), \qquad (t,x,\xi) \in \Omega, \qquad x \notin I_t;
\\
\p_\xi^2 \vp(t,x,\xi) \leq -c \cdot t \lr{\xi}^{-3}, \qquad (t,x,\xi) \in \Omega.
\end{equations} 
\end{itemize}
\end{theorem}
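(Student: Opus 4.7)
The plan is to construct $\varphi$ via the method of characteristics, leveraging the flow analysis of Section \ref{sec-4.2}. Since the initial datum $\varphi(0,x,\xi) = x\xi$ satisfies $\p_x \varphi|_{t=0} = \xi$, a characteristic of $p(x,\xi) = \lambda(x)\sqrt{1+\xi^2}$ starting at $(x_0,\xi)$ at time $0$ reaches $(F(t,x_0,\xi),G(t,x_0,\xi))$ at time $t$, carrying forward $\xi$ as a label for the initial momentum. Lemma \ref{lem:1d} supplies, for $t \in [0,T)$, a smooth inverse $H(t,\cdot,\xi)$ of $F(t,\cdot,\xi)$, so I would define
\begin{equation}
\varphi(t,x,\xi) \de \xi \cdot H(t,x,\xi) - \int_0^t \dfrac{\lambda(x_s)}{\sqrt{1+\xi_s^2}}\,ds,
\end{equation}
where $(x_s,\xi_s)$ is the characteristic with $(x_0,\xi_0) = (H(t,x,\xi),\xi)$. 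A direct computation using $\xi_s \p_\xi p(x_s,\xi_s) - p(x_s,\xi_s) = -\lambda(x_s)/\sqrt{1+\xi_s^2}$ verifies that this $\varphi$ solves \eqref{eq-4a}, which proves (i).

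For (ii) the pivotal identity is $\p_\xi \varphi(t,x,\xi) = H(t,x,\xi)$. This follows because differentiating \eqref{eq-4a} in $\xi$ produces a transport equation for $\p_\xi \varphi$ along the same characteristics, with initial value $x_0$; hence $\p_\xi\varphi$ is constant equal to $x_0 = H(t,x,\xi)$ along each. Using the compactification $\zeta = \arctan \xi$, the functions $F,H$ are smooth restrictions of the globally bounded $\tilde F,\tilde H$ of Section \ref{sec-4.2}. Since each $\xi$-derivative corresponds to a $\zeta$-derivative times $(1+\xi^2)^{-1}$, Lemma \ref{lem:1h} yields $|\p_{x,t}^\alpha \p_\xi^m H| = O(\lr{\xi}^{-m-1})$ for $m \geq 1$, together with $|H - x| = O(t)$. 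Writing $\varphi - x\xi = (H-x)\xi - \int_0^t \lambda(x_s)/\sqrt{1+\xi_s^2}\,ds$, applying the Leibniz rule, and noting that the integral term is bounded with $\xi$-derivatives decaying in $\lr{\xi}$ (by the same compactification argument applied to $x_s,\xi_s$), produces the estimates \eqref{eq-6r}.

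For (iii), the first bound reduces to a lower bound on $|H(t,x,\xi)|$ when $x \notin I_t$. Lemma \ref{lem:1d} gives $\p_x F \in [1/2,2]$ and hence $\p_x H \in [1/2,2]$ after possibly shrinking $T$, while $H(t,F(t,0,\xi),\xi) = 0$, so
\begin{equation}
|H(t,x,\xi)| \geq \tfrac{1}{2} | x - F(t,0,\xi) | \geq \tfrac{1}{2} \bigl( \dist(x,I_t) + |x_t^\pm - F(t,0,\xi)| \bigr).
\end{equation}
A short-time expansion of $\tilde F(t,0,\cdot)$ based on $\p_t \p_\zeta \tilde F|_{t=0} = \lambda(0)\cos\zeta$ gives, for $\xi\geq 0$,
\begin{equation}
x_t^+ - F(t,0,\xi) = \int_{\arctan\xi}^{\pi/2} \p_\zeta \tilde F(t,0,\zeta)\,d\zeta \geq \tfrac{t\lambda(0)}{2}\Bigl(1 - \tfrac{\xi}{\sqrt{1+\xi^2}}\Bigr) \geq c\,t\lr{\xi}^{-2},
\end{equation}
and the symmetric estimate involving $x_t^-$ handles $\xi<0$. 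The bound on $\p_\xi^2\varphi = (1+\xi^2)^{-1}\p_\zeta\tilde H$ comes from the parallel expansion $\p_\zeta\tilde H(t,x,\zeta) = -t\lambda(x)\cos\zeta + O(t^2)$, obtained from $\p_\zeta\tilde H = -\p_\zeta\tilde F/\p_{x_0}\tilde F$ and $\p_\zeta \tilde H|_{t=0} = 0$; dividing by $1+\xi^2$ yields the required $-ct\lr{\xi}^{-3}$ upper bound.

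The main obstacle will be guaranteeing that the $O(t^2)$ remainders in the two short-time expansions do not swallow the leading $O(t)$ terms uniformly over $\xi \in \R$ (equivalently, uniformly as $\zeta \to \pm\pi/2$) and over $x \in \R$. This is precisely the content of Lemma \ref{lem:1h}: every higher derivative of $\tilde F$ and $\tilde H$ is globally bounded on $[0,T)\times\R^2$, so the remainder constants do not depend on the compactified momentum. Shrinking $T$ once more absorbs these remainders, simultaneously enforces $\p_x\tilde F\geq 1/2$, and delivers all the uniform bounds in \eqref{eq-7p}.
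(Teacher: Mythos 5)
Your treatment of (i) and (ii) is essentially the paper's own: the formula $\varphi(t,x,\xi) = \xi H(t,x,\xi) - \int_0^t \lambda(x_s)\lr{\xi_s}^{-1}\,ds$ is exactly \eqref{eq-5b}, the identity $\p_\xi\varphi = H$ is derived there by the same transport argument, and the decay estimates rest on the same compactification $\zeta=\arctan\xi$ together with Lemma \ref{lem:1h}. Those parts are fine.

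The gap is in (iii), precisely at the point you label ``the main obstacle'' and claim to dispose of by shrinking $T$. The expansions you invoke, $\p_\zeta\tH(t,x,\zeta) = -t\lambda(x)\cos\zeta + O(t^2)$ and $\p_\zeta\tF(t,0,\zeta) = t\lambda(0)\cos\zeta + O(t^2)$, have a leading term that degenerates as $\zeta\to\pm\pi/2$ while the naive Taylor remainder $O(t^2)$ (uniform, by Lemma \ref{lem:1h}) does not. So for any fixed $T$ and any $t\in(0,T)$, the remainder dominates as soon as $\cos\zeta\lesssim t$, i.e.\ $\lr{\xi}\gtrsim t^{-1}$; shrinking $T$ cannot repair this, because the comparison is between $t^2$ and $t\cos\zeta$ with $\cos\zeta\to0$ at fixed $t$. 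Concretely, in your bound $x_t^+ - F(t,0,\xi) = \int_{\arctan\xi}^{\pi/2}\p_\zeta\tF(t,0,\zeta)\,d\zeta$ the main term has size $t\lr{\xi}^{-2}$ while the remainder contributes $O(t^2\lr{\xi}^{-1})$, which is larger once $\xi\gtrsim 1/t$; the same failure occurs for your bound on $\p_\xi^2\varphi$. What rescues the estimate — and what the paper's proof of Lemma \ref{lem:1k} does in its Step 4 — is the structural fact that the lines $\zeta=\pm\pi/2$ are invariant under the compactified flow, so $\p_\zeta\tF(t,x,\pm\pi/2)=0$ and $\p_\zeta\tH(t,x,\pm\pi/2)=0$ for all $t$; a third-order Taylor expansion in $\zeta$ about $\pm\pi/2$ (using $\p_t\p_\zeta^2\tH(0,x,\pi/2)=\lambda(x)$ and $\p_t\p_\zeta^3\tH(0,x,\pi/2)=0$) then produces a remainder carrying an extra factor of $\cos\zeta$, namely $|\p_\zeta\tH(t,x,\zeta)+t\lambda(x)\cos\zeta|\leq C\cos\zeta(\cos^2\zeta+t^2)$, which yields the bound in the regime $t\geq C\cos^2\zeta$; this regime is then combined with your regime $\cos\zeta\geq Ct$ to cover a full strip of times (Figure \ref{fig:1}). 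Your proposal is missing this second regime, so neither inequality in \eqref{eq-7p} is proved as written. For the first bound the cleanest repair is the paper's: once $\p_\xi^2\varphi\leq -ct\lr{\xi}^{-3}$ is established, write $\p_\xi\varphi(t,x,\xi)$ as $\int_{x_t^\pm}^x\p_{x\xi}^2\varphi\,ds$ plus $-\int_\xi^{\pm\infty}\p_\xi^2\varphi(t,x_t^\pm,s)\,ds$ and use $\p_{x\xi}^2\varphi = 1+O(t)$, rather than a separate expansion of $F(t,0,\cdot)$.
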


The geometric approach to solving \eqref{eq-4a} consists of showing that the manifold
\begin{equation}\label{eq-4b}
\big\{ (t,x_t,\tau,\xi_t) : \ (x_t,\xi_t) = e^{tH_p}(x,\xi), \ \tau + p(x_t,\xi_t) =0, \ (t,x,\xi) \in \Omega  \big\} \subset T^*\R^2,
\end{equation}
which is Lagrangian, projects diffeomorphically to $\Omega_0 \times \{0\}^2$; see e.g. \cite[\S5]{GS94}. This holds in our case because $F(t,\cdot,\xi)$ is invertible. The Lagrangian manifold \eqref{eq-4b} admits then a smooth generating function $\varphi \in C^\infty(\Omega)$; that is, there exists $\varphi \in C^\infty(\Omega)$ such that for $(t,x,\xi) \in \Omega$:
\begin{equation}\label{eq-4c}
(x_t,\xi_t) = e^{tH_p}(x,\xi) \quad \Rightarrow \quad \p_t \varphi(t,x_t,\xi) + p(x_t,\xi_t) = 0, \quad
\p_x \varphi(t,x_t,\xi) = \xi_t. 
\end{equation} 
In particular, $\p_x \varphi(0,x,\xi) = \xi$, therefore 
\begin{equation}
\varphi(0,x,\xi) = x\xi + \varphi(0,0,\xi). 
\end{equation}
We can assume without loss of generalities that $\varphi(0,0,\xi) = 0$, and therefore $\varphi$ is a solution to \eqref{eq-4a}. It admits a representation in terms of $(x_t,\xi_t)$: by computing the time-derivative of $\varphi(t,x_t,\xi)$ then integrating,
\begin{equation}\label{eq-4d}
\varphi(t,x_t,\xi) = x\xi - p(x,\xi) t + \int_0^t \xi_\tau \dd{p}{\xi}(x_\tau,\xi_\tau) d\tau = x\xi + \int_0^t \left(\xi_\tau \dd{p}{\xi}(x_\tau,\xi_\tau) - p(x_\tau,\xi_\tau)\right) d\tau.
\end{equation}
In the second equality we used that $p(x,\xi) = p(x_\tau,\xi_\tau)$ for any $\tau$. A direct computation shows that $\xi \p_\xi p(x,\xi) - p(x,\xi) = -\lambda(x) \lr{\xi}^{-1} = -\lambda(x)^2 p(x,\xi)^{-1}$. Therefore, \eqref{eq-4d} simplifies to
 \begin{equation}
\varphi(t,x_t,\xi) = x\xi - \dfrac{1}{p(x,\xi)}\int_0^t \lambda(x_\tau)^2 ds = x\xi - \dfrac{1}{\lambda(x) \sqrt{1+\xi^2}}\int_0^t \lambda^2 \circ F(\tau,x,\xi) d\tau.
\end{equation}
Using that $H(t,x_t,\xi) = x$, we obtain a semi-explicit formula for $\varphi$ on $\Omega$, proving in particular part (i) of Theorem \ref{thm:1f}:
\begin{equation}\label{eq-5b}
\varphi(t,x,\xi) = H(t,x,\xi) \xi - \dfrac{1}{\sqrt{1+\xi^2} \cdot \lambda \circ H(t,x,\xi) }\int_0^t \lambda^2 \circ F\big(s,H(t,x,\xi),\xi\big) ds.
\end{equation}

We now prove part (ii):

\begin{lemma}\label{lem:1i} The solution $\varphi$ to \eqref{eq-4a} %is polyhomogeneous of degree one on $X$: \begin{equation} \varphi(t,x,\xi)  \sim F_\pm(t,x) \xi + \sum_{\ell=0}^\infty \varphi_{\ell,\pm}(t,x) \xi^{-\ell}, \qquad \varphi_{\ell,\pm} \in C^\infty_b(X_0), \qquad \xi \rightarrow \pm \infty. \end{equation} Moreover, $\varphi$ 
satisfies the  bound \eqref{eq-6r}.
\end{lemma}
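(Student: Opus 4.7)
The plan is to estimate \eqref{eq-6r} directly from the semi-explicit formula \eqref{eq-5b}, using that all derivatives of the compactified flow $\tF,\tH$ are uniformly bounded on $[0,T]\times\R^2$ (Lemma \ref{lem:1h} and the subsequent discussion), and that $\tH(t,x,\zeta)-x$ is itself bounded (by $|\lambda|_\infty t$, via the same estimate used on $\tF-x$). The key elementary observation is a chain-rule bound: if $g\in C^\infty_b([0,T]\times\R^2)$, then $G(t,x,\xi)\de g(t,x,\arctan\xi)$ satisfies
\[
\big|\p_{x,t}^\alpha\p_\xi^m G(t,x,\xi)\big|\leq C_{\alpha,m}\lr{\xi}^{-m-1}\quad (m\geq 1),\qquad \big|\p_{x,t}^\alpha G(t,x,\xi)\big|\leq C_\alpha.
\]
This is a Faà di Bruno argument: each pure $\xi$-derivative writes as a sum of products of bounded $\zeta$-derivatives of $g$ evaluated at $\arctan\xi$ times products of derivatives $\p_\xi^j\arctan\xi=O(\lr{\xi}^{-j-1})$; any such product with $k\geq 1$ factors summing to $m$ derivatives of $\arctan\xi$ is of size $\lr{\xi}^{-\sum(j_i+1)}=\lr{\xi}^{-m-k}\leq\lr{\xi}^{-m-1}$, and the $(t,x)$-derivatives simply differentiate bounded prefactors.

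I would then split
\[
\varphi(t,x,\xi)-x\xi\ =\ \xi\cdot\bigl(H(t,x,\xi)-x\bigr)\ -\ \frac{1}{\sqrt{1+\xi^2}\,\lambda\circ H(t,x,\xi)}\int_0^t\lambda^2\circ F\bigl(s,H(t,x,\xi),\xi\bigr)\,ds.
\]
For the first summand, $H-x=\tH(t,x,\arctan\xi)-x$ satisfies the chain-rule bound (with $g=\tH-x\in C^\infty_b$), so Leibniz in the factor $\xi$ yields
\[
\p_{x,t}^\alpha\p_\xi^m[\xi(H-x)] = \xi\,\p_{x,t}^\alpha\p_\xi^m(H-x)+m\,\p_{x,t}^\alpha\p_\xi^{m-1}(H-x) = O(\lr{\xi}^{1-m}),
\]
with the case $m=0$ handled directly by boundedness of $H-x$. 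For the second summand, both $(\lambda\circ H)^{-1}$ (bounded by the implicit uniform positive lower bound on $\lambda$) and $\lambda^2\circ F(s,H(t,x,\xi),\xi) = \lambda^2\circ\tF(s,\tH(t,x,\arctan\xi),\arctan\xi)$ satisfy the chain-rule bound; since $(1+\xi^2)^{-1/2}$ and its $\xi$-derivatives are of order $\lr{\xi}^{-m-1}$, Leibniz yields decay $O(\lr{\xi}^{-m-1})$, strictly stronger than required.

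The whole argument reduces to bookkeeping with Leibniz and Faà di Bruno; the one nontrivial point is the chain-rule estimate above, in which the decay gained from the $\arctan$-compactification precisely matches the loss from multiplying by $\xi$, producing the asymmetric power $\lr{\xi}^{1-m}$ in \eqref{eq-6r}. No new technical obstacle arises beyond the careful pairing of these two mechanisms.
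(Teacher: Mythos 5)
Your proof is correct and follows essentially the paper's own route: both arguments start from the semi-explicit formula \eqref{eq-5b} (equivalently its compactified form) and exploit the uniform boundedness of all derivatives of $\tF,\tH$ together with the decay $\p_\xi^j \arctan\xi = O(\lr{\xi}^{-j-1})$. The only difference is organizational — the paper subtracts the limiting value $F_-(t,x)=\tH(t,x,\pi/2)$ to isolate a remainder that is $C^\infty_b$ in $(t,x,\arctan\xi)$, whereas you run Leibniz and Fa\`a di Bruno directly on $\xi\,(H-x)$ and on the integral term — and your bookkeeping has the merit of making the $\lr{\xi}^{1-m}$ decay for $m\geq 2$ fully explicit.
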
 

\begin{proof} 1. We define $\tvarphi \in C^\infty\big( (0,T) \times \R \times (-\pi/2,\pi/2) \big)$ by
\begin{equation}\label{eq-4i}
\tvarphi(t,x,\zeta) = \tH(t,x,\zeta) \tan \zeta - \dfrac{\cos \zeta}{\lambda \circ H(t,x,\zeta)}\int_0^t \lambda^2 \circ \tF\big(s,\tH(t,x,\zeta),\zeta\big) ds.
\end{equation}
The heuristic behind this definition comes from the relations \eqref{eq-3y} between $F, \tF, H$ and $\tH$. They connect $\varphi$ and $\tvarphi$ through
\begin{equation}\label{eq-4e}
\varphi(t,x,\xi) = \tvarphi(t,x,\arctan \xi), \qquad (t,x,\xi) \in X.
\end{equation}

2. We focus on the first term in \eqref{eq-4i}. Using $\tH(t,x,\pi/2) = F_-(t,x)$, we obtain:
\begin{equations}\label{eq-4j}
\tH(t,x,\zeta) \tan \zeta = \tH\left(t,x,\dfrac{\pi}{2}\right) \tan \zeta + \left(\tH(t,x,\zeta) - \tH\left(t,x,\dfrac{\pi}{2}\right)\right) \tan \zeta 
= F_-(t,x) \tan \zeta + \psi(t,x,\zeta), \\
 \qquad \psi(t,x,\zeta)  \de
\left( \tH(t,x,\zeta) -  F_-(t,x)\right) \tan \zeta = \left( \zeta-\dfrac{\pi}{2} \right)\tan \zeta \cdot \int_0^1 \dd{\tH}{\zeta}\left(s,x,\te \zeta + (1-\te) \dfrac{\pi}{2}\right) d\te.
\end{equations}
The identities defining $\psi$ show that it is a function on $(-T,T) \times \R \times (-\pi/2,\pi/2)$ that extends smoothly to $(-T,T) \times \R \times (-\pi/2,3\pi/2)$. Moreover, since $\p_\zeta\tH$ is uniformly bounded on $\Omega = \Omega_0 \times \R$, $\psi$ and all its derivatives are uniformly bounded on e.g. $\Omega_0 \times (0,\pi)$. 

3. The second term in \eqref{eq-4i} defines a function in $C^\infty_b(X)$, because $\lambda \in C^\infty_b(\R)$ and all derivatives of $\tF, \tH$ are uniformly bounded on $X$. Thanks to Step 2 and the relation \eqref{eq-4e}, we deduce that we can we write
\begin{equation}\label{eq-4r}
\tvarphi(t,x,\zeta) = F_+(t,x) \tan \zeta + R(t,x,\zeta), \qquad \varphi(t,x,\xi) = F_-(t,x) \xi + R(t,x,\arctan \xi),
\end{equation}
where $R \in C^\infty_b\big( \Omega_0 \times (0,\pi) \big)$. 
%We deduce that $\varphi$ admits a polyhomogeneous expansion of degree $1$ as $\xi \rightarrow +\infty$: \begin{equation} \varphi(t,x,\xi) \sim F_-(t,x) \xi + \sum_{\ell=0}^\infty \varphi_{\ell,+}(t,x) \xi^{-\ell}, \end{equation} where the coefficients $\varphi_{\ell,+}$ are obtained from linear operations on quantities $\p^\alpha R(t,x,\pi/2)$. In particular, $\varphi_{\ell,+} \in C^\infty_b(X_0)$. The same analysis works for $\xi \in \R^-$ produces the polyhomogeneous expansion near $\xi = -\infty$.
 To prove the bound \eqref{eq-6r}, we write, thanks to \eqref{eq-4r}:
\begin{equation}\label{eq-6s}
\varphi(t,x,\xi) - x\xi = \xi\big( F_-(t,x) - x \big) + R(t,x,\arctan \xi) = \xi \int_0^t \dd{\tF(s,x,\pi/2)}{s} ds +  R(t,x,\arctan \xi).
\end{equation}
The above integral defines a function in $C^\infty_b(\Omega_0)$ because derivatives of $\tF$ are bounded, see Lemma \ref{lem:1h}. Moreover $R \in C^\infty_b\big( \Omega_0 \times (0,\pi) \big)$, therefore $R(t,x,\arctan \xi)$ defines an expression in $C^\infty_b\big(\Omega_0 \times \R^+\big)$. It follows that $\p_\xi (\varphi(t,x,\xi) - x\xi)$ satisfies the bound \eqref{eq-6r} for $(t,x,\xi) \in \Omega_0 \times \R^+$. A similar approach yields the result with $\R^-$ replacing $\R^+$.
\end{proof}

In Lemmas \ref{lem:1k}-\ref{lem:1g} below, we prove stationary and non-stationary estimates on $\vp$. These will serve to prove localization and spreading estimates for superposition of WKB waves (see \S\ref{sec-5}).

\begin{lemma}\label{lem:1k} After potentially shrinking $T$, there exists $c > 0$ such that for $(t,x,\xi) \in X$:
\begin{equation}\label{eq-4q}
\p_\xi^2 \vp(t,x,\xi) \leq - ct \lr{\xi}^{-3}.
\end{equation}
\end{lemma}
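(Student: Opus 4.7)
\textbf{Proof plan for Lemma \ref{lem:1k}.} My plan is to identify $\p_\xi \varphi$ with the position component of the inverse bicharacteristic flow, which will reduce the claim to a quantitative statement about the classical flow. By the theory of type-II generating functions, together with the relation $\p_x\varphi(t,x_t,\xi) = \xi_t$ from \eqref{eq-4c} and the canonical nature of the bicharacteristic map $(x_0,\xi_0)\mapsto(x_t,\xi_t)$, I expect that
\[
\p_\xi \varphi(t,x,\xi) \,=\, H(t,x,\xi).
\]
I would either establish this directly from \eqref{eq-5b} (the seemingly complicated $\xi$-dependence through $H$ and the integral collapses, as expected from Hamilton's principle) or invoke general HJ/symplectic theory. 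Granting this, $\p_\xi^2\varphi = \p_\xi H$, and the claim reduces to $\p_\xi H(t,x,\xi) \leq -c\,t\,\lr{\xi}^{-3}$.

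Next, from $F(t,H(t,x,\xi),\xi) = x$, the implicit function theorem gives $\p_\xi H = -\p_\xi F / \p_{x_0} F$ with both factors evaluated at $(t, H(t,x,\xi),\xi)$. By Lemma \ref{lem:1d}, $\p_{x_0}F \geq 1/2$ uniformly for $t \in (0,T)$ after possibly shrinking $T$, so I would reduce the task to the uniform lower bound
\[
\p_\xi F(t,x_0,\xi) \,\geq\, c\, t\, \lr{\xi}^{-3}, \qquad (t,x_0,\xi)\in\Omega.
\]
The key structural observation is that, via the compactification $\zeta_0 = \arctan \xi$, one has $\p_\xi F(t,x_0,\xi) = \lr{\xi}^{-2}\, \p_\zeta \tF(t,x_0,\zeta_0)$, so the bound becomes $\p_\zeta \tF(t,x_0,\zeta_0) \geq c\, t \cos\zeta_0$ uniformly on $(0,T) \times \R \times (-\pi/2,\pi/2)$.

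The last step would derive this estimate from the vanishing of $\p_\zeta \tF$ on two transverse surfaces: $\p_\zeta \tF|_{t=0} = 0$ (initial data) and $\p_\zeta \tF(t,x_0,\pm\pi/2) \equiv 0$. The second identity is the crux: at $\zeta_0 = \pi/2$, the curve $\zeta_t \equiv \pi/2$ is invariant under \eqref{eq-3r}, so the Jacobi component $\tilde y_t = \p_\zeta \tF(t,x_0,\pi/2)$ satisfies the scalar ODE $\dot{\tilde y}_t = \lambda'(x_t)\tilde y_t$ with $\tilde y_0 = 0$, forcing $\tilde y_t \equiv 0$; the argument at $-\pi/2$ is identical. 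Since $\p_\zeta \tF$ is smooth with uniformly bounded derivatives on $[0,T] \times \R \times [-\pi/2,\pi/2]$ by Lemma \ref{lem:1h}, I would apply iterated Hadamard division to produce a smooth bounded $B$ with $\p_\zeta \tF(t,x_0,\zeta_0) = t\cos\zeta_0 \cdot B(t,x_0,\zeta_0)$. A direct computation $\p_t\p_\zeta \tF|_{t=0} = \lambda(x_0)\cos\zeta_0$ identifies $B(0,x_0,\zeta_0) = \lambda(x_0) \geq c_0 > 0$; shrinking $T$ so that $|B - B|_{t=0}| \leq c_0/2$ yields $B \geq c_0/2$ and hence the desired lower bound.

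The hardest step will be the division argument and the boundary behavior at $\zeta_0 = \pm\pi/2$, where both numerator and denominator of the intended quotient $B$ vanish; the compactified flow \eqref{eq-3r} is essential here and shows that $\p_\zeta \tF$ vanishes to exactly first order in $\cos\zeta_0$ via the Jacobi-field linearization sketched above. This first-order vanishing is precisely what produces the sharp scaling $\lr{\xi}^{-3}$ in the final estimate.
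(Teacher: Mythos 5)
Your proposal is correct, and its skeleton matches the paper's: the identity $\p_\xi \varphi = H$, the passage to the compactified flow \eqref{eq-3r}, the invariance of $\{\zeta = \pm\pi/2\}$ forcing the Jacobi component $\p_\zeta \tF(t,x,\pm\pi/2)$ to vanish identically, and the mixed-derivative computation $\p_t\p_\zeta \tF(0,x,\zeta) = \lambda(x)\cos\zeta$ are exactly the ingredients used there. Where you differ is in the organization of the endgame, and in which map you estimate. The paper works on the inverse map $\tH$ (transferring information from $\tF$ via implicit differentiation of $\tF(t,\tH,\zeta)=x$, as in \eqref{eq-4m}) and then patches two Taylor expansions: one in $t$ at fixed $\zeta$, valid when $\cos\zeta \gtrsim t$, and a third-order expansion in $\zeta$ near $\pm\pi/2$ (requiring the extra computations $\p_t\p_\zeta^2\tH(0,x,\pi/2)=\lambda(x)$, $\p_t\p_\zeta^3\tH(0,x,\pi/2)=0$), valid when $t \gtrsim \cos^2\zeta$; the two regions overlap to cover a time strip (Figure \ref{fig:1}). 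You instead stay with the forward flow, using $\p_\xi H = -\p_\xi F/\p_x F$, and replace the two-region analysis by a single Hadamard factorization $\p_\zeta\tF = t\cos\zeta\cdot B$ with $B(0,\cdot,\cdot)=\lambda$, then shrink $T$ by continuity. This is legitimate: $\p_\zeta\tF$ vanishes on $\{t=0\}$ and on $\{\zeta=\pm\pi/2\}$, $\cos\zeta$ has only simple zeros on $[-\pi/2,\pi/2]$ at the endpoints, and Lemma \ref{lem:1h} gives the uniform bounds on derivatives needed to make $B$ and $\p_t B$ uniformly bounded, so the division and the continuity argument both go through; the payoff is that the higher-order Taylor computations at $\zeta=\pi/2$ are absorbed into the division lemma. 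Two small points to tighten: the reduction "$\p_\xi H \leq -ct\lr{\xi}^{-3}$ iff $\p_\xi F \geq c't\lr{\xi}^{-3}$" uses the \emph{upper} bound on $\p_x\tF$ (Lemma \ref{lem:1h}) in addition to the positivity/lower bound from Lemma \ref{lem:1d} which you cite; and the identity $\p_\xi\varphi=H$ is most cleanly obtained not from \eqref{eq-5b} but from the conservation argument (differentiating $\p_\xi\varphi(t,x_t,\xi)$ along the flow, as in \eqref{eq-4k}), which is what the paper does — your appeal to generating-function theory amounts to the same thing and is fine once written out.
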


\begin{proof} 1. We will rely on the formula $\p_\xi \varphi = H$. This comes from the fact that $\p_\xi \varphi(t,x_t,\xi)$ does not depend on $t$:
\begin{equations}
\dd{}{t} \big( \p_\xi \varphi(t,x_t,\xi) \big) = \dd{^2\varphi(t,x_t,\xi)}{\xi \p t} + \dot{x_t}  \dd{^2\varphi(t,x_t,\xi)}{x\p\xi} = -\dd{\big( p(x,\p_x \varphi)\big) }{\xi} (t,x_t,\xi) + \dot{x_t}  \dd{^2\varphi(t,x_t,\xi)}{x\p\xi}
\\
= -\dd{ p }{\xi}(x_t,\xi_t) \dd{^2\varphi(t,x_t,\xi)}{x\p\xi} + \dot{x_t}  \dd{^2\varphi(t,x_t,\xi)}{x\p\xi} = 0,
\end{equations}
where we used the eikonal equation and $\dot{x_t} = \p_\xi p(x_t,\xi_t)$. We therefore deduce
\begin{equation}\label{eq-4k}
\p_\xi \varphi(t,x_t,\xi) = \p_\xi \varphi(0,x_0,\xi) = x_0 = H(t,x_t,\xi), 
\end{equation}
and therefore $\p_\xi \varphi = H$.

From this identity, we deduce that Lemma \ref{lem:1k} is equivalent to $\p_\xi H(t,x,\xi) \leq -c t \lr{\xi}^{-3}$ for $t > 0$ small enough and $(x,\xi) \in \R^2$. We observe moreover that
\begin{equation}
\p_\xi H(t,x,\xi) = \lr{\xi}^{-2} \p_\zeta\tH(t,x,\arctan \xi),
\end{equation}
therefore \eqref{eq-4q} is equivalent to $\p_\zeta\tH(t,x,\zeta) \leq -c t \cos \zeta$ for $t > 0$ small enough and $(x,\zeta) \in \R \times (-\pi/2,\pi/2)$.

2. Taking derivatives with respect to $\zeta$ then $t$ of the relation $F(t,\tH, \zeta) = \Id$, we obtain 
\begin{equations}\label{eq-4m}
\p_\zeta \tF(t,\tH,\zeta) + \p_\zeta \tH \cdot \p_x \tF(t,\tH,\zeta) = 0, \\
\p_t \p_\zeta\tF(t,\tH,\zeta) + \p_t \tH \cdot \p_x \p_\zeta \tF(t,\tH,\zeta) + \p_t \p_\zeta \tH \cdot \p_x \tF(t,\tH,\zeta) 
\\
+ \p_\zeta \tH \cdot \p_t \p_x \tF(t,\tH,\zeta) + \p_\zeta \tH \p_t \tH \cdot \p_x^2 \tF(t,\tH,\zeta) = 0.
\end{equations}
Because of $\tF(0,x,\zeta) = \tH(0,x,\zeta) = x$, we know that $\p_x \p_\zeta \tF$ and $\p_\zeta \tH$ vanish at $t=0$, and $\p_x \tF(0,x,\zeta) = 1$. Therefore, evaluating \eqref{eq-4m} at $t=0$ produces 
\begin{equations}\label{eq-8m}
\p_t \p_\zeta\tF(0,x,\zeta) + \p_t \p_\zeta \tH(0,x,\zeta) = 0.
\end{equations}

3. We now compute $\p_t \p_\zeta \tF(0,x,\zeta) = \p_\zeta \dot{x_t}$ with $\tF(t,x,\zeta) = \dot{x_t}$ and $\tG(t,x,\zeta) = \zeta_t$. Taking derivatives of \eqref{eq-3r} with respect to $\zeta$,
\begin{equation}\label{eq-4o}
\p_\zeta \dot{x_t} = \lambda'(x_t) \sin \zeta_t \cdot \p_\zeta x_t + \lambda(x_t) \cos \zeta_t \cdot \p_\zeta \zeta_t.
\end{equation}
Since $\p_\zeta \zeta_0 = 1$ and $\p_\zeta x_0 = 0$, we conclude that $\p_t \p_\zeta \tF(0,x,\zeta) = \lambda(x) \cos \zeta$. From \eqref{eq-8m}, we deduce that $\p_t \p_\zeta \tH(0,x,\zeta) = -\lambda(x) \cos \zeta$. Hence, thanks to Taylor's formula and $\p_\zeta\tH \in C^\infty_b(X)$,
\begin{equation}
\left| \p_\zeta \tH(t,x,\zeta) + t \lambda(x) \cos \zeta \right| \leq Ct^2.
\end{equation}
This inequality shows that \eqref{eq-4q} is valid as long as $\inf_\R \lambda \cdot \cos \zeta \geq 2Ct$.

4. We now study the situation for $\zeta$ near $\pi/2$. We write Taylor's inequality:
\begin{equation}\label{eq-4n}
\left| \p_\zeta \tH (t,x,\zeta) - \sum_{j=0}^2 \p_\zeta^{j+1} \tH \left(t,x,\dfrac{\pi}{2}\right) \left( \zeta-\dfrac{\pi}{2}\right)^j \right| \leq C \left|\zeta-\dfrac{\pi}{2}\right|^3.
\end{equation}

We note that $\tG(t,x,\pi/2) = \pi/2$, therefore with $x_t = \tF(t,x,\pi/2)$ we have from \eqref{eq-4o}:
\begin{equation}
\p_\zeta \dot{x_t} = \lambda'(x_t) \p_\zeta x_t, \qquad \p_\zeta x_0 = 0. 
\end{equation}
It follows that $\p_\zeta x_t = 0$ for every $t$; that is, $\p_\zeta \tF(t,x,\pi/2)= 0$. From the first identity in \eqref{eq-4m} and $\p_x \tF(t,\tH,\zeta) = 1 + O(t)$, we deduce that $\p_\zeta \tH(t,x,\pi/2) = 0$ for $t$ sufficiently small. Taking derivatives of $\p_t \p_\zeta \tH(0,x,\zeta) = \lambda(x) \cos \zeta$ with respect to $\zeta$, we obtain
\begin{equation}
\p_t \p_\zeta^2 \tH\left(0,x,\dfrac{\pi}{2}\right) = \lambda(x), \qquad \p_t \p_\zeta^3 \tH\left(0,x,\dfrac{\pi}{2}\right) = 0.
\end{equation}
Therefore, using that $\p_\zeta^j H(0,x,\zeta) = 0$ for $j \geq 1$, we deduce that
\begin{equation}
\left| \p_\zeta^2 \tH\left(t,x,\dfrac{\pi}{2}\right) - \lambda(x)t \right| \leq Ct^2, \qquad \left| \p_\zeta^3 \tH\left(t,x,\dfrac{\pi}{2}\right) \right| \leq Ct^2.
\end{equation}
Plugging these in \eqref{eq-4n}, we deduce that 
\begin{equation}
\left| \p_\zeta \tH (t,x,\zeta) - \lambda(x) t \left( \zeta-\dfrac{\pi}{2}\right) \right| \leq C \left|\zeta-\dfrac{\pi}{2}\right|^3 + Ct^2  \left|\zeta-\dfrac{\pi}{2}\right| + Ct^2 \left|\zeta-\dfrac{\pi}{2}\right|^2 \leq C \left|\zeta-\dfrac{\pi}{2}\right|^3 + Ct^2 \left|\zeta-\dfrac{\pi}{2}\right|.
\end{equation}
It follows that for $(t,x,\zeta) \in X_0 \times (0,\pi)$,
\begin{equation}
\left| \p_\zeta \tH (t,x,\zeta) + \lambda(x) t \cos \zeta \right| \leq C\cos \zeta\left( \cos^2 \zeta + t^2\right).
\end{equation}
This inequality shows that \eqref{eq-4q} is valid as long as $\inf_\R \lambda \cdot t \geq C\left( \cos^2 \zeta + t^2\right)$, that is $t$ sufficiently small and $t \geq C\cos^2 \zeta$. This region of validity combined with that identified in Step 3 cover a whole strip of times near $0$, see Figure \ref{fig:1}. This proves the lemma. \end{proof}

\begin{figure}[!t]
\floatbox[{\capbeside\thisfloatsetup{capbesideposition ={right,center}}}]{figure}[\FBwidth]
{\caption{Step 3 shows that \eqref{eq-4q} holds within  the blue region. It therefore misses positive times for $\zeta$ near $\pm \pi/2$. Step 4 shows that
\eqref{eq-4q} holds in the red-enclosed region, hence covers points $(t,\zeta)$ missed by Step 3.}\label{fig:1}}
{\begin{tikzpicture}
\node at (0,0) {\includegraphics[scale=0.8]{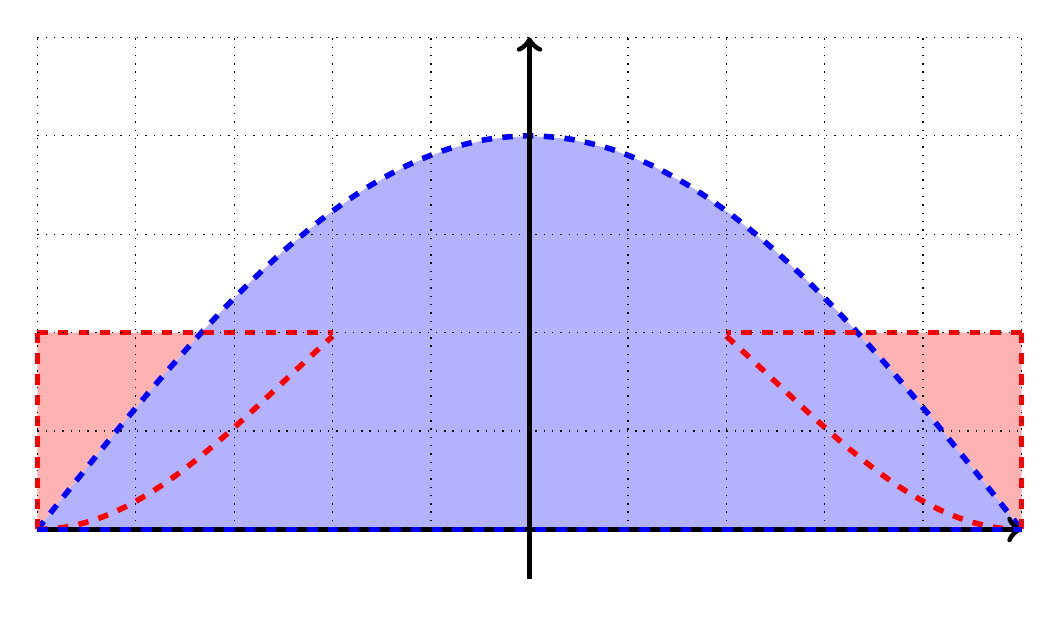}};
\end{tikzpicture}}
\end{figure}

Lemma \ref{lem:1i} provides the bound \eqref{eq-6r} and Lemma \ref{lem:1g} provides the second estimate in \eqref{eq-7p}. To complete the proof of Theorem \ref{thm:1f}, it remains to show the first estimate in \eqref{eq-7p}:

\begin{lemma}\label{lem:1g} There exists a constant $c > 0$ such that
\begin{equations}\label{eq-3f}
\big|\p_\xi \varphi(t,x,\xi)\big| \geq c \big( t \lr{\xi}^{-2} + \dist( x , I_t ) \big), \qquad (t,x,\xi) \in \Omega, \qquad x \notin I_t.
\end{equations}
\end{lemma}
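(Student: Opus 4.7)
The plan is to reinterpret the claim in terms of the inverse flow. From the identity $\p_\xi \varphi = H$ established in Lemma \ref{lem:1k} (see \eqref{eq-4k}), the estimate \eqref{eq-3f} is equivalent to a lower bound on $|H(t,x,\xi)|$ for $x \notin I_t$. By the symmetry $\xi \mapsto -\xi$, $x_t^+ \leftrightarrow x_t^-$ built into the dynamics \eqref{eq-3q}, it suffices to treat the case $x > x_t^+$ and prove $H(t,x,\xi) \geq c\bigl(t\lr{\xi}^{-2} + \dist(x,I_t)\bigr)$.

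The first step is to identify the boundary values of $H$ at $\xi = \pm\infty$. Since $H(t,x,\xi) = \tH(t,x,\arctan \xi)$ and $\tH(t,\cdot,\pm\pi/2)$ inverts $\tF(t,\cdot,\pm\pi/2) = F_\pm(t,\cdot)$, combined with the explicit formula \eqref{eq-8t} for $F_\pm$, I obtain
$$H(t,x,\pm\infty) = \Lambda^{-1}\bigl(\Lambda(x) \mp t\bigr),$$
so that in particular $H(t,x_t^\pm,\pm\infty) = 0$. For $x > x_t^+ = \Lambda^{-1}(t)$, the mean value theorem applied to $\Lambda^{-1}$, whose derivative $\lambda \circ \Lambda^{-1}$ is uniformly bounded above and below on the relevant range, yields
$$H(t,x,+\infty) = \Lambda^{-1}\bigl(\Lambda(x)-t\bigr) - \Lambda^{-1}(0) \geq c_1\,\dist(x,I_t).$$

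The second step upgrades this boundary estimate to all finite $\xi$ by integrating the convexity bound. Lemma \ref{lem:1k}, rewritten via $\p_\xi\varphi = H$, gives $\p_\xi H(t,x,\xi) \leq -c t \lr{\xi}^{-3}$. For $\xi \geq 0$ this yields
$$H(t,x,\xi) = H(t,x,+\infty) + \int_\xi^{+\infty} \bigl(-\p_\xi H\bigr)(t,x,\xi')\,d\xi' \geq c_1 \dist(x,I_t) + c_2\, t\lr{\xi}^{-2},$$
using $\int_\xi^{+\infty}\lr{\xi'}^{-3}d\xi' \asymp \lr{\xi}^{-2}$. For $\xi \leq 0$, the monotonicity of $H$ in $\xi$ (same convexity estimate) gives $H(t,x,\xi) \geq H(t,x,0) \geq c_1\dist(x,I_t) + c_2 t$, which dominates $c_1\dist(x,I_t) + c_2 t \lr{\xi}^{-2}$ since $\lr{\xi}^{-2} \leq 1$. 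The case $x < x_t^-$ is entirely analogous, starting from $H(t,x,-\infty) = \Lambda^{-1}(\Lambda(x)+t) \leq -c_1 \dist(x,I_t)$ and integrating $\p_\xi H \leq -ct\lr{\xi}^{-3}$ from $-\infty$ up to $\xi$.

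I do not foresee any genuine obstacle: once the identity $\p_\xi\varphi = H$ and the sharp convexity $\p_\xi^2 \varphi \leq -ct\lr{\xi}^{-3}$ of Lemma \ref{lem:1k} are in hand, the proof is a quantitative unpacking of the boundary values of $H$ via \eqref{eq-8t}. The only delicate point is verifying that the constants emerging from the mean value theorem applied to $\Lambda^{-1}$ are uniform in the unbounded variable $x$, which follows from the same uniform bounds on $\lambda$ and its derivatives used throughout the classical flow analysis of \S\ref{sec-4.2}.
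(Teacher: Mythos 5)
Your argument is correct, and it reaches the bound by a slightly different route than the paper. Both proofs rest on the same two ingredients, the identity $\p_\xi\varphi=H$ from \eqref{eq-4k} and the convexity estimate $\p_\xi^2\varphi\leq -ct\lr{\xi}^{-3}$ of Lemma~\ref{lem:1k}, and both produce the term $t\lr{\xi}^{-2}$ by integrating that convexity bound in $\xi$ out to $+\infty$ (respectively $-\infty$). The difference lies in how the term $\dist(x,I_t)$ is generated. The paper anchors at the point $(x_t^+,\xi=+\infty)$, where $\p_\xi\varphi$ vanishes, and then integrates the mixed derivative along the segment $[x_t^+,x]$ at fixed $\xi$, using $\p^2_{x\xi}\varphi=1+O(t)$ (which follows from $\p^2_{x\xi}\varphi\in C^\infty_b(\Omega)$ and the initial condition, at the price of possibly shrinking $T$). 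You instead anchor at $(x,\xi=+\infty)$, identify the boundary value $H(t,x,+\infty)=\tH(t,x,\pi/2)=F_-(t,x)=\Lambda^{-1}\big(\Lambda(x)-t\big)$ via \eqref{eq-8t}, and lower-bound it by $c\,\dist(x,I_t)$ with the mean value theorem and the uniform two-sided bounds $0<\inf\lambda\leq\sup\lambda<\infty$ that the flow analysis of \S\ref{sec-4.2} already assumes. What your version buys is an explicit, $T$-independent treatment of the spatial part (no expansion of $\p^2_{x\xi}\varphi$ is needed); what the paper's version buys is robustness, since it never invokes the closed-form solution of the limiting ODE and would survive in settings where no such formula is available. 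Two small remarks: your appeal to a ``symmetry $\xi\mapsto-\xi$, $x_t^+\leftrightarrow x_t^-$'' is not literally a symmetry of \eqref{eq-3q} for general $\lambda$, but this is harmless because you then write out the case $x<x_t^-$ explicitly; and the improper integration $H(t,x,\xi)=H(t,x,+\infty)+\int_\xi^{+\infty}(-\p_\xi H)$ is justified since $\p_\xi H=\lr{\xi}^{-2}\p_\zeta\tH$ with $\p_\zeta\tH$ bounded, so the tail is absolutely integrable.
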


\begin{proof}  We assume that $x \geq x_t^+$; the case $x \leq x_t^-$ is treated similarly. We first note that
\begin{equation}\label{eq-8n}
\lim_{\xi \rightarrow + \infty} \p_\xi \varphi(t,x_t^+,\xi) = \lim_{\xi \rightarrow +\infty} H(t,x_t^+,\xi) = \lim_{\zeta \rightarrow \pi/2} \tH(t,x_t^\pm,\zeta) = \tH\big(t,x_t^+,\pi/2\big) = 0.
\end{equation}
Thanks to \eqref{eq-8n}:
\begin{equations}\label{eq-8k}
\p_\xi \varphi(t,x,\xi) = \p_\xi \varphi(t,x,\xi) - \p_\xi \varphi(t,x_t^+,\xi) + \p_\xi \varphi(t,x_t^+,\xi) 
\\
= \int_{x_t^+}^x \p_{x\xi}^2  \varphi(t,s,\xi) ds - \int_\xi^{+\infty} \p_\xi^2 \varphi(t,x_t^+,s)ds.
\end{equations}
We note that $\p_{x\xi}^2  \varphi(0,s,\xi) = 1$ and $\p_{x\xi}^2  \varphi \in C^\infty_b(\Omega)$, hence $\p_{x\xi}^2  \varphi(t,s,\xi) = 1 + O(t)$ for $t$ sufficiently small, uniformly in $(x,\xi) \in \R^2$. For the term $\p_\xi^2  \varphi(t,x,s)$, we use Lemma \ref{lem:1k}. We deduce that for $t$ sufficiently small,
\begin{equations}
\p_\xi \varphi(t,x,\xi) \geq (x-x_t^+)\big( 1 + O(t) \big) + ct \int_\xi^\infty \lr{s}^{-3} ds \geq c \left( x-x_t + \dfrac{t}{\lr{\xi}^2} \right).
\end{equations}
This proves \eqref{eq-3f} when $x \geq x_t^+$. When $x \leq x_t^-$ instead, we use the identity
\begin{equations}
\p_\xi \varphi(t,x,\xi) = \int_{x_t^-}^x \p_{x\xi}^2  \varphi(t,s,\xi) ds + \int_{-\infty}^\xi \p_\xi^2 \varphi(t,x_t^+,s)ds,
\end{equations}
and a similar reasoning. This completes the proof. \end{proof}

The next result shows that $I_t$ is the smallest possible set such that \eqref{eq-3f} holds. Standard techniques imply that superpositions of WKB waves are essentially supported within the cone $\{(t,x) : x\in I_t\}$; see \S\ref{sec-5}.

\begin{corollary}\label{cor:5} For $t \in (0,T)$ and $x \in I_t = (x_t^-,x_t^+)$, the equation $\p_\xi \varphi(t,x,\xi) = 0$ has a unique solution denoted $\Xi(t,x)$. It depends smoothly on $(t,x) \in (0,T) \times \R$; moreover, there exists $c > 0$ such that
\begin{equation}\label{eq-8l}
\big|\Xi(t,x)\big| \geq c \left| \dfrac{t}{x - x_t^\pm} \right|^{1/2}, \qquad t \in (0,T), \quad x \in I_t.
\end{equation}
\end{corollary}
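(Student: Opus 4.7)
The plan is to first establish existence, uniqueness and smoothness of $\Xi(t,x)$ by a monotonicity/implicit-function-theorem argument, and then to extract the quantitative bound \eqref{eq-8l} by evaluating the decomposition of $\p_\xi \varphi$ used in the proof of Lemma \ref{lem:1g} at $\xi = \Xi(t,x)$. For existence, Theorem \ref{thm:1f}(iii) gives $\p_\xi^2 \varphi(t,x,\cdot) \leq -ct\lr{\xi}^{-3} < 0$ for $t > 0$, so $\xi \mapsto \p_\xi \varphi(t,x,\xi)$ is smooth and strictly decreasing. Combining the identity $\p_\xi \varphi = H$ from \eqref{eq-4k} with $H(t,x,\xi) = \tH(t,x,\arctan \xi)$ and the evaluations $\tH(t,\cdot,\pm \pi/2) = F_\mp(t,\cdot)$ (as used in the proof of Lemma \ref{lem:1g}) yields
\[
\lim_{\xi \to \pm \infty} \p_\xi \varphi(t,x,\xi) \ = \ F_\mp(t,x) \ = \ \Lambda^{-1}\bigl(\Lambda(x) \mp t\bigr).
\]
For $x \in I_t = (\Lambda^{-1}(-t), \Lambda^{-1}(t))$ the two limits are of opposite signs ($F_-(t,x) < 0 < F_+(t,x)$ by \eqref{eq-8t}), so the intermediate value theorem produces a unique zero $\Xi(t,x) \in \R$. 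The implicit function theorem applied at $\Xi(t,x)$, where $\p_\xi^2 \varphi < 0$, gives smoothness of $\Xi$ on $(0,T) \times \R$.

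For the lower bound I would assume without loss of generality that $x$ is closer to $x_t^+$; the case near $x_t^-$ is treated symmetrically, starting from a decomposition of $\p_\xi \varphi$ based at $x_t^-$ and integrating to $-\infty$. Plugging $\xi = \Xi(t,x)$ into the identity \eqref{eq-8k} and using $\p_\xi \varphi(t,x,\Xi(t,x)) = 0$ yields
\[
\int_{x_t^+}^x \p_{x\xi}^2 \varphi(t,s,\Xi(t,x))\,ds \ = \ \int_{\Xi(t,x)}^{+\infty} \p_\xi^2 \varphi(t,x_t^+,s)\,ds.
\]
Since $\p_{x\xi}^2 \varphi(0,\cdot,\cdot) = 1$ and all mixed derivatives of $\varphi(t,x,\xi) - x\xi$ are bounded on $\Omega$ by Lemma \ref{lem:1i}, the left-hand side equals $-(x_t^+ - x)(1 + O(t))$. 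Lemma \ref{lem:1k} gives $-\p_\xi^2 \varphi(t,x_t^+,s) \geq c\,t\,\lr{s}^{-3}$, so the absolute value of the right-hand side is bounded below by $c\,t \int_{\Xi(t,x)}^{+\infty} \lr{s}^{-3}\,ds \geq c'\,t / \lr{\Xi(t,x)}^2$. Rearranging (and shrinking $T$ to absorb $1 + O(t)$ into the constants) yields $\lr{\Xi(t,x)}^2 \geq c''\,t/(x_t^+ - x)$, which translates into \eqref{eq-8l} in the regime where $\Xi(t,x)$ is large (equivalently, $x$ close to $x_t^+$).

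The main obstacle is the uniform control of the tail integral $\int_\Xi^{+\infty} \lr{s}^{-3}\,ds$ across all values of $\Xi \in \R$: the estimate $\int_\Xi^{+\infty} \lr{s}^{-3}\,ds \gtrsim \lr{\Xi}^{-2}$ is immediate for $\Xi > 0$ large (where the tail behaves like $(2\Xi^2)^{-1}$), and for $\Xi \leq 0$ the integral is bounded below by a positive constant while $\lr{\Xi}^{-2} \leq 1$, so the bound persists. This uniformity is needed because the sign of $\Xi(t,x)$ depends on the position of $x$ within $I_t$ and is not fixed a priori; the symmetric argument near $x_t^-$ then requires the analogous integral estimate on $(-\infty, \Xi(t,x)]$, which is proved identically.
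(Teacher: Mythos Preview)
Your proof is correct and follows essentially the same route as the paper: monotonicity of $\xi \mapsto \p_\xi\varphi(t,x,\xi)$ from Lemma~\ref{lem:1k} plus the intermediate value theorem for existence and uniqueness (the paper phrases this dually as $x = F(t,0,\xi)$ having a solution), the implicit function theorem for smoothness, and then specializing the identity \eqref{eq-8k} at $\xi = \Xi(t,x)$ together with the lower bound on $-\p_\xi^2\varphi$ to extract \eqref{eq-8l}. Your handling of the tail integral $\int_{\Xi}^{+\infty}\lr{s}^{-3}\,ds \gtrsim \lr{\Xi}^{-2}$ uniformly in the sign of $\Xi$ is in fact more careful than the paper's, which simply writes $\Xi^{-2}$; and your remark that the resulting bound $\lr{\Xi}^2 \gtrsim t/(x_t^+-x)$ yields \eqref{eq-8l} only as a blow-up statement near $x_t^\pm$ matches the paper's own acknowledgment that ``this is a statement about the behavior of $\Xi(t,x)$ as $x$ approaches $x_t^\pm$.''
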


\begin{proof} Since $\p_\xi \varphi(t,x,\cdot)$ is decreasing (see Lemma \ref{lem:1k}), the equation $\p_\xi \varphi(t,x,\xi) = 0$ has a unique solution. Moreover, using that $\p_\xi \varphi = H$, the equation $\p_\xi \varphi(t,x,\xi) = 0$ is equivalent to $x = F(t,0, \xi)$. But for $x \in I_t$:
\begin{equation}
\lim_{\xi \rightarrow -\infty} F(t,0,\xi) = x_t^- < x < x_t^+ = \lim_{\xi \rightarrow +\infty} F(t,0,\xi).
\end{equation}
The intermediate value theorem produces then a solution $\Xi(t,x)$ to $x = F(t,0,\xi)$, which is unique. Smoothness with respect to $(t,x)$ follows from the implicit function theorem, noting that $\p_\xi^2 \varphi\big(t,x,\Xi(t,x)\big) \neq 0$.

It remains to prove \eqref{eq-8l}. This is a statement about the behavior of $\Xi(t,x)$ as $x$ approaches $x_t^\pm$. we use \eqref{eq-8k} specialized at $\xi = \Xi(t,x)$:
\begin{equations}
0 = \p_\xi \varphi\big(t,x,\Xi(t,x)\big) = \int_{x_t^+}^x \p^2_{x\xi} \varphi\big(t,s,\Xi(t,x)\big) ds -\int_{\Xi(t,x)}^{+\infty} \p_\xi^2 \varphi(t,x_t^+,s) ds
\\
\geq \int_{x_t^+}^x \p^2_{x\xi} \varphi\big(t,s,\Xi(t,x)\big) ds + ct \Xi(t,x)^{-2}.
\end{equations}
where we used Lemma \ref{lem:1k} for the lower bound.
In particular, using that $\p_{x\xi}^2 \varphi$ is uniformly bounded:
\begin{equation}
C (x-x_t^+) \geq \int_x^{x_t^+} \p^2_{x\xi}\varphi\big(t,s,\Xi(t,x)\big) ds  \geq ct \Xi(t,x)^{-2}; \qquad \text{equivalently} \qquad \Xi(t,x) \geq c \sqrt{\dfrac{t}{x_t^+-x}}.
\end{equation}
A similar estimate holds as $x \rightarrow x_t^-$.\end{proof}

Later, we will need bound on the function $\vartheta$ defined by:
\begin{equation}
\vartheta(t,x) = \varphi\big(t,x,\Xi(t,x)\big), \qquad t \in [0,T], \qquad x \in I_t.
\end{equation}

\begin{lemma}\label{lem:1w} There exists $C > 0$ such that for $j \in [1,3]$:
\begin{equation}
\big| \p_x^j \vartheta(t,x) \big| \leq C \dist(x,\R \setminus I_t)^{-3j}, \qquad t \in [0,T], \qquad x \in I_t.
\end{equation}
\end{lemma}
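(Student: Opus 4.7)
The starting observation is the envelope identity: differentiating $\vartheta(t,x)=\varphi(t,x,\Xi(t,x))$ and using that $\p_\xi\varphi(t,x,\Xi(t,x))=0$ by definition of $\Xi$, one finds
\begin{equation*}
\p_x\vartheta(t,x) = \p_x\varphi\big(t,x,\Xi(t,x)\big).
\end{equation*}
Differentiating once more and using the chain rule produces
\begin{equation*}
\p_x^2\vartheta = \p_x^2\varphi + \p_{x\xi}^2\varphi\cdot\p_x\Xi, \qquad
\p_x^3\vartheta = \p_x^3\varphi + 2\,\p_x^2\p_\xi\varphi\cdot\p_x\Xi + \p_x\p_\xi^2\varphi\cdot(\p_x\Xi)^2 + \p_{x\xi}^2\varphi\cdot\p_x^2\Xi,
\end{equation*}
where each derivative of $\varphi$ is evaluated at $(t,x,\Xi(t,x))$. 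Thus the task reduces to bounding, at the point $(t,x,\Xi)$, derivatives of $\varphi$ of order at most $3$ together with $\p_x\Xi$ and $\p_x^2\Xi$.

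For the derivatives of $\varphi$ I would invoke directly Lemma \ref{lem:1i}, which provides $|\p_x^a\p_\xi^b(\varphi-x\xi)|\le C\lr{\xi}^{1-b}$. Implicit differentiation of $\p_\xi\varphi(t,x,\Xi)=0$ yields $\p_x\Xi=-\p_{x\xi}^2\varphi/\p_\xi^2\varphi$ and a similar (longer) rational expression for $\p_x^2\Xi$, whose denominators are powers of $\p_\xi^2\varphi(t,x,\Xi)$. The lower bound $|\p_\xi^2\varphi|\ge c\,t\lr{\xi}^{-3}$ from Lemma \ref{lem:1k} then turns each such quotient into a polynomial bound in $\lr{\Xi}$ and $t^{-1}$; crudely, $|\p_x\Xi|\lesssim \lr{\Xi}^3/t$ and $|\p_x^2\Xi|\lesssim \lr{\Xi}^9/t^3$ (actual powers are smaller but this suffices).

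The main ingredient, which is the only genuinely new estimate needed, is an \emph{upper} bound on $|\Xi(t,x)|$ in terms of $\dist(x,\R\setminus I_t)$ mirroring the lower bound of Corollary \ref{cor:5}. I obtain this by revisiting the integral identity
\begin{equation*}
0 = \p_\xi\varphi(t,x,\Xi) = \int_{x_t^+}^x \p_{x\xi}^2\varphi(t,s,\Xi)\,ds - \int_\Xi^{+\infty}\p_\xi^2\varphi(t,x_t^+,s)\,ds
\end{equation*}
(used in the proof of Lemma \ref{lem:1g}), combined with $\p_{x\xi}^2\varphi=1+O(t)$ and the matching \emph{upper} bound $|\p_\xi^2\varphi|\le C\,t\lr{\xi}^{-3}$, which is available since Step 4 of the proof of Lemma \ref{lem:1k} in fact establishes $|\p_\zeta\tilde H+\lambda(x)t\cos\zeta|\le C\cos\zeta(\cos^2\zeta+t^2)$, giving a two-sided control of $\p_\xi^2\varphi$. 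This two-sided control yields $(x_t^+-x)\asymp t/\Xi^2$ near the right endpoint, and a symmetric statement near the left, so
\begin{equation*}
|\Xi(t,x)| \le C\sqrt{t/\dist(x,\R\setminus I_t)}, \qquad (t,x)\in(0,T)\times I_t.
\end{equation*}

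With this upper bound in hand, the assembly is mechanical. For $j=1$, $|\p_x\vartheta|\lesssim \lr{\Xi}\lesssim \dist^{-1/2}$. For $j=2$, the two terms are dominated by $\lr{\Xi}+\lr{\Xi}^3/t$, and using $\lr{\Xi}^2\lesssim t/\dist$ one obtains $|\p_x^2\vartheta|\lesssim \dist^{-3/2}$. For $j=3$, a similar bookkeeping with the crude bound $|\p_x^2\Xi|\lesssim \lr{\Xi}^9/t^3\lesssim \dist^{-9/2}$ gives $|\p_x^3\vartheta|\lesssim \dist^{-9/2}$. In each case the obtained bound is dominated by $\dist^{-3j}$ (since $\dist$ is bounded above), so the claim follows. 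The only real obstacle is extracting the upper bound on $|\Xi|$ and verifying the companion upper bound on $|\p_\xi^2\varphi|$; the rest is routine chain-rule algebra together with the estimates already proved for $\varphi$.
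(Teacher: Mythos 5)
Your proposal is correct and its skeleton coincides with the paper's: reduce via the envelope identity and implicit differentiation to bounds on $\p_x^j\varphi$, $\p_x\Xi$, $\p_x^2\Xi$ at $\xi=\Xi(t,x)$ (using Lemma \ref{lem:1i} for the numerators and Lemma \ref{lem:1k} for the denominator $\p_\xi^2\varphi$), and then supply the one missing ingredient, an upper bound on $|\Xi(t,x)|$ in terms of $\dist(x,\R\setminus I_t)$. Where you diverge is in how that upper bound is obtained. The paper exploits $\p_\xi\varphi=H$ directly: since $H_\pm(t,x)=\lim_{\xi\to\pm\infty}H(t,x,\xi)$ satisfies $H_\pm(t,x_t^\pm)=0$ and $\p_xH_\pm\geq c$, while $|H_\pm(t,x)-H(t,x,\Xi)|=|\int_\Xi^{\pm\infty}\p_\xi H|\leq C\lr{\Xi}^{-1}$, one gets $|\Xi|\leq C\dist(x,\R\setminus I_t)^{-1}$ using only the uniform boundedness of $\p_\zeta\tH$ — no two-sided control of $\p_\xi^2\varphi$ is needed, which is why the paper's final exponent is exactly $3j$ after plugging into $|\p_x^j\vartheta|\leq C\lr{\Xi}^{3j}$. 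Your route instead upgrades the one-sided bound of Lemma \ref{lem:1k} to $|\p_\xi^2\varphi|\leq Ct\lr{\xi}^{-3}$ and feeds it into the integral identity from the proof of Lemma \ref{lem:1g}, yielding the sharper $|\Xi|\leq C\sqrt{t/\dist(x,\R\setminus I_t)}$ and hence bounds like $\dist^{-3j/2}$, stronger than required. Two small points should be made explicit if you write this up: (i) Step 4 of the proof of Lemma \ref{lem:1k} alone gives $|\p_\zeta\tH|\leq C(t\cos\zeta+\cos^3\zeta+t^2\cos\zeta)$, which is \emph{not} $\lesssim t\cos\zeta$ when $\cos^2\zeta\gg t$; you must combine it with the global estimate of Step 3 (whose error $Ct^2$ is $\leq t\cos\zeta$ precisely in that regime), and also record the symmetric expansion at $\zeta$ near $-\pi/2$, which the paper only writes at $+\pi/2$. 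Both are routine, and even the cruder combined bound $|\p_\xi^2\varphi|\lesssim t\lr{\xi}^{-3}+\lr{\xi}^{-5}+t^2\lr{\xi}^{-3}$ still closes your bookkeeping with room to spare. (ii) You correctly retain the factors $t^{-1}$ coming from the lower bound on $\p_\xi^2\varphi$ in $|\p_x\Xi|$, $|\p_x^2\Xi|$ (the paper's displayed intermediate bounds $|\p_x\Xi|\leq C\lr{\Xi}^3$, $|\p_x^2\Xi|\leq\lr{\Xi}^8$ silently drop them); with your sharper control of $\Xi$ these powers of $t$ cancel, and alternatively one can absorb them via $\dist(x,\R\setminus I_t)\leq|I_t|/2\leq Ct$. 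So your argument is valid; it buys a stronger estimate at the cost of an extra estimate on $\p_\xi^2\varphi$ that the paper avoids by working with $H_\pm$ instead.
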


\begin{proof} 1. Since $\p_\xi \varphi(t,x,\Xi) = 0$, we have $\p_x \vartheta(t,x) = \p_x \varphi(t,x,\Xi)$. Taking then successive derivatives with respect to $x$ of this expression, noting that $|\p_x^j \varphi(t,x,\xi)| \leq C \lr{\xi}$ and $\p_\xi^k \p_x^\ell \varphi$ is uniformly bounded for $k \geq 1$ produces the inequalities:
\begin{equation}\label{eq-10z}
\big| \p_x \vartheta \big| \leq C \lr{\Xi}, \qquad 
\big| \p_x^2 \vartheta \big| \leq C \big(\lr{\Xi} + \lr{\p_x \Xi}\big), \qquad 
\big| \p_x^3 \vartheta \big| \leq C \big(\lr{\Xi} + \lr{\p_x \Xi}^2 + \lr{\p_x^2\Xi}\big).\end{equation}
It remains now to bound $\Xi$ and its derivatives. 

2. Taking derivatives of the expression $\p_\xi \varphi(t,x,\Xi) = 0$ with respect to $\xi$, we obtain -- with all identities evaluated at $\xi = \Xi(t,x)$:
\begin{equation}
\p_x \Xi \cdot \p_\xi^2 \varphi + \p_x \varphi = 0, \qquad
\p_x^2 \Xi  \cdot \p_\xi^2 \varphi + (\p_x \Xi)^2 \cdot \p_\xi^3 \varphi + \p_x^2 \varphi + \p_x \Xi \cdot \p^2_{x\xi} \varphi = 0.  
\end{equation}
Using the lower bound on $\p_\xi^2 \varphi$ (Lemma \ref{lem:1k}) and the upper bounds on $\p^\alpha \varphi$ (Lemma \ref{lem:1i}), we deduce that
\begin{equation}\label{eq-10y}
\big| \p_x \Xi \big| \leq C\lr{\Xi}^3, \qquad \big| \p_x^2 \Xi \big| \leq \lr{\Xi}^8.
\end{equation}
Plugging the bounds \eqref{eq-10y} in \eqref{eq-10z}, we deduce that
\begin{equation}\label{eq-11b}
\big| \p_x^j \vartheta \big| \leq C \lr{\Xi}^{3j}, \qquad j \in [1,3].
\end{equation}

3. We now prove an upper bound on $\Xi$. We recall that $\p_\xi \varphi = H$, hence $H(t,x,\Xi) = 0$. It follows that
\begin{equation}
H_+(t,x) = \lim_{\xi \rightarrow +\infty} H(t,x,\xi) - H(t,x,\Xi) = \int_{\Xi(t,x)}^{+\infty} \p_\xi H(t,x,s) ds. 
\end{equation}
Moreover, $\p_\xi H = \lr{\xi}^{-2} \p_\xi \tH$, and $\p_\xi \tH$ is uniformly bounded: $\tH$ is the reciprocal of $\tF$, whose derivative with respect to $x$ is bounded below, see Lemmas \ref{lem:1h}-\ref{lem:1d}. We deduce that when $\Xi > 0$,
\begin{equation}
\big| H_+(t,x) \big| \leq C\int_\Xi^{+\infty} s^{-2} ds = \dfrac{C}{\Xi},
\end{equation}
therefore $\Xi(t,x) \leq C \big| H_+(t,x) \big|^{-1}$. We likewise have $-\Xi(t,x) \geq C \big| H_-(t,x) \big|^{-1}$ when $\Xi(t,x) < 0$. Therefore an upper bound on $\Xi$ follows from a lower bound on $H_+$ and $H_-$. Thanks to Lemma \ref{lem:1h} and \eqref{eq-11a}, $\p_x \tH$ hence $\p_x H$ are bounded below on $\Omega$, hence
\begin{equation}
H_+(t,x) = H_+(t,x) - H_+(t,x_t^+) = \int^x_{x_t^+} \p_x H_+(t,y)dy \geq c (x_t^+-x). 
\end{equation}
A similar bound holds for $\p_x H_-$. We conclude that $|\Xi(t,x)| \leq C \dist(x,\R \setminus I_t)^{-1}$, which plugged in \eqref{eq-11b} completes the proof. 
\end{proof}

\subsection{Transport equation}\label{sec-4.4} We focus now on solving the transport equation that arises from the strategy outlined in \S\ref{sec-1.3}:
\begin{equation}\label{eq-4y}
\lr{L_1 \alpha u, u} = 0, \qquad L_1 =  D_t  + \left(\lambda(x) D_x + \dfrac{\lambda'(x)}{2i}\right) \sigma_1 + 2n \mu(x) \p_x\varphi  \big( \sigma_1 + \sigma_3 \p_x\varphi \big) + s(x).
\end{equation}
In \eqref{eq-4y}, $u$ is a real-valued normalized eigenvector $u$ of the matrix $\sigma_1 + \sigma_3 \p_x\varphi$, for the eigenvalue $\lr{\p_x\varphi}$; specifically:
%We now specify the eigenvector $u$ of $\sigma_1 + \sigma_3 \p_x \varphi$ by the relation:
\begin{equation}\label{eq-7k}
u = \matrice{\cos(\te/2) \\ \sin(\te/2)}, \qquad \te = \dfrac{\pi}{2} - \arctan \p_x \varphi.
\end{equation}

\begin{lemma}\label{lem:1f} The solution to \eqref{eq-4y} with initial data $\alpha(0,x,\xi) = 1$ is 
\begin{equation}\label{eq-7g}
\alpha = \big| \p_x H \big|^{1/2} \cdot \exp\left( -2n i
 \lr{\xi}^2\cdot \lambda^2 \circ H \cdot \int_{H}^x \dfrac{\mu}{\lambda^3} - \int_0^t s \circ F(\tau,H,\xi) d\tau \right),
\end{equation} 
with all functions in \eqref{eq-7g} evaluated at $(t,x,\xi)$. In particular, $\alpha$ is an admissible function of order $0$ (see Definition \ref{def:2}).
\end{lemma}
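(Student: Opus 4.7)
The plan is to reduce \eqref{eq-4y} to a scalar first-order PDE on $\alpha$, integrate along the bicharacteristics of $p(x,\xi) = \lambda(x)\lr{\xi}$, and then identify each factor in the claimed formula. Taking $\lr{L_1(\alpha u), u}$ uses that $u$ is real and normalised (so $\lr{D_t u, u}=0$); a direct computation with $u = [\cos(\te/2), \sin(\te/2)]^\top$ gives $\lr{\sigma_3 u, u} = \p_x\varphi/\lr{\p_x\varphi}$ and $\lr{\sigma_3 D_x u, u} = -i\,\p_x^2\varphi/(2\lr{\p_x\varphi}^3)$, while the eigen-relation $(\sigma_1+\sigma_3\p_x\varphi)u = \lr{\p_x\varphi}u$ makes the $\mu$-contribution a scalar multiple of $\lr{u,u}=1$. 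Collecting terms and multiplying by $i$ reduces \eqref{eq-4y} to the scalar transport PDE
\begin{equation*}
\p_t\alpha + \frac{\lambda\,\p_x\varphi}{\lr{\p_x\varphi}}\,\p_x\alpha + R\,\alpha = 0, \qquad \alpha(0,\cdot) = 1,
\end{equation*}
with $R = \tfrac12(\p^2_{x\xi} p + \p^2_\xi p\cdot\p_x^2\varphi)\big|_{\xi=\p_x\varphi} + 2in\mu\,\p_x\varphi\lr{\p_x\varphi} + is$. The drift $\lambda\p_x\varphi/\lr{\p_x\varphi}$ is exactly $\p_\xi p(x,\p_x\varphi)$, so the characteristics of this PDE are the bicharacteristics of $p$.

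Fixing $\xi$ and $x_0$, set $y_\tau = F(\tau,x_0,\xi)$ and $\eta_\tau = G(\tau,x_0,\xi) = \p_x\varphi(\tau,y_\tau,\xi)$; with $\widetilde{\alpha}(\tau) := \alpha(\tau,y_\tau,\xi)$ the PDE becomes the ODE $\widetilde{\alpha}'(\tau) = -R(\tau,y_\tau,\xi)\widetilde{\alpha}(\tau)$, solved by $\widetilde{\alpha}(t) = \exp\bigl(-\int_0^t R\,d\tau\bigr)$. I evaluate each piece of $R$. For the geometric part, differentiating the Lagrangian identity $\p_x\varphi(\tau,y_\tau,\xi)=\eta_\tau$ in $x_0$ and using $\dot y_\tau = \p_\xi p(y_\tau,\eta_\tau)$ yields $\dot J/J = (\p^2_{x\xi} p + \p^2_\xi p\cdot\p_x^2\varphi)|_{(\tau,y_\tau,\xi)}$ where $J(\tau) := \p_{x_0}y_\tau$, so this piece integrates to $J(t)^{-1/2}$; switching back to $(x,\xi)$ via $x_0 = H(t,x,\xi)$ and $\p_x H = J^{-1}$ turns it into $|\p_x H|^{1/2}$. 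For the $\mu$-part, the energy conservation $\lambda(y_\tau)\lr{\eta_\tau} = \lambda(x_0)\lr{\xi}$ along bicharacteristics together with $dy_\tau = \lambda(y_\tau)\eta_\tau/\lr{\eta_\tau}\,d\tau$ rewrite
\begin{equation*}
2in\mu(y_\tau)\eta_\tau\lr{\eta_\tau}\,d\tau = 2in\lambda(x_0)^2\lr{\xi}^2\,\mu(y_\tau)\lambda(y_\tau)^{-3}\,dy_\tau,
\end{equation*}
whose integral from $x_0$ to $y_t$, re-expressed via $x_0 = H$, gives exactly the first exponent $-2in\lr{\xi}^2\,\lambda^2\!\circ\! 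H\,\int_H^x\mu/\lambda^3$. The $s$-piece integrates directly to $-i\int_0^t s\circ F(\tau,H,\xi)\,d\tau$ (the factor $i$ is a sign/phase convention in the statement). Finally, $\alpha(0,\cdot)=1$ is immediate since $\p_xH(0,\cdot)=1$ and both integrals vanish at $t=0$.

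For admissibility of order $0$, the exponent is purely imaginary (modulo the convention above), so $|\alpha| = |\p_xH|^{1/2}$ is uniformly bounded in $n\in\N,\epsilon\in(0,1]$ by Lemmas \ref{lem:1h}--\ref{lem:1d}. Under Leibniz, each $(t,x,\xi)$-derivative of $\alpha$ either hits $|\p_xH|^{1/2}$ (uniformly bounded using the lower bound $|\p_xF|\ge 1/2$ of Lemma \ref{lem:1d}) or the exponent, where the dominant piece $\lr{\xi}^2\lambda^2\!\circ\! H\int_H^x\mu/\lambda^3$ contributes at most a factor of order $n\lr{\xi}^2$. Iterating this produces $|\p^\alpha\alpha| \le C_\alpha\,\lr{n,\xi^2}^{|\alpha|}$, which is exactly Definition \ref{def:2} at $m=0$. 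The main obstacle here is the combinatorial bookkeeping of how $x,\xi$-derivatives distribute between the limits and the integrand of $\int_H^x\mu/\lambda^3$, each producing a factor of order $n\lr{\xi}^2$; this is technical but routine once the uniform $C^\infty_b$ estimates on $F,H$ from \S\ref{sec-4.2}--\S\ref{sec-4.3} are in hand.
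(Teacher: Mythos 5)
Your proof is correct and follows essentially the same route as the paper: the same scalar reduction via $\lr{L_1(\alpha u),u}$ (your direct computation of $\lr{\sigma_3 D_x u,u}$ is equivalent to the paper's use of $\p_x\lr{\sigma_3 u,u}=2\lr{\sigma_3\p_x u,u}$), integration along the bicharacteristics of $p$, the Jacobian factor $|\p_x H|^{1/2}$ (your variational-equation identity $\dot J/J=\p_{x\xi}^2p+\p_\xi^2p\,\p_x^2\varphi$ is the paper's relation $\p_x\nu(t,x_t,\xi)=\p_t\ln|\p_xF|(t,x,\xi)$ in disguise), energy conservation $p(x_t,\xi_t)=\lambda(x)\lr{\xi}$ for the $\mu$-term, and the same admissibility argument based on $\gamma=\lambda^2\circ H\int_H^x\mu/\lambda^3$ being bounded with all derivatives. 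The factor $i$ you flag in front of the $s$-integral also appears in the paper's own derivation (its equation preceding the statement), so the statement's missing $i$ is a typo rather than a gap in your argument.
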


\begin{proof} 1. Since $|u|^2 = 1$, we have $\lr{\p_t u,u} = 0$. Moreover, we observe that $\p_x \lr{\sigma_3u,u} = 2 \lr{\sigma_3 \p_x u,u}$. Also, since $u$ is normalized eigenvector of $\sigma_1 + \sigma_3 \p_x\varphi$ for the eigenvalue $\lr{\p_x\varphi}$, 
\begin{equations}\label{eq-0r}
\lr{\big( \sigma_1 + \sigma_3 \p_x\varphi \big) u,u} = \lr{\p_x\varphi}, \\
\lr{\sigma_3 u, u} = \dfrac{\lr{\sigma_3 (\sigma_1 + \sigma_3 \p_x\varphi)u,u}}{\lr{\p_x\varphi}} = \dfrac{\lr{\sigma_3 (\sigma_1 + \sigma_3 \p_x\varphi)u,u}}{\lr{\p_x\varphi}} = \dfrac{\p_x\varphi}{\lr{\p_x \varphi}},
\end{equations}
where in the second identity we used that $\sigma_3^2 = \Id$ and $\sigma_3 \sigma_1$ is antisymmetric.

%  $u$ is an eigenvector of the matrix \begin{equation} \matrice{\cos \te & \sin \te \\ \sin \te & -\cos \te}, \qquad \matrice{\cos \te \\ \sin \te} \de \dfrac{1}{\lr{\p_x \varphi}}\matrice{\p_x \varphi \\ 1}. \end{equation}

% which we pick here as \begin{equation}\label{eq-6o} u = \matrice{\cos (\te/2) \\ \sin (\te/2)}, \qquad \matrice{\cos \te \\ \sin \te} = \dfrac{1}{\lr{\p_x \varphi}} \matrice{\p_x \varphi \\ 1}.  \end{equation}

2. Thanks to the relations \eqref{eq-0r}, we deduce that, with all functions evaluated at $(t,x,\xi)$:
\begin{equations}\label{eq-6u}
i\lr{L_1 \alpha u, u} = \left( \dd{}{t} + \nu \dd{}{x} 
+ is + \dfrac{1}{2} \dd{\nu}{x} +  2ni \mu \p_x\varphi \lr{\p_x \varphi}\right) \alpha,  
\qquad \nu \de \dfrac{\lambda\p_x\varphi}{\lr{\p_x \varphi}}.
\end{equations}

3. We now solve the equation \eqref{eq-6u} along the bicharacteristic of $H_p$ (i.e. the projection along the $x$-axis of the Hamiltonian trajectories of $p$). We remark that
\begin{equation}
\nu(t,x_t,\xi) = \dfrac{\lambda(x_t)\p_x\varphi(t,x_t,\xi)}{\lr{\p_x \varphi(t,x_t,\xi)}} = \dfrac{\lambda(x_t) \xi_t}{\lr{\xi_t}} = \dot{x_t}. 
\end{equation}
Thanks to \eqref{eq-6u}, we deduce that $\alpha(t,x_t,\xi)$ satisfies the linear, homogeneous first order ODE
\begin{align}
\dfrac{d}{dt} \big(\alpha(t,x_t,\xi)\big) & =  \dd{\alpha(t,x_t,\xi)}{t} + \dot{x_t} \dd{\alpha(t,x_t,\xi)}{x}  =  \dd{\alpha(t,x_t,\xi)}{t} + \nu(t,x_t,\xi) \dd{\alpha(t,x_t,\xi)}{x}
\\
\label{eq-7d} &  = -\left(is(x_t) + \dfrac{1}{2} \dd{\nu}{x}(t,x_t,\xi)
+ 2ni \mu(x_t) \xi_t \lr{\xi_t}\right) \alpha(t,x_t,\xi). 
\end{align}

To solve this equation, we first compute an antiderivative of $\p_x \nu(t,x_t,\xi)$. Thanks to $\nu(t,x_t,\xi_t) = \dot{x_t} = \p_t F(t,x,\xi)$ and $H(t,x_t,\xi) = x$, we have, with all expressions evaluated at $(t,x,\xi)$:
\begin{equation}
\dd{\nu}{x}  = \left(\dd{H}{x} \cdot \dd{^2 F}{x \p t}\right)\big(t,H,\xi\big) = \left(\dfrac{1}{\p_x F} \cdot \dd{\p_x F}{t}\right)\big(t,H,\xi\big) = \dd{ \ln|\p_x F|}{t}(t,H,\xi),
\end{equation}
where we used that $\p_x H \cdot \p_x F(t,H,\xi) = 1$. It follows that
\begin{equation}\label{eq-6v}
\dd{\nu}{x}(t,x_t,\xi) = \dd{\ln|\p_x F|}{t}(t,x,\xi).
\end{equation}
%Using that $H(t,x_t,\xi) = x$, we deduce that \begin{equation}  \dd{\nu}{x}(t,x_t,\xi) =  \dd{\ln |\p_x F|}{t}(t,x,\xi).\end{equation}
We now focus on the term  $\mu(x_t) \xi_t \lr{\xi_t}$ from \eqref{eq-7d}. Thanks to $p(x_t,\xi_t) = p(x,\xi) = \lambda(x) \lr{\xi}$:
\begin{equation}\label{eq-6w}
\mu(x_t) \xi_t \lr{\xi_t} = \dfrac{\dot{x_t} \mu(x_t) \lr{\xi_t}^2}{\lambda(x_t)} = \dfrac{\dot{x_t} \mu(x_t)}{\lambda(x_t)^3}  p(x,\xi)^2 = p(x,\xi)^2 \dd{}{t} \left( \int_x^{x_t} \dfrac{\mu}{\lambda^3} \right).
\end{equation}

4. We now integrate the ODE \eqref{eq-7d}, using the formulas \eqref{eq-6v} and \eqref{eq-6w}:
\begin{equation}
\alpha(t,x_t,\xi) = \exp\left( -\dfrac{1}{2}\ln \big| \p_x F(t,x,\xi) \big| -2ni p(x,\xi)^2 \cdot \int_x^{x_t} \dfrac{\mu}{\lambda^3} - i\int_0^t s(x_\tau) d\tau\right).
\end{equation}
Using that $x = H(t,x_t,\xi)$, and using $\p_x H \cdot \p_x F (t,H,\xi) = 1$, we conclude that
\begin{equation}\label{eq-7h}
\alpha = \big| \p_x H \big|^{1/2} \cdot \exp\left( -2n i
 \lr{\xi}^2\cdot \lambda^2 \circ H \cdot \int_{H}^x \dfrac{\mu}{\lambda^3} - i\int_0^t s \circ F(\tau,H,\xi) d\tau \right),
\end{equation}
with all functions evaluated at $(t,x,\xi)$. This completes the proof of \eqref{eq-7g}.

5. It remains to show that $\alpha$ is an admissible function of order $0$. The term involving $s$ does not cause problem because derivatives of $F$ and $H$ are uniformly bounded.  By construction of $T$, the function $\p_x H$ is in $C^\infty_b(X)$ and is bounded below by a positive constant -- see Lemma \ref{lem:1d}. Therefore $\big| \p_x H(t,x,\xi) \big|^{1/2} \in C^\infty_b(X)$, hence it is admissible. 

We now look at the factor
\begin{equation}\label{eq-7j}
\exp\left( -2n i
 \lr{\xi}^2\cdot \gamma(t,x,\xi) \right), \qquad \gamma(t,x,\xi) \de \lambda^2 \circ H(t,x,\xi) \cdot \int_{H(t,x,\xi)}^x \dfrac{\mu}{\lambda^3}
\end{equation}
arising in \eqref{eq-7h}. We note that $H(t,x,\xi) - x$ is bounded together with all its derivatives on $X$; and so are $\lambda^2 \circ H$ and $\mu \lambda^{-3}$. Therefore $\gamma \in C^\infty_b(X)$. Taking $k$ derivatives of \eqref{eq-7j} produces then growth at most $\lr{n,\xi^2}^{k}$. Therefore \eqref{eq-7j}, hence $\alpha$, are admissible of order $0$.
\end{proof}

We then set $b_0 = \alpha u$, where $\alpha$ is given by \eqref{eq-7g}, so that $L_1 b_0$ is orthogonal to the kernel of $L_0$. The equation $L_0 b_1 + L_1 b_0 = 0$ has then a unique solution $b_1 \in (\ker L_0)^\perp$. We then set 
\begin{equation}\label{eq-5a}
b = b_0+\epsilon b_1
\end{equation}

\begin{lemma} The function $b$ is admissible of order $1$.
\end{lemma}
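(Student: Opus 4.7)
The plan is to bound $b_0$ and $b_1$ separately. For $b_0 = \alpha u$, Lemma~\ref{lem:1f} already gives $\alpha$ admissible of order $0$, so it suffices to verify that $u = \matrice{\cos(\te/2) \\ \sin(\te/2)}$, with $\te = \pi/2 - \arctan \p_x \vp$, is uniformly bounded together with all its derivatives on $\Omega$ (neither $n$ nor $\epsilon$ enters). Using the representation $\vp(t,x,\xi) = F_-(t,x)\xi + R(t,x,\arctan \xi)$ from the proof of Lemma~\ref{lem:1i} together with the lower bound $\p_x F_- \geq 1/2$ from Lemma~\ref{lem:1d}, one gets $1 + (\p_x \vp)^2 \gtrsim \lr{\xi}^2$; a direct chain-rule computation, coupled with the estimates $|\p^\alpha_{t,x} \p_\xi^m(\vp - x\xi)| \leq C \lr{\xi}^{1-m}$ of Theorem~\ref{thm:1f}(ii), then yields uniformly bounded derivatives of $\arctan \p_x \vp$ of all orders. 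Hence $u$, and therefore $b_0$, is admissible of order $0$.

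For $b_1$, the plan is to use explicit spectral inversion of $L_0$ on $(\ker L_0)^\perp$. By the eikonal equation $L_0 = \lambda(x)\bigl(\sigma_1 + \sigma_3 \p_x \vp - \lr{\p_x \vp}\bigr)$; its kernel is $\C u$ and its range is $\C u^\perp$, where $u^\perp$ denotes the unit eigenvector of $\sigma_1 + \sigma_3 \p_x \vp$ for the eigenvalue $-\lr{\p_x\vp}$. On that line $L_0$ acts as multiplication by $-2\lambda(x)\lr{\p_x \vp}$. The defining conditions $L_0 b_1 = -L_1 b_0$ and $b_1 \in (\ker L_0)^\perp$ then give the closed formula
\begin{equation*}
b_1 = \dfrac{\lr{L_1 b_0, u^\perp}}{2\lambda(x)\, \lr{\p_x \vp}}\, u^\perp.
\end{equation*}
The crucial observation---precisely the point that allowed Lemma~\ref{lem:1f} to solve the transport equation---is that the only $n$-dependent piece of $L_1$ applied to $b_0$ produces $2n\mu(x) \p_x \vp \lr{\p_x \vp}\alpha u$, which is parallel to $u$ and hence drops out of $\lr{\cdot,u^\perp}$. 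What remains in $\lr{L_1 b_0, u^\perp}$ is a sum of terms, each involving at most one derivative of $\alpha$ multiplied by bounded, $n$-independent coefficients; since $\alpha$ is admissible of order $0$, each such single derivative is admissible of order $1$. The prefactor $\bigl(2\lambda(x)\lr{\p_x \vp}\bigr)^{-1}$ is uniformly bounded (and in fact decays in $|\xi|$), with all derivatives obeying symbol-like estimates by another application of Theorem~\ref{thm:1f}(ii). Multiplication yields $b_1$ admissible of order $1$, and thus $b = b_0 + \epsilon b_1$ admissible of order $1$.

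I expect the main obstacle to be the book-keeping for higher derivatives: every application of $\p_{t,x,\xi}$ to $b_1$ must cost exactly one extra factor of $\lr{\xi^2,n}$. Via the Leibniz and chain rules this reduces to verifying, first, that the exponent in the explicit formula for $\alpha$ in Lemma~\ref{lem:1f} grows as $n \lr{\xi}^2$---yielding one power of $\lr{\xi^2, n}$ per derivative---and, second, the elementary decay $|\p_y^k \arctan(y)| \lesssim \lr{y}^{-k-1}$ which controls the derivatives of $u$. Combined with the symbol-type estimates on $\vp$ provided by Theorem~\ref{thm:1f}(ii), these give the desired uniform admissibility.
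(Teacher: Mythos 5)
Your argument is correct, and its skeleton is the same as the paper's: $b_0=\alpha u$ with $\alpha$ handled by Lemma \ref{lem:1f} and $u$ admissible of order $0$ via the bounds of Theorem \ref{thm:1f}(ii), then $b_1$ obtained by inverting $L_0$ on $(\ker L_0)^\perp$, where it acts as multiplication by $2\p_t\varphi=-2\lambda\lr{\p_x\varphi}$; your closed formula is exactly \eqref{eq-7l}, since the transport equation \eqref{eq-4y} already forces $L_1(\alpha u)\perp u$. Where you genuinely diverge is in the estimation of the numerator: the paper bounds all of $L_1(\alpha u)$, including the coefficient $2n\mu\,\p_x\varphi\,(\sigma_1+\sigma_3\p_x\varphi)$, and then relies on the division by $2\lambda\lr{\p_x\varphi}\gtrsim\lr{\xi}$ (using $\p_x\varphi(t,x,\xi)=\xi+O(t\lr{\xi})$) to bring the order down; you instead observe that this $n$-dependent term acts on $b_0$ as $2n\mu\,\p_x\varphi\lr{\p_x\varphi}\alpha u$, i.e.\ parallel to $u$, so it disappears entirely from the $u^\perp$-component, leaving only bounded, $n$-independent coefficients times at most one derivative of $\alpha$. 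That cancellation is a sharper piece of bookkeeping than the paper's crude order-$3/2$ bound on $L_1(\alpha u)$, and it is what delivers the stated order $1$ cleanly; the trade-off is nil, since both routes use the same inputs (Lemma \ref{lem:1f}, Theorem \ref{thm:1f}(ii), Lemma \ref{lem:1d} for the lower bound on the denominator, and implicitly $\inf\lambda>0$, exactly as the paper does). One caveat: your final remark that the exponent $n\lr{\xi}^2$ in $\alpha$ "yields one power of $\lr{\xi^2,n}$ per derivative" is not something you need to re-verify — it is precisely the content of the admissibility-of-order-$0$ statement in Lemma \ref{lem:1f}, which you may simply cite; re-deriving it from the explicit exponential would reopen the same convention question about $\lr{\xi^2,n}$ that the cited lemma settles.
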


\begin{proof} Since $b = b_0 + \epsilon b_1 = \alpha u+\epsilon b_1$, and $\alpha$ is admissible of order $0$ (see Lemma \ref{lem:1f}), it suffices to show that $u$ and $b_1$ are both admissible of order (at most) $1$. Since $L_0$ is a $2 \times 2$ matrix with eigenvalues $0$ and $2\p_t \varphi = - 2\lambda \lr{\p_x\varphi}$, we have
\begin{equation}\label{eq-7l}
b_1 = \dfrac{L_1(\alpha u)}{2 \lambda \lr{\p_x\varphi}}.
\end{equation}
The estimate \eqref{eq-6r} implies that $\p_x \varphi$ is admissible of order $1/2$ (much better, in fact); hence from \eqref{eq-7k}, $u$ is admissible of order $0$. The coefficients in the non-differential parts of $L_1$ are admissible of order $3/2$, while the differential coefficients are just $C^\infty_b$, see \eqref{eq-4y}. Hence $L_1(\alpha u)$ is admissible of order $3/2$. Moreover, since $\p_x\varphi(0,x,\xi) = \xi$, we have from \eqref{eq-6r} that
\begin{equation}
\p_x\varphi(t,x,\xi) = \xi + \int_0^t \p_{tx}^2 \varphi(\tau,x,\xi) d\tau \geq \left( 1 - Ct\right) \xi.
\end{equation} 
From the expression \eqref{eq-7l} of $b_1$, we deduce that $b_1$ is admissible of order $1$. Therefore $b$ is admissible of order $1$.
 \end{proof}

\begin{lemma}\label{lem:1o} With $b$ and $\varphi$ defined as above, we have
\begin{equation}\label{eq-7n}
\big(\epsilon D_t + \Dii_{n,\epsilon}\big) b e^{i\varphi / \epsilon} = \epsilon^2 r e^{i\varphi / \epsilon},
\end{equation}
for an admissible function $r$ of order $4$. 
\end{lemma}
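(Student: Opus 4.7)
The plan is to verify the ansatz identity directly using the symbolic expansion already recorded in \eqref{eq-6q}. Conjugation by $e^{i\varphi/\epsilon}$ turns $\epsilon D_t + \Dii_{n,\epsilon}$ into $L_0 + \epsilon L_1 + \epsilon^2 L_2$, and substituting $b = b_0 + \epsilon b_1$ yields the power-series expansion
\begin{equation}
e^{-i\varphi/\epsilon}\,(\epsilon D_t + \Dii_{n,\epsilon})\,(b e^{i\varphi/\epsilon}) = L_0 b_0 + \epsilon(L_0 b_1 + L_1 b_0) + \epsilon^2(L_1 b_1 + L_2 b_0) + \epsilon^3 L_2 b_1.
\end{equation}
The first term vanishes because $L_0$ is a pointwise multiplication by the matrix $\p_t\varphi + \lambda(\sigma_1 + \sigma_3\p_x\varphi)$, whose eigenvalues are $0$ and $-2\lambda\lr{\p_x\varphi}$ thanks to the eikonal equation \eqref{eq-11m}; since $u$ is the normalized eigenvector \eqref{eq-7k} corresponding to the null eigenvalue, $L_0(\alpha u) = 0$. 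The second term vanishes by construction of $b_1$: the transport equation \eqref{eq-4y} arranges that $L_1 b_0$ is orthogonal to $\ker L_0 = \C u$, and $b_1$ defined by \eqref{eq-7l} inverts $L_0$ on $(\ker L_0)^\perp$.

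It remains to show that $r \de L_1 b_1 + L_2 b_0 + \epsilon L_2 b_1$ is an admissible function of order $4$ in the sense of Definition \ref{def:2}. I would do this by bookkeeping each factor in the definitions \eqref{eq-7o} of $L_1$ and \eqref{eq-7o}$\,'$ of $L_2$ against the admissibility orders already proved: $b_0$ is order $0$ and $b_1$ is order $1$; every $D_t$ or $D_x$ derivative of an admissible function raises its order by one (directly from \eqref{eq-7i}); the factors $\p_x\varphi$ and $\lr{\p_x\varphi}$ are admissible of order $1$ by \eqref{eq-6r}; and the coefficient $n$ is trivially admissible of order $1$ in $\lr{\xi^2,n}$. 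Thus the dominant contribution is $L_1 b_1$, whose worst summand $n\mu\p_x\varphi\lr{\p_x\varphi}\cdot b_1$ has order at most $1 + 1 + 1 + 1 = 4$; the remaining pieces $L_2 b_0$ and $\epsilon L_2 b_1$ are at most order $3$ and order $4$ respectively.

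The only nontrivial point in this bookkeeping is to check that products and derivatives really do interact with the $\lr{\xi^2,n}$ scale as claimed; for this I would record once and for all the elementary inequalities $\lr{\xi} \leq \lr{\xi^2,n}$, $\lr{n} \leq \lr{\xi^2,n}$, and closure of admissible symbols under multiplication (with orders adding) and under $\p_x,\p_t,\p_\xi$ (shifting orders by $+1$). With these in hand the estimate on $r$ is a routine verification, and the lemma follows. I expect the main (though still mild) obstacle to be setting up this admissibility calculus cleanly so that the $n$- and $\epsilon$-dependence of the coefficients in $\Dii_{n,\epsilon}$ is tracked uniformly; once that is done the bound on $r$ is essentially automatic.
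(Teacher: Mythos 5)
Your proposal is correct and follows essentially the same route as the paper: expand via \eqref{eq-6q}, kill the $O(1)$ and $O(\epsilon)$ terms using the eikonal and transport constructions, and then bound $r = L_1 b_1 + L_2 b$ by admissibility bookkeeping (the paper assigns $\p_x\varphi$ order $1/2$ and finds the worst term $2n\,\p_x\varphi\,\mu D_x b$ of order $7/2$, while your coarser count gives $4$; both yield admissibility of order $4$).
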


\begin{proof} We recall that 
\begin{equation}
e^{-i\varphi / \epsilon} \big(\epsilon D_t + \Dii_{n,\epsilon}\big) e^{i\varphi / \epsilon} = L_0 + \epsilon L_1 + \epsilon^2 L_2
\end{equation}
where $L_0, L_1, L_2$ are given in \eqref{eq-7o}. Using that $b = b_0+\epsilon b_1$, $L_0 b_0 = 0$ and $L_1 b_0 + L_1 b_0 = 0$, we deduce that 
\begin{equation}
\big(\epsilon D_t + \Dii_{n,\epsilon}\big) b e^{i\varphi / \epsilon} =  \epsilon^2 \big(L_2 b + L_1 b_1 \big) e^{i\varphi / \epsilon},
\end{equation}
and therefore the function $r$ emerging in \eqref{eq-7o} is given by $r = L_2 b + L_1 b_1$. We use the formulas \eqref{eq-7g} and the fact that $b, b_1$ are admissible of order $0$ to check that $r$ is admissible and compute its order. The worst term is $2n \p_x \varphi \mu(x) D_x b$, which is admissible of order $3/2+1+1=7/2 \leq 4$. This completes the proof. 
\end{proof}

\subsection{Conter-propagating WKB wave}\label{sec-4.5} We briefly explain how to construct a WKB wave starting from the other branch of \eqref{eq-2s}. One can check that the WKB solution constructed in \S\ref{sec-4.3} is also defined for short negative time, and satisfies the same estimates; in fact, one has
\begin{equation}
\vp(t,x,\xi) = -\vp(-t,x,-\xi).
\end{equation}
Consider then, for $t < 0$: 
\begin{equation}\label{eq-0s}
\TT (be^{i\varphi/\epsilon})(t,x,\xi) = i\sigma_2 \ove{b(t,x,\xi)} e^{-i\varphi(t,x,\xi)}.
\end{equation}
Because $\Dii_{n,\epsilon}$ anticommutes with $\TT$, see \S\ref{sec-2.3}, \eqref{eq-0s} is another WKB solution for $(\epsilon D_t + \Dii_{n,\epsilon})$. It has orientation orthogonal to $be^{i\varphi}$. We refer to it as the counter-propagating WKB wave to $be^{i\varphi}$.

\section{Parametrix}\label{sec-5}

In this section we complete the proof of Theorem \ref{thm:1i}. Our proof strategy goes as follows:
\begin{itemize}
\item[\textbf{1.}] We produce a parametrix $\fE$ for the operator $\epsilon D_t + \Dii_{n,\epsilon}$. It takes the form of an integral superposition of the WKB waves constructed in \S\ref{sec-4}. 
\item[\textbf{2.}] We prove an asymptotic expansion, $L^\infty$, and $L^2$ estimates for $\frak{E}_{n,\epsilon}$ as $\epsilon \rightarrow 0$.
\item[\textbf{3.}] We sum the parametrices, asymptotic expansions and estimates for each component in the block-diagonal decomposition of $\Dii$ (see \S\ref{sec-5.3}).
\end{itemize} 
The result is a parametrix for $h D_t + \Dii$, with associated $L^2 / L^\infty$ estimates that imply Theorem \ref{thm:1i}. 

\subsection{Preliminary estimates}\label{sec-5.1} We start by studying integral quantities of the form
\begin{equation}\label{eq-8f}
Q(t,x) = \int_\R e^{i\varphi(t,x,\xi) / \epsilon} \cdot q(t,x,\xi) \dfrac{d\xi}{\sqrt{\epsilon}},
\end{equation}
where $q$ is an admissible function of order $-\infty$, see Definition \ref{def:2}. Quantities of the form \eqref{eq-8f} are typically superposition of WKB waves constructed in \S\ref{sec-4}. We prove here that \eqref{eq-8f} expands in terms of the critical point $\Xi(t,x)$ of $\p_\xi \varphi(t,x,\cdot)$.

Before stating our next result, we comment that while technically speaking, the quantity $q\big(t,x,\Xi(t,x)\big)$ is only defined for $x \in I_t$, it extends smoothly for $x \in \R$ by:
\begin{equation}\label{eq-8o}
q \big(t,x,\Xi(t,x)\big) = 0, \qquad x \notin I_t.
\end{equation}
This is a consequence of the rapid decay of $\xi \mapsto q(t,x,\xi)$ ($q$ is an admissible function of order $-\infty$) and of the bound \eqref{eq-8l} for $\Xi(t,x)$ as $x$ approaches $x_t^\pm$.

\begin{theorem}\label{thm:1z} If $q$ is admissible of order $-\infty$, then:
\begin{itemize}
\item[(i)] Uniformly for $t$ in compact subsets of $(0,T)$, $\epsilon \in (0,1)$ and $n \in \N$:
\begin{equation}\label{eq-8v}
Q(t,x) = \left( e^{i \varphi / \epsilon}  \sqrt{\dfrac{2\pi i}{ \p_\xi^2 \varphi }} \cdot q \right)\big( t,x, \Xi(t,x) \big) + O_{L^2(\R)}\big(\epsi^{1/2} n^{-\infty}\big).
\end{equation}

\item[(ii)] Uniformly for $t \in (0,T)$, $\epsilon \in (0,1)$ and $n \geq 1$:
\begin{equation}\label{eq-8i}
Q(t,x) = O_{L^\infty(\R)} \big( t^{-1/2}n^{-\infty}  \big),
\qquad Q(t,x)   = O_{L^2(\R)} \big( t^{-1/2}n^{-\infty}  \big).
\end{equation}
\end{itemize}
\end{theorem}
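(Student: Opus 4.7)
The argument proceeds by standard semiclassical asymptotics: stationary phase in $\xi$ at the critical point $\Xi(t,x)$, glued to non-stationary integration by parts elsewhere. The natural dichotomy is $x \in I_t$ versus $x$ away from $I_t$, mediated by a smooth cutoff.

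\emph{Non-stationary region.} For $x$ bounded away from $I_t$, Lemma \ref{lem:1g} gives $|\partial_\xi \varphi(t,x,\xi)| \geq c(\dist(x,I_t) + t \lr{\xi}^{-2})$, so no critical point is present. I would iterate the transpose of the operator $L = \frac{\epsilon}{i\partial_\xi \varphi} \partial_\xi$ (which fixes $e^{i\varphi/\epsilon}$), integrating by parts $N$ times. Each iteration gains a factor $\epsilon/|\partial_\xi \varphi|$; differentiating $1/\partial_\xi \varphi$ produces controlled powers of $\lr{\xi}^2/t$ via \eqref{eq-7p}, and derivatives of $q$ remain admissible of order $-\infty$. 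The rapid decay of $q$ in both $\lr{\xi}$ and $n$ (built into Definition \ref{def:2} for order $-\infty$) absorbs all polynomial losses and yields $Q = O_{L^2}(\epsilon^{N} n^{-\infty})$ there, much smaller than the $\epsilon^{1/2} n^{-\infty}$ target.

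\emph{Stationary region and proof of (i).} For $x$ inside $I_t$, Corollary \ref{cor:5} gives a unique smooth critical point $\Xi(t,x)$, and Lemma \ref{lem:1k} gives non-degeneracy $\partial_\xi^2 \varphi(t,x,\Xi) \leq -ct\lr{\Xi}^{-3}$. Semiclassical stationary phase with remainder then produces
\[
Q(t,x) \,=\, e^{i\varphi(t,x,\Xi)/\epsilon} \sqrt{\dfrac{2\pi i}{\partial_\xi^2 \varphi(t,x,\Xi)}}\, q(t,x,\Xi) \,+\, E(t,x),
\]
where $E$ is pointwise controlled by $\epsilon$ times finitely many seminorms of $q$. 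Integrating in $x$ over the bounded set $I_t$ converts the pointwise $O(\epsilon)$ bound into $O_{L^2}(\epsilon^{1/2})$, and the $n^{-\infty}$ factor follows from the order-$(-\infty)$ assumption on $q$. The extension convention \eqref{eq-8o} is consistent with the decay of $q$ as $\Xi \to \infty$, so the leading term is genuinely globally defined on $\R$.

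\emph{Proof of (ii).} The $L^\infty$ bound follows from the stationary-phase identity above combined with
\[
\left| \sqrt{\dfrac{2\pi i}{\partial_\xi^2 \varphi(t,x,\Xi)}}\, q(t,x,\Xi) \right| \,\leq\, C\sqrt{\dfrac{\lr{\Xi}^3}{t}}\, |q(t,x,\Xi)| \,\leq\, C_N\, t^{-1/2}\, n^{-N},
\]
where Schwartz decay of $q$ in $\lr{\xi}$ defeats the $\lr{\Xi}^{3/2}$ factor and admissibility of order $-\infty$ in $n$ produces $n^{-\infty}$. The $L^2$ estimate then follows from this $L^\infty$ bound combined with the uniform boundedness of $|I_t|$ for $t \in (0,T)$, up to an $O(\epsilon^\infty)$ contribution from the non-stationary region.

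\emph{Main obstacle.} The delicate point is the endpoint regime $x \to x_t^\pm$, where $\Xi(t,x) \to \infty$ (by \eqref{eq-8l}) and the factor $\sqrt{1/|\partial_\xi^2\varphi|} \sim \lr{\Xi}^{3/2}/\sqrt{t}$ blows up; the saving is Schwartz decay of $q$ in $\xi$, but derivatives of $\Xi$ grow like $\lr{\Xi}^3$ (Lemma \ref{lem:1w}), which must be tracked through the stationary phase remainder. A clean organization is a dyadic decomposition in $|\xi|$: on each shell $|\xi| \sim 2^k$, rescale $\xi \to 2^k \xi$, which converts the problem to classical stationary phase with effective semiclassical parameter $\epsilon \cdot 2^{3k}/t$; uniform classical estimates then apply on each shell, and the sum in $k$ converges by rapid decay of $q$.
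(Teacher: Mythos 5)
Your part (i) is essentially the paper's argument: non-stationary integration by parts off $I_t$ via Lemma \ref{lem:1g}, stationary phase at the unique critical point $\Xi(t,x)$ of Corollary \ref{cor:5} with the degenerate convexity bound of Lemma \ref{lem:1k}, and the endpoint blow-up of $\Xi$ absorbed by the rapid $\xi$- and $n$-decay of $q$. Your dyadic-shell rescaling is a reasonable reorganization of the paper's localization to $|\xi-\Xi|\lesssim \lr{\Xi}^{-3}$ followed by a H\"ormander-type deformation to a quadratic phase. (Minor slip: a pointwise $O(\epsi)$ remainder on the bounded set $I_t$ gives $O_{L^2}(\epsi)$, not $O_{L^2}(\epsi^{1/2})$; the paper only establishes a pointwise $O(\epsi^{1/2})$ remainder, which is all that \eqref{eq-8v} requires.)

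The genuine gap is in part (ii). The bounds \eqref{eq-8i} are asserted uniformly for \emph{all} $t\in(0,T)$, including $t\downarrow 0$, and this uniformity is what later makes $t^{-1/2}$ integrable in the Duhamel step (proof of Theorem \ref{thm:1h}, and again in the reconstruction argument). You derive (ii) from the stationary-phase identity of part (i), but that identity is only available for $t$ bounded away from $0$, and even granting it, its remainder is of size $\epsi\,|\p_\xi^2\varphi(\Xi)|^{-3/2}\sim \epsi\, t^{-3/2}\lr{\Xi}^{9/2}$ (times seminorms of $q$), which is not $O(t^{-1/2})$ once $t\lesssim \epsi$ -- exactly the regime where the $t^{-1/2}$ rate must be captured (at $t=0$ one has $\|Q(0,\cdot)\|_{L^\infty}\sim \epsi^{-1/2}$, so no $\epsi$-independent bound can hold without it). Likewise, the contribution near $\p I_t$ cannot be made $O(\epsi^\infty)$ uniformly in $t$, since the lower bound \eqref{eq-7p} degenerates as $t\to 0$; your $L^2$ half of (ii), deduced from the $L^\infty$ half, inherits the same defect. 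The paper proves (ii) directly, without any asymptotic expansion: van der Corput's lemma (\cite[p.~334]{S93}) on unit intervals $[j,j+2]$, using only $|\p_\xi^2\varphi|\geq ct\lr{j}^{-3}$ from Lemma \ref{lem:1k}, gives a contribution $Ct^{-1/2}\lr{j}^{3/2}\int|\p_\xi q|$ per interval, uniformly in $\epsi$ and $t$; summing in $j$ and using admissibility of order $-\infty$ yields the $L^\infty$ bound, and the $L^2$ bound follows by combining it on a unit neighborhood of $I_t$ with a single integration by parts at distance $\geq 1$. To repair your proof, replace the stationary-phase expansion in (ii) by a van der Corput--type oscillatory bound that treats $t$ and $\epsi$ independently.
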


\begin{proof}[Proof of \eqref{eq-8v}] 1. In this proof we will occasionally drop the variables $t, x$ to keep equations under reasonable length. Assume that $(t,x)$ are such that $\p_\xi \varphi (t,x,\cdot)$ does not vanish on the support of $q(t,x,\cdot)$. We integrate \eqref{eq-8f} by parts:
\begin{equation}
Q(t,x) = i \epsilon^{1/2} \int_\R e^{i\varphi / \epsilon} \dd{}{\xi}\dfrac{q}{\p_\xi \varphi}  d\xi.
\end{equation}
Using that $\p_\xi^2 \varphi$ is uniformly bounded above, we deduce there exists $C > 0$ such that for every $(t,x)$ with $\p_\xi \varphi (t,x,\cdot)$ not vanishing on the support of $q(t,x,\cdot)$: 
\begin{equation}\label{eq-8g}
\big| Q(t,x) \big| \leq C \epsilon^{1/2} \sum_{j=0}^1 \int_\R |\p_\xi \varphi|^{j-2} |\p_\xi^j q|  d\xi.
\end{equation}

2. If $x \notin I_t$ then according to Lemma \ref{lem:1g}:
\begin{equation}\label{eq-8h}
 \big|\p_\xi \varphi(\xi)\big| \geq c  \big( t\lr{\xi}^{-2} +  \dist(x,I_t) \big).
\end{equation}
We then apply \eqref{eq-8g}:
\begin{equation}\label{eq-8w}
\big| Q(t,x) \big| \leq C \epsilon^{1/2} \cdot \sum_{j=0}^1 \big( t +  \dist(x,I_t) \big)^{j-2} \int_\R \lr{\xi}^{4-2j}  |\p_\xi^j q| d\xi, \qquad x \notin I_t.
\end{equation} 
By assumption $q$ and its derivatives decay faster than any polynomial in $\xi$, uniformly for $(t,x) \in \Omega_0$. Therefore, we deduce that
\begin{equation}\label{eq-8j}
\int_{x \notin I_t} \big| Q(t,x) \big|^2 dx \leq  C \epsilon \sum_{j=0}^1 \int_{x \notin I_t} \big( t +  \dist(x,I_t) \big)^{2j-4} dx \leq C t^{-3} \epsilon.
\end{equation}

3. We now work with $x \in I_t$. Set $C_0 = ct (2+2|\p_\xi^3 \varphi|_\infty)^{-1}$, where $c$ is the constant involved in \eqref{eq-7p}, and assume that
\begin{equation}
\supp \ q(t,x,\cdot) \subset \big\{ \xi : \ |\xi - \Xi(t,x)| \geq C_0	 \lr{\Xi(t,x)}^{-3} \big\}. 
\end{equation}
Then on the support of $q(t,x,\cdot)$, we have (dropping the dependence in $t,x$):
\begin{equations}
\big| \p_\xi \varphi(\xi) \big| = \big| \p_\xi \varphi(\xi) - \p_\xi \varphi(\Xi) \big| = \left| \int_{\Xi}^\xi \p_\xi^2 \varphi(\theta) d\theta \right| \geq ct \left| \int_{\Xi}^\xi \lr{\theta}^{-3} d\theta \right|.
\end{equations}

Assume first that $(t,x)$ are such that $C_0 \lr{\Xi}^{-3} \geq 2$. Then the interval between $\xi$ and $\Xi$ contains either $[\xi-2,\xi]$ or $[\xi,\xi+2]$. In particular, using that $\lr{\xi-2}, \lr{\xi}$ and $\lr{\xi+2}$ are comparable, we deduce that 
\begin{equations}
\big| \p_\xi \varphi(\xi) \big| \geq ct \lr{\xi}^{-3}, \qquad (t,x,\xi) \in \supp(q).
\end{equations}
Applying \eqref{eq-8g} produces 
\begin{equation}
\big| Q(t,x) \big| \leq C \epsilon^{1/2} \sum_{j=0}^1 \int_\R   \lr{\xi}^{6-3j} |\p_\xi^j q|  d\xi.
\end{equation}

Assume now that $(t,x)$ are such that $C_0 \lr{\Xi}^{-3} \leq 2$. Then if $\xi$ is such that $|\xi - \Xi(t,x)| \geq C_0 \lr{\Xi(t,x)}^{-3}$, then $\lr{\xi}$ and $\lr{\Xi}$ are comparable. It follows from \eqref{eq-8i} that  
\begin{equation}
\big| \p_\xi \varphi(\xi) \big| \geq ct |\xi-\Xi| \lr{\xi}^{-3} \geq ct \lr{\Xi}^{-3} \lr{\xi}^{-3} \geq ct \lr{\xi}^{-6}.
\end{equation}
Again, applying \eqref{eq-8g} produces
\begin{equation}
\big| Q(t,x) \big| \leq C \epsilon^{1/2} \sum_{j=0}^1 \int_\R \lr{\xi}^{12-6j}  |\p_\xi^j q| d\xi.
\end{equation}

4. We now assume that $x \in I_t$, and that  $(t,x)$ are such that
\begin{equation}
\supp \ q(t,x,\cdot) \subset \{ \xi : \ |\xi - \Xi(t,x)| \leq 2C_0 \lr{\Xi(t,x)}^{-3}\}. 
\end{equation}
Following H\"ormander's approach to stationary phase estimates -- see \cite[\S7.7]{H90} -- we write (again dropping the dependence in $t,x$ to shorten expressions):
\begin{equations}
\varphi(\xi) = \varphi(\Xi) + \dfrac{\p_\xi^2\varphi(\Xi)}{2}(\xi-\Xi)^2 +  r(\xi)(\xi-\Xi)^3, \qquad r(\xi) \de \int_0^1 \dfrac{(1-\te)^2}{2} \p_\xi^3 \varphi \big(\te\xi+(1-\te) \Xi\big) d\theta.
\\
\varphi_\te(\xi) \de \varphi(\Xi) + \dfrac{\p_\xi^2\varphi(\Xi)}{2}(\xi-\Xi)^2 + \te r(\xi)(\xi-\Xi)^3,
\qquad
I(\te) \de \int_\R e^{i\varphi_\te / \epsilon} q \dfrac{d\xi}{\epsilon^{1/2}}.
\end{equations} 
We note that $\varphi_1 = \varphi$, in particular
\begin{equation}
Q(t,x) = I(1) = I(0) + \int_0^1 I'(\te) d\te, \qquad I'(\te) = \dfrac{1}{h^{3/2}} \int_\R (\xi-\Xi)^3 e^{i\varphi_\te / \epsilon} r q d\xi.
\end{equation}

We show that $I'(\te) = O(\epsilon^{1/2})$. For that, we note that 
\begin{equation}
\p_\xi \varphi_\te(\xi) = (\xi-\Xi) \tr_\te(\xi), \qquad \tr_\te(\xi) \de \p_\xi^2\varphi(\Xi) + \te(\xi-\Xi) \int_0^1 \p_\xi^3 \varphi\big(\te'\xi+(1-\te') \Xi\big) d\theta'. 
\end{equation}
In particular, 
\begin{equation}\label{eq-8s}
\big| \tr_\te(\xi) \big| \geq \big| \p_\xi^2\varphi(\Xi) \big| - \dfrac{1}{2}|\p_\xi^3 \varphi|_\infty |\xi-\Xi| \geq \dfrac{1}{2} ct \lr{\Xi}^{-3},
\end{equation}
where we used $|\xi-\Xi| \leq 2C_0 \lr{\Xi}^{-3}$ and the above value of $C_0$. We now integrate by parts twice the expression for $I'(s)$ using the expression
\begin{equation}
(\xi-\Xi) e^{i\varphi_\te / \epsilon} = \dfrac{\p_\xi \varphi_\te}{\tr_\te} e^{i\varphi_\te / \epsilon} = \dfrac{\epsilon}{i \tr_\te} \dd{e^{i\varphi_\te / \epsilon}}{\xi} . 
\end{equation}
This gives:
\begin{equations}
I'(\te) = \dfrac{1}{h^{1/2}} \int_\R  \dfrac{1}{i \tr_\te} \dd{e^{i\varphi_\te / \epsilon}}{\xi}  \cdot (\xi-\Xi)^2 r q d\xi 
= \dfrac{i}{h^{1/2}} \int_\R  e^{i\varphi_\te / \epsilon} \cdot \dd{}{\xi} \dfrac{(\xi-\Xi)^2 r q}{\tr_\te} d\xi
\\
= \dfrac{i}{h^{1/2}} \int_\R (\xi-\Xi) e^{i\varphi_\te / \epsilon} \cdot  \left( 2 + (\xi-\Xi) \dd{}{\xi}\right) \dfrac{r q}{\tr_\te} d\xi
= -\epsilon^{1/2} \int_\R e^{i\varphi_\te / \epsilon} \cdot  \dd{}{\xi} \dfrac{1}{\tr_\te}\left( 2 + (\xi-\Xi) \dd{}{\xi}\right) \dfrac{r q}{\tr_\te} d\xi.
\end{equations}
Thanks to the bound \eqref{eq-8s}, and that $\xi$ and $\Xi$ are comparable, we conclude that 
\begin{equation}
\big| I'(\te) \big| \leq C\epsilon^{1/2} \cdot \sum_{j=0}^2 \int_\R \lr{\xi}^{12-3j} \big|\p_\xi^j q \big| d\xi.
\end{equation}

It remains to evaluate $I(0)$. We follow the standard calculation for oscillatory integrals with quadratic phase. From Plancherel's formula:
\begin{equations}
I(0) = \int_\R e^{i\varphi_0 / \epsilon} q \dfrac{d\xi}{\epsilon^{1/2}} = e^{i\varphi(\Xi) / \epsilon} \sqrt{\dfrac{2\pi i}{\p_\xi^2 \varphi(\Xi)}} \cdot  \int_\R  e^{ - i \epsilon \frac{y^2}{2 \p_\xi^2 \varphi(\Xi)}} e^{i \Xi y}\widecheck{q}(y) dy
\\
= e^{i\varphi(\Xi) / \epsilon} \sqrt{\dfrac{2\pi i}{\p_\xi^2 \varphi(\Xi)}} \cdot \left( q(\Xi) + \int_\R \Big( e^{ - i \epsilon \frac{y^2}{2 \p_\xi^2 \varphi(\Xi)}} - 1\Big) e^{i \Xi y}\widecheck{q}(y) dy \right).
\end{equations}
We control the integral emerging in the right-hand side by:
\begin{equations}
\epsilon^{1/2} \int_\R \frac{\epsilon|y|}{\big| 2 \p_\xi^2 \varphi(\Xi)\big|^{1/2}} \left| \widecheck{q}(y) \right| dy \leq  \int_\R \frac{|y|}{\big| 2 \p_\xi^2 \varphi(\Xi)\big|^{1/2} (1+y^2)} \left| \big(q-\p_\xi^2 q\big)^{\widecheck{\ }}(y) \right| dy
\\
 \leq C \epsilon^{1/2} \left( \int_\R \lr{\xi}^3 \big| \big(1-\p_\xi^2\big) q \big|^2 \right)^{1/2},
\end{equations}
where we use the Cauchy--Schwartz inequality, Plancherel's formula and that $\lr{\xi}, \lr{\Xi}$ are comparable. 

5. To conclude, we first observe that when $x \notin I_t$, \eqref{eq-8j} provides $L^2$-decay. On the other hand, when $x \in I_t$, $\big| Q(t,x) \big|$ is uniformly bounded. We then split the integral by mean of a cutoff function, 
$\chi\big( |\xi-\Xi| \lr{\Xi}^{3} \big)$, where $\chi$ is even, smooth and has compact support, and $\chi(0)=1$. Taking at most two derivatives of $\chi$ produces a factor $\lr{\Xi}^6$, which is comparable to $\lr{\xi}^6$ on the support of $\chi'$. Since $q$ is of order $-\infty$, $L^1$-norms of $\lr{\xi}^k \p_\xi^j q$ decay faster than any power of $n$. This completes the proof of \eqref{eq-8v}.  \end{proof}

\begin{proof}[Proof of the $L^\infty$-bound in \eqref{eq-8i}] To limit the length of expressions, we occasionally drop the variables $(t,x)$ in the expressions below. We first assume that $q(t,x,\cdot)$ has support in for $\xi \in [j,j+2]$ for some $j \in \Z$. According to Lemma \ref{lem:1k}, $\varphi$ satisfies the non-stationnary estimate
\begin{equation}
\left| \dd{^2}{\xi^2} \dfrac{\varphi(\xi)}{ct \lr{j}^{-3}} \right| \geq 1, \qquad \xi \in [j,j+2]
\end{equation}
where $c$ is independent of $j$ and $t$. We can therefore apply \cite[Corollary page 334]{S93}. With the notations of this book, the large parameter is $\epsilon^{-1} ct \lr{j}^{-3}$ and the constant involved is $c_2 = 5 \cdot 2^{2-1}-2 = 8$, see the formula on \cite[page 333]{S93}. This implies that
\begin{equations}
\left| \dfrac{1}{\epsilon^{1/2}}\int_j^{j+2} e^{i\varphi(\xi)/h} q(\xi) d\xi \right| \leq \dfrac{8}{\epsilon^{1/2}} \left( \dfrac{c t \lr{j}^{-3}}{\epsilon} \right)^{-1/2} \int_j^{j+2} \big| \p_\xi q (\xi)  \big| d\xi 
\leq  \dfrac{C}{\sqrt{t}} \int_j^{j+2} \lr{\xi}^{3/2} \big| \p_\xi q  \big| d\xi,
\end{equations}
where we used that $\lr{j}^{3/2}$ and $\lr{\xi}^{3/2}$ are comparable for $\xi \in [j,j+2]$. The constant $C$ on the RHS is independent of $j, t, \epsilon$.

If now $q$ has full support, we decompose it into pieces supported in $\xi \in [j,j+2]$, $j \in \Z$, and we sum the resulting estimates over $j$. It produces the bound 
\begin{equations}
\left| \dfrac{1}{\epsilon^{1/2}}\int_\R e^{i\varphi(\xi)/h} q(\xi) d\xi \right| \leq  \dfrac{C}{\sqrt{t}} \int_\R \lr{\xi}^{3/2} \big| \p_\xi q  \big| d\xi.
\end{equations}
Since $q$ is admissible of order $-\infty$, the integral on the RHS decays faster than any power of $n$. This completes the proof. 
\end{proof}

\begin{proof}[Proof of the $L^2$-bound in \eqref{eq-8i}.] The bound \eqref{eq-8w} implies that
\begin{equation}
\big| Q(t,x) \big| \leq C \epsilon^{1/2} \cdot \sum_{j=0}^1 \dist(x,I_t)^{j-2} \int_\R \lr{\xi}^{4-2j}  |\p_\xi^j q| d\xi.
\end{equation}
We square then integrate this expression for $\dist(x,I_t) \geq 1$. It gives
\begin{equation}\label{eq-8y}
\int_{\dist(x,I_t) \geq 1} \big| Q(t,x) \big|^2 dx \leq C \epsilon  \left(  \sum_{j=0}^1 \int_\R \lr{\xi}^{4-2j}  |\p_\xi^j q| d\xi \right)^2.
\end{equation}
When $\dist(x,I_t) \leq 1$, we use the $L^\infty$ bound and get -- using that $|I_t| = O(t)$:
\begin{equation}\label{eq-8x}
\int_{\dist(x,I_t) \leq 1} \big| Q(t,x) \big|^2 dx \leq Ct^{-1} (t+ 1) \left(\int_\R \lr{\xi}^{3/2} \big| \p_\xi q  \big| d\xi\right)^2 \leq \dfrac{C}{t} \left(\int_\R \lr{\xi}^{3/2} \big| \p_\xi q  \big| d\xi\right)^2. 
\end{equation}
Summing \eqref{eq-8y} and \eqref{eq-8x} yields
\begin{equation}
\big|Q(t,\cdot)\big|^2_{L^2} \leq \dfrac{C}{t} \left(  \sum_{j=0}^1 \int_\R \lr{\xi}^{4-2j}  |\p_\xi^j q| d\xi \right)^2.
\end{equation} 
Since $q$ is admissible of order $-\infty$, the integral on the RHS decays faster than any power of $n$. This completes the proof of \eqref{eq-8i}.
\end{proof}

\subsection{Parametrix} We now construct a parametrix for $\epsilon D_t + \Dii_{n,\epsilon}$ as an integral operator of the form \eqref{eq-8f}. Let $a_n$ depending and $x$ and $n$ (but not on $\epsilon$) such that
\begin{equation}\label{eq-9d}
(t,x,\xi) \mapsto \ha_n(\xi) \de \int_\R e^{-ix\xi} a_n(x) dx
\end{equation}
is admissible of order $-\infty$. Let $b_n^+ = b$ given by \eqref{eq-5a}, $b_n^- =\ TT b =  i\sigma_2 \cdot \ove{b}$ (note that $b$ depends implicitly on $n$ and $\epsilon$). Define then 
\begin{equations}\label{eq-9a}
\fE \de \fE^+ + \fE^-, \qquad \text{where} \\
\fE^\pm a_n(t,x) \de \dfrac{1}{2\pi\epsilon^{1/2}} \int_\R e^{\pm i\varphi( t,x, \pm \xi) / \epsilon} \cdot b^\pm_{n,\epsilon}(t,x,\xi) b^\pm_{n,\epsilon}(0,0,\xi)^\top \ha_n(\xi) d\xi.
\end{equations}
Note that $\fE^-$ is asssociated to the counter-propagating WKB wave constructed in \S\ref{sec-4.5}. Moreover the integral \eqref{eq-9a} is of the form \eqref{eq-8f}, in particular if $a$ is an admissible function of order $-\infty$ then uniformly in $t \in (0,T)$, $n \geq 1$ and $\epsilon \in (0,1)$:
\begin{equation}\label{eq-13f}
 \fE a(t,\cdot) = O_{L^2(\R)}(\epsilon n^{-\infty} t^{-1/2}).
\end{equation}

Our next theorem shows that $\fE$ is a parametrix for $\epsilon D_t + \Di_{n,\epsilon}$: $\fE a_n$ approximately solves the equation
\begin{equation}\label{eq-7f}
\left( \epsilon D_t + \Dii_{n,\epsilon} \right) f = 0, \qquad f(0,x_1) = \dfrac{1}{\epsilon^{1/2}} \cdot a_n\left(  \dfrac{x}{\epsilon}\right).
\end{equation}

\begin{theorem}\label{thm:1h} The solution $f$ to \eqref{eq-7f} satisfies,  uniformly in $t \in (0,T)$, $n \geq 1$ and $\epsilon \in (0,1)$:
\begin{equations}\label{eq-3a}
f(t,x) = \fE a(t,x) + O_{L^2(\R)}\big( \epsilon n^{-\infty} \big).
\end{equations}
Moreover, uniformly for $t$ in compact subsets of $(0,T)$:
\begin{equations}\label{eq-3aa}
f(t,x) =  \sum_{\pm} \left( e^{\pm i \varphi / \epsilon}  \sqrt{\pm \dfrac{i}{ 2\pi\p_\xi^2 \varphi }} \cdot b^\pm_{n,\epsilon}  b^\pm_{n,\epsilon}(0,0,\cdot)^\top \ha_n\right)\big( t,x, \pm \Xi(t,x) \big) + O_{L^2(\R)}\big(\epsilon^{1/2}n^{-\infty}\big).
\end{equations}
\end{theorem}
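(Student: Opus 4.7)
The strategy is to exhibit $\fE a$ as an $L^2$-approximate solution of \eqref{eq-7f}, with a $O(\epsilon^2)$ WKB residue and matching initial data, then invoke the unitarity of the propagator of $\epsilon D_t + \Dii_{n,\epsilon}$ through Duhamel's principle. The expansion \eqref{eq-3aa} will then come directly from the stationary-phase formula of Theorem \ref{thm:1z}(i) applied to each branch.

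\textbf{Step 1: Parametrix identity.} Since $\epsilon D_t + \Dii_{n,\epsilon}$ acts only in $(t,x)$, it passes inside the $\xi$-integral in the definition of $\fE^\pm a$. Theorem \ref{thm:1g} gives
\begin{equation*}
(\epsilon D_t + \Dii_{n,\epsilon}) \big( b^+ e^{i\varphi/\epsilon} \big) = \epsilon^2 r^+ e^{i\varphi/\epsilon},
\end{equation*}
with $r^+$ admissible of order $4$. For the counter-propagating branch, applying $\TT$ to this identity and using $\TT \Dii_{n,\epsilon} = - \Dii_{n,\epsilon} \TT$ as in \S\ref{sec-4.5} yields an analogous admissible remainder $r^-$. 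Because $\ha_n$ is admissible of order $-\infty$ and $b^\pm(0,0,\cdot)$ of order $1$, the products $r^\pm \cdot b^\pm(0,0,\cdot)^\top \ha_n$ remain admissible of order $-\infty$. Theorem \ref{thm:1z}(ii) then yields
\begin{equation*}
(\epsilon D_t + \Dii_{n,\epsilon}) \fE a(t, \cdot) = O_{L^2(\R)}\big( \epsilon^2 n^{-\infty} t^{-1/2} \big).
\end{equation*}

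\textbf{Step 2: Initial condition.} At $t=0$, $\varphi(0,x,\xi) = x\xi$, so both phases $e^{\pm i \varphi(0,x,\pm\xi)/\epsilon}$ collapse to $e^{ix\xi/\epsilon}$. By \eqref{eq-5a}-\eqref{eq-7k}, $b^+(0,0,\xi) = u(\xi) + O(\epsilon)$ with $u \in \R^2$ a unit eigenvector, and $b^-(0,0,\xi) = i\sigma_2 u(\xi) + O(\epsilon)$. These two vectors form an orthonormal basis of $\C^2$, so the completeness relation $b^+(0,0,\xi)\, b^+(0,0,\xi)^\top + b^-(0,0,\xi)\, b^-(0,0,\xi)^\top = \Id + O(\epsilon)$ holds. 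Fourier inversion then gives
\begin{equation*}
\fE a(0,\cdot) = \dfrac{1}{\sqrt\epsilon} a\big(\cdot / \epsilon\big) + O_{L^2(\R)}\big( \epsilon^{1/2} n^{-\infty} \big),
\end{equation*}
so that $g \de f - \fE a$ vanishes in $L^2$ at $t=0$ up to this error.

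\textbf{Step 3: Energy estimate and conclusion.} Since $\Dii_{n,\epsilon}$ is Hermitian, the propagator is unitary on $L^2(\R)$. Applying Duhamel's principle to $(\epsilon D_t + \Dii_{n,\epsilon}) g = -(\epsilon D_t + \Dii_{n,\epsilon}) \fE a$ gives
\begin{equation*}
|g(t,\cdot)|_{L^2} \leq |g(0,\cdot)|_{L^2} + \epsilon^{-1} \int_0^t \epsilon^2 n^{-\infty} s^{-1/2} \, ds = O\big( \epsilon n^{-\infty} \big),
\end{equation*}
uniformly in $t \in (0, T)$ and $n \geq 1$, using $\int_0^t s^{-1/2} ds = 2\sqrt{t}$. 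This proves \eqref{eq-3a}, and \eqref{eq-3aa} follows by combining this estimate with the asymptotic expansion of Theorem \ref{thm:1z}(i) applied to each branch of $\fE a$.

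\textbf{Main obstacle.} The central bookkeeping issue is uniformity in $n$: each $\xi$-derivative of $b^\pm$ brings out a factor of $\langle n, \xi^2 \rangle$, which must be absorbed by the rapid decay of $\ha_n$ in $\xi$ (since $\ha_n$ has order $-\infty$ uniformly in $n$). The $t^{-1/2}$ singularity delivered by Theorem \ref{thm:1z}(ii) is integrable at $t = 0$ and causes no difficulty; the most delicate point is ensuring that the counter-propagating construction of \S\ref{sec-4.5} produces the correct phase $e^{-i\varphi(t,x,-\xi)/\epsilon}$ and amplitude $b^-(t,x,\xi)$ appearing in \eqref{eq-9a} with the same admissibility estimates as the $+$ branch.
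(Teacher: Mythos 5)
Your route is essentially the paper's: the parametrix identity from Theorem \ref{thm:1g} (with the $-$ branch obtained by time reversal as in \S\ref{sec-4.5}), the $t^{-1/2} n^{-\infty}$ bound of Theorem \ref{thm:1z}(ii) for the residual, Duhamel with the integrable singularity $\int_0^t \tau^{-1/2}d\tau$, and finally Theorem \ref{thm:1z}(i) for \eqref{eq-3aa}. The one point that does not hold together as written is your Step 2/Step 3 bookkeeping of the initial data. You claim only an approximate completeness relation $b^+(0,0,\xi)b^+(0,0,\xi)^\top + b^-(0,0,\xi)b^-(0,0,\xi)^\top = \Id + O(\epsilon)$ and deduce $\fE a(0,\cdot) = f(0,\cdot) + O_{L^2}(\epsilon^{1/2}n^{-\infty})$; yet in Step 3 you absorb $|g(0,\cdot)|_{L^2}$ into $O(\epsilon n^{-\infty})$. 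Taken literally, your argument only yields $f = \fE a + O_{L^2}(\epsilon^{1/2}n^{-\infty})$, which is strictly weaker than \eqref{eq-3a}. (Even granting your $O(\epsilon)$ defect, Plancherel would give an initial error $O_{L^2}(\epsilon\, n^{-\infty})$, not $\epsilon^{1/2}$ -- but that is beside the main point.)

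The correct, and simpler, statement is that the match at $t=0$ is exact, and this is what makes the Duhamel step close. By the initial conditions \eqref{eq-6x} one has $b_1(0,\cdot,\cdot)=0$ and $\alpha(0,\cdot,\cdot)=1$, so $b^+(0,0,\xi)$ is exactly the real unit eigenvector $u$ of $\sigma_1+\xi\sigma_3$ and $b^-(0,0,\xi)=i\sigma_2 u$ is exactly the orthogonal one, for every $\epsilon$ and $n$; hence the rank-one sum is exactly the $2\times 2$ identity, and Fourier inversion gives $\fE a(0,x) = \epsilon^{-1/2} a(x/\epsilon) = f(0,x)$ with no error at all. With this correction your energy estimate gives precisely \eqref{eq-3a}, uniformly in $t\in(0,T)$, $n\geq 1$, $\epsilon\in(0,1)$, and the rest of your argument (including the uniformity-in-$n$ discussion and the application of Theorem \ref{thm:1z}(i) for \eqref{eq-3aa}) is sound and matches the paper.
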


%We note that since $b_\pm$ are admissible and $\ha$ has order $-\infty$, $\fE^+a$ is an integral of the form \eqref{eq-8f}. Likewise, the substitution $\xi \mapsto -\xi$ shows that $\fE^-a$ is (the complex conjugate of) an integral of the form \eqref{eq-8f}. Theorem \ref{thm:1z} then shows that for $t$ in compact sets of $(0,T)$, $\fE a$ is essentially supported within the cone $\{(t,x) : x \in I_t\}$ and is uniformly bounded as $\epsilon \rightarrow 0$. 

The formula \eqref{eq-3aa} implies that up to a small remainder, $f$ is also essentially supported within the cone $\{(t,x) : x \in I_t\}$ and uniformly bounded as $\epsilon \rightarrow 0$.  This is a particularly strong form of dispersion: the $L^\infty$-norm passes from $\epsilon^{-1/2}$ at time $0$ to $O(1)$ as soon as $t > 0$. The initial wavepacket structure is immediately lost.

\begin{proof} 1. We first verify that $f(0,x) = \fE a(0,x)$. Using $\varphi(0,x,\xi) = x\xi$ and that $b_\pm(0,x,\xi)$ does not depend on $x$, see \eqref{eq-7p}, we have
\begin{equation}\label{eq-6y}
\fE a(0,x) = \dfrac{1}{2\pi\epsilon^{1/2}} \int_\R e^{ \frac{ix\xi}{\epsilon} } \left( \sum_\pm b_\pm(0,0,\xi) \cdot b_\pm(0,0,\xi)^\top \right) \ha(\xi) d\xi\end{equation}
Moreover, $b_\pm(0,0,\xi)$ are orthonormal eigenvectors of $\sigma_1+\xi \sigma_3$. It follows that the matrix in \eqref{eq-6y} is the $2 \times 2$ identity, therefore \begin{equation}
\fE a(0,x) = \dfrac{1}{2\pi\epsilon^{1/2}} \int_\R e^{ \frac{ix\xi}{\epsilon} } \ha(\xi) d\xi = \dfrac{1}{\epsilon^{1/2}} \cdot a\left(  \dfrac{x}{\epsilon}\right) = f(0,x).
\end{equation}

2. Thanks to Theorem \ref{thm:1g}, we have $\big( \epsilon D_t + \Dii_{n,\epsilon} \big) \fE  = \epsilon^2 R$, where
\begin{equation}
Ra(t,x) = \dfrac{1}{2\pi\epsilon^{1/2}} \sum_\pm \int_\R e^{\pm i\varphi(t,x,\pm \xi) / \epsilon} r_\pm(t,x,\xi) b_\pm(0,0,\xi) \ha(\xi) d\xi,
\end{equation}
for two admissible functions $r_\pm$ of order $4$. In particular, recalling that $\ha$ is admissible of order $-\infty$, $R a$ is an integral of the form \eqref{eq-8f}. From Theorem \ref{thm:1z}(ii), we deduce that uniformly for $t \in (0,T)$: 
\begin{equation}\label{eq-13n}
\left\| \big( \epsilon D_t + \Dii_{n,\epsilon} \big) \fE  a(t,\cdot) \right\|_{L^2(\R)} \leq C t^{-1/2} n^{-\infty} \epsilon^2.
\end{equation}
Moreover, the function $t \mapsto t^{-1/2}$ is integrable near $0$. Applying Duhamel's principle as in \cite[Lemma 3.5]{BB+21}, we conclude that
\begin{equation}
\big\| f(t,\cdot) - \fE a(t,\cdot) \big\|_{L^2} \leq \dfrac{C n^{-\infty}}{\epsilon} \int_0^t \epsilon^2 \tau^{-1/2} d\tau \leq Cn^{-\infty} t^{1/2} \epsilon.
\end{equation}
This proves \eqref{eq-3a}. To deduce \eqref{eq-3aa}, we observe that \eqref{eq-9a} precisely has the form of the integral studied in \S\ref{sec-5.1}; therefore it suffices to apply Part (i) in Theorem \ref{thm:1z}. This completes the proof. \end{proof}

\subsection{Summability.}\label{sec-5.3} We now produce a leading order expansion for the solution to: 
\begin{equation}\label{eq-9b}
(h D_t + \Dii) F = 0, \qquad F(0,x) = \dfrac{1}{\sqrt{h}} \cdot A\left( \dfrac{x}{\sqrt{h}} \right), \qquad A \in \SSS(\R^2).
\end{equation}

%To this end, we decompose $A$ decomposing in the spinor Hermite basis introduced in \S\ref{sec:3.1}: \begin{equation} A(x_1,x_2) 
% = \sum_{n=0}^\infty a_n(x_1) G_{n,h}(\sqrt{h} x_2)  = \sum_{n=0}^\infty a_n(x_1) G_n(x_2). \end{equation}

Given the block-decomposition of $\Dii$ exhibited in \S\ref{sec-2.2}, and Theorem \ref{thm:1k} about traveling waves, a natural guess for the solution $F$ to \eqref{eq-9b} is:
\begin{equations}\label{eq-12g}
\frak{E} A(t,x) = \sum_{n=0}^\infty (\frak{E}_n a_n)(t,x_1) G_{n,h}(x_2), \qquad \text{where:}
\\
(\frak{E}_0 a_0)(t,x_1) \de \dfrac{e^{iS_t}}{\sqrt{\rho_th}} \cdot \left(\dfrac{e^{-i\pi/4}}{|4\pi\nu_t|^{1/2}} \cdot e^{i \frac{x_1^2}{4 \nu_t}} \star a_0\right) \left( \dfrac{x_1-x_t^+}{ \rho_t\sqrt{h}} \right); \qquad \text{and}
\\
\frak{E}_n \de \frak{E}_{n,h_n}, \qquad h_n = \sqrt{\frac{h}{2n}}, \qquad  a_n(x_1) \de \dfrac{1}{(2n)^{1/4}} \int_\R A\left( \dfrac{x_1}{\sqrt{2n}}, x_2 \right) G_n(x_2) dx_2.
\end{equations}

\begin{theorem}\label{thm:1a} The solution $F$ to \eqref{eq-9b} satisfies, uniformly for $t \in (0,T)$:
\begin{equation}\label{eq-9e}
F(t,x) = \frak{E}A(t,x) + O_{L^2(\R^2)}\big(h^{1/2}\big).
\end{equation}
Moreover, uniformly for $t$ in compact sets of $(0,T)$:
\begin{align}\label{eq-9ee}
F(t,x)
&  = \dfrac{e^{iS_t}}{\sqrt{\rho_th}} \cdot \left(\dfrac{e^{-i\pi/4}}{|4\pi\nu_t|^{1/2}} \cdot e^{i \frac{x_1^2}{4 \nu_t}} \star a_0\right) \left( \dfrac{x_1-x_t^+}{ \rho_t\sqrt{h}} \right) \cdot e^{-\frac{x_2^2}{2h}} \matrice{1 \\ 0}
\\
& \quad + \sum_{\substack{n=1, \pm}}^\infty \left( e^{\pm i \varphi / h_n}  \sqrt{\dfrac{\pm i}{ 2\pi\p_\xi^2 \varphi }} \cdot b^\pm_{n,h_n}  b^\pm_{n,h_n}(0,0,\cdot)^\top \ha_n\right)\big( t,x, \pm \Xi(t,x) \big) \cdot G_{n,h}(x_2) + O_{L^2}\big(h^{1/4}\big).
\end{align}
\end{theorem}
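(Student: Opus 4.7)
The plan is to use the block-diagonal decomposition of $\Dii$ from \S\ref{sec-2.2} to split \eqref{eq-9b} into a countable family of one-dimensional Cauchy problems, apply Theorem \ref{thm:1k} to the $n = 0$ sector and Theorem \ref{thm:1h} to each $n \geq 1$ sector, and sum using orthonormality of the semiclassical Hermite basis.

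First I would expand the Schwartz initial profile as $A(y) = \sum_{n \geq 0} \tilde{a}_n(y_1) \otimes G_n(y_2)$ with $\tilde{a}_n : \R \to \C^2$ obtained by projecting each component of $A$ against the appropriate Hermite function. Rescaling yields $F(0,x) = h^{-1/4}\sum_n \tilde{a}_n(x_1/\sqrt{h}) \otimes G_{n,h}(x_2)$. The block structure \eqref{eq-6n} propagates the decomposition: $F(t,x) = \sum_n f_n(t,x_1) \otimes G_{n,h}(x_2)$, where $f_0$ solves $(hD_t + L)f_0 = 0$ and, for $n \geq 1$, $f_n$ solves $(h_n D_t + \Dii_{n,h_n}) f_n = 0$ with $h_n = \sqrt{h/(2n)}$. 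Setting $a_n(y) \de (2n)^{-1/4}\tilde{a}_n(y/\sqrt{2n})$ puts the initial datum for $f_n$ in the normalized form $h_n^{-1/2} a_n(x_1/h_n)$ required by Theorem \ref{thm:1h}.

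Second, I would apply Theorem \ref{thm:1k} to the $n = 0$ sector, yielding the first line of \eqref{eq-9ee} with $O_{L^2}(h^{1/2})$ error. For each $n \geq 1$, Theorem \ref{thm:1h} gives $f_n(t,\cdot) = \frak{E}_n a_n(t,\cdot) + O_{L^2}(h_n n^{-\infty}) = \frak{E}_n a_n(t,\cdot) + O_{L^2}(\sqrt{h}\, n^{-\infty})$, together with the sharper expansion \eqref{eq-3aa} whose form matches the summand in \eqref{eq-9ee} and has error $O_{L^2}(\sqrt{h_n}\,n^{-\infty}) = O_{L^2}(h^{1/4} n^{-\infty})$. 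Orthonormality of $\{g_{n,h}\}_n$ yields the Parseval identity $\|\sum_n \phi_n \otimes G_{n,h}\|^2_{L^2(\R^2)} = \sum_n \|\phi_n\|^2_{L^2(\R,\C^2)}$; summing the sector-wise errors in squares produces $O_{L^2}(\sqrt{h})$ for \eqref{eq-9e} and $O_{L^2}(h^{1/4})$ for \eqref{eq-9ee}.

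The hard part will be ensuring that the hypotheses of Theorem \ref{thm:1h} hold uniformly in $n$, specifically that $\widehat{a}_n$ is admissible of order $-\infty$ in the sense of Definition \ref{def:2} with implicit constants independent of $n$. The key observation is that there exists a differential operator $N_1$ in $x_2$, built from the creation and annihilation operators $\aaa, \aaa^*$, that preserves the block structure and satisfies $N_1(f \otimes G_n) = 2n\,f \otimes G_n$. Since $N_1^M : \SSS(\R^2,\C^2) \to \SSS(\R^2,\C^2)$ for every $M$, the Hermite coefficients $\tilde{a}_n$ of $A \in \SSS$ satisfy $\|\tilde{a}_n\|_{\SSS(\R,\C^2)} \leq C_M n^{-M}$ for every $M$. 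Combining this polynomial decay with the relation $\widehat{a}_n(\xi) = (2n)^{1/4}\widehat{\tilde{a}_n}(\sqrt{2n}\,\xi)$ and standard Schwartz Fourier estimates yields $|\p_\xi^k \widehat{a}_n(\xi)| \leq C_{k,M}\,\lr{\xi^2,n}^{-M}$ uniformly in $n$, verifying the required admissibility and making all the preceding estimates quantitative.
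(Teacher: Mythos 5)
Your proposal is correct and follows essentially the same route as the paper: block-diagonal decomposition via the semiclassical Hermite basis, Theorem \ref{thm:1k} for the $n=0$ sector, Theorem \ref{thm:1h} (and its refinement \eqref{eq-3aa}) for each $n \geq 1$ with parameter $h_n = \sqrt{h/2n}$, and summation over $n$ using orthogonality after establishing that the Schwartz seminorms of $a_n$ decay faster than any power of $n$, which gives the uniform admissibility of $\ha_n$. The only difference is cosmetic: you extract the $n^{-\infty}$ decay by integrating by parts against the block operator $\mathrm{diag}(\aaa^*\aaa,\aaa\aaa^*)$ with exact eigenvalue $2n$ on $G_n$, while the paper does the same integration by parts componentwise with $-\p_2^2 + x_2^2$ (eigenvalues $2n\pm1$), followed by the rescaling $\widehat{a}_n(\xi) = (2n)^{1/4}\widehat{\tilde a}_n(\sqrt{2n}\,\xi)$ in both cases.
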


\begin{center}
\begin{figure}[!t]
\floatbox[{\capbeside\thisfloatsetup{capbesideposition ={right,center}}}]{figure}
{\caption{Space-time diagram for propagation starting at $(0,0)$. The blue region represents all accessible positions. The red curve is the trajectory of a traveling mode. Away from this curve the propagation is dispersive. }\label{fig:5}}
{\begin{tikzpicture}
\node at (0,0) {\includegraphics[scale=1]{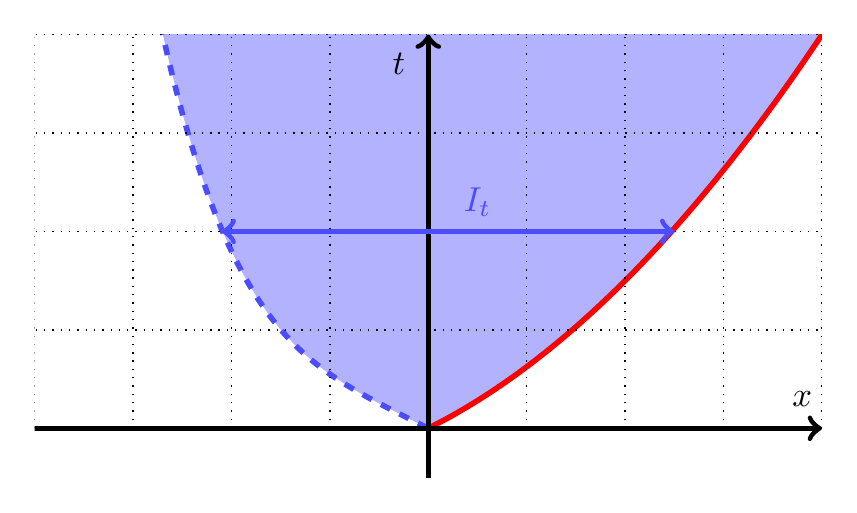}};
\end{tikzpicture}}
\end{figure}
\end{center}

\begin{proof}[Proof of Theorem \ref{thm:1a}] Thanks to Theorem \ref{thm:1k}, it suffices to prove the theorem when $a_0 = 0$. We decompose the solution $F$ to \eqref{eq-9b} as $F = \sum_{n=1}^\infty f_n \otimes G_{n,h}$, where each $f_n$ evolves according to $\left(h_n D_t + \Dii_{n,h_n} \right) f_n = 0$. In particular, we have
\begin{align}
f_n(0,x_1) & = \int_\R F(0,x_1,x_2) G_{n,h}(x_2) dx_2 = \int_\R \dfrac{1}{h^{3/4}}  A\left( \dfrac{x_1}{\sqrt{h}}, \dfrac{x_2}{\sqrt{h}} \right) G_n\left(\dfrac{x_2}{\sqrt{h}}\right) dx_2 
\\
& = \int_\R \dfrac{1}{h^{1/4}} A\left( \dfrac{x_1}{\sqrt{h}}, x_2 \right) G_n(x_2) dx_2 = \int_\R \dfrac{1}{(2n)^{1/4} \sqrt{h_n}} A\left( \dfrac{x_1}{\sqrt{2n} h_n}, x_2 \right) G_n(x_2) dx_2
\\
& = \dfrac{1}{\sqrt{h_n}} \cdot a_n \left( \dfrac{x_1}{h_n} \right),
\label{eq-8c}
\end{align}
where we defined
\begin{equation}\label{eq-9c}
a_n(x_1) \de \dfrac{1}{(2n)^{1/4}} \int_\R A\left( \dfrac{x_1}{\sqrt{2n}}, x_2 \right) G_n(x_2) dx_2.
\end{equation}

We claim that the functions $a_n$ have Schwartz seminorms decaying faster than any power of $n$. Indeed, thanks to \eqref{eq-9c}:
\begin{align}\label{eq-8b}
\big\| x_1^j  & \p_1^k a_n \big\|_{L^2}  = \left\| x_1^j  \p_1^k \int_\R A\left( \dfrac{x_1}{\sqrt{2n}},x_2 \right) G_n(x_2)dx_2 \right\|_{L^2}
\\
& \leq (3n)^{1+k} \left\|  \int_\R x_1^j  \p_1^k A(x) G_n(x_2)dx_2 \right\|_{L^2} = (2n)^{1+k} \left\|  \int_\R x_1^j  \p_1^k A(x) \dfrac{(-\p_2^2+x_2^2)^\ell G_n(x_2)}{(2n)^\ell} dx_2 \right\|_{L^2}
\\
& 
\leq (3n)^{1+k-\ell} \left\| \int_\R  (-\p_2^2+x_2^2)^\ell x_1^j  \p_1^k A(x) G_n(x_2) dx_2 \right\|_{L^2}  \leq (3n)^{1+k-\ell} \left\| \big(-\p_2^2+x_2^2\big)^\ell x_1^j  \p_1^k A \right\|_{L^2}.
\end{align}
In the second line we made the substitution $x_1 \mapsto \sqrt{2n} x_1$, then we used that the components of $G_n$ are eigenvectors of $-\p_2^2 + x_2^2$ with eigenvalues $2n+1, 2n-1$. In the third line we used that $-\p_2^2 + x_2^2$ is selfadjoint, then we used the Cauchy--Schwarz inequality, noting that $|G_n|_{L^2}=1$. By the Sobolev embedding theorem, \eqref{eq-8b} implies that $a_n \in \SSS(\R)$, with its Schwartz seminorms decaying faster than any power of $n$. 

In particular, $\ha_n$ is admissible of order $-\infty$, see \eqref{eq-9d}. 
Theorem \ref{thm:1h} produces then a leading-order description of $f_n$ in terms of $a_n$:
\begin{equation}\label{eq-10l}
f_n(t,x_1) - \frak{E}_{n,h_n} a_n(t,x_1) = O_{L^2(\R)} \big( h_n n^{-\infty}\big) =  O_{L^2(\R)} \big( h^{1/2} n^{-\infty}\big).
\end{equation}

To deduce \eqref{eq-9e}, it suffices to take tensor products of \eqref{eq-10l} with $G_{n,h}$ and to sum over $n$. Using that the functions $G_{n,h}$ form an orthonormal basis of $L^2(\R)$, we have:
\begin{equation}
F(t,x) = \sum_{n=1}^\infty f_n(t,x_1) G_{n,h}(x_2) = \sum_{n=1}^\infty \frak{E}_{n,h_n} a_n(t,x_1) G_{n,h}(x_2) + O_{L^2(\R)} \big( h^{1/2} \big) = \frak{E}A(t,x) + O_{L^2(\R)} \big( h^{1/2} \big).
\end{equation}

To deduce \eqref{eq-9ee}, we simply apply \eqref{eq-3aa}. This completes the proof. \end{proof}

\begin{proof}[Proof of Theorem \ref{thm:1i}] Theorem \ref{thm:1i} is now a quick consequence of \eqref{eq-9e}. We recall Cramer's inequality for Hermite functions: $|g_n|_{L^\infty} \leq \pi^{-1/4}$, see e.g. \cite{I61}; therefore $|G_{n,h}|_{L^\infty} \leq 2(\pi h)^{-1/4}$. We deduce that 
\begin{align}
\big| \frak{E} A(t,x) - \frak{E}_0 a_0(t,x_1) G_{0,h}(x_2) \big| & = \left| \sum_{n=1}^\infty \frak{E}_{n,h_n} a_n(t,x_1) G_{n,h}(x_2) \right| \\
& \leq 2(\pi h)^{-1/4} \sum_{n=1}^\infty \big| \frak{E}_{n,h_n} a_n(t,x_1) \big| \leq \dfrac{C}{h^{1/4} \sqrt{t}},
\end{align}
where we eventually used the uniform bound in \eqref{eq-8i} for $\frak{E}_{n,h_n}$. This completes the proof.
\end{proof}

\section{The case $\lambda$ constant}\label{sec:6}

When $\lambda$ is constant, then we can find explicitly the flow \eqref{eq-3q}:
\begin{equations}
x_t = x_0 + \lambda \dfrac{\xi }{\lr{\xi}}t, \qquad \xi_t = \xi_0.
\\
\Rightarrow \quad F(t,x,\xi) = x+ \lambda \dfrac{\xi }{\lr{\xi}}t, \qquad H(t,x,\xi) = x -  \lambda \dfrac{\xi }{\lr{\xi}}t.
\end{equations}
Hence the solution to the eikonal equation is explicit, given by 
\begin{equation}\label{eq-3b}
\varphi(t,x,\xi) = x\xi - \lambda \sqrt{1+\xi^2}.
\end{equation}
It follows that $\varphi$ is globally defined and satisfies the estimate \eqref{eq-3f} for all $t > 0$. Likewise, $\alpha, b_0, b_1, b$ are defined for all $t > 0$ (and admissible of order $-\infty$). It follows that the parametrices $\fE$ are defined for all $t > 0$ and satisfy 
\begin{equation}
\fE a(t,x) = 
\sum_{\pm} \left( e^{\pm i \varphi / \epsilon}  \sqrt{\pm \dfrac{i}{ 2\pi\p_\xi^2 \varphi }} \cdot b^\pm_{n,\epsilon}  b^\pm_{n,\epsilon}(0,0,\cdot)^\top \ha_n\right)\big( t,x, \pm \Xi(t,x) \big) + O_{L^2(\R)}\big(\epsilon^{1/2}n^{-\infty}\big).
\end{equation} 
Therefore, we deduce that Theorem \ref{thm:1a} holds for all fixed $t > 0$.

\begin{remark} This is valid for fixed time; we briefly discuss long-time alterations when $\mu$ is non-zero. One get sufficiently instruction by looking at the case $\lambda = 1, \mu = 1$, $s = 0$. A calculation shows that the function $\alpha$ from \eqref{eq-7g} depends on $\xi$:
\begin{equation}\label{eq-3g}
\alpha(t,x,\xi) = e^{ -2nit \lr{\xi} \xi }.
\end{equation}
When $t, \xi$ are bounded, we do not expect this difference to induce any qualitative change of behavior of the packet. As $t \rightarrow \infty$ however, \eqref{eq-3g} oscillates rapidly for fixed $\xi$; we therefore expect modified dispersive estimates in long time. We tracked down such effects when studying dynamical edge states in magnetic field, see \cite{BBD22}:  the wavepacket approximation there is valid for $T \sim h^{-1/8}$, smaller than $T \sim h^{-1/2}$ for the magnetic free case \cite{BB+21}. \end{remark}

\renewcommand{\thechapter}{D}

\chapter{Fourier reconstruction}\label{chap:4}

We now return to the equation $(hD_t + \Di) \Psi_t = 0$, where $\Di$ is no longer the model operator of Chapter \ref{chap:3} but instead is the Weyl quantization of a symbol $\di$ satisfying $\bm{(\AAA_1})-\bm{(\AAA_2)}$ (see \S\ref{sec-2.1}). We prove Theorem \ref{thm:1b}: semiclassically near $\Gamma_\di$, waves split in traveling and dispersive parts. 

We follow a standard strategy: we reduce the equation $(hD_t + \Di) \Psi_t = 0$ to the canonical model $(hD_t + \Dii) \Psi_t = 0$ studied in \S\ref{chap:3} using the Fourier integral operator $\FF$ constructed in Theorem \ref{thm:1d}. We then confirm that the traveling  / dispersive waves emerging in \S\ref{sec-5.3} keep their qualitative features under backward application of $\FF$.

\section{Preliminary estimates} Our program starts with some preliminary estimates that study how Fourier integral operators of the form \eqref{eq-} act on the quantities emerging in \eqref{eq-9ee}.

\subsection{Traveling wave}\label{sec:4.1} We study integrals of the form
\begin{equation}\label{eq-12n}
L(x) \de \int_{\R^4} e^{i\frac{\phi(x,\xi) - y\xi}{h}} c(x,\xi) \cdot \dfrac{1}{\sqrt{h}} q\left( \dfrac{y}{\sqrt{h}}\right)  \dfrac{dy d\xi}{(2\pi h)^2}, \qquad \text{where:}
\end{equation}
\begin{itemize}
\item $q$ is a Schwartz function on $\R^2$;
\item $c$ is a symbol bounded together with all its derivatives;
\item $\phi$ generates the generating function of the symplectomorphism $\kappa$ produced in Lemma \ref{lem:1q}. 
\end{itemize}
In particular, the equation $\p_\xi \phi(x,0) = 0$ has a unique solution, denoted $x_\star$ below. With $\xi_\star = \p_x \phi(x_\star, 0)$, the pair $(x_\star,\xi_\star)$ is uniquely characterized by the relation
\begin{equation}
\kappa(0,0) = (x_\star,\xi_\star).
\end{equation}
Our next lemma is (a version of) a standard result about action of Fourier integral operators on wavepackets. Essentially, \eqref{eq-12n} is a wavepacket concentrated at $(x_\star,\xi_\star)$ (see Definition \ref{def:1}).

\begin{lemma}\label{lem:1r} Under the above assumptions and notations, we have $L(x) = W(x)+O_{L^2}(\sqrt{h})$, where $W$ is a wavepacket concentrated at $(x_\star,\xi_\star)$. 
\end{lemma}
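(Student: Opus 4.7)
The plan is to recognize $L$ as a Fourier integral operator quantizing $\kappa$ applied to the wavepacket $\tfrac{1}{\sqrt h}q(\cdot/\sqrt h)$ concentrated at $(0,0)$, and to exploit the identity $\kappa(0,0)=(x_\star,\xi_\star)$ to read off the base point of the output. First I would reduce \eqref{eq-12n} to a single oscillatory integral: the substitution $y=\sqrt h\,\tilde y$ performs the inner integral as $\widehat q(\xi/\sqrt h)$, and rescaling $\xi=\sqrt h\,\eta$ yields
\[
L(x) \ = \ \frac{1}{(2\pi)^2 \sqrt h}\int_{\R^2} e^{i\phi(x,\sqrt h\,\eta)/h}\,c(x,\sqrt h\,\eta)\,\widehat q(\eta)\,d\eta,
\]
in which the Schwartz function $\widehat q$ acts as a built-in cutoff localizing $\xi$ to a $\sqrt h$-neighborhood of the origin.

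Next, using the two identities $\p_\xi\phi(x_\star,0)=0$ and $\p_x\phi(x_\star,0)=\xi_\star$ (which together encode $\kappa(0,0)=(x_\star,\xi_\star)$), I would Taylor-expand the phase around $(x_\star,0)$. Writing $x=x_\star+\sqrt h\,u$ gives
\[
\frac{\phi(x_\star+\sqrt h\,u,\sqrt h\,\eta)}{h} \ = \ \frac{\phi(x_\star,0)}{h}+\frac{u\cdot\xi_\star}{\sqrt h}+\frac{Q(u,\eta)}{2}+\sqrt h\,R(u,\eta,\sqrt h),
\]
with $Q(u,\eta)=u^\top A u+2\eta^\top B u+\eta^\top C\eta$ built from the Hessian blocks of $\phi$ at $(x_\star,0)$ (with $B=\p^2_{\xi x}\phi(x_\star,0)$ invertible by Lemma \ref{lem:1q}(ii)), and $R$ smooth with polynomial growth. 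Formally sending $h\to0$ inside the integral produces the candidate
\[
W(x) \ = \ \frac{e^{i s_h + i\xi_\star\cdot(x-x_\star)/h}}{\sqrt h}\,A_0\!\left(\frac{x-x_\star}{\sqrt h}\right),\qquad s_h \de \phi(x_\star,0)/h,
\]
\[
A_0(u) \ \de \ \frac{c(x_\star,0)}{(2\pi)^2}\,e^{\frac{i}{2}u^\top A u}\int_{\R^2} e^{i\eta\cdot Bu+\frac{i}{2}\eta^\top C\eta}\,\widehat q(\eta)\,d\eta;
\]
$A_0$ is Schwartz in $u$ because the $\eta$-integral is the Fourier transform of the Schwartz function $e^{i\eta^\top C\eta/2}\widehat q(\eta)$ evaluated at $-Bu$, so $W$ is indeed a wavepacket concentrated at $(x_\star,\xi_\star)$.

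Third I would control $L-W$ in $L^2$. In the near regime $|x-x_\star|\le h^{1/2-\epsilon}$, replacing the full phase by its quadratic Taylor expansion and $c(x,\sqrt h\eta)$ by $c(x_\star,0)$ introduces a pointwise error bounded by $\sqrt h$ times a polynomial in $(u,\eta)$ integrated against $\widehat q(\eta)$; integration by parts in $\eta$ against $e^{i\eta\cdot Bu}$ trades each $u$-power for a derivative on $\widehat q$, producing Schwartz decay in $u$. Combined with $dx=h\,du$, this gives $\|L-W\|_{L^2(|x-x_\star|\le h^{1/2-\epsilon})}=O(\sqrt h)$. In the far regime $|x-x_\star|\ge h^{1/2-\epsilon}$, the diffeomorphism property of Lemma \ref{lem:1q}(ii) together with the quadratic-at-infinity behavior of $\phi$ supplies a global lower bound $|\p_\xi\phi(x,0)|\ge c\,\dist(x,x_\star)$; the phase in $\eta$ is then non-stationary with frequency $\ge c h^{-\epsilon}$, and repeated integration by parts in $\eta$ yields pointwise decay $|L(x)|+|W(x)|=O\!\big(h^{-1/2}(\sqrt h/|x-x_\star|)^N\big)$ for every $N$, so their contribution to $\|L-W\|_{L^2}$ is $O(h^N)$ for any $N$. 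Combining the two regimes gives the desired $\|L-W\|_{L^2}=O(\sqrt h)$.

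The main obstacle will be the uniform bookkeeping of polynomial factors in $(u,\eta)$ arising from the Taylor remainder $R$ and from the amplitude expansion $c(x_\star+\sqrt h\,u,\sqrt h\,\eta)-c(x_\star,0)$: the $\eta$-factors are absorbed comfortably by Schwartz decay of $\widehat q$, but the $u$-factors must be produced via integration by parts in $\eta$ against $e^{i\eta\cdot Bu}$, each trade costing a derivative on $\widehat q$. One must ensure that this procedure is uniform in $h$ and matches continuously across the transition $|x-x_\star|\sim h^{1/2-\epsilon}$, so that no borderline terms contribute worse than $O(\sqrt h)$ to the final $L^2$ estimate.
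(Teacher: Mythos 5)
Your plan is correct, but it proves the lemma by a different mechanism than the paper. You expand the phase jointly in $(x,\xi)$ around $(x_\star,0)$ and control the error by a near/far decomposition in $x$: integration by parts in $\eta$ against $e^{i\eta\cdot Bu}$ to generate decay in the rescaled variable $u$ near $x_\star$, and a non-stationary phase argument based on the global lower bound $|\p_\xi\phi(x,0)|\ge c\,|x-x_\star|$ (which does follow from Lemma \ref{lem:1q}(ii) together with the quadratic-at-infinity normalization of $\phi$) away from $x_\star$. The paper never splits physical space and uses no pointwise non-stationary estimates: it Taylor-expands only in $\xi$ -- first the amplitude $c$, then the phase via the deformation $\phi_\te$ as in H\"ormander -- and recognizes each error as $\sqrt h$ times a semiclassical Fourier integral operator applied to an $L^2$-normalized profile, hence $O_{L^2}(\sqrt{h})$ by $L^2$-boundedness of such operators; the remaining quadratic-phase integral is computed exactly as $\widecheck{\rho}\big(x,\p_\xi\phi(x,0)/\sqrt h\big)$, and only then is the $x$-dependence expanded about $x_\star$, using the invertibility of $\p^2_{x\xi}\phi(x_\star,0)$ and the Schwartz property of $\widecheck{\rho}$. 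The paper's route buys global error control with essentially no bookkeeping, at the price of invoking operator-norm facts about Fourier integral operators; your route is more elementary and self-contained (pure oscillatory-integral estimates), but carries exactly the burden you flag: the interplay of $u$-powers from the cubic remainder and the amplitude variation with derivatives falling on $\hq$, on $e^{i\eta^\top C\eta/2}$ and on $e^{i\sqrt h R}$, plus the matching across $|x-x_\star|\sim h^{1/2-\epsilon}$ (you need $\epsilon$ small, e.g. $\epsilon<1/6$, so that $\sqrt h\,|u|^3$ remains small in the near region). Your limiting profile agrees with the paper's formula \eqref{eq-12w}, so both arguments produce the same wavepacket.
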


\begin{proof} 1. We first realize the integral in $y$ and obtain
\begin{equation}\label{eq-3h}
L(x) = \int_{\R^2} e^{i\frac{\phi(x,\xi)}{h}} c(x,\xi) \cdot \dfrac{1}{\sqrt{h}} \hq\left( \dfrac{\xi}{\sqrt{h}}\right)  \dfrac{d\xi}{(2\pi)^2h} = \dfrac{1}{\sqrt{h}} \int_{\R^2} e^{i\frac{\phi(x,\sqrt{h}\xi)}{h}} c(x,\sqrt{h}\xi) \cdot \hq(\xi) \dfrac{d\xi}{(2\pi)^2}.
\end{equation}
We then write a Taylor expansion of $c$ near $\xi=0$:
\begin{equations}\label{eq-3l}
c(x,\xi) = c(x,0) + \sum_{j=1}^2 \xi_j \cdot c_j(x,\xi), \qquad c_j \in C^\infty_b(\R^4).
\end{equations}
Plugging \eqref{eq-3l} in \eqref{eq-3h}, we end up with $L = L_1 + \sqrt{h} \cdot L_2$, where:
\begin{align}
L_1(x) & \de \dfrac{c(x,0)}{\sqrt{h}} \cdot \int_{\R^2} e^{i\frac{\phi(x,\sqrt{h}\xi)}{h}}  \cdot \hq(\xi) \dfrac{d\xi}{(2\pi)^2}.
\\
L_2(x) & \de \dfrac{1}{\sqrt{h}} \sum_{j=1}^2 \int_{\R^2} e^{i\frac{\phi(x,\sqrt{h}\xi)}{h}} c_j(x,\sqrt{h}\xi) \cdot \xi_j \hq(\xi) \dfrac{d\xi}{(2\pi)^2}.
\end{align}
 We observe that $L_2$ is the sum of two terms of the form \eqref{eq-3h}. Each of these term corresponds to a semiclassical Fourier integral operator applied to a function with fixed $L^2$-norm. We deduce that $L_2 = O_{L^2}(1)$, uniformly as $h \rightarrow 0$; hence $L = L_1 + O_{L^2}(\sqrt{h})$.
 
 2. We now write a Taylor expansion of $\phi$ near $\xi=0$:
\begin{equations}
\phi(x,\xi) = \phi(x,0) + \p_\xi \phi(x,0) \xi + \dfrac{1}{2} \lr{\xi, \p_\xi^2 \phi(x,0) \xi } +  \sum_{|\alpha|=3}\xi_\alpha \psi_\alpha(x,\xi).
\end{equations}
The functions $\psi_\alpha$ are smooth, compactly supported because $\phi$ is quadratic outside a compact set (see (i) from Lemma \ref{lem:1q}). As in \cite[\S7.7]{H90}, we define for $\te \in [0,1]$:
\begin{equations}
\phi_\te(x,\xi) = \phi(x,0) + \p_\xi \phi(x,0) \xi + \dfrac{1}{2} \lr{\xi, \p_\xi^2 \phi(x,0) \xi } + \te \cdot \sum_{|\alpha|=3}\xi_\alpha \psi_\alpha(x,\xi),
\\
L_\te(x) = \dfrac{c(x,0)}{\sqrt{h}} \cdot \int_{\R^2} e^{i\frac{\phi_\te(x,\sqrt{h}\xi)}{h}}  \cdot \hq(\xi) \dfrac{d\xi}{(2\pi)^2}.
\end{equations}
We then have $\phi_1 = \phi$ and therefore:
\begin{align}
L_1(x) & = L_0(x) + \int_0^1 \dfrac{dL_\te(x)}{d\te} d\te 
\\
& = L_0(x) + \sqrt{h} \cdot \dfrac{c(x,0)}{\sqrt{h}} \cdot \int_0^1 \sum_\alpha \left(\int_{\R^2} e^{i\frac{\phi_\te(x,\sqrt{h}\xi)}{h}} \psi_\alpha(x,\sqrt{h}\xi) \cdot \xi^\alpha\hq(\xi) \dfrac{d\xi}{(2\pi)^2}\right) d\te.
\end{align}
Again, the second term corresponds to a (integral over $\te$ of sums of) semiclassical Fourier integral operator applied to a normalized function. In particular, owing to the term $\sqrt{h}$ in front of it, we deduce that $L_1 = L_0 + O_{L^2}(\sqrt{h})$.

3. It remains to estimate the expression
\begin{equation}\label{eq-12s}
L_0(x) = \dfrac{c(x,0) e^{\frac{i}{h}\phi(x,0)}}{\sqrt{h}} \cdot \int_{\R^2} e^{i\xi \frac{\p_\xi \phi(x,0)}{\sqrt{h}}} \cdot e^{ \frac{i}{2} \lr{\xi, \p_\xi^2 \phi(x,0) \xi }}  \cdot \hq(\xi) \dfrac{d\xi}{(2\pi)^2}.
\end{equation}
We write the integral in \eqref{eq-12s} as
\begin{equations}\label{eq-12t}
\int_{\R^2} e^{i\xi \frac{\p_\xi \phi(x,0)}{\sqrt{h}}} \cdot \rho(x,\xi) \dfrac{d\xi}{(2\pi)^2} = \widecheck{\rho} \left(x, \frac{\p_\xi \phi(x,0)}{\sqrt{h}}\right), 
\\ \rho(x,\xi) \de e^{ \frac{i}{2} \lr{\xi, \p_\xi^2 \phi(x,0) \xi }}  \hq(\xi),
\qquad
\widecheck{\rho}(x,y) \de \int_{\R^2} e^{i\xi y} \cdot \rho(x,\xi) \dfrac{d\xi}{(2\pi)^2}.
\end{equations}
The function $\rho(x,\cdot)$ is Schwartz, with seminorms uniformly bounded for $x \in \R^2$: $\p_\xi^2 \phi$ is constant where $\phi$ is quadratic, hence outside a compact set. Hence so is its inverse Fourier transform. Recall that $x_\star$ is the solution to $\p_\xi \phi(x,0) = 0$ and Taylor-expand $\p_\xi \phi(x,0)$ near $x_\star$:
\begin{equation}
\p_\xi \phi(x,0) = \p_{x\xi}^2 \phi(x_\star,0)(x-x_\star) + \sum_{|\alpha| = 3} \Psi_\alpha(x) (x-x_\star)^\alpha, \qquad \Psi_\alpha \in C^\infty_0(\R^2).
\end{equation}
Using that $\p_{x\xi}^2 \phi(x_\star,0)$ is not singular, we can perform successive Taylor expansions of \eqref{eq-12t} near $x_\star$. We first obtain
\begin{align}
\widecheck{\rho} \left( x,\frac{\p_\xi \phi(x,0)}{\sqrt{h}}\right) & = \widecheck{\rho} \left( x,\p_{x\xi}^2 \phi(x_\star,0) \frac{x-x_\star}{\sqrt{h}}\right) + O_{L^2}(\sqrt{h})
\\
\label{eq-12u} & = \widecheck{\rho} \left( x_\star,\p_{x\xi}^2 \phi(x_\star,0) \frac{x-x_\star}{\sqrt{h}}\right) + O_{L^2}(\sqrt{h}),
\end{align}
We then go back to \eqref{eq-12s}. We recall that $\xi_\star = \p_x \phi(x_\star,0)$. By replacing $c(x,0)$ by $c(x_\star,0)$ and $\phi(x,0)$ by its Taylor expansion at order $2$, we deduce that
\begin{equation}\label{eq-12w}
L_0(x) = \dfrac{c(x_\star,0) e^{\frac{i}{h}\phi(x_\star,0) + \frac{i}{h} \xi_\star(x-x_\star)}}{\sqrt{h}}  \cdot e^{i \lr{z,\p_x^2 \phi(x_\star,0)z}} \widecheck{\rho} \left(x_\star, \p_{x\xi}^2 \phi(x_\star,0) z\right) \Bigg|_{z = \frac{x-x_\star}{\sqrt{h}}} + O_{L^2}(\sqrt{h}),
\end{equation}
 Hence, $L_0$ and $L$, are wavepackets concentrated at $(x_\star,\xi_\star)$ up to a remainder $O_{L^2}(\sqrt{h})$. \end{proof}

\subsection{Dispersive wave} Now we fix $n \geq 1$, $h > 0$ and we prove upper bounds on 
\begin{equation}\label{eq-9k}
B(x) \de \int_{\R^4} e^{i\frac{\phi(x,\xi) - y\xi}{h}} c(x,\xi) e^{i \vartheta_n(y_1) / \sqrt{h}}q(y_1) g_n\left( \dfrac{y_2}{\sqrt{h}}\right)  \dfrac{dy d\xi}{h^2}, \qquad \text{where:}
\end{equation}
\begin{itemize}
\item $q$ is admissible of order $-\infty$ (it particular it can depend implicitly on $n,h$) and has support in a bounded interval $I$;
\item $g_n$ is a Hermite function, see \S\ref{sec-H}; 
\item $\vartheta_n$ satisfies the bound
\begin{equation}\label{eq-11c}
\big| \p_{y_1}^j \vartheta_n(y_1) \big| \leq Cn^{1/2} \dist(y,\R \setminus I)^{-3j};
\end{equation}
\item $\phi$ generates the generating function of the symplectomorphism $\kappa$ produced in Lemma \ref{lem:1q}. 
\end{itemize}

\begin{lemma}\label{lem:1s} There exists $C$ independent of $n, h$ such that under the above assumptions,
\begin{equation}
\big|B(x)\big| \leq C n^{-\infty}, \qquad x \in \R^2.
\end{equation}
\end{lemma}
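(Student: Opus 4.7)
The strategy is to combine a natural semiclassical rescaling with two kinds of integration by parts, exploiting the Fourier self-duality of Hermite functions, the eigenfunction identity for the harmonic oscillator, and the rapid decay of the amplitude $q$.

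First I would rescale $y_2 = \sqrt{h}\, u$ and $\xi_2 = \sqrt{h}\, v$. Under this scaling the $y_2\xi_2$-coupling becomes $-iuv$ and the Hermite factor becomes $g_n(u)$; the $u$-integration collapses via the Fourier self-duality $\int_\R e^{-iuv}g_n(u)\,du = \sqrt{2\pi}(-i)^n g_n(v)$, reducing $B(x)$ to a three-dimensional oscillatory integral in $(y_1, \xi_1, v)$ with weight $g_n(v)$ and phase
\begin{equation*}
\Phi(x, y_1, \xi_1, v; h)/h \de [\phi(x, \xi_1, \sqrt{h} v) - y_1\xi_1]/h + \vartheta_n(y_1)/\sqrt{h}.
\end{equation*}

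Next I would analyze the reduced integral by stationary / non-stationary phase in $(y_1, \xi_1)$ for each fixed $v$. Away from the critical set, characterized by $\xi_1 = \sqrt{h}\,\vartheta'_n(y_1)$ and $y_1 = \p_{\xi_1}\phi(x, \xi_1, \sqrt{h} v)$, iterated integration by parts against $\p_{\xi_1}\Phi$ and $\p_{y_1}\Phi$ produces $O(h^\infty)$ decay. Near the critical set the leading stationary-phase contribution is proportional to $q(y_1^*)$, which is already $O(n^{-\infty})$ by admissibility of $q$: admissibility of order $-\infty$ gives $|\p^\alpha q| \leq C_{\alpha, N}\, n^{-N}$ for every $N$ and $\alpha$. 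The remaining one-dimensional integral in $v$ then has the form $\int a_n(x, v, h)\, g_n(v)\, dv$ for a function $a_n$ Schwartz in $v$ with seminorms growing at most polynomially in $n$; applying the eigenfunction identity $g_n(v) = (2n+1)^{-N}(v^2 - \p_v^2)^N g_n(v)$ and integrating by parts provides an extra factor $(2n+1)^{-N}$, which beats any polynomial growth from derivatives of $a_n$.

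The principal obstacle is coordinating the two scales $1/h$ and $1/\sqrt{h}$ in the phase while tracking the $n$-dependence of $\vartheta_n$ via \eqref{eq-11c}, in particular the singular behavior of its derivatives near $\p I$ (which is where the critical points $y_1^*$ may accumulate as $h\to 0$). The saving grace is that $q$ is admissible of order $-\infty$: any polynomial loss in $n$ and in $h^{-1}$ produced by stationary-phase remainders or by derivatives of the amplitude can be absorbed by taking $N$ in the bound $|\p^\alpha q|\leq C_{\alpha,N} n^{-N}$ large enough. Concretely, the final bookkeeping will confirm that, for each target exponent $-M$, there exists an $N$ such that the total exponent of $n$ in the estimate is at most $-M$, uniformly in $h \in (0,1]$ and in $x \in \R^2$.
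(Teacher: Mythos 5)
Your opening move (rescaling by $\sqrt h$ and collapsing the $y_2$-integral through the Fourier self-duality of Hermite functions) is exactly how the paper's proof begins. After that, however, there is a genuine gap, and it sits in the sentence you present as the saving grace: you claim that any polynomial loss in $n$ \emph{and in $h^{-1}$} coming from stationary-phase remainders can be absorbed by the admissibility bound $|\p^\alpha q|\leq C_{\alpha,N}n^{-N}$. That is false for this lemma. The constant must be independent of $n$ and $h$ \emph{separately} — in the application the lemma is summed over all $n\geq 1$ for each fixed $h$, with no relation between the two parameters — so an estimate of the form $C_N h^{-K}n^{-N}$ does not yield $Cn^{-\infty}$. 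Only losses in $n$ can be thrown onto $q$; the whole difficulty is to produce a bound with no uncompensated negative power of $h$ at all.

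This is not merely bookkeeping, because the joint stationary phase in $(y_1,\xi_1)$ you propose cannot deliver $h$-uniform constants. For your phase, the $(y_1,\xi_1)$ Hessian has determinant $\big(\sqrt h\,\vartheta_n''(y_1)\,\p_{\xi_1}^2\phi-1\big)/h^2$, and the only information on $\vartheta_n''$ is the upper bound $Cn^{1/2}\dist(y_1,\R\setminus I)^{-6}$ — no sign, no lower bound, blow-up at $\p I$ where the critical points accumulate — so the critical points can be degenerate or nearly degenerate for certain $(n,h,y_1)$ and there is no uniform large parameter (the phase mixes the scales $1/h$ and $\sqrt n/\sqrt h$). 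The paper avoids stationary phase entirely: it partitions the $y_1$-support according to $\vartheta_n'(y_1)\in[\ell-1,\ell+1]$, $\ell\in\Z$; on the region $|\xi_1-\ell|\gtrsim 1$ it integrates by parts in $y_1$ alone (one-dimensional non-stationary phase, yielding the integrable weight $\lr{\xi_1-\ell}^{-2}$ and a harmless factor $\sqrt{nh}$), and on $|\xi_1-\ell|\lesssim 1$ it bounds the $\xi_1$-integral exactly as a Fourier transform of a compactly supported amplitude — keeping the quadratic-in-$\xi_1$ part of $\phi$ inside the amplitude rather than expanding it — evaluated at $(y_1-\p_{y_1}\phi)/\sqrt h$, whose $dy_1/\sqrt h$ integral is exactly an $L^1$-norm, bounded by $C\lr{\ell}^4$ with no $h$-loss. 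The remaining polynomial factors in $n$ (including $\|g_n\|_{L^1}=O(n)$ and the powers of $\lr{\ell}$, comparable to $\vartheta_n'$) are then killed by the admissibility of $q$. A smaller symptom of the same $h$-versus-$n$ confusion is your final step: moving $(v^2-\p_v^2)^N$ from $g_n(v)$ onto an amplitude containing $e^{i\phi(x,\xi_1,\sqrt h v)/h}$ costs a factor of order $h^{-1/2}$ per derivative (and that amplitude is not Schwartz in $v$, since $c$ is merely bounded); the gain is also unnecessary, because once the $n$-decay comes from $q$, the crude bound $\|g_n\|_{L^1}\leq 2n+2$ suffices.
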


\begin{proof} 1. We first prepare the integral $B(x)$. Up to multiplying by a phase, we may assume that $\phi(x,0) = 0$. By translating $y_1$, we may assume that $\p_{y_1} \phi(x,0) = 0$. We now make the substitution $\xi \mapsto \sqrt{h} \xi$ and $y_2 \mapsto \sqrt{h} y_2$ so that 
\begin{equation}
B(x) = \int_{\R^4} e^{i\frac{\phi(x,\sqrt{h}\xi)}{h} + i\frac{\vartheta_n(y_1)-y_1\xi_1}{\sqrt{h}} - iy_2\xi_2} \cdot c(x,\sqrt{h} \xi) q(y_1) g_n(y_2)  \dfrac{dy d\xi}{\sqrt{h}}. 
\end{equation}
We now realize the integral over $y_2$ and obtain:
\begin{equation}\label{eq-9j}
B(x) = \int_{\R^3} e^{i\frac{\phi(x,\sqrt{h}\xi)}{h} + i\frac{\vartheta_n(y_1)-y_1\xi_1}  {\sqrt{h}} } c(x,\sqrt{h} \xi) q(y_1) \hg_n(\xi_2)  \dfrac{dy_1 d\xi}{\sqrt{h}}. 
\end{equation}
We comment that since the Fourier transform of the harmonic oscillator is itself, $\hg_n$ equals $g_n$ (up to a phase).

2. Fix $\ell \in \Z$; in Steps $2-4$ we assume that $q$ has support in the set $\big\{ |\vartheta_n'-\ell| \leq 1 \big\}$. Let $\chi \in C^\infty_0(\R)$, equal to $1$ on $[-3,3]$ and to $0$ outside $[-4,4]$. We write $B(x) = B_1(x) + B_2(x)$ with:
\begin{equations}\label{eq-11e}
B_1(x) \de \int_{\R^3} e^{i\frac{\phi(x,\sqrt{h}\xi)}{h} + i\frac{\vartheta_n(y_1)-y_1\xi_1}  {\sqrt{h}} } c(x,\sqrt{h} \xi) q(y_1) \chi(\xi_1-\ell) \hg_n(\xi_2)  \dfrac{dy_1 d\xi}{\sqrt{h}},
\\
B_2(x) \de \int_{\R^3} e^{i\frac{\phi(x,\sqrt{h}\xi)}{h} + i\frac{\vartheta_n(y_1)-y_1\xi_1}  {\sqrt{h}} } c(x,\sqrt{h} \xi) q(y_1) \big(1-\chi(\xi_1-\ell)\big) \hg_n(\xi_2)  \dfrac{dy_1 d\xi}{\sqrt{h}}.
\end{equations}

We focus on the integral in $y_1$ emerging in $B_2(x)$. For $(y_1,\xi_1)$ in the domain of integration, 
\begin{equation}
\big| \xi_1-\vartheta_n'(y_1) \big| \geq |\xi_1-\ell| - |\vartheta_n'(y_1)-\ell| \geq \dfrac{|\xi_1-\ell|}{2} + \dfrac{3}{2} - 1 \geq \dfrac{\lr{\xi_1-\ell}}{2}. 
\end{equation}
We integrate by parts twice according to the identity
\begin{equation}
\dfrac{-i\sqrt{h}}{\vartheta_n'(y_1)-\xi_1} \dd{}{y_1} e^{i\frac{  \vartheta_n(y_1)-\xi_1 y_1}{\sqrt{h}}}  = e^{i\frac{  \vartheta_n(y_1)-\xi_1 y_1}{\sqrt{h}}} .
\end{equation}
This yields
\begin{equations}\label{eq-9l}
\left| \int_\R e^{i\frac{\vartheta(y_1)-y_1\xi_1}  {\sqrt{h}} } q(y_1) \dfrac{dy_1}{\sqrt{h}} \right| = \sqrt{h} \left| \int_\R e^{i\frac{\vartheta(y_1)-y_1\xi_1}  {\sqrt{h}} } \left(\dd{}{y_1}\dfrac{1}{\vartheta'(y_1)-\xi_1}  \right)^2 q(y_1) dy_1 \right|
\\
\leq C \sqrt{nh} \sum_{j=0}^2 \lr{\xi_1-\ell}^{j-4} \int_\R \dist(y_1,I)^{-9} \big| \p_{y_1}^j q(y_1) \big| dy_1 \leq \dfrac{C\sqrt{nh}}{\lr{\xi_1-\ell}^{2}} \sum_{j=0}^2 \big\| \dist(y_1,I)^{-9} \p_{y_1}^j q \big\|_{L^1}.
\end{equations}
In the second line we used the bound \eqref{eq-11c} on derivatives of $\vartheta$. We note that since $q$ is smooth with support in $I$, so is $\dist(y_1,I)^{-9} \p_{y_1}^j q$. We then bound the quantity $B_2(x)$ defined in \eqref{eq-11e} using \eqref{eq-9l} and the integrability of $\lr{\xi_1}^{-2}$:
\begin{equations}\label{eq-9o}
\big|B_2(x)\big| \leq C \int_{\R^2} \big(1-\chi(\xi_1)\big) \big| \hg_n(\xi_2) \big| \cdot \left| \int_\R e^{i\frac{\vartheta(y_1)-y_1\xi_1}  {\sqrt{h}} } q(y_1) \dfrac{dy_1}{\sqrt{h}} \right| d\xi 
\\
\leq C\sqrt{nh} \big\| \hg_n \big\|_{L^1} \cdot \int_\R \dfrac{d\xi_1}{\lr{\xi_1-\ell}^{2}} \cdot \sum_{j=0}^2 \big\| \dist(y_1,I)^{-9} \p_{y_1}^j q \big\|_{L^1} \leq C \sqrt{n h} \big\| g_n \big\|_{L^1} \cdot \sum_{j=0}^2 \big\| \dist(y_1,I)^{-9} \p_{y_1}^j q \big\|_{L^1}.
\end{equations}

4. We now bound $B_1(x)$:
\begin{equation}\label{eq-9m}
\big|B_1(x)\big| \leq \| q \|_{L^\infty} \int_{\R^2}  \big| \hg_n(\xi_2)\big| \cdot\left| \int_\R e^{i\frac{\phi(x,\sqrt{h}\xi)}{h} - i\frac{y_1\xi_1}  {\sqrt{h}} } c(x,\sqrt{h} \xi) \chi(\xi_1-\ell)  d\xi_1 \right| \dfrac{dy_1 d\xi_2}{\sqrt{h}}.
\end{equation}
We focus on the integral over $\xi_1$. We Taylor-expand $\phi$ in $\xi_1$ near $0$: there exists a smooth function $\psi$, bounded together with all its derivatives, such that
\begin{equation}\label{eq-13c}
\phi(x,\xi) = \phi(x,0,\xi_2) + \p_{y_1} \phi(x,0,\xi_2) \xi_1 + \xi_1^2 \psi(x,\xi).
\end{equation}
We emphasize that $\psi$ is constant outside a compact set because $\phi$ is quadratic outside a compact set. From plugging \eqref{eq-13c} in \eqref{eq-9m}, we deduce
\begin{equations}\label{eq-9n}
\left| \int_\R e^{i\frac{\phi(x,\sqrt{h}\xi)}{h} - i\frac{y_1\xi_1}  {\sqrt{h}} }  c(x,\sqrt{h} \xi) \chi(\xi_1-\ell)  d\xi_1 \right| = 
\left| \int_\R e^{i\frac{\p_{y_1} \phi(0,\xi_2) - y_1}  {\sqrt{h}}\xi_1 }  \cdot   f_{h,x,\xi_2}(\xi_1) d\xi_1 \right| 
\\ = 
\left|\hf_{h,x,\xi_2}\left( \dfrac{y_1-\p_{y_1} \phi(0,\xi_2)}  {\sqrt{h}} \right)\right|,
\qquad
  f_{h,x,\xi_2}(\xi_1) \de c(x,\sqrt{h} \xi) e^{i\xi_1^2 \psi(\sqrt{h}\xi)}  \chi(\xi_1-\ell).
\end{equations}
Plugging \eqref{eq-9n} in \eqref{eq-9m}, we obtain
\begin{equation}
\big| B_1(x) \big| \leq \| q \|_{L^\infty} \int_{\R^2}  \big| \hg_n(\xi_2)\big| \cdot \left|\hf_{h,\xi_2}\left( \dfrac{y_1-\p_{y_1} \phi(0,\xi_2)}  {\sqrt{h}} \right)\right| \dfrac{dy_1 d\xi_2}{\sqrt{h}} \leq \| q \|_{L^\infty} \| \hg_n \|_{L^1} \cdot \sup_{\xi_2 \in \R} \big\| \hf_{h,\xi_2} \big\|_{L^1}.
\end{equation}
We control $\hf_{h,x,\xi_2}$ using Holder's $L^1-L^\infty$ bound and that the Fourier transform is bounded from $L^1$ to $L^\infty$:
\begin{equation}
\big\| \hf_{h,x,\xi_2} \big\|_{L^1} = \big\| \lr{y_1}^{-2} \lr{y_1}^2 \hf_{h,x,\xi_2} \big\|_{L^1} \leq \big\| (1+\p_{\xi_1}^2) f_{h,x,\xi_2} \big\|_{L^1} \leq C\lr{\ell}^4,
\end{equation}
where the constant $C$ is independent of $h, x, \xi_2$.
We recall that we assumed that the support of $q$ is contained in the set $\{ |\vartheta'-\ell| \leq 1\}$. Therefore, $\ell$ and $\vartheta'$ are comparable on the support of $q$; moreover $\lr{\vartheta'}$ is controlled by $\dist(y_1,I)^{-3}$, see \eqref{eq-11c}. We deduce that
\begin{equation}
\big| B_1(x) \big| \leq C \| q \|_{L^\infty} \| \hg_n \|_{L^1} \cdot \lr{\ell}^4 \leq C \big\| \dist(y_1,I)^{-12} q \big\|_{L^\infty} \| g_n \|_{L^1}.
\end{equation}
We conclude that whenever $\supp(q) \subset \{ |\vartheta'-\ell| \leq 1\}$,
\begin{equation}\label{eq-11d}
\big| B(x) \big| \leq C \sqrt{n} \| g_n \|_{L^1} \left( \big\| \dist(y_1,I)^{-12} q \big\|_{L^\infty} + \sum_{j=0}^2 \big\| \dist(y_1,I)^{-9} \p_{y_1}^j q  \big\|_{L^1} \right). 
\end{equation}

5. We now prove a bound on $\| g_n \|_{L^1}$. From Cauchy--Schwarz and integrability of $\lr{x_2}^{-2}$ over $\R$, we have:
\begin{equation}
\| g_n \|_{L^1} \leq \| \lr{x_2} g_n \|_{L^2} = \lr{(1+x_2^2) g_n,g_n} \leq \lr{1 - \p^2_2 + x_2^2) g_n,g_n} = 2n+2,
\end{equation}
where we used that $g_n$ is an eigenvector of $-\p_2^2+x_2^2$ for the eigenvalue $2n+1$. Hence, \eqref{eq-11d} becomes
\begin{equation}
\big| B(x) \big| \leq C n \left( \big\| \dist(y_1,I)^{-12} q \big\|_{L^\infty} + \sum_{j=0}^2 \big\| \dist(y_1,I)^{-9} \p_{y_1}^j q  \big\|_{L^1} \right). 
\end{equation}

6. To conclude, we must take off the restriction on the support of $q$. Let $\tchi \in C_0^\infty(\R)$ with support in $[-1,1]$ such that $\sum_{\ell \in \Z} \tchi(y_1-\ell) = 1$. We decompose $q$ as
\begin{equation}
q(y_1) = \sum_{\ell \in \Z} \tchi\big(\vartheta'(y_1)-\ell\big) q(y_1). 
\end{equation}
The terms in the sum satisfy the support condition of Steps 2-4. The additional price to pay is related to the growth of (at most two) derivatives of $\tchi\big(\vartheta'(y_1)-\ell\big)$, see \eqref{eq-11d}. These are controlled by $\vartheta''^2 + \vartheta'''$, hence by $n$ times $12$ additional moments of $\dist(y_1,I)^{-1}$. Summing these estimates produces 
\begin{equation}
\big| B(x) \big| \leq C n^2 \left( \big\| \dist(y_1,I)^{-12} q \big\|_{L^\infty} + \sum_{j=0}^2 \big\| \dist(y_1,I)^{-21} \p_{y_1}^j q  \big\|_{L^1} \right). 
\end{equation}
We can then bound these moments by sufficiently many derivatives of $q$, and use that $q$ is admissible of order $-\infty$ to conclude. 
 \end{proof}

\subsection{Pseudodifferential action} We recall that $g_{n,h}$ are semiclassical Hermite functions (see \S\ref{sec:2.2}) and $\frak{E}_n$ are the block parametrices defined in \eqref{eq-12g}.  

\begin{lemma}\label{lem:1t} Let $a$ be an admissible function of order $-\infty$; let $r = \OO(3) + \OO(1) O(h) + O(h^2)$. Then uniformly in $t \in (0,T)$, $h \in (0,1)$ and $n \in \N$:
\begin{equation}\label{eq-13a}
 r^w \big( \frak{E}_n a \otimes g_{n,h} \big) = O_{L^2} \big( h^{3/2} n^{-\infty} t^{-1/2} \big). 
 \end{equation}
\end{lemma}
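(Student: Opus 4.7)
The plan is to reduce the estimate to routine symbolic calculus by exploiting the phase-space concentration of $u = \frak{E}_n a \otimes g_{n,h}$ near the submanifold $\R \times \{0\}^3$. The key observation is that $u$ concentrates on this set at the scale $\sqrt{nh}$ in the three normal variables $(x_2,\xi_1,\xi_2)$: $g_{n,h}$ is a semiclassical Hermite function in $x_2$, and $\frak{E}_n a$ is a superposition of WKB waves with effective semiclassical parameter $h_n = \sqrt{h/(2n)}$; meanwhile $r$ vanishes precisely on $\R \times \{0\}^3$ to order $3$ at leading order. Morally, each factor of $(x_2,\xi_1,\xi_2)$ appearing in $r$ turns into a factor of $\sqrt{nh}$ in the $L^2$-norm.

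First I would make the concentration estimates explicit. Using $\aaa g_n = \sqrt{2n}\, g_{n-1}$, $\aaa^* g_n = \sqrt{2(n+1)}\, g_{n+1}$ and the semiclassical rescaling,
\begin{equation*}
\bigl\| x_2^j (hD_2)^\ell g_{n,h} \bigr\|_{L^2} \leq C(j,\ell)\, (nh)^{(j+\ell)/2}.
\end{equation*}
For the $x_1$-direction, applying $hD_1 = (h/h_n)\, h_n D_1$ to the oscillatory integral \eqref{eq-9a} brings down either the factor $(h/h_n)\p_x\vp = \sqrt{2nh}\, \p_x\vp$ from the phase, or an extra $h$ from the amplitude. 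Since $|\p_x\vp| \leq C\lr{\xi}$ by Lemma \ref{lem:1i} and $\ha$ is admissible of order $-\infty$, the integrand remains admissible of order $-\infty$ and \eqref{eq-13f} applies to the rescaled integral; iterating $k$ times,
\begin{equation*}
\bigl\| (hD_1)^k \frak{E}_n a \bigr\|_{L^2} \leq C (nh)^{k/2}\, h^{1/2} n^{-\infty} t^{-1/2}.
\end{equation*}
Taking tensor products,
\begin{equation*}
\bigl\| x_2^j (hD_1)^k (hD_2)^\ell u \bigr\|_{L^2} \leq C (nh)^{(j+k+\ell)/2}\, h^{1/2} n^{-\infty} t^{-1/2}.
\end{equation*}

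Next I would decompose $r = r_3 + h r_1 + h^2 r_0$ with $r_3 \in \OO(3)$, $r_1 \in \OO(1)$ and $r_0 \in C^\infty_b(\R^4)$, and expand
\begin{equation*}
r_3 = \sum_{j+k+\ell = 3} x_2^j \xi_1^k \xi_2^\ell\, r_{jk\ell}(x,\xi), \qquad r_1 = x_2 s_1 + \xi_1 s_2 + \xi_2 s_3,
\end{equation*}
with $r_{jk\ell}, s_i \in C^\infty_b$. By Moyal's composition rule, each monomial quantization satisfies $r_{jk\ell}^w \cdot x_2^j (hD_1)^k (hD_2)^\ell = (x_2^j \xi_1^k \xi_2^\ell r_{jk\ell})^w + h B_{jk\ell}^w$, where the Moyal correction symbol $B_{jk\ell}$ comes from a Poisson bracket and hence vanishes to order at least $2$ on $\R \times \{0\}^3$ (and higher Moyal terms carry extra powers of $h$). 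Regrouping all such corrections into the lower-order pieces of the decomposition, one obtains, after relabeling, a finite representation
\begin{equation*}
r^w = \sum_{j+k+\ell = 3} c_{jk\ell}^w \cdot x_2^j (hD_1)^k (hD_2)^\ell + h \sum_i d_i^w \cdot z_i + h^2 e^w,
\end{equation*}
where $z = (x_2, hD_1, hD_2)$ and all symbols $c_{jk\ell}, d_i, e$ belong to $C^\infty_b(\R^4)$. Applying Calderón--Vaillancourt to bound each $c^w, d_i^w, e^w$ uniformly on $L^2$ and invoking the concentration estimates above yields the three respective bounds $O(h^{2} n^{-\infty} t^{-1/2})$, $O(h^{2} n^{-\infty} t^{-1/2})$ and $O(h^{5/2} n^{-\infty} t^{-1/2})$, each absorbed into the claimed $O(h^{3/2} n^{-\infty} t^{-1/2})$.

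The main technical obstacle is the bookkeeping in Moyal's expansion: one must verify at each composition that errors either vanish to sufficiently high order on $\R \times \{0\}^3$ or acquire enough powers of $h$ to fit into the lower tiers of the decomposition $r = r_3 + hr_1 + h^2 r_0$. This is routine because the Poisson bracket of a function vanishing to order $N$ on $\R \times \{0\}^3$ with any smooth function vanishes to order $N-1$, so the Moyal correction at step $m$ carries $h^m$ and vanishes to order $3 - m$; but it is where the most care is needed in the write-up.
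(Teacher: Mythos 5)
Your proposal follows essentially the same route as the paper: split $r$ into the tiers $\OO(3)$, $h\,\OO(1)$, $O(h^2)$, commute the $C^\infty_b$ factors past the monomials $x_2^j(hD_1)^k(hD_2)^\ell$ by the composition calculus with the corrections pushed into the lower tiers, use the ladder relations \eqref{eq-13i} to convert each $x_2$ or $hD_2$ into a factor $(nh)^{1/2}$, and differentiate the oscillatory-integral formula \eqref{eq-9a} so that each $hD_1$ costs $(2nh)^{1/2}\p_x\varphi$ with the amplitude remaining admissible of order $-\infty$. Two points deserve correction, though neither breaks the argument. First, your quantitative claim $\| (hD_1)^k \frak{E}_n a \|_{L^2} \leq C (nh)^{k/2} h^{1/2} n^{-\infty} t^{-1/2}$ carries a spurious factor $h^{1/2}$ (already false at $k=0$: the evolution essentially preserves the $O(n^{-\infty})$ mass of the data, and indeed the leading term of \eqref{eq-3aa} has $L^2$-norm of order $1$ in $h$). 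You inherited it from \eqref{eq-13f} as printed, but the bound actually furnished by Theorem \ref{thm:1z}(ii) and used throughout is $O_{L^2}(n^{-\infty}t^{-1/2})$ with no power of $\epsilon$; with the corrected estimate the cubic tier produces exactly $(nh)^{3/2} n^{-\infty} t^{-1/2} = O(h^{3/2}n^{-\infty}t^{-1/2})$ and the $h\,\OO(1)$ tier produces $h\cdot (nh)^{1/2} n^{-\infty} t^{-1/2}$, so the lemma's exponent $3/2$ is sharp for this argument — your stated per-tier gains $h^2$, $h^2$, $h^{5/2}$ are too optimistic. Second, the statement is uniform in $n \in \N$ including $n=0$, where $\frak{E}_0$ is the explicit traveling-wave formula of \eqref{eq-12g} rather than a superposition of WKB waves with parameter $h_n$; the required bounds ($hD_1$, $x_2$, $hD_2$ each gaining $h^{1/2}$ on a wavepacket at scale $\sqrt h$) are immediate there, but this case needs to be checked separately, as the paper does.
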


\begin{proof} 1. First assume that $r = O(h^2)$. Then there is nothing to prove, because $r^w$ has $L^2$-norm controlled by $h^2$ and the $L^2$-norm of $\frak{E}_n a$ is controlled by $n^{-\infty} t^{-1/2}$, see \eqref{eq-13f}.

2. We now assume that $r = O(h) \OO(1)$. Then we can write
\begin{equations}\label{eq-13b}
r   = h (x_2 r_1 + \xi_1 r_2 + \xi_2 r_3) = h (x_2 \# r_1 + \xi_1 \#  r_2 + \xi_2 \# r_3) + O(h^2), 
\\
\qquad r^w   = h ( r_1^w \cdot h D_1  +  r_2^w \cdot x_2 +  r_3^w \cdot h D_2) + O(h^2)^w.
\end{equations}
Above, $\#$ denotes the operation induced at the level of symbols by the composition of two pseudodifferential operators. 
Again, for the purpose of proving estimates of the form \eqref{eq-13a}, we can ignore the term $O(h^2)^w$ in \eqref{eq-13b}. Moreover, from $2x_2 = \frak{a}+\frak{a^*}$ and $2 \p_2 = \frak{a}^*-\frak{a^*}$, we have the relations
\begin{equations}\label{eq-13i}
2x_2 g_{n,h} = h^{1/2} (\sqrt{2n+2} g_{n+1,h} + \sqrt{2n} g_{n-1,h}), 
\\ 
2h \p_2 g_{n,h} = h^{1/2} (\sqrt{2n} g_{n-1,h} +  \sqrt{2n+2} g_{n+1,h}),
\end{equations}
where we recall that $g_{-1} = 0$ by convention.
Therefore, using that $r_2^w$ is bounded on $L^2$, independently of $h$, and the bound \eqref{eq-13f} for the $L^2$-norm of $\frak{E}_n a$ when $n \geq 1$: 
\begin{equation}\label{eq-13l}
\big\| r_2^w \cdot x_2 \big( \frak{E}_n a \otimes  g_{n,h} \big) \big\|_{L^2} \leq C (nh)^{1/2} \sum_\pm \big\| \frak{E}_n a \otimes g_{n \pm 1} \big\|_{L^2} = O(h^{1/2} t^{-1/2} n^{-\infty}). 
\end{equation}
This is true also when $n=0$, as can be directly checked from the formula \eqref{eq-12g} for $\frak{E}_0$. Likewise, the same approach yields
\begin{equation}
\big\| r_3^w \cdot (h D_2) \big( \frak{E}_n a \otimes  g_{n,h} \big) \big\|_{L^2} \leq C (nh)^{1/2} \sum_\pm \big\| \frak{E}_n a \otimes g_{n \pm 1} \big\|_{L^2} = O(h^{1/2} t^{-1/2} n^{-\infty}). 
\end{equation}

We now focus on the term $r_1^w \cdot h D_1 \frak{E}_n a \otimes g_{n,h}$. If $n \geq 1$, using that $r_1^w$ is bounded on $L^2$:
\begin{equation}\label{eq-13h}
\big\| r_1^w \cdot \big( h D_1 \frak{E}_n a \otimes g_{n,h} \big) \big\|_{L^2} \leq C \big\| h D_1 \frak{E}_n a \big\|_{L^2}. 
\end{equation}
We then use the integral formula \eqref{eq-9a} for $\frak{E}_n a$. It gives:
\begin{align}
h D_1 \frak{E}_n a & = (2nh)^{1/2} \cdot \sum_\pm \dfrac{\pm 1}{2\pi h_n^{1/2}} \int_\R \p_{x_1} \varphi(t,x_1,\xi_1) e^{\pm i\varphi(t,x_1,\pm \xi_1) / h_n} \cdot b^\pm_{n,h_n}(t,x_1,\xi_1) b^\pm_{n,h_n}(0,0,\xi_1)^\top \ha(\xi_1) d\xi
\\
& + h \cdot \sum_\pm \dfrac{1}{2\pi h_n^{1/2}} \int_\R e^{\pm i\varphi(t,x_1,\pm \xi_1) / h_n} \cdot D_{x_1} b^\pm_{n,h_n}(t,x_1,\xi_1) b^\pm_{n,h_n}(0,0,\xi_1)^\top \ha(\xi_1) d\xi.
\end{align}
The above integrals are of the form \eqref{eq-8f}. Since they come with coefficients $h^{1/2}$ and $h$, we get 
\begin{equation}\label{eq-13g}
h D_1 \frak{E}_n a = O_{L^2} \big( h^{1/2} t^{-1/2} n^{-\infty}\big) .
\end{equation} 
This remains true when $n=0$, as can be directly verified from the formula \eqref{eq-12g}. Plugging \eqref{eq-13g} in \eqref{eq-13h}, and going back to \eqref{eq-13b}, we conclude that \eqref{eq-13a} holds whenever $r = O(h) \OO(1)$.

3. We now assume that $r = \OO(3)$. Then we can write
\begin{equations}
r = \sum_{\alpha_1+\alpha_2+\alpha_3 = 3} x_2^{\alpha_1} \xi_1^{\alpha_2} \xi_2^{\alpha_1} r_\alpha,
\\
r^w = \sum_{\alpha_1+\alpha_2+\alpha_3 = 3} r_\alpha^w \cdot x_2^{\alpha_1} (h D_1)^{\alpha_2} (h D_2)^{\alpha_3}  + \big( \OO(2) O(h) \big)^w + O(h^2)^w,
\end{equations}
where we used the expansion for the symbol of the composition of two pseudodifferential operators; see e.g. \cite[Theorem 4.12]{Z12}. Again, we can ignore the term $O(h^2)^w$. Symbols $\OO(2) O(h)$ are a fortiori $\OO(1) O(h)$, so therefore we can deal with them using Step 2. We now focus on the first term. Using the relations \eqref{eq-13i}, we can write
\begin{equation}
x_2^{\alpha_1} (h D_1)^{\alpha_2} (h D_2)^{\alpha_3} \big(\frak{E}_n a \otimes g_{n,h}\big) = \sum_{\alpha_2=0}^3  (h D_1)^{\alpha_2} \frak{E}_n a \cdot h^{\frac{3-\alpha_2}{2}} \sum_{j = \alpha_2-3}^{3-\alpha_2} c_{n,\alpha_2,j} g_{n \pm j,h},
\end{equation}
for some coefficients $c_{n,\alpha_2,j}$ that are $O(n^{3/2})$. Following the same path as that leading to \eqref{eq-13g}, we conclude that
\begin{equation}
\big\| r_\alpha^w \cdot x_2^{\alpha_1} (h D_1)^{\alpha_2} (h D_2)^{\alpha_3} \cdot \big( \frak{E}_n a \otimes g_{n,h} \big) \big\|_{L^2} \leq C n^{3/2} \sum_{\alpha_2=0}^3 h^{\frac{3-\alpha_2}{2}} \big\| (h D_1)^{\alpha_2} \frak{E}_n a \big\|_{L^2} \leq C h^{3/2} n^{-\infty} t^{-1/2}.
\end{equation}
This completes the proof. \end{proof}

\section{Proof of Theorem A.2}

\begin{proof}[Proof of Theorem \ref{thm:1b}] 1. Fix $(x_0,\xi_0) \in \Gamma$. According to Theorem \ref{thm:1d}, there exists:
\begin{itemize}
\item An operator $\FF$ of the form \eqref{eq-10c};
\item Three smooth functions $\lambda : \R \rightarrow \R^+$, $\mu, s : \R \rightarrow \R$;
\item A pseudodifferential operator $R$ with symbol $\OO(3) + \OO(1) O(h) + \OO(h^2)$ near $0$;
\end{itemize}
such that, with $\Dii$ of the form \eqref{eq-10b}:
\begin{equation}\label{eq-12v}
\Di \FF \  = \FF( \Dii + R ).
\end{equation}

2. Let $\psi_0$ be a wavepacket concentrated at $(x_0,\xi_0)$, see Definition \ref{def:1}. Up to a phase $e^{is_h}$,
\begin{equation}
\psi_0(x) = \dfrac{e^{\frac{i}{h} \xi_0(x-x_0)} }{\sqrt{h}} \cdot A\left( \dfrac{x-x_0}{\sqrt{h}} \right), \qquad A \in \SSS(\R^2).
\end{equation}
We define $\rho \in \SSS(\R^2)$ then $\tA \in \SSS(\R^2)$ through the implicit relations:
\begin{equations}
c(x_0,0) e^{\frac{i}{h} \phi(x_0,0)}  \cdot e^{i \lr{y,\p_x^2 \phi(x_\star,0)y}} \widecheck{\rho}_0 \left( \p_{x\xi}^2 \phi(x_\star,0) y\right) = A(y),
\\
\rho_0(\xi) = e^{\frac{i}{2} \lr{\xi, \p_\xi^2 \phi(x_0,0) \xi}} \widehat{\tA}(\xi), \qquad \tpsi_0(x) \de \dfrac{1}{\sqrt{h}} \tA\left( \dfrac{x}{\sqrt{h}} \right).
\end{equations}
Then, noting that $\FF \tpsi_0$ is an integral of the form \eqref{eq-12n}, and following Step 3 in the proof of Lemma \ref{lem:1r}, we obtain
\begin{equations}\label{eq-12a}
\FF \tpsi_0(x) = \dfrac{c(x_0,0) \cdot e^{\frac{i}{h}\phi(x_0,0) + \frac{i}{h} \xi_0 (x-x_0)}}{\sqrt{h}} \cdot  e^{\frac{i}{h} \lr{z,\p_x^2 \phi(x_0,0)z}} \widecheck{\rho}_0 \left( \p_{x\xi}^2 \phi(x_\star,0) z\right)  \Bigg|_{z = \frac{x-x_0}{\sqrt{h}}} + O_{L^2}(\sqrt{h})
\\
= \dfrac{e^{\frac{i}{h} \xi_0 (x-x_0)}}{\sqrt{h}} \cdot A\left(\dfrac{x-x_0}{\sqrt{h}}\right) + O_{L^2}(\sqrt{h}) = \psi_0(x) + O_{L^2}(\sqrt{h}).
\end{equations}

3. We now define $\tpsi_t(x) = (\frak{E}\tpsi_0)(t,x)$. Thanks to \eqref{eq-12v}, we have:
\begin{equation}
(h D_t + \Di) \big(\psi_t-\FF \tpsi_t\big) = - \FF (h D_t + \Dii + R) \tpsi_t. 
\end{equation}
We now apply Duhamel's formula as in \cite[Lemma 3.5]{BB+21}:
\begin{equation}\label{eq-12b}
\big\| \FF \tpsi_t-\psi_t\big\|_{L^2} \leq \big\| \FF \tpsi_0-\psi_0 \big\|_{L^2} + \dfrac{1}{h} \int_0^t \big\| \FF (h D_\tau + \Dii + R) \tpsi_\tau \big\|_{L^2} d\tau.
\end{equation}
To study the right-hand-side, we decompose $\tA = \sum_{n=0}^\infty \ta_n \otimes G_n$ according to \eqref{eq-12g}. Therefore, using that $\psi_\tau = \frak{E} \tA = \sum_{n=0}^\infty \frak{E}_n \ta_n \otimes G_{n, h}$ and the decomposition \eqref{eq-6n} of $\Dii$, we deduce that
\begin{equation}
(h D_\tau + \Dii) \tpsi_\tau = (h D_\tau + L) \frak{E}_0 \ta_0 \otimes G_{0,h} + \sum_{n=1}^\infty \sqrt{2nh} (h_n D_\tau + \Dii) \frak{E}_n \ta_n \otimes G_{n,h}.
\end{equation}
In particular, thanks to \eqref{eq-13q} and \eqref{eq-13n}:
\begin{align}
\big\| (h D_\tau + \Dii) \tpsi_\tau \big\|_{L^2} & \leq \big\| (h D_\tau + L) \frak{E}_0 \ta_0(\tau,\cdot) \big\|_{L^2} + \sum_{n=1}^\infty \sqrt{2nh} \big\| (h_n D_\tau + \Dii) \frak{E}_n \ta_n (\tau,\cdot) \big\|_{L^2} 
\\
& \label{eq-13p}
\leq  Ch^{3/2} + C h^{1/2} \tau^{-1/2} \sum_{n=1}^\infty  h_n^2 n^{1/2-\infty} = O(h^{3/2} \tau^{-1/2}).
\end{align}
Likewise, thanks to Lemma \ref{lem:1t}:
\begin{equation}\label{eq-13r}
\big\| R \tpsi_\tau \big\|_{L^2} \leq \sum_{n=0}^\infty \big\| R \big( \frak{E}_n \ta_n \otimes G_{n,h} \big) \big\|_{L^2} = O(\tau^{-1/2} h^{3/2}).
\end{equation}
As a semiclassical Fourier integral operator, $\FF$ is bounded (independently of $h$) on $L^2$. Therefore, plugging the bounds \eqref{eq-13p} and \eqref{eq-13r} in \eqref{eq-12b} and using that $\tau^{-1/2}$ is integrable near $0$, we conclude that uniformly in $t \in (0,T)$ and $h > 0$:
\begin{equation}\label{eq-13s}
\psi_t = \FF \tpsi_t + O_{L^2}(h^{1/2}).
\end{equation}

3. According to \eqref{eq-13s}, the solution $\psi_t$ to $(h D_t + \Di)\psi_t=0$ and the function $\FF \frak{E} \tpsi_0$ nearly coincide. To complete the proof of Theorem \ref{thm:1b}, it remains to understand $\FF \frak{E} \tpsi_0$. We have:
\begin{equation}\label{eq-13v}
\FF \frak{E} \tpsi_0 = \FF \frak{E}_0 \ta_0 \otimes G_{0,h} + \FF \sum_{n=1}^\infty \frak{E}_n \ta_n \otimes G_{n,h}.
\end{equation}

We recall that  $\frak{E}_0 \ta_0 \otimes G_{0,h}$ is a wavepacket concentrated at $(x_t^+,0)$, where $x_t^+ = e^{t\lambda(x_1) \p_{x_1}}(0)$, see \eqref{eq-7ua}. Therefore, according to Lemma \ref{lem:1r}, $\FF \frak{E}_0 \ta_0 \otimes G_{0,h}$ is a wavepacket concentrated at $(x_t,\xi_t) = \kappa(x_t^+,0)$. Because of Lemma \ref{lem:1m}, this point satisfies the ODE
\begin{equation}
\dfrac{d(x_t,\xi_t)}{dt} = d\kappa(x_t,\xi_t) \cdot (\dot{x_t^+},0) = d\kappa(x_t,\xi_t) \cdot \lambda(x_t^+) \p_{x_1} = V_\di(x_t,\xi_t).
\end{equation}
That is, $\frak{E}_0 \ta_0 \otimes G_{0,h}(t,\cdot)$ is concentrated along the integral curve of $V_\di$ starting at $(x_0,\xi_0)$.

We now focus on the second term in \eqref{eq-13v}. Fix $t \in (0,T)$ and recall that $h_n = \sqrt{h/2n}$. Thanks to \ref{eq-3aa}, we have, uniformly in $h \in (0,1), n \geq 1$:
\begin{align}
\frak{E}_n \ta_n \otimes G_{n,h}(t,y_1)
& =  \sum_{\pm} \left( e^{\pm i \varphi / h_n}  \sqrt{\dfrac{\pm i}{ 2\pi\p_\xi^2 \varphi }} \cdot b^\pm_{n,h_n}  b^\pm_{n,h_n}(0,0,\cdot)^\top \widehat{\ta}_n\right)\big( t,y_1, \pm \Xi(t,y_1) \big) \cdot G_{n,h} (y_2)  \\
 & \ \ \ \  + O_{L^2}\big(h^{1/4} n^{-\infty}\big) \label{eq-13d}
 \\
 &  =  \dfrac{1}{h^{1/4}}\sum_{\pm}  e^{ \pm i \vartheta_{n,\pm}(y_1)/\sqrt{h}} q_\pm(y_1) \cdot G_n \left( \dfrac{y_2}{\sqrt{h}} \right) + O_{L^2}\big(h^{1/4} n^{-\infty}\big),
\end{align} 
where we defined (with intentionally omitted time-dependence):
\begin{equations}
q_\pm(y_1) \de \left(\sqrt{\dfrac{\pm i}{ 2\pi\p_\xi^2 \varphi }} \cdot b^\pm_{n,h_n}  b^\pm_{n,h_n}(0,0,\cdot)^\top \widehat{\ta}_n\right)\big( t,y_1, \pm \Xi(t,y_1) \big),
\\
\vartheta_{n,\pm}(y_1) \de \sqrt{2n} \cdot \varphi\big( t,y_1, \pm \Xi(t,y_1) \big).
\end{equations}

Fourier integral operators are bounded on $L^2$, hence we can forget the remainder term in \eqref{eq-13d}. The function $q$ is smooth, admissible of order $-\infty$, supported in $I_t$ -- see the discussion after the proof of Corollary \ref{cor:5}. Moreover the function $\vartheta_{n,\pm}$ satisfies the estimates \eqref{eq-11c}, see Lemma \ref{lem:1w}. We can then apply Lemma \ref{lem:1s} and obtain that 
\begin{equations}
\sup_{x \in \R^2} \left| \FF \left( \sum_{\pm}  e^{ \pm i \vartheta_{n,\pm}(y_1)/\sqrt{h}} q_\pm(y_1) \cdot G_n \left( \dfrac{y_2}{\sqrt{h}} \right) \right)(x)\right| \leq C n^{-\infty}.  
\end{equations} 
Summing over $n$, we conclude from \eqref{eq-13d} that 
\begin{equation}
\sup_{x \in \R^2} \left| \FF \sum_{n=1}^\infty \frak{E}_n \ta_n \otimes G_{n,h}(t,x) \right| \leq C h^{-1/4}.
\end{equation}
This shows that the second term in \eqref{eq-13v} is $O_{L^\infty}(h^{-1/4}) + O_{L^2}(h^{1/4})$ for any fixed $t \in (0,T)$, hence completing the proof of \eqref{eq-13j}.

4. To complete the proof of Theorem \ref{thm:1b} it remains to discuss the case $\lambda_\di$ constant and $\Gamma_\di, \pi(\Gamma_\di)$ homeomorphic to $\R$. Fix $t > 0$ and set $L = \lambda_\di \cdot t$.

According to Remark \ref{rem:5}, we can construct a Fourier integral operator such that \eqref{eq-2u} holds with a remainder $R$ whose symbol is $\OO(3) + \OO(1) O(h) + \OO(h^2)$ on a neighborhood of $(-L -1 , L +1) \times \{0\}^3$.  Moreover, the estimate \eqref{eq-9ee} holds for all $t > 0$; in particular, $\tpsi_\tau$ js (up to $O_{L^2}(h^{1/4})$ remainder) concentrated in $I_t\times \{0\} = (-L, L) \times \{0\}$. As in Step 2, we deduce that $R \tpsi_\tau = O_{L^2}(\tau^{-1/2} h^{3/2})$. The Duhamel approach written above implies that $\psi_t = \FF \tpsi_t + O_{L^2}(h^{1/2})$. 
The rest of the argument applies with no change. This completes the proof. 
\end{proof}

\renewcommand{\thechapter}{E}

\chapter{Applications to topological insulators}\label{sec:D}

In this section we apply our microlocal approach to fundamental models in the field of topological insulators. The simplest one is the domain wall Dirac operator: we find the speed and orientation of propagating modes. The second example analyzes the effect of an external magnetic field. The third model is a natural extension of the latter in curved geometry. In the fourth application, we consider distortions of standard tight-binding models. By comparing our prediction with the second and third examples, we show that the propagation of modes in distorted lattice can be described in terms of pseudo metric and magnetic field.

\section{Domain wall Dirac operator}\label{sec:10}
We first consider the domain wall Dirac operator from \S\ref{sec:1}:
\begin{equation}\label{eq-13k}
\Di = hD_1 \sigma_1 + hD_2 \sigma_2 + m(x) \sigma_3,
\qquad
\di(x,\xi) = \xi_1 \sigma_1 + \xi_2 \sigma_2 + m(x) \sigma_3, 
\end{equation}
where $m \in C^\infty(\R^2,\R)$ is a smooth function that vanishes transversely along its zero set. Such operators emerge as effective Hamiltonians in graphene; see \cite{FLW16,LWZ19,D19,DW20} for rigorous derivations starting from continuous, translation-invariant systems. 
 
In this section, we justify that propagating modes for \eqref{eq-13k} travel along $m^{-1}(0)$ at speed one; and that the estimate \eqref{eq-13j} holds for all $t > 0$ when $m^{-1}(0)$ is homeomorphic to $\R$. This extends some results from \cite{BB+21}.

\subsection{Relevant quantities} We first compute the quantities $M_\di$, $\lambda_\di$, $H_\di$ and $V_\di$:
\begin{align}
M_\di(x,\xi) & = \dfrac{1}{2i} \{\di,\di\}(x,\xi) = \p_2 m (x) \sigma_1 - \p_1 m(x) \sigma_2,
\\
\lambda_\di(x,\xi) & = \left(\p_2 m(x)^2 + \p_1 m(x)^2 \right)^{1/4} = \big\| \nabla m(x) \big\|^{1/2},
\\
H_\di(x,\xi) & = \sigma_1 \dd{}{x_1} + \sigma_2 \dd{}{x_2} + \sigma_3 \left( \p_1 m(x) \dd{}{\xi_1} + \p_2 m(x) \dd{}{\xi_2} \right)
\\
V_\di(x,\xi) & = -\dfrac{\Tr(M_\di H_\di)}{2\lambda^2_\di} = \dfrac{1}{\| \nabla m(x) \|} \left( \p_1 m(x) \dd{}{x_2} - \p_2 m(x) \dd{}{x_1} \right) = \dfrac{\nabla m(x)^\perp}{\| \nabla m(x) \|}, \label{eq-0c}
\end{align}
where $u^\perp$ denotes the $\pi/2$ counterclockwise rotation of a vector $u \in \R^2$. 

We note that $V_\di$ is a unit vector tangent to $\Gamma_\di = m^{-1}(0) \times \{0\}^2$. In particular, Theorem \ref{thm:1b} predicts the existence of a coherent state traveling (in physical space) along $\Gamma_\di = m^{-1}(0) \times \{0\}^2$ at speed one, in the direction pointing to the left of the positive-topology region (see Figure \ref{fig:4}). Moreover, its orientation is given by the negative-energy eigenvector of $M_\di$: we find $[-1, e^{i\te(x)}]$ where $\te(x)$ is the angle between $\nabla m(x)$ and the $x_2$ axis. 

\subsection{Symplectomorphism} We construct explicitly the symplectomorphism of Theorem \ref{thm:1m} (and the associated Fourier integral operator), when $\di$ is the symbol \eqref{eq-13k}, $m^{-1}(0)$ is homeomorphic to $\R$ and $\|\nabla m(x)\| =1$ is constant along $m^{-1}(0)$.  We start with the quadratic form 
\begin{equation}\label{eq-10j}
q(x,\xi) = -\det \di(x,\xi) = \xi_1^2 + \xi_2^2 + m(x)^2.
\end{equation}
Introduce
\begin{equation}\label{eq-10h}
\tau(x) = \nabla m(x)^\perp, \qquad n(x) = \nabla m (x), \qquad \Omega(x) = \big[ \tau(x), n(x) \big].
\end{equation}
We note that $n$ and $\tau$ are normal and tangent to level sets of $m$, in particular to $m^{-1}(0)$; and that $\Omega(x)$ is an orthogonal matrix for every $x$. Fix $y_0 \in m^{-1}(0)$ and define $\Phi(x) = e^{x_2 n} \circ e^{x_1 \tau}(y_0)$. Since $m^{-1}(0)$ is homeomorphic to $\R$, $\Phi$ is a diffeomorphism from a tubular neighborhood of $\R \times \{0\}$ to a tubular neighborhood of $m^{-1}(0)$; we extended to a diffeomorhphism on $\R^2$;  We note that 
\begin{equation}
\dd{\Phi}{x_2}(x) = n \circ \Phi(x), \qquad 
\dd{\Phi}{x_1}(x_1,0) = \tau \circ \Phi(x_1,0), \qquad \dd{\Phi}{x_1}(x) = \tau \circ \Phi(x) + x_2 u(x),
\end{equation}
for some $u \in C^\infty(\R^2,\R^2)$. From \eqref{eq-10h} as well as $\tau \cdot \nabla m = 0$, $n \cdot \nabla m = 1$, we deduce that
\begin{equations}\label{eq-10k}
m \circ \phi(x) = m \circ \phi(x_1,0) + \int_0^{x_2} \dd{m \circ \Phi(x_1,y)}{x_2} dy =
\int_0^{x_1} \dd{m \circ \Phi(y,0)}{x_1} dy + \int_0^{x_2} \dd{m \circ \Phi(x_1,y)}{x_2} dy
\\ = \int_0^{x_1} (\tau \cdot \nabla m) \circ \Phi(x_1,0)  dy + \int_0^{x_2} (n \cdot \nabla m) \circ \Phi(x_1,y) dy 
%= \int_0^{x_2} \| \nabla m \|^{1/2} \circ \Phi(x_1,y) dy
= x_2 + O(x_2^2).
\end{equations}
Likewise, from \eqref{eq-10h} we obtain
\begin{equation}\label{eq-10i}
(\nabla \Phi^\top)^{-1} =  \left( \Omega \circ \Phi^\top + x_2 [u,0]^\top\right)^{-1} = \Omega \circ \Phi \left( \Id - x_2 [u,0]^\top \Omega \circ \Phi\right) + O(x_2^2).
\end{equation}

Define now 
\begin{equation}
\kappa(x) = \big(\Phi(x), (\nabla \Phi(x)^\top)^{-1} \xi \big). 
\end{equation}
This is a symplectomorphism. Using \eqref{eq-10j}, \eqref{eq-10i} and \eqref{eq-10k}, we note that
\begin{equations}
q \circ \kappa(x) = \big\| (\nabla \Phi(x)^\top)^{-1} \xi \big\|^2 +  m \circ \Phi(x)^2 
\\
=  \big\| \left( \Id - x_2 [u(x),0]^\top \Omega(x)\right) \xi  \big\|^2 +  x_2^2 + O(x_2^2 \xi^2, x_2 \xi_1^2, x_2 \xi_2^2, x_2^3)
\\
=  \xi_1^2 + \xi_2^2 + x_2^2  - 2 x_2 \xi_1 \lr{u(x), \Omega(x) \xi} + O(x_2^2 \xi^2, x_2 \xi_1^2, x_2 \xi_2^2, x_2^3).
\end{equations}
In particular, there are no cubic remainders in $\xi_1 \xi_2^2$ nor $\xi_1 x_2^2$. According to Remark \ref{rem:1}, $\mu = 0$. Therefore, the coefficient $\nu_t$ defined in \eqref{eq-7ub} vanishes. The dynamical edge state associated to \eqref{eq-13k} does not experience the weak dispersion discussed below Theorem \ref{thm:1i}. 

\subsection{Dispersive estimate} Another feature of the symplectomorphism $\kappa$ is that the induced Fourier integral operator is, at least locally, a simple change of coordinates:
\begin{equation}
\FF u(x) = \int_{\R^2} e^{\frac{i}{h}(\Phi(x) - y)\xi } u(y) \dfrac{dy}{(2\pi h)^2} = u \circ \Phi(x). 
\end{equation}
Hence, if we assume that $\| \nabla \kappa(x) \|$ is constant along $m^{-1}(0)$ then we are in the setup of \S\ref{sec:6}. The dispersive estimate \eqref{eq-7s} holds for all $t > 0$, see Corollary \ref{cor:2}. Since $L^\infty$ is invariant under change of coordinates, it transfers back to an estimate on $\Di$. This yields Theorem \ref{thm:1o}(iii).

\section{Magnetic Dirac operators}\label{sec:Da}  We couple here the operator \eqref{eq-13k} with a magnetic field $B = \p_1 A_1 - \p_2 A_1$, $A \in C^\infty_b(\R^2,\R^2)$:
\begin{align}
\Di & = \matrice{m(x) &  hD_1-A_1(x)-i (hD_2-A_2(x)) \\ hD_1-A_1(x) + i  (hD_2-A_2(x)) & - m(x)}  \\
& = \big(hD_1 - A_1(x)\big) \sigma_1 + (hD_2-A_2(x)\big)\sigma_2 +  m(x) \sigma_3, \\
\di(x,\xi) & = \big(\xi_1 - A_1(x)\big) \sigma_1 + (\xi_2-A_2(x)\big)\sigma_2 +  m(x) \sigma_3.\label{eq-10x}
\end{align}
As in \S\ref{sec:10}, we first compute the quantities $M_\di$, $\lambda_\di$, $H_\di$ and $V_\di$:
\begin{align}
 M_\di(x,\xi) &  = \p_2 m (x) \sigma_1 - \p_1 m(x) \sigma_2 - B(x) \sigma_3 = \matrice{B(x) & \p_2 m(x) + i \p_1 m(x) \\ \p_2 m(x) - i \p_1 m(x) & -B(x)},
\\
\lambda_\di(x,\xi) & = \left( \big\| \nabla m(x) \big\|^2 + B(x)^2 \right)^{1/4},
\\
H_\di(x,\xi) & = \sigma_1\dd{}{x_1}  + \sigma_2 \dd{}{x_2} + \sigma_3 \left( \p_1 m(x) \dd{}{\xi_1} + \p_2 m(x) \dd{}{\xi_2} \right) + O(\p_\xi),
\\
V_\di(x,\xi) & = \dfrac{\p_1 m(x) \p_{x_1} - \p_2 m(x) \p_{x_2}}{\sqrt{ \|\nabla m(x) \big\|2 + B(x)^2 }}  + O(\p_\xi)= \dfrac{\nabla m(x)^\perp}{\sqrt{ \|\nabla m(x) \big\|2 + B(x)^2 }} + O(\p_\xi),
\label{eq-0a}\end{align}
where $O(\p_\xi)$ denotes a vector field generated by $\p_{\xi_1}$ and $\p_{\xi_2}$. 

We deduce from \eqref{eq-0a} the value of the (physical) speed of the propagating mode:
\begin{equation}\label{eq-13m}
\dfrac{\| \nabla m(x) \big\|}{\sqrt{\| \nabla m(x)\|^2 + B(x)^2}}, \qquad x \in m^{-1}(0).  
\end{equation}
In particular, \eqref{eq-13m} is always smaller than $1$: magnetic fields introduced in \eqref{eq-13k} through minimal coupling may not speed up the propagation of energy.  We also obtain the orientation of the propagating mode: it is the negative-energy eigenvector of \eqref{eq-0a},
\begin{equation}
\matrice{-\cos(\vp(x)/2) \\ \sin(\vp(x)/2)e^{i\te(x)}}, \qquad \te(x) = \arg\big(\p_2 m(x) - i\p_1 m(x)\big), \qquad \vp(x) = \arg\big(B(x) + i \big \| \nabla m (x) \big\| \big).
\end{equation}
Obtaining these expressions by hand in \cite{BBD22} was a fairly technical procedure, which is now streamlined thanks to our microlocal framework.

\section{Dirac operators on surfaces}\label{sec:10.3}
In this section, we consider Dirac operators $\Di$ on a curved surface $M$. We find that edge states travel at unit speed (measured with respect to the Riemannian metric). Hence, the curvature does not affect the speed of propagation. We then investigate coupling effect with a magnetic field and derive a formula similar to \eqref{eq-13m}.

\subsection{Definitions} Let $M$ be a Riemannian surface. Topological considerations, see e.g. \cite{H56}, imply that $M$ admits a spin structure. In the present context, this means that there exists a complex vector bundle $\EE \rightarrow M$ of rank $2$, such that $T^*M \otimes \End(\EE) \rightarrow M$ admits a smooth section $\sigma$ with
\begin{equation}\label{eq-11f}
\sigma(x) \xi \cdot \sigma(x)\xi' + \sigma(x) \xi' \cdot \sigma(x)\xi = 2\lr{\xi,\xi'}_{g(x)} \cdot \Id_{\EE_x}, \qquad x \in M, \quad \xi, \xi' \in T_x^*M,  
\end{equation}
where $\sigma(x) \xi$ denotes the endomorphism $\sigma(x)(\xi,\cdot)$ of $\EE_x$.

Given $x \in M$ and $v \in T_x$, we let $v^{\perp_g} \in T_x M$ be the unique vector orthogonal to $v$, with same norm, such that $d\vol(v,v^{\perp_g}) > 0$, where $d\vol$ is the Riemannian volume form (in particular, for $f \in C^\infty(M,\R)$, $\nabla_g f^{\perp_g}$ is always tangent to the level sets of $f$). Using the isomorphism between $T_xM$ and $T_x^*M$, we can likewise define $\xi^{\perp_g} \in T_x^*M$ for $\xi \in T^*_xM$. We now set 
\begin{equation}\label{eq-11g}
\sigma_3(x) \de \dfrac{1}{i} \sigma(x) \xi \cdot \sigma(x) \xi^{\perp_g}, \qquad |\xi|_{g(x)}^2 = 1. 
\end{equation}
We note that the right-hand-side does not depend on $\xi$: indeed, if $\xi' \in T_xM$ has unit norm, then decomposing $\xi' = \cos \te \xi + \sin \te \xi^{\perp_g}$ we obtain 
\begin{equations}
\sigma(x) \xi' \cdot \sigma(x) \xi'^{\perp_g} = \sigma(x) \left( \cos \te \xi + \sin \te \xi^{\perp_g} \right) \cdot \sigma(x) \left( \cos \te \xi^{\perp_g} - \sin \te \xi \right)
\\
 = (\cos^2 \te +\sin^2 \te) \sigma(x) \xi \cdot \sigma(x) \xi^{\perp_g} + \sin \te \cos \te \left( (\sigma(x) \xi^{\perp_g})^2 - (\sigma (x) \xi)^2 \right)
 = \sigma(x) \xi \cdot \sigma(x) \xi^{\perp_g}. 
\end{equations}
We used that $\sigma(x) \xi$ and $\sigma(x) \xi^{\perp_g}$ anticommute, see \eqref{eq-11f}; and that $\xi, \xi^{\perp_g}$ have the same norm. 

Given a function $m \in C^\infty(M,\R)$, we define
\begin{equation}\label{eq-11i}
\p(x,\xi) = \sigma(x) \xi + m(x)\sigma_3(x),
\end{equation}
which is a smooth section of $\EE \rightarrow T^*M$. Let $\Di$ be a selfadjoint operator on acting on smooth sections of $\EE \rightarrow M$, with principal symbol $\di(x,\xi)$.

\subsection{Speed of propagating state} We compute the quantities $\{\di,\di\}$ and $H_\di$. Let $(x_1,x_2)$ be normal coordinates centered at a point of $m^{-1}(0)$; $(\xi_1,\xi_2)$ be induced local coordinates on the fibers of $T^*M$; and $(e_1,e_2)$ be a local frame on $\EE$. Noting that $\sigma(x) \xi$ is Hermitian and traceless, we may write (in these coordinates):
\begin{equations}
\sigma_1(x) \de \sigma(x) (1,0) = \sum_{k=1}^3 h_{1k}(x) \sigma_k, \qquad \sigma_2(x) \de \sigma(x) (0,1) = \sum_{k=1}^3 h_{2k}(x) \sigma_k, 
\end{equations}
for some smooth functions $h_{jk}$ defined near $0$. From \eqref{eq-11f} and using that we are working in normal coordinates,  
\begin{equations}\label{eq-11h}
\sum_{k=1}^3 h_{jk}(x) h_{\ell k}(x) = g^{j\ell}(x) = \delta_{j\ell} + O(x^2).
\end{equations}
Hence, $h_1(0)$ and $h_2(0)$, seen as vectors of $\R^3$, are orthonormal %; we also note that the first equality in \eqref{eq-11h} allows us to recover the metric from the symbol \eqref{eq-11i}
. Moreover, from \eqref{eq-11g} we have
\begin{equations}
 \sigma_3(x) = \dfrac{\sigma_1(x) \sigma_2(x)}{i |g^{11}(x) g^{22}(x)|^{1/2}} = \sum_{k,\ell=1}^3 h_{2k}(x) h_{1\ell}(x) \dfrac{\sigma_k \sigma_\ell}{i} + O(x^2) \\
 = \sum_{k,\ell,m=1}^3  \epsi_{k\ell m} h_{2k}(0) h_{1\ell}(0) \sigma_m + O(x^2) = \sum_{m=1}^3 \big( h_1(0) \times h_2(0)\big)_m \sigma_m + O(x^2).
 \end{equations}

Therefore, in the Pauli basis $(\sigma_1, \sigma_2, \sigma_3)$, the matrices $\sigma_1(0), \sigma_2(0), \sigma_3(0)$ have coordinates $h_1(0)$, $h_2(0)$ and $h_1(0) \times h_2(0)$; since $h_1(0)$ and $h_2(0)$ are orthonormal, $\sigma_1(0), \sigma_2(0), \sigma_3(0)$ form a direct orthonormal basis of traceless, Hermitian $2 \times 2$ matrices: their matrix in the Pauli basis is $[h_1,h_2, h_1 \times h_2]$. After conjugating $\di$ with an adequate matrix in $SU(2)$ (which does not affect the value of $\lambda_\di$ or $V_\di$), we can assume that $\sigma_j(0) = \sigma_j$. It follows that (in these coordinates, and up to conjugation):
\begin{align}
\di(x,\xi) & =  \xi_1 \sigma_1 + \sigma_2 \xi_2 + \big(\p_1 m(0) x_1 + \p_2 m(0) x_2\big) \sigma_3 + O(x,\xi)^2,
\\
M_\di(0) & = \p_2 m(0) \sigma_1 -\p_1 m(0) \sigma_2 ,
\\
\label{eq-0g} \lambda_\di(0)^4 & = \p_2 m(0)^2 + \p_1 m(0)^2 =  \| \nabla_g m(0) \|_{g(0)}^2,
\\
H_\di(0) & = \sigma_1 \dd{}{x_1} + \sigma_2 \dd{}{x_2} - \left(\p_1 m(0) \dd{}{\xi_1} + \p_2 m(0) \dd{}{\xi_2}\right) \sigma_3 ,
\\
V_\di(0) & = \dfrac{\p_1m(0)\p_{x_2} - \p_2m(0) \p_{x_1}}{\| \nabla_g m(0) \|_{g(0)}}  = \dfrac{\nabla_g m(0)^{\perp_g}}{\| \nabla_g m(0) \|_{g(0)}},
\end{align}
where we used that $g(0) = dx_1^2 +dx_2^2$ in normal coordinates centered at $0$. Going back to coordinate-free expression, and recalling that the origin of the normal coordinates system was an arbitrary point in $m^{-1}(0)$, we conclude that on $\Gamma = m^{-1}(0) \times \{0\}^2$:
\begin{equation}\label{eq-0z}
V_\di(x,0) = \dfrac{\nabla_g m(x)^{\perp_g}}{\| \nabla_g m (x)\|_g}, \qquad x \in m^{-1}(0).
\end{equation}

\begin{remark} The identity \eqref{eq-0z} implies that even in curved geometry, the speed of dynamical edge states for Dirac operators defined by semiclassical quantizations of \eqref{eq-11i} (measured with respect to the metric $g$) is independent of the choice of interface function $m$. 

We believe that a non-vanishing curvature may lead to a coefficient $\nu_t \neq 0$ in \eqref{eq-7ub}, in particular to dispersion of edge states as $t \rightarrow \infty$; we do not investigate this effect further.\end{remark}

\subsection{Coupling to magnetic fields} Turning on a magnetic $B \in C^\infty(M,\R)$ in \eqref{eq-11i} amounts to translating the momentum $\xi$ by a smooth one-form $A$ on $M$ such that $dA = B d\vol_g$; see e.g. \cite{BR20} for the Schr\"odinger operator. This produces the symbol
\begin{equation}\label{eq-0f}
\p(x,\xi) = \sigma(x) \big(\xi-A(x)\big) + m(x)\sigma_3(x).
\end{equation}
To compute the speed of a propagating state in curved geometry, with the magnetic field $B$, it suffices again to fix $x \in M$ and work in normal coordinates. Write (in this system of coordinates) $A(x) = A_1(x) dx_1 + A_2(x) dx_2$ and note that $d\vol_g(0) = dx_1 \wedge dx_2$; hence
\begin{equation}
B(0) = \p_1 A_2(0) - \p_2 A_1(0).
\end{equation}
In particular, \eqref{eq-0g} becomes:
\begin{align}
\di(x,\xi) & =  \sum_{j=1}^2 \big( \xi_j - A_1(0) - \p_{x_1} A_j(0) x_1 - \p_{x_2} A_j(0) x_2 \big)  \sigma_j + \big(\p_1 m(0) x_1 + \p_2 m(0) x_2\big) \sigma_3 + O(x,\xi)^2,
\\
M_\di(0) & = \p_2 m(0) \sigma_1 -\p_1 m(0) \sigma_2 + \big( \p_1 A_2(0)  - \p_2 A_1(0) \big)\sigma_3  = \p_2 m(0) \sigma_1 -\p_1 m(0) \sigma_2 + B(0) \sigma_3,
\\
\lambda_\di(0)^4 & = \p_2 m(0)^2 + \p_1 m(0)^2 + B(0)^2 =  \big\| \nabla_g m(0) \big\|_{g}^2 + B(0)^2,
\\
H_\di(0) & = \sigma_1 \dd{}{x_1} + \sigma_2 \dd{}{x_2}  + O(\p_\xi),
\\
V_\di(0) & = \dfrac{\p_1m(0)\p_{x_2} - \p_2m(0) \p_{x_1}}{\sqrt{\big\| \nabla_g m(0) \big\|_g^2  + B(0)^2}} + O(\p_\xi)  = \dfrac{\nabla_g m(0)^{\perp_g}}{\sqrt{\big\| \nabla_g m(0) \big\|_g^2  + B(0)^2}} + O(\p_\xi).
\end{align}
Varying the origin of the normal coordinate system, we obtain a formula analogous to \eqref{eq-13m}:
\begin{equation}\label{eq-0h}
d\pi(V_\di)(x) = \dfrac{\nabla_g m(x)^{\perp_g}}{\sqrt{\big\| \nabla_g m (x) \big\|_g^2 + B(x)^2}}, \qquad x \in m^{-1}(0).
\end{equation}

\section{Anisotropic Haldane models}

In this section, we study symbols of the form
\begin{equation}\label{eq-0b}
\di(x,\xi) = \matrice{m(x) & \overline{\alpha(x)} \cdot \big(\xi-\xi(x)\big) \\ \alpha(x)\cdot \big( \xi-\xi(x)\big) & -m(x)}   = \sum_{j=1}^2 \alpha_j(x) \cdot (\xi-\xi(x)) \sigma_j + m(x)\sigma_3
\end{equation}
where $\alpha = \alpha_1+i\alpha_2$, $\alpha_1, \alpha_2 \in C^\infty_b(\R^2,\R^2)$; $\xi \in C^\infty(\R^2,\R^2)$; and $m \in C^\infty_b(\R^2,\R)$. 

We first motivate the origin of \eqref{eq-0b} as an effective Hamiltonians for distorsions of standard models of condensed matter physics.  We then pull out a metric and a magnetic field such that \eqref{eq-0b} equals a symbol of the form \eqref{eq-0f}. By analogy with \S\ref{sec:10.3}, this allows us to produce a formula for the speed of propagating modes. At the physical level, our result implies that adiabatic variation in the position of atoms slows down electronic transport along interfaces between topological insulators.

\subsection{Tight-binding models} Our starting point is Wallace's Hamiltonian:
\begin{equation}\label{eq-13e}
(H_0\psi)_n = \matrice{\psi_n^B + \psi_{n-v_1}^B + \psi_{n-v_2}^B \\ \psi_n^A + \psi_{n+v_1}^A + \psi_{n+v_2}^A }, \qquad v_1 = -\dfrac{1}{2}\matrice{1 \\ \sqrt{3}}, \qquad v_2 = \dfrac{1}{2}\matrice{1 \\ -\sqrt{3}}, 
\end{equation}
with wavefunction $\psi \in \ell^2(\Z v_1 \oplus \Z v_2, \C^2)$. This Hamiltonian describes the motion of an electron on a honeycomb lattice, with hopping probabilities equally weighted on the nearest neighbors; see Figure \ref{fig:7}. Given the periodicity of $H$ with respect to the lattice $\Z v_1 \oplus \Z v_2$, it is natural to take its discrete Fourier transform:
\begin{equation}\label{eq-0n}
H_0(\xi) = \matrice{0 & \w(\xi) \\ \overline{\w(\xi)} & 0}, \qquad \w(\xi) = 1+e^{i\xi v_1} + e^{i\xi v_2}. 
\end{equation}
The eigenvalues of $H(\xi)$ are $\pm \big|\w(\xi)\big|$. In particular, they are degenerate precisely when $\xi = \pm \xi_\star \mod 2\pi$, where $\xi_\star = (2\pi/3,-2\pi/3)$. The crossing is conical: 
\begin{equation}
\w(\xi_\star + \xi) = \dfrac{\sqrt{3}}{4} (\xi_1 + i \xi_2) + O(\xi^2).
\end{equation}
In particular, the model \eqref{eq-13e} describes a semimetal. 

Small perturbations of semimetals generally transforms them in insulators. In a seminal work \cite{H88}, Haldane introduced a time-breaking Hermitian perturbation of Wallace's model that was the first model of topological insulator beyond the quantum Hall effect. Specifically:
\begin{equations}\label{eq-0m}
(H_m \psi)_n  = (H_0 \psi)_n + i m \matrice{\psi_{n+v_1}^A - \psi_{n-v_1}^A  - \psi_{n+v_2}^A + \psi_{n-v_2}^A + \psi_{n+v_2-v_1}^A - \psi_{n+v_1-v_2}^A \\ 
\psi_{n+v_1}^B - \psi_{n-v_1}^B  - \psi_{n+v_2}^B + \psi_{n-v_2}^B - \psi_{n+v_1-v_2}^B + \psi_{n-v_1+v_2}^B},
\\
H_m(\xi) = H_0(\xi) + 2im \beta(\xi) \sigma_3, \qquad \beta(\xi) = \sin (\xi v_1) - \sin(\xi v_2) - \sin\big(\xi(v_1-v_2)\big).
\end{equations}
In brief words, the states associated to the valence band (i.e. the negative energy eigenstates of $H_m(\xi)$, seen as functions on a $2$-torus), form a vector bundle with non-trivial topology: its Chern number is $\sgn(m)$. We also comment that $\beta(\xi_\star) = 3\sqrt{3}/2$.

\begin{center}
\begin{figure}[!t]
\floatbox[{\capbeside\thisfloatsetup{capbesideposition ={right,center}}}]{figure}[10cm]
{\hspace*{1cm}\caption{A honeycomb structure. An electron on a red site can hop to the nearest blue site (and conversely).}}
{\begin{tikzpicture}
\node at (0,0) {\includegraphics[scale=.7]{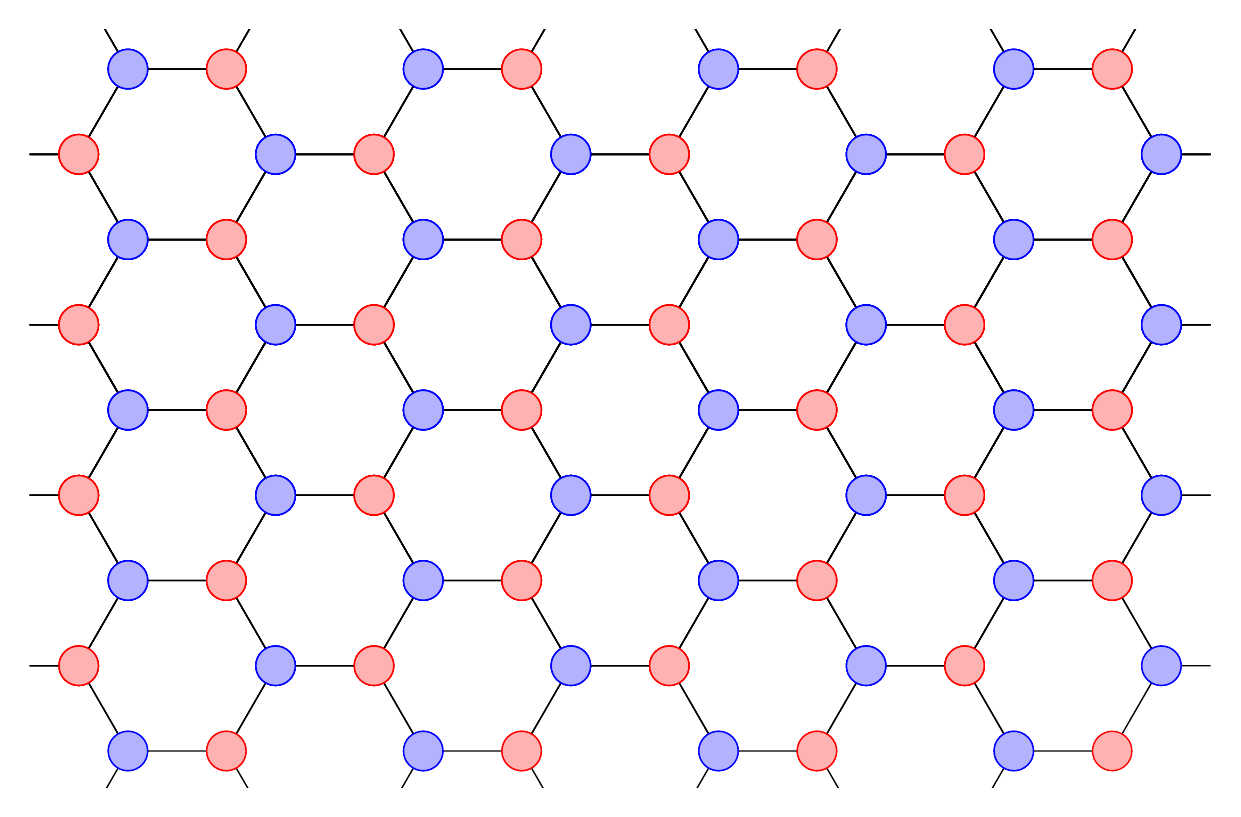}};
%\node at (8,0) {$\bullet$};
\end{tikzpicture}}\label{fig:7}
\end{figure}
\end{center}

The standard approach to study systems such as \eqref{eq-0m} near momentum $\xi_\star$ is to look for solutions of $(D_t + H_m) \psi = 0$ that take the form $\psi_n(t)= \Psi(ht,hn) e^{i\xi_\star n}$. This produces the effective equation
\begin{equation}
\left( hD_t + \dfrac{\sqrt{3}}{4}\matrice{m & hD_1-ihD_2 \\ hD_1+ihD_2 & -m} \right) \Psi = 0. 
\end{equation}
By slowly changing the sign of $m$ over space, we can generate structures whose topology is globally ill-defined but locally well defined when $m \neq 0$. The set $m^{-1}(0)$ acts as an interface between insulating phases of distinct Chern numbers. The resulting effective equation is: 
\begin{equation}\label{eq-0d}
\left( hD_t + \dfrac{\sqrt{3}}{4}\matrice{m(x) & hD_1-ihD_2 \\ hD_1+ihD_2 & -m(x)}\right) \Psi = 0,
\end{equation}
which is precisely that studied in \S\ref{sec:10}. According to Theorem \ref{thm:1b}, solutions to \eqref{eq-0d} split in two parts, one transporting in a direction prescribed by topology (specifically, the positive-topology region lies to the right of the direction of propagation); and one dispersing. 

While defects do not affect the insulating character of topological phases, they can influence the speed or profile of conduction along interfaces. To investigate these effects, we introduce anisotropy in the system. This corresponds to a Wallace model with the function
\begin{equation}
\w_a(\xi) = 1-a_1 - a_2 + a_1 e^{i\xi_1} + a_2 e^{i\xi_2}
\end{equation}
instead of $\w(\xi)$ in \eqref{eq-0n}; the parameter $a = (a_1,a_2) \in \R^2$ acounts for anistropic hopping probabilities $\frac{1-a_1-a+2}{3}$,$\frac{1+a_1}{3}$, $\frac{1+a_2}{3}$, see Figure \ref{fig:6}. For sufficiently small $a$, the equation $\w_a(\xi) = 0$ still has precisely two solutions $\pm (\xi_\star+\xi_a)$ modulo $2\pi$; this means that such perturbations preserve the semimetal character. Since we have
\begin{equation}
\w_a(\xi_\star + \xi) = \p_\xi \w_a(\xi_a) \cdot (\xi-\xi_a) + O(\xi-\xi_a)^2,
\end{equation}
an effective operator for the anisotropic Haldane model near momentum $\xi_\star-\xi_a$ is
\begin{equation}
\matrice{0 &  \ove{\p_\xi \w_a(\xi_a)} \cdot (D_x - \xi_a) \\ \w_a(\xi_a) \cdot (D_x - \xi_a) & 0}.
\end{equation}

\begin{center}
\begin{figure}[!t]
\floatbox[{\capbeside\thisfloatsetup{capbesideposition ={right,center}}}]{figure}[10cm]
{\hspace*{1cm}\caption{A distorted honeycomb structure. We drew the original hexagonal structure in the background (light gray).}}
{\begin{tikzpicture}
\node at (0,0) {\includegraphics[scale=.7]{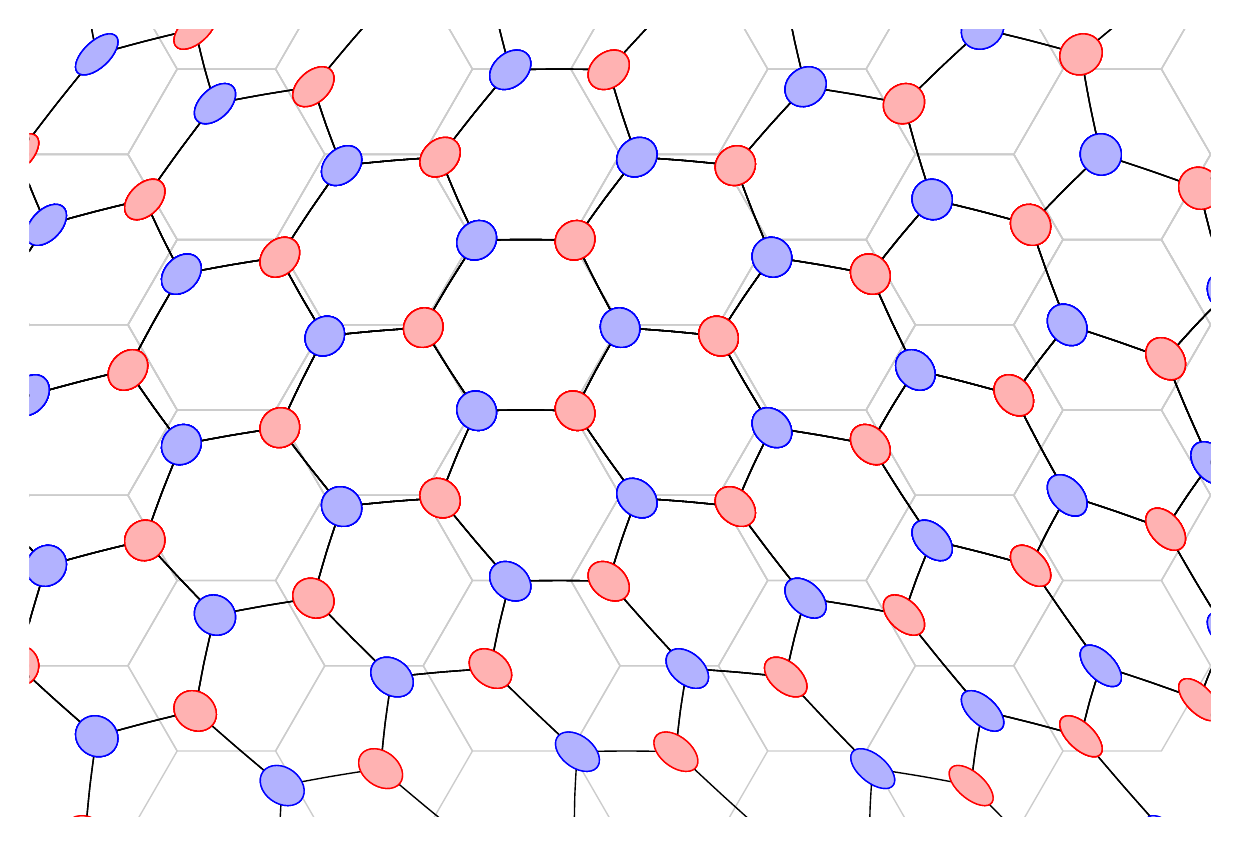}};
%\node at (8,0) {$\bullet$};
\end{tikzpicture}}\label{fig:6}
\end{figure}
\end{center}

Adding complex coupling as in \cite{H88}, and allowing the hopping probabilities and complex coupling to depend on $x$, we end up with an effective Hamiltonian $\Di = \di^w$ with symbol
\begin{equation}\label{eq-0e}
\di(x,\xi) = \matrice{m(x) & \alpha(x) \big(\xi-\xi(x)\big) \\ \alpha(x) \big( \xi-\xi(x)\big) & -m(x)}.  
\end{equation}
This Hamiltonian models a honeycomb lattice with microscopic unit cell, submitted to a macroscopic strain.

\subsection{Pseudo metric and magnetic field} Write 
\begin{equation}
\sigma(x) \xi = (\alpha_j(x) \cdot \xi) \sigma_j.
\end{equation} 
In analogy with \S\ref{sec:10.3}, we define a metric $g$ on $\R^2$ by (its action on the cotangent bundle $T^*\R^2$):
\begin{equation}
|\xi|^2_{g(x)}= \sum_{j=1}^2(\alpha_j(x) \cdot \xi)^2 = (\sigma(x) \xi)^2.
\end{equation}
Following \eqref{eq-11g}, we set
\begin{equation}
\sigma_3(x) = \dfrac{1}{i} \sigma(x) \xi \cdot \sigma(x) \xi^{\perp_g}, \qquad |\xi|_{g(x)} = 1, 
\end{equation}
which is independent of (adequately normalized) $\xi$. By plugging for $\xi$ a vector in a basis dual to $\alpha_1(x), \alpha_2(x)$, we end up with
\begin{equation}
\sigma_3(x) = \det\big[\alpha_1(x),\alpha_2(x)\big] \sigma_3.
\end{equation}
Hence, the symbol \eqref{eq-0e} has the form \eqref{eq-0f} for a domain wall $\tm = m / \det[\alpha_1, \alpha_2]$ and a magnetic potential $A(x) = \xi(x)$. From $dA = B d\vol_g$, we deduce that the effective magnetic field is
\begin{equation}
B(x) = \dfrac{\p_2 \xi_1(x) - \p_1 \xi_2(x)}{\det[\alpha_1(x),\alpha_2(x)]}.
\end{equation}

This implies that to a distorted Haldane model correspond an emerging metric and magnetic field. This relates to the numerical and experimental observation of pseudomagnetism in strained graphene \cite{MJ+13,GRW21}. Based on this analogy and on \eqref{eq-0h}, our microlocal framework produces a direct formula for the (physical) speed of the propagating state:
\begin{equations}
d\pi(V_\di)(x) = \dfrac{\nabla_g \tm(x)^{\perp_g}}{\sqrt{\big\| \nabla_g \tm (x)\big\|_g^2 + B(x)^2}}
= \dfrac{\nabla_g m(x)^{\perp_g}}{\sqrt{\big\| \nabla_g m (x)\big\|_g^2 + \| d\xi(x) \|_g^2}}, \qquad x \in m^{-1}(0),
\end{equations}
where we used the definition of $B$; and the relation between $m$ and $\tm$ together with $x \in m^{-1}(0)$. This implies that when appropriately measured, microscopic variations in atomic position can only slow down the propagation of edge states.

\end{document}